\setlist[enumerate,1]{label=(\roman*), leftmargin=2.2em, itemsep=0pt,topsep=0.3em}
\setlist[enumerate,2]{label=(\alph*),nosep}
\setlist[itemize]{itemsep=0pt,topsep=0.3em}
\definecolor{linkcol}{rgb}{0,0,0.38}
\definecolor{citecol}{rgb}{0,0.2,0}
\definecolor{urlcol}{rgb}{0.1,0.35,0}
\Crefname{appsec}{Appendix}{Appendices}
\newtheorem{theorem}{Theorem}[section]
\newtheorem{lemma}[theorem]{Lemma}
\newtheorem{proposition}[theorem]{Proposition}
\newtheorem{corollary}[theorem]{Corollary}
\theoremstyle{definition}
\newtheorem{definition}[theorem]{Definition}
\patchcmd\blx@bblinput{\blx@blxinit}
                      {\blx@blxinit
                      }{}{\fail}
\newcommand\bN{\ensuremath{\mathbb{N}}}
\newcommand\bZ{\ensuremath{\mathbb{Z}}}
\newcommand\bR{\ensuremath{\mathbb{R}}}
\newcommand\bE{\ensuremath{\mathbb{E}}}
\newcommand\bP{\ensuremath{\mathbb{P}}}
\newcommand\cB{\ensuremath{\mathcal{B}}}
\newcommand\cH{\ensuremath{\mathcal{H}}}
\newcommand\cI{\ensuremath{\mathcal{I}}}
\newcommand\cJ{\ensuremath{\mathcal{J}}}
\newcommand\cF{\ensuremath{\mathcal{F}}}
\newcommand\cS{\ensuremath{\mathcal{S}}}
\newcommand\cU{\ensuremath{\mathcal{U}}}
\newcommand\cP{\ensuremath{\mathcal{P}}}
\newcommand\cW{\ensuremath{\mathcal{W}}}
\DeclareRobustCommand{\OPT}{\ensuremath{\mathrm{OPT}}}
\DeclarePairedDelimiter\ceil{\lceil}{\rceil}
\DeclarePairedDelimiter\floor{\lfloor}{\rfloor}
\renewcommand{\epsilon}{\varepsilon}
\def\cupp{\stackrel{.}{\cup}}
\DeclareMathOperator*{\diam}{diam}
\DeclareMathOperator*{\poly}{poly}
\DeclareMathOperator*{\level}{level}
\DeclareMathOperator*{\height}{height}
\DeclareMathOperator*{\width}{width}
\newcommand\scconstant{\ensuremath{\eta}}
\newcommand\adaptivitygap{\ensuremath{\alpha_{\mathrm{adapt}}}}
\DeclareMathOperator*{\argmin}{arg\,min}
\newcommand*{\hop}{\ensuremath{\mathrm{Hop}}}
\newcommand\expectation[1]{\ensuremath{\mathbb{E}\left[ #1 \right]}}
\newcommand\probability[1]{\ensuremath{\mathbb{P} \left[ #1 \right]}}
\newcommand\probabilitycustom[2]{\ensuremath{\mathbb{P}_{ #1 } \left[ #2 \right]}}
\newcommand\condexp[3]{\ensuremath{\mathbb{E}_{ #1 }\left[ #2 \;\middle|\; #3 \right]}}
\newcommand{\ATSP}{\textsc{ATSP}\xspace}
\newcommand{\APrioriATSP}{Asymmetric A Priori TSP\xspace}
\newcommand{\HopATSP}{Hop-ATSP\xspace}
\newcommand\parset[1]{\ensuremath{\left\{ #1 \right\}}}
\newcommand\expectationAP[1]{\ensuremath{\mathbb{E}_{A \sim p}\left[ #1 \right]}}
\newcommand\expectationcustom[2]{\ensuremath{\mathbb{E}_{ #1 }\left[ #2 \right]}}
\newcommand\congestion{\ensuremath{{\mathrm{cong}}}}
\newcommand{\addparskip}{\medskip}
\newcommand\closedint[1]{\ensuremath{\left[#1\right]_{\prec}}}
\newcommand\openint[1]{\ensuremath{\left(#1\right)_{\prec}}}
\newcommand\coveringsum{\ensuremath{\oplus}}
\newcommand\cost{\ensuremath{\mathrm{cost}}}
\newcommand\intcost{\ensuremath{\mathrm{cost}_{\mathrm{int}}}}
\newcommand\extcost{\ensuremath{\mathrm{cost}_{\mathrm{ext}}}}
\newcommand\numnewcovered{\ensuremath{\Lambda}}
\newcommand\singletonprofile{\ensuremath{\Psi}} 
\title{Approximating Asymmetric A Priori TSP\\ beyond the Adaptivity Gap}
\author{
Manuel Christalla\thanks{
Department of Computer Science, ETH Zurich, Switzerland.
Email: \href{mailto:mchristalla@ethz.ch}{mchristalla@ethz.ch}. \newline
Part of this work was done while the author was at University of Bonn.
}
\and
Luise Puhlmann\thanks{
Research Institute for Discrete Mathematics and Hausdorff Center for Mathematics, University of Bonn, Germany.
\newline
Email: \href{mailto:puhlmann@dm.uni-bonn.de}{puhlmann@dm.uni-bonn.de}.
}
\and
Vera Traub\thanks{
Department of Computer Science, ETH Zurich, Switzerland.
Email: \href{mailto:vtraub@ethz.ch}{vtraub@ethz.ch}.\newline
Part of this work was done while the author was at University of Bonn.
}
}
\date{}
\begin{document}

\maketitle

\begin{abstract}
In Asymmetric A Priori TSP (with independent activation probabilities) we are given an instance of the Asymmetric Traveling Salesman Problem together with an activation probability for each vertex.
The task is to compute a tour that minimizes the expected length after short-cutting to the randomly sampled set of active vertices.

We prove a polynomial lower bound on the adaptivity gap for Asymmetric A Priori TSP.   
Moreover, we show that a poly-logarithmic approximation ratio, and hence an approximation ratio below the adaptivity gap, can be achieved by a randomized algorithm with quasi-polynomial running time.

To achieve this, we provide a series of polynomial-time reductions.
First we reduce to a novel generalization of the Asymmetric Traveling Salesman Problem, called Hop-ATSP.
Next, we use directed low-diameter decompositions to obtain structured instances, for which we then provide a reduction to a covering problem. Eventually, we obtain a polynomial-time reduction of Asymmetric A Priori TSP to a problem of finding a path in an acyclic digraph minimizing a particular objective function, for which we give an $O(\log n)$-approximation algorithm in quasi-polynomial time.
\end{abstract}

\thispagestyle{empty}
\newpage

\setcounter{page}{1}

\section{Introduction}

The traveling salesman problem (TSP) is among the most famous and well-studied problems in combinatorial optimization and theoretical computer science.
An instance of the Asymmetric TSP consists of a finite set $V$ of vertices and distances/costs $c: V \times V \to \bE_{\geq 0}$ satisfying the triangle inequality, i.e., $c(x,z) \leq c(x,y) + c(y,z)$ for all $x,y,z \in V$.
The task is to compute a shortest tour on $V$, that is a cycle with vertex set $V$ minimizing the sum of the edge costs.
Symmetric TSP is the special case where $c(x,y)=c(y,x)$ for all $x,y\in V$.

While it is easy to achieve constant-factor approximations for the Symmetric TSP \cite{rosenkrantz1977analysis} (improved in \cite{christofides,serdjukov,karlin2024slightly}), for a long time only logarithmic approximation factors were known for the Asymmetric TSP \cite{frieze1982worst,blaser2008new,kaplan2005approximation, feige2007improved} (see also \cite{traub25}).
The first constant-factor approximations for the more general Asymmetric TSP have been discovered relatively recently \cite{svensson2020constant} (improved in \cite{traub2022improved,traub25}).

In this work, we focus on a stochastic version of the Asymmetric TSP, where each vertex $v\in V$ has an activation probability $p(v)$.
We write $A \sim p$ to indicate that we sample a random set $A\subseteq V$ of active vertices from $p$, i.e., every vertex belongs to $A$ with probability $p(v)$ and the sampling of different vertices is independent.
For a tour $T$ on $V$, the tour $T[A]$ on $A$ results from $T$ by short-cutting to the set $A$ of active vertices, i.e., skipping all visits of vertices outside of $A$.
See \Cref{fig:example_shortcutting}.

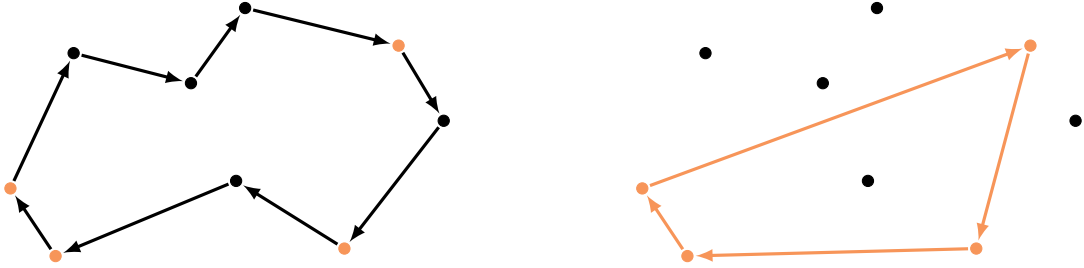
\begin{figure}
\begin{center}
	\begin{tikzpicture}[
xscale=1.2,
vertex/.style={circle,draw=black, fill,inner sep=1.5pt, outer sep=1pt},
edge/.style={-latex,very thick},
box/.style={very thick, fill=Gray, fill opacity=0.2, rounded corners=8pt},
]

\definecolor{color1}{named}{Peach} 

\begin{scope}[every node/.style={vertex}]
\node (v1) at (0,0) {};
\node[color=color1] (v2) at (-2,-1) {};
\node[color=color1] (v3) at (-2.5,-0.1) {};
\node (v4) at (-1.8,1.7) {};
\node (v5) at (-0.5,1.3) {};
\node (v6) at (0.1,2.3) {};
\node[color=color1] (v7) at (1.8,1.8) {};
\node (v8) at (2.3,0.8) {};
\node[color=color1] (v9) at (1.2,-0.9) {};
\foreach \i [evaluate=\i as \next using int(\i+1)] in {1,...,8}{
	\draw[edge] (v\i) to (v\next);
}
\draw[edge] (v9) to (v1);
\end{scope}

\begin{scope}[every node/.style={vertex}, xshift=7cm]
	\node (v1) at (0,0) {};
	\node[color=color1] (v2) at (-2,-1) {};
	\node[color=color1] (v3) at (-2.5,-0.1) {};
	\node (v4) at (-1.8,1.7) {};
	\node (v5) at (-0.5,1.3) {};
	\node (v6) at (0.1,2.3) {};
	\node[color=color1] (v7) at (1.8,1.8) {};
	\node (v8) at (2.3,0.8) {};
	\node (v9)[color=color1] at (1.2,-0.9) {};
	\draw[edge, color=color1] (v9) to (v2);
	\draw[edge, color=color1] (v2) to (v3);
	\draw[edge, color=color1] (v3) to (v7);
	\draw[edge, color=color1] (v7) to (v9);
\end{scope}
\end{tikzpicture} \end{center}
\caption{Left: a tour $T$ through a set of vertices; right: the tour $T[A]$ resulting from cutting $T$ short to the set $A$ of orange vertices.\label{fig:example_shortcutting}
}
\end{figure}

In Asymmetric A Priori TSP (with independent activation) we are given a set $V$ of vertices, an activation probability $p(v) \in [0,1]$ for each vertex $v\in V$, and a distance/cost function $c : V\times V \to \bR_{\geq 0}$ satisfying the triangle inequality.
The task is to compute a tour~$T$ on $V$ minimizing the expected cost
\begin{equation}\label{eq:objective}
\bE_{A\sim p} \Big[ c\big(T[A]\big)  \Big]
\end{equation}
after short-cutting $T$ to the random set $A$ of active vertices.

The symmetric special case, where $c(x,y) = c(y,x)$ for all $x,y \in V$, is well-studied.
Shmoys and Talwar \cite{shmoys_talwar08} gave a randomized 4-approximation algorithm and a deterministic 8-approximation algorithm.
Independently, also Garg, Gupta, Leonardi, and Sankowski \cite{garg942stochastic} gave a randomized constant-factor approximation for the Symmetric A Priori TSP.
Using better approximation algorithms for TSP as a subroutine, one can improve the randomized algorithm by Shmoys and Talwar to achieve an approximation factor slightly below $3.5$ (see \cite{van_ee_sitters_2018,blauth_et_al}).
Van Zuylen \cite{van2011deterministic} gave a deterministic $6.5$-approximation algorithm.
Recently, Blauth, Neuwohner, Puhlmann, and Vygen \cite{blauth_et_al} improved the approximation factors further, achieving a randomized $3.1$-approximation and a deterministic $5.9$-approximation.

The approximation factor compares the objective value~\eqref{eq:objective} of the computed tour to the optimal objective value, i.e.,
to  the expected cost 
\[
 \OPT(V,c,p) \ \coloneqq\ \min_{T\text{ tour on }V} \bE_{A\sim p} \Big[ c\big(T[A]\big)  \Big]
\]
of an optimal a priori tour after short-cutting.
Many of the above-mentioned algorithms have been analyzed with respect to the expected length of the best a posteriori tour, i.e., with respect to $\bE_{A \sim p}\big[ \rm{TSP}(A,c) \big]$, where $\rm{TSP}(A,c)$ denotes the length of a shortest tour on the vertex set $A$.
This is the expected length of the in hindsight optimal tour, i.e., the best tour one could achieve when knowing the set $A$ upfront instead of knowing merely the activation probabilities.
It provides a lower bound on the optimal value $\OPT(V,c,p)$.
With such an analysis it is impossible to achieve approximation factors below the \emph{adaptivity gap}
\[
\sup \left\{ \frac{ \OPT(V,c,p)}{\bE_{A \sim p}\big[ \rm{TSP}(A,c) \big] } : (V,c,p)\text{ instance of A Priori TSP}\right\}.
\]
Blauth, Neuwohner, Puhlmann, and Vygen \cite{blauth_et_al} analyze their algorithm with respect to the cost $\OPT(V,c,p)$ of the optimal a priori tour and not with respect to the optimal a posteriori tour.
However, they do not achieve an approximation factor below the adaptivity gap of the Symmetric A Priori TSP, because \cite{shmoys_talwar08} implies that the adaptivity gap of Symmetric A Priori TSP is at most $3$.

In contrast to the symmetric special case, to the best of our knowledge, no nontrivial approximation algorithms for the Asymmetric A Priori TSP are known.

\subsection{Our Contribution}\label{sec:our_contribution}

We provide the first nontrivial approximation algorithms for the Asymmetric A Priori TSP.
Observe that any cycle on $V$ is an $n$-approximation.
We show that one can obtain an $O(\sqrt{n})$-approximation, even when comparing to the optimal a posteriori tour.
This also implies an upper bound on the adaptivity gap.

\begin{restatable}{theorem}{SimpleApprox}\label{thm:simple_approx}
There is a deterministic polynomial-time algorithm that for any instance $(V,c,p)$ of Asymmetric A Priori TSP, computes a tour $T$ with
\[
 \bE_{A\sim p} \Big[ c\big(T[A]\big)  \Big] \ \leq\ O\big(\sqrt{n}\big) \cdot \bE_{A \sim p}\Big[ \rm{TSP}(A,c) \Big],
\]
where $n=|V|$ denotes the number of vertices.
In particular, the adaptivity gap of the Asymmetric A Priori TSP is $O(\sqrt{n})$.
\end{restatable}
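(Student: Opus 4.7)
The plan is a split-and-insert construction: partition the vertices by activation probability, compute a constant-factor ATSP tour on the high-probability part, and attach the low-probability vertices as short detours.

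\paragraph{Construction.}
Let $H := \{v \in V : p(v) \geq 1/\sqrt n\}$ and $L := V \setminus H$, so that $\sum_{v \in L} p(v) \leq n / \sqrt n = \sqrt n$. Using the Svensson--Tarnawski--V\'egh algorithm \cite{svensson2020constant, traub2022improved}, I compute a tour $T_H$ on $H$ with $c(T_H) \leq \alpha \cdot \mathrm{TSP}(H,c)$ for some absolute constant $\alpha$. For each $v \in L$, I select $\pi(v) \in H$ minimizing $c(\pi(v),v)+c(v,\pi(v))$, and insert $v$ into $T_H$ immediately after $\pi(v)$, breaking ties arbitrarily, to obtain a Hamiltonian tour $T$ on $V$. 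The construction is deterministic and polynomial time.

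\paragraph{Analysis.}
By the triangle inequality, the shortcut cost satisfies
\[
c(T[A]) \;\leq\; c\bigl(T_H[A \cap H]\bigr) \;+\; \sum_{v \in A \cap L} \bigl(c(\pi(v),v)+c(v,\pi(v))\bigr),
\]
so taking expectations leaves two summands to bound by $O(\sqrt n) \cdot \mathbb{E}[\mathrm{TSP}(A,c)]$:
\[
(\mathrm{I})\ c(T_H), \qquad (\mathrm{II})\ \sum_{v \in L} p(v)\,\bigl(c(\pi(v),v)+c(v,\pi(v))\bigr).
\]
For (II) the key identity is $\mathrm{TSP}(A,c) \geq c(u,v) + c(v,u)$ whenever $u, v \in A$. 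Since $p(\pi(v)) \geq 1/\sqrt n$, the event $\{v, \pi(v) \in A\}$ has probability at least $p(v)/\sqrt n$ and forces $\mathrm{TSP}(A,c)$ to absorb the whole detour. Combined with $\sum_{v \in L} p(v) \leq \sqrt n$, a charging argument then yields $(\mathrm{II}) \leq O(\sqrt n) \cdot \mathbb{E}[\mathrm{TSP}(A,c)]$. For (I), $c(T_H) \leq \alpha \cdot \mathrm{TSP}(H,c)$, and I would show $\mathrm{TSP}(H,c) \leq O(\sqrt n) \cdot \mathbb{E}[\mathrm{TSP}(A \cap H, c)]$ via a random-subsampling argument exploiting that every $v \in H$ is active independently with probability at least $1/\sqrt n$. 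Since $\mathbb{E}[\mathrm{TSP}(A, c)] \geq \mathbb{E}[\mathrm{TSP}(A \cap H, c)]$ by monotonicity of TSP under subset inclusion in the metric setting, this completes (I).

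\paragraph{Main obstacle.}
The more delicate step is the subsampling bound $\mathrm{TSP}(H,c) \leq O(\sqrt n) \cdot \mathbb{E}[\mathrm{TSP}(A \cap H, c)]$: naive shortcutting of $T_H$ gives only $c(T_H[A \cap H]) \leq c(T_H)$, which is the wrong direction. My preferred approach is through the Held--Karp LP relaxation of ATSP: scale an optimal LP solution on $H$ down to a feasible fractional solution on $A \cap H$ (in expectation) and then invoke the constant integrality gap of ATSP. The second bound, (II), is comparatively routine once the per-pair identity on $\mathrm{TSP}(A,c)$ is in hand, although care is needed in the charging to avoid a spurious factor of $|L|$.
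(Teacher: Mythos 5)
Your construction and high-level decomposition into a high-probability part $(\mathrm{I})$ and a low-probability part $(\mathrm{II})$ mirror the paper's \Cref{cor:apx_for_large_and_tiny_activations}, but you are missing the one ingredient the paper needs to make both halves of the analysis go through: the \emph{depot reduction}. The paper first guesses a vertex $d$ and resets $p(d)=1$ (\Cref{lem:atsp_reduction_to_depot_adaptivity}, \Cref{cor:depot}), losing only a constant factor; then it puts $d$ into both parts and concatenates the two tours at $d$. Without that always-active anchor, both of your bounds fail as stated.

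For $(\mathrm{II})$, the per-pair inequality $\mathrm{TSP}(A,c)\ge c(u,v)+c(v,u)$ bounds each \emph{single} detour $D_v$, not their sum. Averaging over the detours that are present gives
$\mathrm{TSP}(A,c) \ge \tfrac{1}{|A\cap L|}\sum_{v\in A\cap L,\,\pi(v)\in A}D_v$, and a Markov argument on $|A\cap L|$ (which has mean $p(L)\le\sqrt n$) controls that denominator by $O(\sqrt n)$. But you additionally pay a factor $\sqrt n$ for the event $\pi(v)\in A$, because $p(\pi(v))$ is only guaranteed to be $\ge 1/\sqrt n$. Multiplying the two factors yields $\sum_{v\in L}p(v)D_v\le O(n)\cdot \bE_{A\sim p}\big[\mathrm{TSP}(A,c)\big]$, not $O(\sqrt n)$; and no amount of repackaging with $\sum_{v\in L}p(v)\le\sqrt n$ removes it. The paper's Markov argument (\Cref{prop:small_activations}) gets $O(\sqrt n)$ precisely because the common anchor $d$ is in $A$ with probability $1$. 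For $(\mathrm{I})$, the random-partitioning argument (\Cref{prop:large_activations}) splits $H$ into $k\approx\sqrt n$ random groups $S_1,\dots,S_k$; the bound $\mathrm{TSP}(H,c)\le\sum_i\mathrm{TSP}(S_i,c)$ is obtained by concatenating the group tours at $d$, which lies in every $S_i$. Without a common vertex this inequality can simply be false (take $H=\{a,b\}$ at distance $D$ and $S_1=\{a\},S_2=\{b\}$: the right side is $0$, the left side $2D$). The Held--Karp LP route is not an obvious fix either, since the degree \emph{equality} constraints are not preserved under scaling the fractional solution, so ``scaling down to a feasible solution on $A\cap H$'' is not a well-defined operation as described.

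There is also a local slip: the inequality $c(T[A])\le c(T_H[A\cap H])+\sum_{v\in A\cap L}D_v$ is false. Take $A\cap H=\emptyset$ and two active $L$-vertices $u,v$ with $\pi(u)\ne\pi(v)$ and $D_u=D_v=0$ but $c(\pi(u),\pi(v))>0$; the left side is positive, the right side $0$. The bound you actually use later, $c(T[A])\le c(T_H)+\sum_{v\in A\cap L}D_v$, is correct (first add the anchors $\{\pi(v):v\in A\cap L\}$ to the shortcut set, then apply the triangle inequality), so this is fixable; but it is worth noting that there too the depot makes the bookkeeping cleaner, since the paper simply concatenates the two tours at $d$ rather than interleaving detours into $T_H$.
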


Moreover, we prove a polynomial lower bound on the adaptivity gap of Asymmetric A Priori TSP.

\begin{restatable}{theorem}{LowerBoundAdaptivity}\label{thm:lower_bound}
The adaptivity gap of the Asymmetric A Priori TSP is $\Omega(n^{\sfrac{1}{4}} \cdot \log^{-1} n)$, where $n$ denotes the number of vertices.
\end{restatable}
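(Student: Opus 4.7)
The plan is to exhibit, for every sufficiently large $n$, an instance $(V,c,p)$ of Asymmetric A Priori TSP with $|V|=n$ and
\[
\frac{\OPT(V,c,p)}{\bE_{A\sim p}[\mathrm{TSP}(A,c)]} \ \geq\ \Omega\!\left(\frac{n^{1/4}}{\log n}\right).
\]
Set $k=\lfloor n^{1/4}\rfloor$. I would partition $V$ into $k^2$ layers $L_1,\ldots,L_{k^2}$ of $k^2$ vertices each and define $c$ as the shortest-path metric in a directed graph in which the only cheap transitions between layers are the ``forward'' ones (from $L_i$ to $L_{i+1}$ at cost $1$), while any ``backward'' transition from $L_i$ to $L_j$ with $j<i$ must route through a single reset arc of cost $L=\Theta(k)$; within each layer the internal metric is mildly asymmetric so that no fixed intra-layer order is simultaneously good for every subset of that layer. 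Each vertex is activated independently with probability $p=1/k^2$, so that the expected total number of active vertices is $k^2=\sqrt{n}$ and each individual layer is hit with constant probability.

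First, I would upper-bound the best a posteriori cost. A Chernoff bound shows that with high probability every layer contains only $O(\log k)$ active vertices. Visiting the active set layer by layer in forward order, making only one reset to close the cycle, costs $O(k^3)$, and incorporating the intra-layer penalty gives $\bE_{A\sim p}[\mathrm{TSP}(A,c)] = O(n^{3/4}\log n)$.

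Second, and this is the main step, I would prove that \emph{every} tour $T$ on $V$ satisfies $\bE_{A\sim p}[c(T[A])] = \Omega(n)$. Because $T$ is a Hamilton cycle on the $n$ vertices and the layers partition $V$ into $k^2$ parts of size $k^2$, the layer sequence induced by $T$ must contain $\Omega(k^2)$ ``backward'' transitions (consecutive vertices of $T$ that lie in strictly decreasing layers). For each such backward transition, I would show that with constant probability over $A$ it is preserved in $T[A]$ (both endpoints are active, and by the small value of $p$ the intervening positions usually contribute nothing nearby); each preserved backward transition contributes cost $\Omega(L)=\Omega(k)$. Summing over all $\Omega(k^2)$ backward transitions and combining with the mild intra-layer contribution gives $\Omega(k^4)=\Omega(n)$. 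Taking the ratio of the two bounds yields the claimed adaptivity gap of $\Omega(n^{1/4}/\log n)$.

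The main obstacle is precisely the uniform-over-tours lower bound in the second step. An adversarial tour can try to ``hide'' its backward transitions by clustering all vertices of each layer together, so that backward transitions land at a few concentrated positions and are easily short-cut past when both endpoints are inactive, and by tuning the intra-layer order to favor a particular class of subsets. Ruling out such pathological tours is the technical heart of the proof: it requires a careful averaging argument (e.g.\ over cyclic rotations of $T$) combined with a concentration bound controlling the number of backward transitions whose endpoints both survive short-cutting. The logarithmic loss in the statement arises from this concentration step.
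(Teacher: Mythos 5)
Your instance family is essentially a reparametrization of the paper's (you use $k'=k^2$ layers of $k'$ vertices with $p=1/k'$, whereas the paper uses $k$ columns of $k$ vertices with $p=1/k$), and the a-posteriori upper bound via a per-layer Chernoff/union bound matches the paper's \Cref{lem:adaptivity_lb_ub}. The lower bound is where the proposal breaks down, and it breaks in ways that are not fixed by the technical refinements you anticipate.

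First, the claim that \emph{every} Hamiltonian cycle on the layered instance has $\Omega(k^2)$ backward transitions is false for the ``line with a single reset arc'' structure you describe. A tour that visits all of $L_1$, then all of $L_2$, and so on, then closes the cycle, has exactly one backward transition. The paper sidesteps this by using a \emph{cyclic} column structure in which any two vertices in the same column are at distance exactly $k$ in both directions; the cost the adversary must pay is not tied to a count of ``backward edges of $T$'' but to the interplay of same-column pairs with the metric. Second, the statement that a fixed backward edge of $T$ is ``preserved in $T[A]$ with constant probability'' is incorrect: a fixed edge $(x_i,x_{i+1})$ of $T$ survives short-cutting only if both endpoints are active, which has probability $p^2 = 1/k^4$, not a constant. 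What is of constant order is the \emph{conditional} probability, given that some $x_i$ is active, that the next active vertex on the tour is within the following $O(1/p)$ positions; the paper makes exactly this move (see \eqref{eq:adaptivity_lb_lb_eq1}, where the window has size $4k$), and then the whole game is to argue that these short hops still incur large cost for every tour. Third, even granting both of your claims, the arithmetic gives $\Omega(k^2)\cdot\Omega(1)\cdot\Omega(k)=\Omega(k^3)$, not the $\Omega(k^4)$ you need; the appeal to ``mild intra-layer contribution'' to supply the missing factor of $k$ is unjustified and is in fact where the real content lies.

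The argument the paper actually uses is quite different from your proposed ``averaging over cyclic rotations plus concentration'' fix. It fixes a column $C_j$ and considers the $4k$ tour-successors $S(x_i)$ of each $x_i$. It then defines \emph{saturated intervals} (subpaths of $T$ of length at most $4k$ that contain at least $\sqrt{k}$ vertices of $C_j$) and splits into two cases. If at least half of $C_j$ lies in a maximum family of disjoint saturated intervals, the many same-column pairs clustered within $4k$ tour positions contribute $\Omega(\sqrt{k})$ in expectation via Arithmetic--Quadratic-Mean (\Cref{lem:adaptivity_lb_lb_eq2}). If not, a congestion argument shows each vertex is a near-successor of at most $\sqrt{k}$ unmarked column vertices, so most of the $\Omega(k^2)$ outgoing short hops from unmarked $C_j$-vertices must land in the ``far'' set $Z$ of vertices at distance $\geq\sqrt{k}-1$ from $C_j$, again giving $\Omega(\sqrt{k})$. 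Summing over all $k$ columns gives $\Omega(k^{3/2})$. Neither the saturated-interval dichotomy nor the congestion bound appears in your sketch, and replacing them with an averaging argument over rotations does not help: rotating $T$ does not change the instance or the expectation $\bE_{A\sim p}[c(T[A])]$, so it provides no new leverage against an adversarial tour that clusters each column's vertices.
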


This shows that one cannot achieve subpolynomial approximation factors when comparing solely to the optimal a posteriori tour (even when allowing exponential running time).
Our main contribution is a randomized algorithm with a poly-logarithmic approximation factor and  quasi-polynomial running time:
\begin{theorem}\label{thm:main}
There is a randomized algorithm with running time $n^{O(\log n)}$ that given an instance $(V,c,p)$ of Asymmetric A Priori TSP, computes a tour $T$ such that in expectation 
\[
  \bE_{A\sim p} \Big[ c\big(T[A]\big)  \Big] \ \leq\ O(\log^{8}n) \cdot \OPT(V,c,p),
\]
where $n=|V|$ denotes the number of vertices.
\end{theorem}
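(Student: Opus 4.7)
The plan is to establish \Cref{thm:main} through a chain of approximation-preserving reductions, each losing only a polylogarithmic factor, ending at a problem for which a quasi-polynomial $O(\log n)$-approximation is tractable. The overall $O(\log^8 n)$ factor is the accumulated loss across four stages: (i) from Asymmetric A Priori TSP to Hop-ATSP, (ii) from Hop-ATSP to a hierarchically structured instance via directed low-diameter decomposition, (iii) from structured Hop-ATSP to a covering problem, and (iv) from the covering problem to a minimum-cost path in an acyclic digraph.

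First, I would analyze the expected short-cut cost $\bE_{A\sim p}[c(T[A])]$ of an a priori tour $T$ by decomposing it over ``hops'': pairs of consecutive active vertices in $T[A]$. Since vertices are activated independently, the probability that a given segment of $T$ spanning $k$ intermediate (inactive) vertices becomes a hop in $T[A]$ is a product of $p$-values along that segment. This lets me define \HopATSP so that its objective captures, up to a polylogarithmic factor, the expected short-cut cost of any tour, while at the same time being amenable to combinatorial optimization. The reduction should work in both directions, so that an approximation for \HopATSP yields one for \APrioriATSP.

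Second, I would apply a directed low-diameter decomposition in the sense of \cite{bringmann2025near,li2025simpler} recursively to produce a laminar hierarchy of clusters of geometrically decreasing diameter, in which the expected number of inter-cluster edges used by any fixed tour is small. This converts an arbitrary asymmetric metric into a structured \HopATSP instance where the hop-objective can be evaluated level-by-level. Third, on such a structured instance I would rewrite the minimization: at each level, one must ``cover'' the active vertices inside each cluster by a collection of sub-tours whose concatenation-plus-shortcut gives the expected cost, reducing to a weighted covering problem whose ground set is parametrized by sub-clusters and hop lengths.

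Finally, I would reduce the covering problem to finding a minimum-cost path in an acyclic digraph whose nodes correspond to partial configurations indexed by the $O(\log n)$ levels of the hierarchy and a polynomially bounded ``state,'' and whose edges encode feasible extensions with their covering cost. A dynamic program over paths of length $O(\log n)$ runs in time $n^{O(\log n)}$, and within this framework I would give an $O(\log n)$-approximation based on a greedy / LP-rounding scheme that at each step selects a path extension of near-minimum cost per unit of new covering. The main obstacle I anticipate is the first reduction: defining \HopATSP so that its optimum two-sidedly approximates $\OPT(V,c,p)$ up to polylogarithmic factors, because the asymmetry precludes the probabilistic tree embeddings that underlie the analyses of \cite{shmoys_talwar08,garg942stochastic} in the symmetric case. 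Closely related is the delicate interplay between the randomness in directed low-diameter decomposition and the randomness in the active set $A$: one must simultaneously control the expected number of cluster crossings made by a tour and the expected short-cut cost accrued inside each cluster, without losing more than a logarithmic factor per level of the recursion.
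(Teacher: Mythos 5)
Your proposal follows the same four-stage architecture as the paper — reduce to \HopATSP, impose hierarchical structure via directed low-diameter decompositions, recast as a covering problem, and solve that via a quasi-polynomial dynamic program — so the high-level roadmap is right. However, there is a genuine gap in the final stage, and a secondary one in the first.

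The main gap is in your dynamic program. You say you would find ``a minimum-cost path in an acyclic digraph'' via a ``dynamic program over paths of length $O(\log n)$'' running in $n^{O(\log n)}$. But the objects to be optimized in the covering oracle are monotone paths of length up to $n$ in the original vertex set, and their cost $w(P) = c^{(k)}(P) + k^2 D_{\ell}$ involves all $k$-hop edges along the path; here $k$ can be as large as $\Theta(\sqrt n)$. A direct DP that decides which vertex to visit next must remember the last $k$ visited vertices to evaluate $c^{(k)}$, giving time $n^{\Omega(k)}$, which is not quasi-polynomial. The paper's crucial ingredient — absent from your plan — is not to minimize $w(P)$ itself but a surrogate weight bound $w^q(P)$ parametrized by an offset $q$, defined by assigning a dyadic ``height'' to each position on the path so that each vertex only charges costs to vertices within a dyadic window of radius $2^{\height}-1$. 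This surrogate satisfies $w(P) \le w^q(P)$ for all $q$, and for a random offset $\bE_q[w^q(P)] = O(\log k)\cdot w(P)$; most importantly, minimizing $w^q$ admits a dynamic program whose state is a ``covering profile'' of size $n^{O(\log n)}$ and whose recursion mirrors a full binary tree over blocks of size $k' = 2^\Gamma - 1$. Without something like this, the oracle for the greedy set cover step is not computable in quasi-polynomial time, and the whole chain breaks. Also note that even the framework ``greedy/LP-rounding selecting near-minimum cost per unit of new covering'' requires you to specify what exactly counts as ``newly covered''; the paper's covering condition \ref{item:overview_extra_covering_condition} (a pair $(P,j)$ covers $(v,\ell)$ either by containing $v$ or by ``taking responsibility'' for $H_{\ell+1}(v)$ with $\ge k$ vertices) is essential to make the cover-to-tour direction work (\Cref{fig:example_covering_needs_complicated_condition}), and your plan does not pin this down.

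A secondary gap is in the reduction to \HopATSP. You observe (correctly) that the hop probability is a product of $p$-values along the skipped segment, but this product varies segment by segment when activation probabilities are non-uniform, so there is no single cutoff $k$ for which truncating at $k$ hops approximates the true cost for all vertices simultaneously. The paper first reduces to instances with a depot, then to instances with \emph{uniform} activation probability $\delta$ by splitting each vertex into co-located copies (\Cref{prop:uniform_activation_probabilities}); only then does the constant-factor equivalence between the a priori cost and $\delta^2 \cdot c^{(k)}(T)$ with $k = \lfloor 1/\delta \rfloor$ hold (\Cref{thm:reduction_to_khop}). Your plan flags this as an obstacle but does not resolve it, and without it the first reduction is not established.
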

Together with \Cref{thm:lower_bound}, this shows that we can achieve an approximation factor below the adaptivity gap in quasi-polynomial time.
We remark that we did not optimize the exponent of the $\log n$ in the approximation ratio.

In order to prove \Cref{thm:main}, we provide a series of polynomial-time reductions, leading to a covering problem, and eventually to a problem of finding a path in an acyclic graph minimizing a particular objective function.
We emphasize that our algorithm for (approximately) solving the latter problem is the only part that is not polynomial-time, but requires quasi-polynomial running time.
Our proofs even imply that the covering problem to which we reduce is equivalent to Asymmetric A Priori TSP up to poly-logarithmic factors in the approximation guarantee and polynomial factors in the running time (see \Cref{sec:conclusion}).
\addparskip

Although simple $O(\log n)$-approximation algorithms for ATSP are known (see e.g. \cite{frieze1982worst,blaser2008new}), none of them seems to extend to the Asymmetric A Priori TSP.
We therefore do not directly generalize any of these algorithms, but develop a different approach.

We introduce a natural generalization of ATSP, where instead of paying the distance of each vertex to its direct successor on the tour, we pay the distance of each vertex to the $k$ vertices succeeding it.
Here, $k$ is part of the input and could be large (e.g. $k\approx\sqrt{n})$.
We call this problem Hop-ATSP and provide a reduction from Asymmetric A Priori TSP to Hop-ATSP.

In order to find a poly-logarithmic approximation for Hop-ATSP, we use directed low-diameter decompositions \cite{bernstein2022negative,bringmann2025near,li2025simpler}, which played an important role in recent breakthroughs in the context of fast graph algorithms, including \cite{bernstein2022negative,bringmann2023negative}.
Directed low-diameter decompositions allow us to assume that our instances of Hop-ATSP have a very particular structure.
Exploiting this, we then provide a reduction of Hop-ATSP to a covering problem, which is the main technical challenge in our series of reductions.

In order to solve this covering problem via the set cover greedy algorithm (or other well-known set cover algorithms), we then need an oracle for finding a ``best'' set to pick next.
This problem turns out to be a problem of finding a path in a directed acyclic graph minimizing a particular objective function.
While directly minimizing this objective function in a dynamic programming approach leads to an exponential running time,
we show that a suitable approximation of it can be minimized in quasi-polynomial time.

\subsection{Further related work}

For the Symmetric A Priori TSP, \cite{amar2017exact}  proposed an exact algorithm, and \cite{jaillet1985probabilistic,jaillet1988priori,bertsimas1990priori} investigated the behavior of random instances.
Symmetric A Priori TSP has also been studied in a setting where the activation of different vertices is not necessarily independent \cite{schalekamp2008algorithms,van2018priori,gorodezky2010improved}.
The variant of the problem where the set of active vertices is chosen adversarially instead of randomly is known as Universal TSP \cite{jia2005universal,schalekamp2008algorithms, gorodezky2010improved,bhalgat2011optimal,gupta06,hajiaghayi06}.

Besides TSP, many other problems have been studied in an a priori setting, including for example Vehicle Routing problems, Set Cover, and Steiner Tree \cite{adamczyk2017optimum,garg942stochastic,feuerstein2014scheduling,van2018priori,fernstrom2020constant,navidi2020approximation, grandoni2013set,gupta2024set, hajiaghayi2005oblivious}.

\subsection{Structure of the paper}

In \Cref{sec:overview} we expand on the techniques we use to prove our main result, \Cref{thm:main}.
Next, in \Cref{sec:adaptivity_gap}, we prove our lower bound on the adaptivity gap of Asymmetric A Priori TSP (\Cref{thm:lower_bound}).
In \Cref{sec:reducing_hop_atsp} we explain our reduction to Hop-ATSP and prove \Cref{thm:simple_approx}.
\Cref{sec:hierarchically_ordered,sec:reducing_to_covering} contain our reduction  of Asymmetric A Priori TSP to a covering problem.
 \Cref{sec:monotone_paths} shows how we can solve the covering problem approximately in quasi-polynomial time, and \Cref{sec:conclusion} contains some further discussion.

\section{Outline of the Proof of \Cref{thm:main}}\label{sec:overview}

In this section we provide an overview of the proof of our main result (\Cref{thm:main}).
As a first step, we reduce to a natural generalization of ATSP which we call Hop-ATSP (\Cref{sec:overview_hop_atsp}).
Using directed low diameter decompositions, we then show that we may assume that our instances of Hop-ATSP have a very particular structure (\Cref{sec:overview_hierarchy}).
Next, we show that for these structured instances, which we call hierarchically ordered instances, the problem of Hop-ATSP can be reduced to a covering problem (\Cref{sec:overview_covering}).
In order to solve the covering problem, e.g.\ via the set cover greedy algorithm, we then need an approximation algorithm for finding a path in an acyclic digraph minimizing a particular objective function.
In \Cref{sec:overview_dp} we describe how to achieve this by a dynamic program in quasi-polynomial time.

\subsection{Reducing to Hop-ATSP}\label{sec:overview_hop_atsp}

Recall that in Asymmetric TSP, the task is to compute a tour $T$, i.e.\ a Hamiltonian cycle on $V$, that minimizes the cost $c(T)$ of the walk.
By the triangle inequality, every closed walk visiting all vertices in $V$ can be short-cutted to a Hamiltonian cycle on $V$ without increasing the cost.
Thus, we will also allow such walks as solutions for ATSP and we will also call them \emph{tours}.

If a tour $T$ visits the vertices $v_1, v_2, \dots, v_{r+1} = v_1$ in this order, then $c(T) \coloneqq \sum_{i=1}^{r} c(v_i, v_{i+1})$.
We consider a generalization of ATSP, where we are given a hop-distance $k\in \bN$ and the task is to compute a tour $T$ minimizing the $k$-hop cost
\begin{equation}
\label{eq:khop_objective}
c^{(k)} (T) \coloneqq \sum_{\Delta=1}^k \sum_{i=1}^{r} c(v_i, v_{i+\Delta}),
\end{equation}
where $v_{r+l} \coloneqq v_l$ for all $l\in \bN$.
In other words, instead of paying only for the distance of every vertex to its successor on the tour, we pay the distance of every vertex to the $k$ vertices succeeding it on the tour.

\begin{restatable}[Hop-ATSP]{definition}{defHopATSP}\label{def:hop_atsp}
An instance of Hop-ATSP consists of
\begin{itemize}
    \item a finite vertex set $V$,
    \item a cost function $c: V \times V \to \bR_{\geq 0}$ satisfying the triangle inequality, and
    \item a hop-distance $k\in \bN$ with $k < |V|$.
\end{itemize}
The task in Hop-ATSP is to compute a tour $T$ on $V$ minimizing $c^{(k)}(T)$.
\end{restatable}
We call an instance $\cI=(V,c,k)$ of Hop-ATSP \emph{well-scaled} if $c$ takes only values in $\{0,1,\dots, 2n^3\}$ and $\OPT(V,c,k) \geq  n^2$, where  $n := |V|$.
In \Cref{sec:reducing_hop_atsp}, we provide a reduction of Asymmetric A Priori TSP to well-scaled instances of Hop-ATSP.

\begin{restatable}{theorem}{thmHopATSP}\label{thm:reducing_hop_atsp}
	Let $\alpha : \bN \to \bR_{\geq 1}$ and $t: \bN \to \bN$ monotonically increasing.
	Suppose we have an algorithm that computes for every well-scaled instance $\cI=(V,c,k)$ of Hop-ATSP in time $t(n)$ a tour $T$ with $c^{(k)}(T) \leq \alpha(n) \cdot \OPT(\cI)$, where $n=|V|$.
	
	Then there is an algorithm that computes for every instance $\cJ=(V,c, p)$ of the Asymmetric A Priori TSP in time $n\cdot t(\poly(n)) + \poly(n)$ a tour $T$ with 
	\[
	\bE_{A\sim p}\big[ c(T[A]) \big] \ \le\ O(\alpha(5n^2)) \cdot \OPT(\cJ),
	\]
	where $n=|V|$ and $\poly(n)$ denotes some polynomial in $n$.
\end{restatable}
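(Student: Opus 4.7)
I would reduce the A Priori TSP instance $\cJ=(V,c,p)$ to a Hop-ATSP instance via a vertex-duplication gadget that equalizes activation probabilities across vertices. Each vertex is replicated a number of times proportional to its activation probability, and the hop parameter $k$ is chosen to match the resulting common per-copy effective probability $1/k$. The outer loop would iterate over at most $n$ guesses of a scaling parameter (e.g.\ the magnitude of $\OPT(\cJ)$ or the per-vertex probability scale), running the Hop-ATSP oracle on the corresponding expanded instance and returning the best tour obtained.

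\textbf{Construction.} After rounding each $p(v)$ to a power of two (vertices with $p(v)<1/n^2$ are handled separately at negligible cost, since their aggregate contribution is dominated by any nontrivial tour), I set $I\coloneqq\lceil 2\log n\rceil$ and create $m_v\coloneqq 2^{I-i_v}$ copies of each $v$ with rounded probability $p(v)=2^{-i_v}$; copies of the same vertex have mutual distance zero, while copies of distinct vertices $u,v$ inherit the distance $c(u,v)$. The expanded instance has $N\le 5n^2$ vertices, and I take hop parameter $k\coloneqq 2^I$. A final rescaling of costs to integers in $\{0,\dots,2N^3\}$ together with the addition of a small uniform background cost enforces $\OPT\ge N^2$, yielding a well-scaled Hop-ATSP instance. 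Given the oracle's tour $T'$, I short-cut $T'$ to a tour $T$ on $V$ by keeping one representative per original vertex.

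\textbf{Analysis and main obstacle.} Correctness reduces to a two-sided comparison between $c^{(k)}(T')$ and $\bE_{A\sim p}[c(T[A])]$ at scale $k^{-2}$. Writing the hop-$j$ cost as $c^{[j]}(T)\coloneqq\sum_i c(v_i, v_{i+j})$, when all effective copy-probabilities equal $q=1/k$ the A Priori TSP objective expands as $q^2\sum_{j\ge 1}(1-q)^{j-1}c^{[j]}(T)$, which agrees with the truncated Hop-ATSP objective $\sum_{j\le k}c^{[j]}(T)$ up to a factor of $\Theta(q^2)=\Theta(k^{-2})$ on the range $j\le k$. The ``easy'' direction $\OPT_{\text{Hop-ATSP}}\le O(k^2)\cdot\OPT(\cJ)$ then follows by inflating the optimal a priori tour $T^*$: place all $m_v$ copies of each vertex consecutively in the order dictated by $T^*$, and use the above identification between geometric and uniform weights. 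The main obstacle is the converse direction: bounding $\bE_{A\sim p}[c(T[A])]$ from above by $O(c^{(k)}(T')/k^2)$ for an \emph{arbitrary} expanded tour $T'$ returned by the oracle. The difficulty is that the A Priori geometric tail $\sum_{j>k}(1-q)^{j-1}c^{[j]}(T')$ extends well beyond hop $k$, while the Hop-ATSP objective truncates there. To handle this, I would first reorder copies so that all copies of each vertex are contiguous in $T'$ (this does not increase $c^{(k)}(T')$ since intra-vertex distances vanish, and it preserves the short-cutted tour $T$), then split the tour into windows of $k$ consecutive positions and combine the triangle-inequality bound $c^{[j+j']}(T')\le c^{[j]}(T')+c^{[j']}(T')$ with the geometric tail estimate $\sum_{m\ge 1}m(1-1/k)^{mk}=O(1)$ to control the contribution of far hops by a constant multiple of $c^{(k)}(T')$. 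Composed with the oracle's $\alpha$-approximation guarantee and the preprocessing distortions, this yields the claimed $O(\alpha(5n^2))$-approximation in the stated running time.
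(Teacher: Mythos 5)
Your high-level plan matches the paper's: replicate vertices to equalize activation probabilities and identify the geometrically-weighted a priori objective with the truncated $k$-hop objective up to a factor of $\Theta(\delta^2)$. However, there are two genuine gaps that would make the argument fail.

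\textbf{Gap 1: the hard direction needs a distributed charging scheme, not a fixed one.} Your plan to control $\sum_{j>k}(1-1/k)^{j-1}c^{[j]}(T')$ by writing $c^{[j]}\le \lceil j/k\rceil\, c^{[k]}$ (or any fixed decomposition into pieces at prescribed positions) and invoking $\sum_m m(1-1/k)^{mk}=O(1)$ does not give $O(1)\cdot c^{(k)}(T')$. If you always charge to hop-$k$ edges at positions $i,i+k,i+2k,\dots$, a single hop-$k$ edge $(v_p,v_{p+k})$ is hit by every long-hop edge $(v_{p-ak},v_{p-ak+j})$ with $a\ge 0$ and $j\ge(a+1)k$; summing the geometric weights gives $k\sum_{a\ge 0}(1-1/k)^{(a+1)k}=\Theta(k)$ charge per hop-$k$ edge, i.e., a total of $\Theta(k)\cdot c^{[k]}(T')$ rather than $O(1)\cdot c^{(k)}(T')$. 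The paper avoids this $\Theta(k)$ overcount by decomposing each long edge $(v_i,v_{i+j})$ via a fractional flow over the $v_i$--$v_{i+j}$ subpath of $T'$ (Lemma~\ref{lem:path_distribution}), so that each short-hop edge absorbs only $O(1/k^2)$ of any single long hop (or $O(1/k)$ if it shares an endpoint); that per-edge bound is exactly what makes $\sum_{j>k}(1-\delta)^{j-1}c^{[j]}(T')=O(1)\cdot c^{(k)}(T')$ come out. Your sketch does not produce this, and without it the approximation factor picks up an extra $\Theta(k)$.

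\textbf{Gap 2: low-probability vertices cannot be dismissed as having negligible cost.} Inserting a vertex $v$ with $p(v)<1/n^2$ anywhere in the tour costs in expectation up to $p(v)\cdot 2\,\mathrm{diam}(V,c)$, so the total extra cost can be as large as $\Theta(\mathrm{diam}(V,c)/n)$. This is \emph{not} dominated by $\OPT(\cJ)$ in general: if all activation probabilities are tiny, $\OPT(\cJ)$ is of order $\max_{u,w}p(u)p(w)(c(u,w)+c(w,u))$, which can be far smaller than $\mathrm{diam}(V,c)/n$. This is precisely why the paper first reduces to instances with a depot (Lemmas~\ref{lem:atsp_reduction_to_depot_approx}--\ref{lem:apx_for_tiny_activations}, Corollary~\ref{cor:depot}), guesses the depot in an outer loop of size $n$, and then uses Proposition~\ref{prop:small_activations} (which requires the depot) to bound the cost of the low-probability part against $\OPT$. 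Your outer loop is over a scaling parameter instead and does not substitute for this step. Finally, your parameters are internally inconsistent: with $m_v=2^{I-i_v}$ and $I=\lceil 2\log n\rceil$ the expanded vertex count is $\sum_v m_v = 2^I\sum_v p(v)$, which can be $\Theta(n^3)$, not $\le 5n^2$ as claimed, and the choice $k=2^I\approx n^2$ with per-copy probability $1/k$ forces that larger copy count.
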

In particular, if $\alpha(n) = \log^q n$ for some constant $q$, then we achieve an approximation ratio of $O(\log^q n)$ for the Asymmetric A~Priori TSP.
We remark that throughout the paper, we did not make any efforts to optimize constants.

In order to prove \Cref{thm:reducing_hop_atsp}, we first proceed analogously to the symmetric case \cite{blauth_et_al} to show that we may assume that all vertices $v$ have the same activation probability $p(v) = \delta$.
Then we observe that for this special case of uniform activation probabilities, the objective function of Asymmetric A~Priori TSP,
\[
\expectationcustom{A \sim p}{ c\left(T[A] \right)} = \sum_{\Delta=1}^{r - 1} \sum_{i=1}^r \delta^2(1-\delta)^{\Delta - 1} \cdot c(v_i,v_{i+\Delta}),
\]
and the objective function of Hop-ATSP \eqref{eq:khop_objective} scaled by a factor $\frac{1}{k^2}$ differ by at most a constant factor when choosing $k = \lfloor \frac{1}{\delta} \rfloor$.
For details we refer to \Cref{sec:reducing_hop_atsp}.

\subsection{Reducing to Hierarchically Ordered Instances}\label{sec:overview_hierarchy}

We now focus on approximation algorithms for Hop-ATSP.
We show that, at the cost of a poly-logarithmic factor in the approximation ratio, we may assume that the instance has a very particular structure and we will call such structured instances \emph{hierarchically ordered}.

In a hierarchically ordered instance we have a total order $\prec$ on the vertices.
Moreover, the instance has some depth $L \in \bN$ and we have a partition $\cH_{\ell}$ of $V$  for each level $\ell \in [L]$, where
\begin{itemize}
\item $\cH_1 = \{ V\}$ is the trivial partition and $\cH_L = \{ \{v\} : v\in V\}$ is the partition of $V$ into singletons,
\item every partition $\cH_{\ell}$  (with $\ell \geq 2$) is a refinement of the partition $\cH_{\ell -1}$, i.e.\,for every set $H \in \cH_{\ell}$ there exists a set $H' \in \cH_{\ell -1}$ with $H \subseteq H'$, and
\item every set $H\in \cH_{\ell}$ for some $\ell \in [L]$ is  a set of vertices appearing consecutively in the total order $\prec$.
\end{itemize}
See \Cref{fig:hierarchical_partition} for an illustration.

\begin{figure}
\begin{center}
\begin{tikzpicture}[
xscale=0.6,
vertex/.style={circle,draw=black, fill,inner sep=1.5pt, outer sep=1pt},
edge/.style={-latex,very thick},
box/.style={very thick, fill=Gray, fill opacity=0.2, rounded corners=8pt},
]

\definecolor{color1}{named}{Peach}  
\definecolor{color2}{named}{ForestGreen}  
\definecolor{color3}{named}{Violet}  
\definecolor{color4}{named}{ProcessBlue} 

\begin{scope}[every node/.style={vertex}]
\foreach \i in {1,...,20} {
\node (v\i) at ({(\i-1)*1.1} ,0) {};
}
\end{scope}

\begin{scope}[box/.append style={draw=color2, fill opacity=0.1}]   
\def\slack{0.6}
\draw[box] ($(v1) - ( \slack,\slack)$) rectangle ($(v20)+(\slack,\slack)$);
\end{scope}

\begin{scope}[box/.append style={draw=color1, fill opacity=0.1}]   
\def\slack{0.475}
\draw[box] ($(v1) - ( \slack,\slack)$) rectangle ($(v6)+(\slack,\slack)$);
\draw[box] ($(v7) - ( \slack,\slack)$) rectangle ($(v12)+(\slack,\slack)$);
\draw[box] ($(v13) - ( \slack,\slack)$) rectangle ($(v20)+(\slack,\slack)$);
\end{scope}

\begin{scope}[box/.append style={draw=color3, fill opacity=0.15}]   
\def\slack{0.35}
\draw[box] ($(v1) - ( \slack,\slack)$) rectangle ($(v4)+(\slack,\slack)$);
\draw[box] ($(v5) - ( \slack,\slack)$) rectangle ($(v6)+(\slack,\slack)$);
\draw[box] ($(v7) - ( \slack,\slack)$) rectangle ($(v12)+(\slack,\slack)$);
\draw[box] ($(v13) - ( \slack,\slack)$) rectangle ($(v15)+(\slack,\slack)$);
\draw[box] ($(v16) - ( \slack,\slack)$) rectangle ($(v17)+(\slack,\slack)$);
\draw[box] ($(v18) - ( \slack,\slack)$) rectangle ($(v20)+(\slack,\slack)$);
\end{scope}

\draw[edge, color2] (v10) to[bend left=20] node[midway, above] {$\level(e_1) = 1$} (v18);
\draw[edge, color1] (v5) to[bend left=56]  node[midway, below] {$\level(e_2) = 2$} (v2);
\draw[edge, color3] (v8) to[bend right=40] node[midway, below] {$\level(e_3) = 3$} (v12);

\node[color=color2] at (22.5,0.6) {$\cH_1$};
\node[color=color1] at (22.5,0) {$\cH_2$};
\node[color=color3] at (22.5,-0.6) {$\cH_3$};
\end{tikzpicture}
 \end{center}
\caption{\label{fig:hierarchical_partition}
 An example of a hierarchical partition $\cH = \left( \cH_{\ell} \right)_{\ell=1}^{4}$, where $\cH_4 = \{\{v\} : v \in V\}$ is not shown.
 All vertices are drawn from left to right according to the total order $\prec$.
 The edges $e_1$ and $e_3$ are forward edges, while $e_2$ is a backward edge.
}
\end{figure}
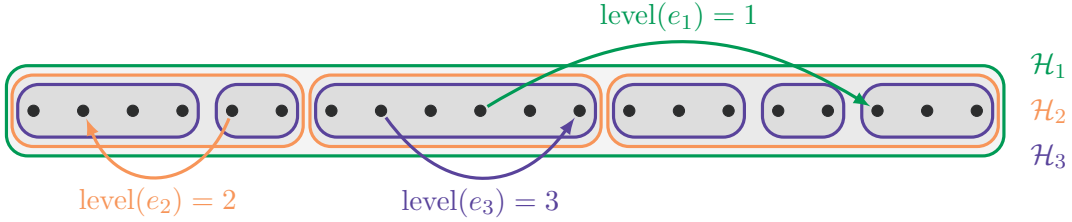

For an edge $e=(v,w)$ we define $\level(e)$ to be the maximal $\ell \in [L]$ such that both endpoints of $e$ belong to the same element of the partition $\cH_{\ell}$.
We call $e$ a \emph{forward edge} if $v \prec w$ and a \emph{backward edge} otherwise.
In a hierarchically ordered instance, we have a maximum edge cost $D_{\ell}$ for each level $\ell \in [L]$, with the property that $D_{\ell} \geq 2\cdot D_{\ell +1}$ for each $\ell\in[L-1]$.
The cost function $c$ satisfies
\begin{itemize}
\item $c(e) \leq D_{\level(e)}$  if $e$ is a forward edge, and
\item $c(e) = D_{\level(e)}$  if $e$ is a backward edge.
\end{itemize}

We will denote a hierarchically ordered instance as a tuple $(V, L, \cH, D, \prec, c,k)$ with the hierarchical partition $\cH=\big(\cH_{\ell}\big)_{\ell=1}^L$ and the maximum edge costs $D=\big(D_{\ell})_{\ell=1}^{L}$.
We provide the following reduction of general instances of Hop-ATSP to hierarchically ordered instances:

\begin{restatable}{theorem}{thmHierarchy}\label{thm:hierarchically_ordered}
Given a well-scaled instance $\cI =(V,c,k)$ of Hop-ATSP with $n=|V|$ vertices, we can in polynomial time sample a hierarchically ordered instance $\cJ =(V, L, \cH, D, \prec, \tilde{c}, k)$ such that
\begin{itemize}
 \item $\tilde{c}(v,w) \geq c(v,w)$ for all $v,w \in V$,
\item $\mathbb{E}[\OPT(\cJ)] \leq L \cdot O(\log n \cdot \log \log n) \cdot \OPT(\cI)$, and
\item $L = O(\log n)$.
\end{itemize}
\end{restatable}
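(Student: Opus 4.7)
The plan is to build the hierarchy $\cH$ top-down by recursively applying a directed low-diameter decomposition (LDD) \cite{bernstein2022negative,bringmann2025near,li2025simpler}, and to define $\tilde c$ directly from the resulting hierarchy as an ultrametric. This makes the structural conditions of a hierarchically ordered instance automatic, so the bulk of the work lies in analyzing the expected cost blowup.

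First I would fix the diameter scale by setting $D_\ell := 2n^3/2^{\ell-1}$ for $\ell = 1,\dots,L-1$ and $D_L := 0$, with $L := \lceil \log_2(2n^3)\rceil + 2 = O(\log n)$. Starting from $\cH_1 := \{V\}$, I would build $\cH_{\ell+1}$ from $\cH_\ell$ by applying the directed LDD independently on each part $H \in \cH_\ell$ (viewed as a complete weighted digraph with weights $c$) with diameter parameter $D_{\ell+1}$, and finally force $\cH_L$ to be the singleton partition. This guarantees that $\cH_{\ell+1}$ refines $\cH_\ell$ and that every part $H \in \cH_\ell$ has weak diameter at most $D_\ell$ under $c$, so in particular $c(u,v) \leq D_\ell$ for all $u,v \in H$. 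I would then choose any total order $\prec$ compatible with the hierarchy (order the children of each part arbitrarily and recurse), ensuring that every part is a contiguous interval under $\prec$. Finally I would set $\tilde c(u,v) := D_{\level(u,v)}$ for $u \neq v$ and $\tilde c(u,u) := 0$: this symmetric ultrametric satisfies the triangle inequality (in fact the ultrametric inequality) by the nesting property $\level(u,v) \geq \min(\level(u,w), \level(w,v))$, dominates $c$ by the weak-diameter bound, and automatically fulfils both $\tilde c(e) \leq D_{\level(e)}$ on forward edges and $\tilde c(e) = D_{\level(e)}$ on backward edges.

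For the cost analysis, since $\OPT(\cJ) \leq \tilde c^{(k)}(T^*)$ for an optimal tour $T^*$ of $\cI$, by linearity it suffices to bound $\mathbb{E}[\tilde c(u,v)] = \mathbb{E}[D_{\level(u,v)}]$ for each ordered pair contributing to the hop objective. Writing this as a telescoping sum over the first level at which $u$ and $v$ get separated,
\[
\mathbb{E}\bigl[D_{\level(u,v)}\bigr] \ = \ \sum_{\ell=2}^{L} D_{\ell-1} \cdot \Pr\bigl[u,v \text{ first separated at level } \ell\bigr],
\]
and invoking the LDD cut probability (conditional on still being in the same level-$(\ell-1)$ part) of order $O(c(u,v)\,\log n \,\log\log n / D_\ell)$, each term for $\ell \leq L-1$ becomes $(D_{\ell-1}/D_\ell)\cdot O(c(u,v)\log n\log\log n) = O(c(u,v)\log n\log\log n)$ since $D_{\ell-1}/D_\ell = 2$; the remaining $\ell = L$ term is trivially at most $D_{L-1} = 2$. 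Summing over the $L$ levels gives $\mathbb{E}[\tilde c(u,v)] \leq O(L \log n \log\log n)\cdot c(u,v) + O(1)$, and summing over the $O(nk) \leq O(n^2)$ hop pairs in $T^*$ while absorbing the additive $O(nk)$ via the well-scaled lower bound $\OPT(\cI) \geq n^2$ yields $\mathbb{E}[\OPT(\cJ)] \leq L \cdot O(\log n \log\log n) \cdot \OPT(\cI)$.

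The main obstacle is invoking a directed LDD with the sharp $O(w \log n \log\log n / D)$ cut-probability-to-diameter tradeoff rather than the original $O(w \log^2 n / D)$ bound of \cite{bernstein2022negative}, since the weaker bound would cost an extra $\log n$ factor in the target ratio; the recent constructions of \cite{bringmann2025near,li2025simpler} provide exactly this improvement. A secondary subtlety is that the LDD's cut-probability guarantee in the directed setting is naturally phrased in terms of $c(u,v) + c(v,u)$ rather than $c(u,v)$ alone, but this only changes constants because both directed sums are bounded by $c^{(k)}(T^*)$ up to a factor of two, which is absorbed into the $O(\cdot)$ in the final bound.
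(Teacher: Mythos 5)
Your overall architecture (recursive directed LDD, one level per halving of the diameter scale) matches the paper's, but two of your design choices together open a gap that you notice and then dismiss incorrectly. By ordering the children of each part \emph{arbitrarily} and defining $\tilde c(u,v) := D_{\level(u,v)}$ as a symmetric ultrametric even on forward edges, you are forced to bound the probability that $u$ and $v$ are \emph{separated} at each refinement step. Since $G[H]$ is complete, $u,v$ land in different SCCs of $G[H]-F$ exactly when $(u,v)\in F$ or $(v,u)\in F$, so the only available bound is the union bound $\Pr[\text{separated}]\le \alpha\,(c(u,v)+c(v,u))/D$. You then assert that $\sum_{(u,v)\text{ hop pair of }T^*} c(v,u) \le O(1)\cdot c^{(k)}(T^*)$, but this fails badly for asymmetric metrics. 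Take $c$ to be the shortest-path metric of the directed unit-weight $n$-cycle, so $c(i,j)$ is the clockwise distance and $c(i,i+1)=1$ while $c(i+1,i)=n-1$. For the tour $T^*=(1,\dots,n)$ and hop distance $k$, the forward hop cost is $\sum_{\Delta\le k} n\Delta=\Theta(nk^2)$, while the reverse hop sum $\sum_{\Delta\le k}\sum_i c(v_{i+\Delta},v_i)=\sum_{\Delta\le k} n(n-\Delta)=\Theta(n^2k)$; the ratio is $\Theta(n/k)$, which is polynomial (e.g.\ $\Theta(\sqrt n)$ at $k=\sqrt n$). Your symmetrization effectively replaces $c(u,v)$ by $\max\bigl(c(u,v),c(v,u)\bigr)$, which cannot be cheap for asymmetric TSP, and your additive-constant claim at the end is the step where this gets swept under the rug.

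The paper avoids the union bound entirely through two coupled choices. First, within each part $H$ the children are ordered by a \emph{topological order} of the SCCs of $G[H]-F$. Because $G[H]$ is complete and $c$ satisfies the triangle inequality, every edge $e=(u,v)$ with $e\notin F$ either stays inside one SCC or goes from an earlier SCC to a later one; hence the event ``$e$ becomes a backward edge with $\level(e)=\ell$'' \emph{implies} $e\in F$ at the step refining $\cH_\ell$ into $\cH_{\ell+1}$, and its probability is bounded by $O(\log n\log\log n)\cdot c(e)/D_{\ell+1}$ using \emph{only} $c(e)$, never $c$ of the reverse edge. Second, $\tilde c(e)=c(e)$ is retained on forward edges and only backward edges are inflated to $D_{\level(e)}$; a forward edge whose endpoints get separated costs nothing extra, so there is nothing to union-bound. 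With these two changes the per-edge bound becomes $\bE[\tilde c(e)]\le L\cdot O(\log n\log\log n)\cdot c(e)+1$ as in Lemma~\ref{lem:expected_increase_edge_length}, and the remainder of your summation (absorbing the additive $O(nk)$ via the well-scaled lower bound $\OPT(\cI)\ge n^2$) carries over unchanged. You must also verify that this asymmetric $\tilde c$ still satisfies the triangle inequality, which is no longer the ultrametric inequality but does hold because for a backward edge $(x,z)$ of level $\ell$ at least one of $(x,y),(y,z)$ must already be a backward edge of level at most $\ell$.
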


\Cref{thm:hierarchically_ordered} implies that for any tour $T$ satisfying $\tilde{c}^{(k)}(T) \leq \alpha \cdot \OPT(\cJ)$ for some approximation ratio $\alpha \geq 1$, we have $c^{(k)}(T) \leq  L \cdot O(\alpha  \cdot \log n \cdot \log \log n )\cdot \OPT(\cI)$ in expectation.
Therefore, in order to find a poly-logarithmic approximation for Hop-ATSP, it suffices to consider hierarchically ordered instances with $L=O(\log n)$.

In order to prove \Cref{thm:hierarchically_ordered}, we repeatedly apply directed low-diameter decompositions.
See \Cref{sec:hierarchically_ordered} for details.

\subsection{Reducing to a Covering Problem}\label{sec:overview_covering}

Next, we explain how we reduce hierarchically ordered instances of Hop-ATSP to a covering problem.
To explain the main ideas, we first focus on the special case $L=2$.
In this case, we have only two partitions $\cH_1 = \{V\}$ and $\cH_2 =\{ \{v\} : v\in V\}$.
Every forward edge has cost at most $D_1$ and every backward edge has cost exactly $D_1$.
We provide a reduction to a covering problem, where we have to cover the vertices in~$V$ by \emph{monotone paths}.
\begin{definition}[Monotone path]
A path $P$ is called \emph{monotone} if it contains only forward edges.
\end{definition}

To define our covering problem, we extend the definition of the $k$-hop costs $c^{(k)}$ from tours to paths. 
For a path $P$ with vertices $v_1,\dots, v_r$ visited in this order, we define
\[
c^{(k)}(P) = \sum_{\Delta=1}^{k} \sum_{i=1}^{r}  c(v_i, v_{i+\Delta}),
\]
where we define $c(v_i,v_l) \coloneqq 0$ for $l > r$.
Then we consider the covering problem of finding a set $\cP$ of monotone paths with $V \subseteq \bigcup_{P\in \cP} V(P)$  minimizing the weight
\[
w(\cP)\ \coloneqq\ \sum_{P\in \cP} w(P),
\]
where $w(P)\coloneqq c^{(k)}(P) + k^2 \cdot D_1$.
The definition of the weight $w(P)$ is chosen to ensure the following key properties:
\begin{enumerate}
\item \label{item:set_cover_L2_easy_direction} 
any solution $\cP$ to this covering problem can be converted into a tour $T$ with  $c^{(k)}(T) \leq w(\cP)$, and
\item  \label{item:set_cover_L2_difficult_direction} 
any tour $T$ can be converted into a feasible covering $\cP$ satisfying $w(\cP) \leq \gamma \cdot c^{(k)}(T)$, where $\gamma \geq 1$ is a constant independent of the Hop-ATSP instance.
\end{enumerate}
In order to prove~\ref{item:set_cover_L2_easy_direction}, we construct a tour $T$ by concatenating the paths from $\cP$ in an arbitrary order.
Observe that whenever we concatenate two paths $P_1$ and $P_2$ to a new path $P$, we have 
\[
c^{(k)}(P) \leq c^{(k)}(P_1) + c^{(k)}(P_2) + k^2 \cdot D_1,
\] 
because $c(v,w) \leq D_1$ for any $v,w \in V$.
See \Cref{fig:outline_cost_merging_paths}.
This implies $w(P) \leq w(P_1) + w(P_2)$. 
Iteratively applying this observation, we obtain a single path $\hat P$ visiting all vertices with $w(\hat P) \leq \sum_{P\in \cP} w(P)$.
If we turn $\hat P$ into a tour $T$ by returning to the first vertex of the path $\hat P$ once we reached its last vertex, we get  $c^{(k)}(T) \leq c^{(k)}(\hat P) + k^2 \cdot D_1 = w(\hat P)$. 
This shows \ref{item:set_cover_L2_easy_direction}.

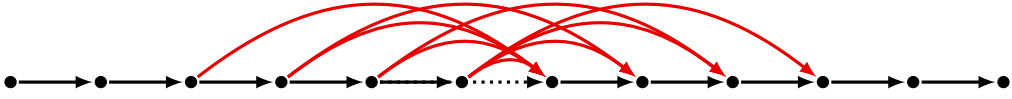
\begin{figure}
\begin{center}

\begin{tikzpicture}[
xscale=1.2,
yscale=2,
vertex/.style={circle,draw=black, fill,inner sep=1.5pt, outer sep=1pt},
edge/.style={-latex,very thick}
]

\begin{scope}[every node/.style={vertex}]
\foreach \i in {1,...,6}{
   \node (v\i) at (\i,0) {};
   \pgfmathtruncatemacro{\x}{\i+6}  
   \node (w\i) at (\x,0) {};
}
\end{scope}

\foreach \i in {1,...,5}{
   \pgfmathtruncatemacro{\next}{\i+1}  
   \draw[edge] (v\i) to (v\next);
   \draw[edge] (w\i) to (w\next);
}
\draw[edge, dotted] (v5) to (w1);

\foreach \i/\j in {3/1,4/1,4/2,5/1,5/3,6/1,6/2,6/3,6/4}{
   \draw[edge, bend left=25, red!90!black] (v\i) to (w\j);
}

\end{tikzpicture} \end{center}
\caption{
If we concatenate the two paths $P_1$ and $P_2$ (black, solid) by adding the dotted edge to obtain a path $P$, the $k$-hop cost $c^{(k)}(P)$ of the new path $P$ is the sum of $c^{(k)}(P_1) + c^{(k)}(P_2)$ and the cost $c(e)$ of all edges $e$ shown in red, where in this example $k=4$.
}
\label{fig:outline_cost_merging_paths}
\end{figure}

The more difficult property to prove is \ref{item:set_cover_L2_difficult_direction}.
We start by some simple observations.
Using $k < |V|$, one can show $c^{(k)}(T) \geq \Omega(k^2) \cdot D_1$ for any tour $T$ (see \Cref{lem:first_level_fixed_cost_lb}).
Moreover, we may assume that the tour $T$ visits every vertex exactly once (see \Cref{lem:khop_skipping_vertices}).
Then we turn the tour $T$ into a path $P$ by removing an arbitrary edge and observe that $w(P) \leq c^{(k)}(T) + k^2 \cdot D_1 \leq O(1) \cdot c^{(k)} (T)$.

The key part of the proof of \ref{item:set_cover_L2_difficult_direction} is to show that we can transform $P$ into a collection of monotone paths covering all vertices without increasing the total weight by more than a constant factor.
First, we will ensure that we have at least $k$ vertices between any two backward edges of $P$.
To this end, we partition the vertices of $P$ into intervals of consecutive vertices containing $k$ vertices each (where the last interval may contain fewer vertices).
Then we change the order of the vertices in each interval by sorting them according to the total order $\prec$.
Using that $c$ satisfies the triangle inequality, one can show that this increases $c^{(k)}(P)$ by at most a constant factor.
After this reordering, backward edges can only appear at the boundaries of the sorted intervals and thus we indeed have at least $k$ vertices between any two backward edges of $P$.

Next, we explain how we get rid of the remaining backward edges.
We will maintain a collection $\cP$ of paths, where in each path we have at least $k$ vertices between any two backward edges.
Now consider a backward edge $e$ contained in a path $P \in \cP$.
For simplicity, assume $k$ to be even and let $v_1, \dots, v_{\sfrac{k}{2}} $ be the $\sfrac{k}{2}$ vertices that $P$ visits before $e$ and let $w_1,\dots , w_{\sfrac{k}{2}}$ be the $\sfrac{k}{2}$ vertices that $P$ visits after $e$ in this order.\footnote{We assume here for simplicity, that $P$ indeed has $\frac{k}{2}$ vertices after the backward edge.}
See \Cref{fig:outline_case distinction_backward_edge}.
When considering the backward edge $e$, we will make sure to charge any increase of $w(\cP)$ to the $k$-hop cost $c^{(k)}(P^e)$ of the subpath $P^{e}$ of $P$ with vertices $V^e = \{v_1, \dots, v_{\sfrac{k}{2}}, w_1, \dots, w_{\sfrac{k}{2}}\}$.
This ensures that the weight increase for removing different backward edges is charged against disjoint subpaths of $P$.
Note that because $P^{e}$ has only $k$ vertices, $c^{(k)}(P^e)$ is the sum of all costs $c(x,y)$ where $x \in V^e$ is visited before $y\in V^e$ by the path $P^e$.

\begin{figure}
\begin{center}

\begin{tikzpicture}[
xscale=0.7,
vertex/.style={circle,draw=black, fill,inner sep=1.5pt, outer sep=1pt},
edge/.style={-latex,very thick}
]

\begin{scope}[every node/.style={vertex}]
\node (v1) at (6,1) {};
\node (v2) at (7,1) {};
\node (v3) at (8,1) {};
\node (v4) at (9,1) {};
\node (v5) at (10,1) {};
\node (w1) at (1,0) {};
\node (w2) at (2,0) {};
\node (w3) at (3,0) {};
\node (w4) at (4,0) {};
\node (w5) at (5,0) {};
\end{scope}

\foreach \i in {1,...,5} {
    \node[above] at (v\i) {$v_{\i}$};
    \node[below] at (w\i) {$w_{\i}$};
}
\foreach \i in {1,...,4}{
   \pgfmathtruncatemacro{\next}{\i+1}  
   \draw[edge] (v\i) to (v\next);
   \draw[edge] (w\i) to (w\next);
}
\draw[edge, densely dashed, red!90!black, out=-150,in=30, looseness=0.5] (v5) to (w1);
\node[red!90!black] (e) at (8.5,0.4) {$e$};

\begin{scope}[shift={(12,0)}]
\begin{scope}[every node/.style={vertex}]
\node (v1) at (3,1) {};
\node (v2) at (5,1) {};
\node (v3) at (8,1) {};
\node (v4) at (9,1) {};
\node (v5) at (10,1) {};
\node (w1) at (1,0) {};
\node (w2) at (2,0) {};
\node (w3) at (4,0) {};
\node (w4) at (6,0) {};
\node (w5) at (7,0) {};
\end{scope}

\foreach \i in {1,...,5} {
    \node[above] at (v\i) {$v_{\i}$};
    \node[below] at (w\i) {$w_{\i}$};
}
\foreach \i in {1,...,4}{
   \pgfmathtruncatemacro{\next}{\i+1}  
   \draw[edge] (v\i) to (v\next);
   \draw[edge] (w\i) to (w\next);
}
\draw[edge, densely dashed, red!90!black, out=-150,in=30, looseness=0.5] (v5) to (w1);
\node[red!90!black] (e) at (8.5,0.4) {$e$};
\end{scope}

\end{tikzpicture} \end{center}
\caption{Illustration of the path $P^e$ for a backward edge $e$. 
Vertices are drawn from left to right according to the total order $\prec$.
We distinguish the two cases $w_{\sfrac{k}{2}} \prec v_1$ (left) and $v_1 \prec w_{\sfrac{k}{2}}$ (right).}
\label{fig:outline_case distinction_backward_edge}
\end{figure}
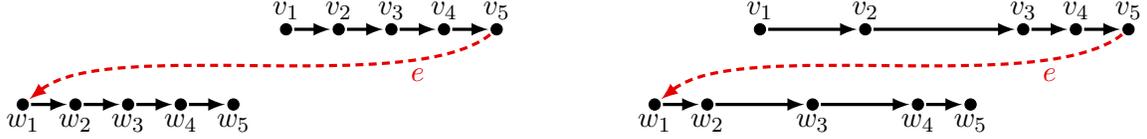

Observe that $v_1 \prec v_2 \prec \dots v_{\sfrac{k}{2}}$ and $w_1 \prec \dots \prec w_{\sfrac{k}{2}}$.
We distinguish two cases.
If $w_{\sfrac{k}{2}} \prec v_{1}$, then we split $P$ at the backward edge $e$ into two paths.
Because in this case every edge $v_i \prec w_j$ with $i,j \in \big[\frac{k}{2}\big]$ is a backward edge, $c^{(k)}(P^{e})$ is at least $\big(\frac{k}{2}\big)^2 \cdot D_1$.
Hence, we can indeed charge the weight increase of $k^2 \cdot D_1$ caused by the splitting of $P$ into two paths against  $c^{(k)}(P^{e})$.

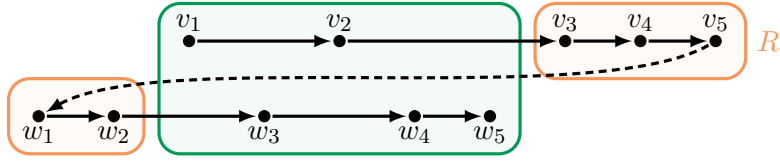
\begin{figure}[t]
\begin{center}

\begin{tikzpicture}[
vertex/.style={circle,draw=black, fill,inner sep=1.5pt, outer sep=1pt},
edge/.style={-latex,very thick}
]

\begin{scope}[every node/.style={vertex}]
\node (v1) at (3,1) {};
\node (v2) at (5,1) {};
\node (v3) at (8,1) {};
\node (v4) at (9,1) {};
\node (v5) at (10,1) {};
\node (w1) at (1,0) {};
\node (w2) at (2,0) {};
\node (w3) at (4,0) {};
\node (w4) at (6,0) {};
\node (w5) at (7,0) {};
\end{scope}

\def\xslack{0.4}	
\def\yslack{0.5}	
\draw[ForestGreen, very thick, fill=ForestGreen, fill opacity=0.05, rounded corners=8pt] ($(3,0) - (\xslack, \yslack)$) rectangle ($(7,1) + (\xslack, \yslack)$);
\draw[Peach, very thick, fill=Peach, fill opacity=0.05, rounded corners=8pt] ($(8,1) - (\xslack, \yslack)$) rectangle ($(10,1) + (\xslack, \yslack)$);
\draw[Peach, very thick, fill=Peach, fill opacity=0.05, rounded corners=8pt] ($(1,0) - (\xslack, \yslack)$) rectangle ($(2,0) + (\xslack, \yslack)$);

\node[Peach] (R) at (10.7,1) {$R$};

\foreach \i in {1,...,5} {
    \node[above] at (v\i) {$v_{\i}$};
    \node[below] at (w\i) {$w_{\i}$};
}
\foreach \i in {1,...,4}{
   \pgfmathtruncatemacro{\next}{\i+1}  
   \draw[edge] (v\i) to (v\next);
   \draw[edge] (w\i) to (w\next);
}
\draw[edge, densely dashed, out=-150,in=30, looseness=0.5] (v5) to (w1);

\end{tikzpicture} \end{center}
\caption{Illustration of the case $v_1 \prec w_{\sfrac{k}{2}}$, where the vertices are drawn from left to right according to the total order $\prec$.
We remove the vertices in $R$ (orange) and reorder the remaining vertices of $P^e$ (shown in green) according to $\prec$, i.e., in this example in the order $v_1, w_3, v_2, w_4, w_5$.
}
\label{fig:outline_second_case_backward_edge}
\end{figure}

Now consider the remaining case, where we have $v_1 \prec w_{\sfrac{k}{2}}$, illustrated in \Cref{fig:outline_second_case_backward_edge}.
Then we remove all vertices $v_i$ with $i\in \big[ \frac{k}{2} \big]$ and $w_{\sfrac{k}{2}} \prec v_i$ from the path and put them into a remainder set $R$, which is initially empty.
Similarly, we remove all vertices $w_i$ with $w_i \prec v_1$ from the path and add them to $R$.
Note that $v_1$ and $w_{\sfrac{k}{2}}$ were not removed.
Now we sort all vertices from $V^e$ that were not removed according to the total order $\prec$.
Because these vertices $u\in V^e$ that were not removed satisfy $v_1 \preceq u \preceq w_{\sfrac{k}{2}}$, this reordering does not introduce any new backward edges.

We will charge the increase of $w(\cP) + |R| \cdot 2k \cdot D_1$ against $c^{(k)}(P^{e})$ and will later show that we can re-insert every vertex from $R$ into some path while increasing the cost of the path by at most $2k \cdot D_1$.
To see that we can indeed charge the increase of  $w(\cP) + |R| \cdot 2k \cdot D_1$ against $c^{(k)}(P^{e})$, we observe that every vertex in $R$ has at least $\frac{k}{2}$ incident edges of cost $D_1$ that contribute to $c^{(k)}(P^{e})$.
Moreover, using that $c$ satisfies the triangle inequality, one can show that removing the vertices in $R$ does not increase the $k$-hop costs $c^{(k)}(P)$ of our path and the reordering of the vertices in $V^e\setminus R$ can increase the cost by at most some constant factor times $c^{(k)}(P^{e})$.

It remains to show how to deal with the removed vertices in $R$.
After having considered each backward edge $e$, all paths in $\cP$ are monotone.
We insert each vertex from $R$ into an arbitrary path $P\in \cP$, maintaining monotonicity.
This insertion increases $c^{(k)}(P)$ by at most $2k \cdot D_1$.
Doing this for all vertices in $R$, we obtain a collection $\cP'$ of monotone paths covering all vertices with $w(\cP' )\leq w(\cP) + 2k\cdot D_1 \cdot |R|$.
This shows \ref{item:set_cover_L2_difficult_direction}.

Recall that so far we only considered the special case $L=2$.
In the general case, we consider a more involved covering problem, taking into account the hierarchical structure of the instance.
We will again aim at covering $V$ by monotone paths, but now we will in addition assign a level to every such path we include in our covering.

\begin{definition}[Path-level pair]
A \emph{path-level pair} is a pair $(P,\ell)$ where $P$ is a monotone path, $\ell\in [L]$, and $V(P) \subseteq H$ for some $H\in \cH_{\ell}$.
\end{definition}

A feasible solution to our covering problem will now consist of such path-level pairs.
In order to be able to turn a solution to our covering problem into a solution to Hop-ATSP without increasing the objective value significantly (analogous to~\ref{item:set_cover_L2_easy_direction} in the case $L=2$), we define our covering problem as follows.

\begin{restatable}[Path Covering]{problem}{CoveringProblem}\label{problem:covering}
Given a hierarchically ordered instance $\cI =(V, L, \cH, D, \prec, c, k)$ of Hop-ATSP, 
find a set $\cP$ of path-level pairs $(P,\ell)$ such that
\begin{enumerate}[label=(\alph*)]
\item\label{item:pseudo_cover_condition} for every vertex $v\in V$ there is some pair $(P,\ell)\in \cP$ with $v\in V(P)$, and
\item\label{item:overview_extra_covering_condition}
 for every level $\ell \in \{2,3,\dots, L\}$ and every set $H\in \cH_{\ell}$, one of the following applies:
\begin{enumerate}[label=(\roman*)]
\item\label{item:direct_covering} for every vertex $v\in H$ there is a pair $(P,j)\in \cP$ with $j < \ell$ and $v\in V(P)$, or
\item\label{item:responsibility} there is a pair $(P,j)\in \cP$ with $j < \ell$ and  $|V(P)\cap H|\geq k$,
\end{enumerate}
\end{enumerate}
while minimizing the weight
\[
 w(\cP) \coloneqq \sum_{(P,\ell) \in \cP} w(P,\ell) = \sum_{(P,\ell) \in \cP} \Big( c^{(k)}(P) + k^2 \cdot D_{\ell} \Big).
\]
\end{restatable}

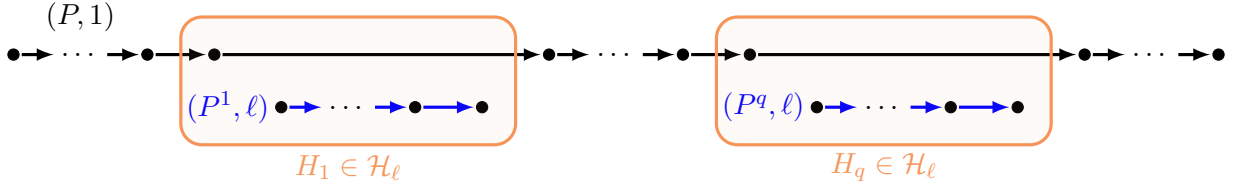
\begin{figure}
\begin{center}

\begin{tikzpicture}[
xscale=0.89,
vertex/.style={circle,draw=black, fill,inner sep=1.5pt, outer sep=1pt},
edge/.style={-latex,very thick}
]
\def\xslack{0.5}	
\def\yslack{0.5}	

\node at (1,0.5) {$(P, 1)$};
\begin{scope}[every node/.style={vertex}]
\node (v0) at (0,0) {};
\node (v2) at (2,0) {};
\node (v3) at (3,0) {};
\node (v8) at (8,0) {};
\node (v10) at (10,0) {};
\node (v11) at (11,0) {};
\node (v16) at (16,0) {};
\node (v18) at (18,0) {};
\end{scope}
\draw[edge] (v0) to (0.6,0);
\node at (1,0) {$\dots$};
\draw[edge] (1.4,0) to (v2);
\draw[edge] (v2) to (v3);
\draw[edge] (v3) to (v8);
\draw[edge] (v8) to (8.6,0);
\node at (9,0) {$\dots$};
\draw[edge] (9.4,0) to (v10);
\draw[edge] (v10) to (v11);
\draw[edge] (v11) to (v16);
\draw[edge] (v16) to (16.6,0);
\node at (17,0) {$\dots$};
\draw[edge] (17.4,0) to (v18);

\begin{scope}[shift={(2,-0.7)}]
\begin{scope}[every node/.style={vertex}]
\node (u2) at (2,0) {};
\node (u4) at (4,0) {};
\node (u5) at (5,0) {};
\end{scope}
\draw[edge,blue] (u2) to (2.6,0);
\node at (3,0) {$\dots$};
\draw[edge,blue] (3.4,0) to (u4);
\draw[edge,blue] (u4) to (u5);
\node[blue] at (1.2,0) {$(P^1, \ell)$};
\draw[Peach, very thick, fill=Peach, fill opacity=0.05, rounded corners=8pt] ($(1,0) - (\xslack, \yslack)$) rectangle ($(5,0.7) + (\xslack, \yslack)$);
\node[Peach] at (3,-0.8) {$H_1\in \cH_{\ell}$};
\end{scope}

\begin{scope}[shift={(10,-0.7)}]
\begin{scope}[every node/.style={vertex}]
\node (u2) at (2,0) {};
\node (u4) at (4,0) {};
\node (u5) at (5,0) {};
\end{scope}
\draw[edge,blue] (u2) to (2.6,0);
\node at (3,0) {$\dots$};
\draw[edge,blue] (3.4,0) to (u4);
\draw[edge,blue] (u4) to (u5);
\node[blue] at (1.2,0) {$(P^q, \ell)$};
\draw[Peach, very thick, fill=Peach, fill opacity=0.05, rounded corners=8pt] ($(1,0) - (\xslack, \yslack)$) rectangle ($(5,0.7) + (\xslack, \yslack)$);
\node[Peach] at (3,-0.8) {$H_q\in \cH_{\ell}$};
\end{scope}

\end{tikzpicture} \end{center}
\caption{\label{fig:example_covering_needs_complicated_condition}
Example illustrating why we need condition \ref{item:overview_extra_covering_condition} in \Cref{problem:covering}.
Suppose that all edges in $\delta(H_i)$ with $i\in[q]$ have cost $D_1$, and all other forward edges have cost $0$.
Consider the path-level pairs $(P,1)$ and $(P^1, \ell), \dots, (P^q,\ell)$ shown in the picture, where each path $P^i$ has $k$ vertices. The path $P$ visits exactly one vertex from each set $H_i$ and it visits $k$ vertices in between any two such vertices.
Then $c^{(k)}(P)=2qk \cdot D_1$ and $c^{(k)}(P^i)=0$ for $i\in[q]$, and thus $w(P,1) + \sum_{i=1}^{q} w(P^i,\ell) = k^2 \cdot D_1 + 2qk \cdot D_1 + q  k^2 \cdot D_{\ell}$.
However, any tour $T$ covering all vertices has $k$-hop cost $c^{(k)}(T) \geq q \cdot k^2 \cdot D_1$, which is much larger than $w(P,1) + \sum_{i=1}^{q} w(P^i,\ell)$ in case $D_1$ is much larger than $D_{\ell}$ and $q, k$ are sufficiently large.
}
\end{figure}

 We highlight that property~\ref{item:overview_extra_covering_condition} is necessary in order to be able to transform $\cP$ into a tour $T$ with $c^{(k)}(T) = O(1) \cdot w(\cP)$, as the example in \Cref{fig:example_covering_needs_complicated_condition} shows.
 
In \Cref{sec:reducing_to_covering} we show that any $\alpha$-approximation algorithm for \Cref{problem:covering} leads to an $O(L^3 \cdot \alpha)$-approximation algorithm for hierarchically ordered instances of Hop-ATSP, with a polynomial overhead in the running time.
The main ideas we use to prove this are similar to the special case $L=2$, but we need a much more careful recursive argument.
In particular, we cannot afford to increase the objective value $L$~times by a constant factor $\gamma$ when transforming a tour into a solution to \Cref{problem:covering} and some care is needed to avoid this.
We refer to \Cref{sec:reducing_to_covering} for details.

\subsection{Solving the Covering Problem via an Algorithm for Monotone Paths}\label{sec:overview_dp}

It remains to provide an approximation algorithm for our covering problem.
For simplicity, in this outline we again focus on the special case $L=2$.
Recall that in our covering problem, the task is to find a set $\cP$ of monotone paths with $V \subseteq \bigcup_{P\in \cP} V(P)$  minimizing the weight $w(\cP) = \sum_{P\in \cP} w(P)$, where $w(P)= c^{(k)}(P) + k^2 \cdot D_1$.
We will use the greedy algorithm for set cover.
It provides an $O(\alpha \cdot \log n)$-approximation for our covering problem, assuming that we have an $\alpha$-approximation algorithm for the following problem of finding a monotone path minimizing a particular objective function.
Given a hierarchically ordered instance of Hop-ATSP (with $L=2$) and a set $S\subsetneq V$ of already covered vertices,
the task is to find a monotone path $P$ minimizing
\begin{equation}\label{eq:outline_actual_objective_function}
\frac{w(P)}{|V(P)\setminus S|} \ =\ \frac{c^{(k)}(P) + k^2 \cdot D_1}{|V(P)\setminus S|},
\end{equation}
which we define to be $\infty$ for $V(P) \setminus S = \emptyset$.

In order to find such a good path, we will use a dynamic program.
A key challenge here is that a naive realization of this approach would lead to a running time of $\Omega(n^k)$.
In order to avoid this, we will not directly try to minimize the objective function~\eqref{eq:outline_actual_objective_function}, but instead consider some approximation of it.
More precisely, for each offset $q\in \{0,1,\dots,k\}$ and every monotone path $P$, we define a weight bound
$w^q(P) \geq w(P)$.
These weight bounds will have the following key properties:
\begin{enumerate}
\item\label{item:overview_weight_bound_upper_bound} for every monotone path $P$ and every offset $q\in \{0,1,\dots,k\}$, we have $w^q(P) \geq w(P)$,
\item\label{item:overview_property_weight_bound_good_for_some_offset}  for every monotone path $P$, there exists an offset $q\in\{0,1,\dots,k\}$ such that $w^q(P) = O(\log(k)) \cdot w(P)$, and
\item\label{item:overview_property_weight_bound_efficient}  we can find a monotone path $P$ and an offset $q$ minimizing $\frac{w^q(P)}{|V(P)\setminus S|}$ in quasi-polynomial time.
\end{enumerate}
From these properties, it follows immediately that we have a quasi-polynomial-time $O(\log k)$-approximation algorithm for finding a monotone path minimizing~\eqref{eq:outline_actual_objective_function}, and thus a quasi-polynomial-time $O(\log k \cdot \log n)$-approximation algorithm for our covering problem.

Let us now explain how we define the weight bound $w^q(P)$.
See \Cref{fig:weight_bound} for an illustration.
For simplicity of the explanation, we will assume $k= {2^{\Gamma} -1}$ for some $\Gamma\in \bN$.
We number the vertices of $P$ as $v_0,\dots, v_r$ and assign a height to every vertex $v_i\in V(P)$.
We define $\height(i)$ to be the maximum $h \in \{0,1\dots, \Gamma \}$ such that $i$ is divisible by $2^h$.
For each vertex $v_i \in V(P)$ the distance from the $2^{\height(i+q)}-1$ vertices preceding $v$ on $P$ and the distance to the $2^{\height(i+q)}-1$ vertices succeeding $v$ on $P$ will contribute to the weight bound $w^q(P)$.
More precisely,
\[
w^q(P) \ \coloneqq\ k^2 \cdot D_{1} + \sum_{i=0}^r  \ \sum_{\Delta = 1}^{2^{\height(i+q)}-1} 2^{\height(i+q)} \cdot \Big( c(v_{i-\Delta}, v_i) + c(v_i, v_{i+\Delta}) \Big)
\]

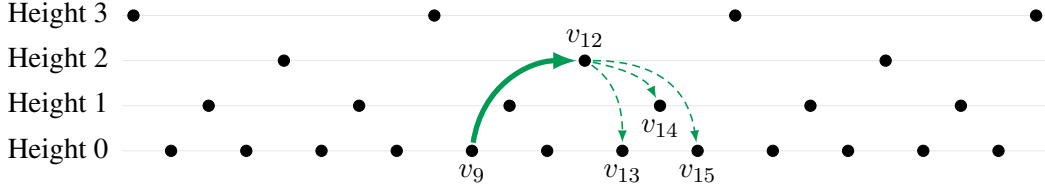
\begin{figure}
\begin{center}
\begin{tikzpicture}[
vertex/.style={circle,draw=black, fill,inner sep=1.5pt, outer sep=1pt},
edge/.style={-latex},
xscale=0.5,
yscale=0.6
]
\definecolor{color1}{named}{ForestGreen}  
\definecolor{color2}{named}{Violet}

\def\slack{0.3}
\begin{scope}
\foreach \h in {0,1,2,3} {
  \pgfmathtruncatemacro{\hc}{\h}
  \ifcase\hc
    \def\hlabel{Height 0}
  \or
    \def\hlabel{Height 1}
  \or
    \def\hlabel{Height 2}
  \or
    \def\hlabel{Height 3}
  \fi
  \draw[thin, opacity=0.1] (0-\slack, \h) -- (24+\slack, \h);
  \node[left] at (0-\slack-0.1, \h) {{\hlabel}};
}

\foreach \i in {0,...,24} {
\pgfmathsetmacro{\height}{
    ifthenelse(mod(\i,8)==0,3,
    ifthenelse(mod(\i,4)==0,2,
      ifthenelse(mod(\i,2)==0,1,0)
  )
  )
}
\node[vertex] (v\i) at ({(\i)*1},\height) {};
}
\end{scope}
\draw[edge, out=80, in=180, looseness=1, line width=2pt, color1] (v9) to (v12);

\draw[edge,out=-30, in=90, semithick, densely dashed, color1] (v12) to (v13);
\draw[edge,out=-10, in=120, looseness=0.8, semithick, densely dashed, color1] (v12) to (v14);
\draw[edge,out=0, in=100, looseness=1.1, semithick, densely dashed, color1] (v12) to (v15);
\node[above=1.5pt] at (v12) {$v_{12}$};
\node[below=1.5pt] at (v9) {$v_{9}$};
\node[below=1.5pt] at (v13) {$v_{13}$};
\node[below=1.5pt] at (v14) {$v_{14}$};
\node[below=1.5pt] at (v15) {$v_{15}$};
\end{tikzpicture}
 \end{center}
\caption{\label{fig:weight_bound}
An example illustrating the weight bound $w^q(P)$.
In this example, we have $q=0$, $\Gamma = 3$, and $k=2^{3} - 1 = 7$.
The vertices $v_0, \dots, v_{24}$ are drawn from left to right, and each vertex $v_i$ is vertically positioned according to $\height(i)$.
Suppose that $P$ visits these vertices in this order.
To prove that the weight bound is an upper bound on $w(P)$, we use the triangle inequality,
bounding the cost $c(v_i,v_j)$ of an edge contributing to $c^{(k)}(P)$ by $c(v_i,v_m) + c(v_m,v_j)$, where we choose $m\in \{i,\dots,j\}$ maximizing $\height(m)$.
For example,the edges $(v_9,v_j)$ for $j \in \{12,\dots,15\}$ contribute to $c^{(k)}(P)$.
Our weight bound $w^0(P)$ bounds the contribution of these edges to $c^{(k)}(P)$ via the triangle inequality: $c(v_9,v_j) \leq c(v_9,v_{12}) + c(v_{12},v_{j})$ for $j \in \{12,\dots,15\}$.
Note that $c(v_9,v_{12})$ indeed contributes to $w^0(P)$ with a factor of $2^{\height(12)} = 4$.
}
\end{figure}

Using that $c$ satisfies the triangle inequality, one can show \ref{item:overview_weight_bound_upper_bound}.
In order to prove~\ref{item:overview_property_weight_bound_good_for_some_offset}, we show that choosing $q\in\{0,1,\dots,k\}$ uniformly at random leads to $\bE \big[ w^q(P) \big] = O(\Gamma) \cdot w(P) = O(\log(k))\cdot w(P)$.
To prove~\ref{item:overview_property_weight_bound_efficient}, we use a dynamic program.
We highlight that this dynamic program is the only part of our algorithm that does not have a polynomial runtime.
For details of how we prove these properties and how we extend these arguments to the general case of arbitrary $L$, we refer to \Cref{sec:monotone_paths}.

\section{Lower Bound on the Adaptivity Gap}\label{sec:adaptivity_gap}

In this section we prove \Cref{thm:lower_bound}, i.e., we prove a polynomial lower bound on the adaptivity gap of the Asymmetric A Priori TSP:

\begin{definition}[Adaptivity gap]
\label{def:adaptivity_gap}
The \emph{adaptivity gap} of an Asymmetric A Priori TSP instance $\mathcal{I} = (V,c,p)$ with $\expectationAP{{\rm TSP}(A,c)} > 0$ is
\begin{equation}
\label{eq:adaptivity_gap}
\frac{\OPT(\mathcal{I})}{\expectationAP{{\rm TSP}(A,c)}}.
\end{equation} 
For $n \in \bN$ we define $\adaptivitygap(n)$ to be the supremum of this ratio \eqref{eq:adaptivity_gap} taken over all instances $(V,c,p)$ with $\expectationAP{{\rm TSP}(A,c)} > 0$ and $|V| = n$.
\end{definition}
We  prove that the adaptivity gap $\adaptivitygap(n)$ is at least $\Omega(n^{\sfrac{1}{4}} \cdot \log^{-1} n)$.
To this end, we construct a family of Asymmetric A Priori TSP instances, which we call the \emph{asymmetric grid instances}.

\subsection{The Asymmetric Grid Instances}

The asymmetric grid instances arise as the metric closure of a particular family of directed graphs with unit weight edges, whose construction we present now.
See \Cref{fig:asymm_grid_graph} for an illustration.
We define the graph $G_k = \left( V_k, E_k \right)$ for $k \in \bZ_{\geq 2}$ as follows:
$V_k$ consists of $k^2$ distinct vertices $\parset{v_{i,j} : i,j \in \parset{1, \dots , k}}$.
As the set of edges we choose
\begin{equation*}
 E_k\ \coloneqq\ \Big\{ (v_{i,j} ,\ v_{i^\prime,j^\prime})\  :\  j + 1 = j^\prime \text{ or } (j = k \text{ and } j^\prime= 1) \Big\}
\end{equation*}
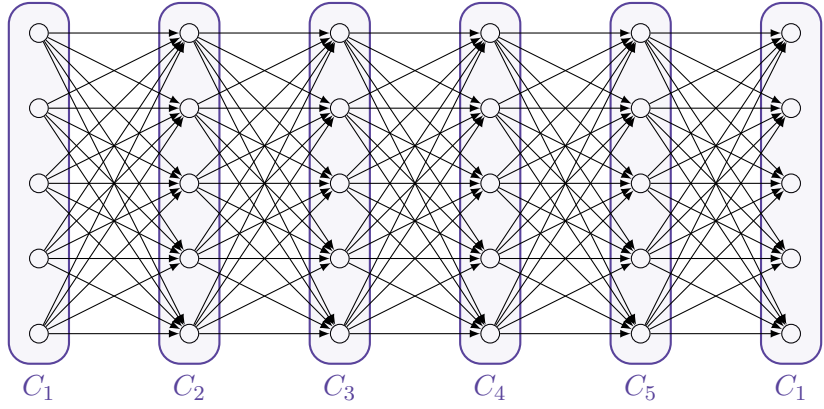
\begin{figure}
\begin{center}
\begin{tikzpicture}[
vertex/.style={circle,draw=black,inner sep=2.5pt},
edge/.style={-latex},
box/.style={Violet, thick, fill=Violet, fill opacity=0.05, rounded corners=8pt},
scale=1
]
\def\k{5}

\begin{scope}[every node/.style={vertex}]

\foreach \i in {1,...,6} {
\foreach \j in {1,...,\k}{
\node (x\i y\j) at (2*\i, \j) {};
}
}      
\end{scope}

\def\slack{0.4}

\foreach \i in {1,...,\k} {
\draw[box] (2*\i - \slack,1 - \slack) rectangle (2*\i + \slack,5 + \slack);
\node at (x\i y1)[below=12pt] {\textcolor{Violet}{$C_{\i}$}};
}

\draw[box] (2*6 - \slack,1 - \slack) rectangle (2*6 + \slack,5 + \slack);
\node at (x6y1)[below=12pt] {\textcolor{Violet}{$C_{1}$}};

\foreach \i in {1,...,\k} {
\foreach \jone in {1,...,\k} {
\foreach \jtwo in {1,...,\k} {

\pgfmathtruncatemacro{\itwo}{\i + 1}
\draw[edge] (x\i y\jone) -- (x\itwo y\jtwo);

}
}
}
\end{tikzpicture}
 \end{center}
\caption{
The graph $G_k$ for $k=5$. Each column $C_j$ contains $k$ distinct vertices. 
Between consecutive columns $C_j$ and $C_{j+1}$, there is a complete bipartite graph with edges directed towards $C_{j+1}$. 
Note that in this figure, the vertices of $C_1$ are depicted twice for clarity. 
To form the actual graph $G_5$, the two representations of each vertex in $C_1$ should be identified as a single vertex.
}
\label{fig:asymm_grid_graph}
\end{figure}

Let $c_k: V_k \times V_k \to \bZ_{\geq 0}$ denote the shortest-path metric in $G_k$ with unit-weight edges.
Then, we obtain an instance $(V_k, c_k)$ of ATSP.
We introduce some notation that makes it easier for us to reference certain subsets of $V_k$.
A subset of the form $C_j = \parset{v_{i,j} : i \in [k]}$ is called a \emph{column} for $j \in [j]$.
Analogously,
subsets $R_i = \parset{v_{i,j} : j \in [k]}$ are called \emph{rows}.
To obtain instances of Asymmetric A Priori TSP, we define activation probabilities $p_k: V_k \to (0,1]$ by $p_k(v) \coloneqq 1/k$ for all $v \in V_k$.
\addparskip

We will prove that for $k\geq 5$, the expected cost of an optimal a priori tour satisfies
\begin{equation}\label{eq:adaptivity_lb_lb}
\OPT\left(V_k, c_k, p_k \right) \geq \frac{k^{3/2}}{2^9e^4},
\end{equation}
and the expected cost of an optimal a posteriori tour satisfies
\begin{equation}\label{eq:adaptivity_lb_ub}
\expectationcustom{A \sim p_k}{{\rm TSP}(A,c_k)} \leq 5 \cdot k \log k.
\end{equation}

\subsection{Lower Bound for an Optimal A Priori Tour}

We first prove~\eqref{eq:adaptivity_lb_lb}, i.e., we prove a lower bound for the optimal a priori tour.
Let $k \geq 5$ and let 
$T^\ast$ be a Hamiltonian cycle on $V_k$ and let us number the vertices as $x_1 , \dots, x_{k^2}$ according to the order in which they are visited by $T^\ast$.
We write $E^{(k)} \coloneqq V_k \times V_k$ for the edge set of the complete directed graph on $V_k$, and $\delta^+(X) \coloneqq X \times V_k$ and $\delta^-(X) \coloneqq V_k \times X$ for any $X \subseteq V_k$.

We have
\begin{equation}\label{eq:grid_expression_we_want_to_lower_bound}
\expectationcustom{A \sim p_k}{c_k(T^\ast[A]) } = \sum_{e \in E^{(k)}} \probabilitycustom{A \sim p_k}{e \in E(T^\ast[A])} \cdot c_k(e),
\end{equation}
where, for $e=(x_i, x_j) \in E^{(k)}$,
\begin{equation*}
\probabilitycustom{A \sim p_k}{e \in E(T^\ast[A])} = \begin{cases} (1/k)^2 \cdot (1 - 1/k)^{j-i - 1} \quad&\text{if $i < j$,} \\ (1/k)^2 \cdot (1 - 1/k)^{k^2 + j-i - 1} \quad&\text{if $i > j$.} \end{cases}
\end{equation*}
We analyze the contribution of $\delta^+ (C_j) $ to $\expectationcustom{A \sim p_k}{c_k(T^\ast[A])}$, that is,
\begin{equation*}
\sum_{e \in \delta^+(C_j)} \probabilitycustom{A \sim p_k}{e \in E(T^\ast[A])} \cdot c_k(e)
\end{equation*}
separately for all $j \in [k]$. 

Observe that $k \geq 5$ implies $4k < k^2$.
Let $j \in [k]$ be fixed and define 
\[
S(x_i)\ \coloneqq\ \big\{ x_{\ell} : \ell = i + 1, \dots , i + 4k \big\}
\]
 for all $i \in \parset{1, \dots , k^2}$.
In other words, the set $S(x_i)$ contains the $4k$ vertices following $x_i$ in the tour $T^\ast$.
For $x_{\ell} \in S(x_i)$, we have
\begin{equation}
\label{eq:adaptivity_lb_lb_eq1}
\begin{split}
\probabilitycustom{A \sim p_k}{(x_i,x_{\ell}) \in E(T^\ast[A])} &\geq (1/k)^2 \cdot (1-1/k)^{4k} \\
&\geq (1/k)^2 \cdot \frac{1}{(2e)^4}.
\end{split}
\end{equation}
In our proof, we will focus on lower bounding the contribution of edges $(x_i,x_{\ell})$ with $x_{\ell}\in S(x_i)$ to \eqref{eq:grid_expression_we_want_to_lower_bound}.

\begin{definition}[Saturated interval]
We say that $I \subseteq V_k$ is an \emph{interval} of $T^\ast$ if it is the vertex set of a subpath of $T^\ast$.
Moreover, we call $I$ \emph{saturated} if it satisfies both of the following conditions:
\begin{itemize}
\item
it contains at most $4k$ vertices, and

\item
it contains at least $\sqrt{k}$ vertices from $C_j$.
\end{itemize}
\end{definition}

For an interval $I$ of $T^\ast$, we denote by $E_j[I] \coloneqq \delta^+(C_j) \cap (I \times I)$ the set of edges in $\delta^+(C_j)$ with both endpoints in $I$. 
We prove a lower bound on the contribution of these edges to the overall cost $\expectationcustom{A \sim p_k}{c_k(T^\ast[A])}$ in case the interval $I$ is saturated.
We use that each of theses edges has cost $k$ with respect to the cost function $c_k$ (because both endpoints are in the same column), but the edges also have a significant probability of being part of $T^\ast[A]$ (because both endpoints are in the interval $I$ with at most $4k$ vertices).

\begin{lemma}\label{lem:adaptivity_lb_lb_eq2}
Let $I$ be a saturated interval of $T^\ast$. Then
\begin{equation*}
\sum_{e \in E_j[I]} \probabilitycustom{A \sim p_k}{e \in E(T^\ast[A])} \cdot c_k(e) \ \geq\ \left( |I \cap C_j| \right)^2 \cdot \frac{1}{2^7e^4k}.
\end{equation*}
\end{lemma}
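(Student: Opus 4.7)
The strategy is to lower-bound the sum by restricting to ordered pairs $(u,v)$ with both endpoints in $C_j \cap I$ and $u$ appearing strictly before $v$ along $T^\ast$; any such pair lies in $E_j[I]$ since $C_j \cap I \subseteq C_j$. The reason for this restriction is that whenever $u$ and $v$ lie in the same column $C_j$, any directed path from $u$ to $v$ in $G_k$ must traverse the entire cyclic column structure, so $c_k(u,v) = k$, the largest value the metric takes; thus these edges give the most favorable cost-to-probability trade-off.

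Since $I$ is saturated we have $|I|\leq 4k$, and the hypothesis $k\geq 5$ ensures $4k < k^2$, so $I$ is a proper subpath of the Hamiltonian cycle $T^\ast$. After a harmless cyclic renumbering of $x_1,\dots,x_{k^2}$ I may therefore assume that the vertices of $I$ occupy consecutive tour positions $x_a, x_{a+1}, \dots, x_b$ with $b-a+1\leq 4k$. Consequently, for any $u=x_i$ and $v=x_\ell$ with $u,v\in I$ and $i<\ell$, we have $\ell - i \leq 4k - 1$, i.e.\ $v \in S(u)$. By \eqref{eq:adaptivity_lb_lb_eq1} this gives
\[
\probabilitycustom{A\sim p_k}{(u,v)\in E(T^\ast[A])}\ \geq\ \frac{1}{k^2}\cdot\frac{1}{(2e)^4}.
\]

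Writing $m \coloneqq |I\cap C_j|$ and using that $I$ is saturated, we have $m \geq \sqrt{k}\geq\sqrt{5}>2$. The number of ordered pairs $(u,v)$ with both endpoints in $C_j\cap I$ and $u$ preceding $v$ along $T^\ast$ is exactly $\binom{m}{2}$, and each contributes at least $k\cdot \tfrac{1}{k^2}\cdot\tfrac{1}{(2e)^4} = \tfrac{1}{16\,e^4\,k}$ to $\sum_{e\in E_j[I]}\probabilitycustom{A\sim p_k}{e\in E(T^\ast[A])}\cdot c_k(e)$. Finally, $m\geq 2$ yields $\binom{m}{2}\geq m^2/4$, so the sum is at least $m^2/(4\cdot 16\,e^4\,k) = m^2/(2^6\,e^4\,k)$, which is even stronger than the claimed bound $m^2/(2^7\,e^4\,k)$.

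No step requires more than elementary counting and the previously established probability estimate \eqref{eq:adaptivity_lb_lb_eq1}; the only mild subtlety worth flagging is the cyclic renumbering, which is legitimate exactly because $|I|\leq 4k < k^2$ guarantees that $I$ is never the entire cycle, so the starting point of $T^\ast$ can always be placed outside $I$.
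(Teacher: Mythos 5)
Your proof is correct and follows essentially the same approach as the paper: count ordered pairs $(u,v)$ in $C_j\cap I$ with $u$ preceding $v$ on the subpath, use that all such pairs satisfy $v\in S(u)$ so the activation-probability bound \eqref{eq:adaptivity_lb_lb_eq1} applies, use $c_k(u,v)=k$, and multiply by $\binom{m}{2}$. The only difference is cosmetic: you bound $\binom{m}{2}\ge m^2/4$ directly rather than via $(m-1)^2$, giving the marginally sharper constant $2^6$ instead of $2^7$.
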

\begin{proof}
Let $x_i,x_{\ell} \in I \cap C_j$ with $x_i \neq x_{\ell}$.
Because $x_i, x_{\ell} \in I$, we have $x_i \in S(x_{\ell})$ or $x_{\ell} \in S(x_i)$.
Without loss of generality, assume that $x_{\ell} \in S(x_i)$.
Since $x_i \neq x_{\ell}$ and $x_i,x_{\ell} \in C_j$, we have $c_k(x_i, x_{\ell}) = k$.
There are $\binom{|I \cap C_j|}{2}$ such pairs $x_i,x_{\ell´}$.
Thus, we have
\begin{equation*}
\begin{split}
\sum_{e \in E_j[I]} \probabilitycustom{A \sim p_k}{e \in E(T^\ast[A])} \cdot c_k(e) \overset{\eqref{eq:adaptivity_lb_lb_eq1}}&{\geq} \sum_{e \in E_j[I]} (1/k)^2 \cdot \frac{1}{(2e)^4} \cdot c_k (e) \\ 
& \geq \binom{|I \cap C_j|}{2} \cdot (1/k)^2  \cdot \frac{1}{(2e)^4} \cdot k \\
&\geq \left( |I \cap C_j| -1 \right)^2 \cdot \frac{1}{2^5e^4k} \\
&\geq \left( |I \cap C_j| \right)^2 \cdot \frac{1}{2^7e^4k},
\end{split}
\end{equation*}
using that $|I \cap C_j| \geq \sqrt{k} \geq 2$ for the last inequality.
\end{proof}

\begin{lemma}
\label{lem:adaptivity_lb_lb}
For $k \geq 5$, 
\begin{equation*}
\OPT\left(V_k, c_k, p_k \right) \geq \frac{k^{3/2}}{2^9e^4}.
\end{equation*}
\end{lemma}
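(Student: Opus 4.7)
The plan is to partition the $k^2$ cyclic positions of the tour $T^\ast$ into $m = \lceil k/4 \rceil$ pairwise disjoint intervals $I_1, \dots, I_m$, each of size at most $4k$, and then apply \Cref{lem:adaptivity_lb_lb_eq2} to every pair $(I_t, j)$ for which $I_t$ is saturated for column $j$.  The central observation is that for distinct $j, j'$ the edge sets $E_j[I_t]$ and $E_{j'}[I_{t'}]$ are disjoint — $E_j$ and $E_{j'}$ because they are indexed by the source column (recall $E_j[I_t] = \delta^+(C_j) \cap (I_t \times I_t)$), and different $I_t, I_{t'}$ because the intervals are disjoint.  Consequently the per-pair bounds in \Cref{lem:adaptivity_lb_lb_eq2} may be added:
\[
\expectationcustom{A \sim p_k}{c_k(T^\ast[A])} \ \geq\  \sum_{t=1}^m \sum_{\substack{j\in[k]\\ I_t\text{ saturated for }j}} \frac{(|I_t \cap C_j|)^2}{2^7 e^4 k}.
\]
It then suffices to show $\sum_{(t,j)\text{ saturated}} (|I_t \cap C_j|)^2 \ \geq\ k^{5/2}/4$, which combined with the $1/k$ factor in the denominator yields the desired $\Omega(k^{3/2})$ bound on $\OPT$.

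To lower bound this sum, I would set $n_{t,j} = |I_t \cap C_j|$ and first apply Cauchy--Schwarz per interval, using $\sum_j n_{t,j} = |I_t|$ to obtain $\sum_j n_{t,j}^2 \geq |I_t|^2 / k \geq 16k$ whenever $|I_t| = 4k$, and hence $\sum_{t,j} n_{t,j}^2 \geq 4k^2$.  The unsaturated contribution satisfies $n_{t,j}^2 \leq \sqrt{k} \cdot n_{t,j}$, so that $\sum_{(t,j)\text{ unsat}} n_{t,j}^2 \leq \sqrt{k} \sum_{t,j} n_{t,j} = \sqrt{k}\cdot k^2$.

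The main obstacle I anticipate is precisely that these two estimates are off by a factor of $\sqrt{k}$: Cauchy--Schwarz alone only furnishes $\Omega(k^2)$ for $\sum_{t,j} n_{t,j}^2$, while $k^{5/2}/4$ is required.  Closing this gap should come from exploiting the interplay between columns and intervals more carefully.  One natural route is a case split: if sufficiently many pairs $(t,j)$ are saturated (say $\Omega(\sqrt{k})$ per interval on average), then each contributes at least $(\sqrt{k})^2 = k$ to the sum and we immediately obtain $\sum n_{t,j}^2 \geq m\cdot \sqrt{k}\cdot k = \Omega(k^{5/2})$; otherwise the $C_j$ vertices of most columns are spread out so uniformly in $T^\ast$ that every interval $I_t$ of $4k$ consecutive positions satisfies $n_{t,j}\approx 4$ for almost all $j$, and the overall cost is dominated by the many different-column transitions rather than by same-column edges — in which case a direct expected-cost estimate, bounding $\expectationcustom{A\sim p_k}{c_k(T^\ast[A])}$ from below by the expected total column-advance between consecutive active vertices, gives a much stronger bound. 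Balancing these two regimes is what I would spend most of the effort on in writing out the proof.
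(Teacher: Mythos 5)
Your proposal diverges significantly from the paper's argument and contains a gap that I don't believe your sketched repair can close. Here is the comparison and the issue.

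The paper does \emph{not} work with a single global partition of the tour positions. Instead, for each column $j$ \emph{separately} it chooses an inclusion-maximal family of pairwise disjoint intervals that are saturated \emph{for that column}, and then splits into two cases according to whether these intervals cover at least half of $C_j$ or not. The maximality of the family is essential: it is exactly what yields the low-congestion bound $\congestion(x_\ell) \le \sqrt{k}$ for unmarked vertices, and this congestion bound in turn shows (via the set $Z$ of ``far'' vertices) that at least $k^2$ of the cheap-to-activate edges leaving unmarked $C_j$ vertices have cost $\ge \sqrt{k} - 1$. That is the entire content of Case 2.

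Your proposal replaces this by a single fixed partition $I_1,\dots,I_m$ of the cyclic positions, used simultaneously for all columns, and then tries to extract all the cost from within-interval same-column edges via \Cref{lem:adaptivity_lb_lb_eq2}. You correctly notice this alone is off by a factor $\sqrt{k}$: Cauchy--Schwarz gives only $\sum_{t,j} n_{t,j}^2 \ge 4k^2$, while the unsaturated part can be as large as $k^{5/2}$, so the saturated contribution has no useful lower bound. Your proposed fix — a case split into ``many saturated pairs'' versus ``spread out, use expected column advance'' — is where the argument breaks down. First, the cost $c_k(x_p,x_q)$ of an edge in $T^\ast[A]$ is the \emph{shortest-path} column distance $(j_q - j_p)\bmod k$, which can be much smaller than the net column advance of the tour between positions $p$ and $q$ if the tour wraps around the columns in between; so ``expected total column advance between consecutive active vertices'' does not directly lower-bound the expected cost. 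Second, and more fundamentally, once you discard the per-column maximal-family structure you no longer have the congestion bound, and I do not see any alternative mechanism in your sketch that forces a large fraction of the cheap-to-activate edges to land in far-away columns. The fixed partition also has a boundary problem: a cluster of $\sqrt{k}$ same-column vertices sitting astride a partition boundary contributes to neither interval, whereas the paper's argument (which chooses the intervals adaptively per column) is immune to this. In short, Case~A of your plan is fine, but Case~B is not a proof — it would require an argument of roughly the same depth as the paper's Case 2 (a maximality-based congestion bound or an equivalent), which the fixed-partition setup does not supply. You should instead do the analysis column by column, as the paper does.
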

\begin{proof}
Let $I_1, \dots, I_m$ be a family of pairwise disjoint saturated intervals of $T^\ast$ that maximizes $\sum_{i = 1}^m |I_i|$.
We say that $v \in V_k$ is \emph{marked} if $v \in \bigcup_{i=1}^m I_i$.
Otherwise $v$ is \emph{unmarked}.
Observe that $m \leq \sqrt{k}$ since each interval contains at least $\sqrt{k}$ vertices from $C_j$ and $|C_j| = k$.
We distinguish the following two cases:

\paragraph*{Case 1:}
Suppose that at least $k/2$ vertices of $C_j$ are marked.
In other words, we have $\sum_{i = 1}^m |I_i \cap C_j| \geq k/2$.
Using the Arithmetic-Quadratic-Mean inequality and the fact that the $I_i$ are pairwise disjoint, we obtain
\begin{equation*}
\begin{split}
\sum_{e \in \delta^+(C_j) } \probabilitycustom{A \sim p_k}{e \in E(T^\ast[A])} \cdot c_k(e) &\geq
\sum_{i = 1}^m \sum_{e \in E_j[I_i]} \probabilitycustom{A \sim p_k}{e \in E(T^\ast[A])} \cdot c_k(e) \\
\overset{Lem.~\ref{lem:adaptivity_lb_lb_eq2}}&{\geq} \frac{1}{2^7e^4k} \cdot \sum_{i = 1}^m \left( |I_i \cap C_j| \right)^2 \\
&\geq \frac{1}{2^7e^4k} \cdot \frac{\left( \sum_{i = 1}^m |I_i \cap C_j|\right)^2}{m} \\
&\geq \frac{1}{2^9e^4k} \cdot \frac{k^2}{\sqrt{k}} \\
&= \frac{\sqrt{k}}{2^9e^4}.
\end{split}
\end{equation*}

\paragraph*{Case 2:}
Suppose that at least $k/2$ vertices of $C_j$ are unmarked.
We define the \emph{congestion} of a vertex $x_{\ell}$ as  
\[
\congestion (x_{\ell}) \coloneqq |\parset{x_i \in C_j : \text{$x_i$ unmarked and $x_{\ell} \in S(x_i) $}}|.
\]
We claim that  our choice of $I_1, \dots, I_m$ implies $\congestion(x_{\ell}) \leq \sqrt{k}$ for all $\ell \in [k^2]$. 

Suppose for the sake of deriving a contradiction that $\congestion(x_{\ell}) > \sqrt{k}$.
Without loss of generality, after renumbering we can assume that $\ell = k^2$, i.e, $x_{\ell}$ is the last vertex visited by the tour $T^\ast$.
Let 
\begin{align*}
i_1\ \coloneqq&\  \min \parset{i \in [k^2] : \text{$x_{i} \in C_j$ unmarked and $x_{\ell} \in S(x_{i})$}} \\
i_2\ \coloneqq&\ \max \parset{i \in [k^2] : \text{$x_{i} \in C_j$ unmarked and $x_{\ell} \in S(x_{i})$}}.
\end{align*}
The assumption that $\congestion(x_{\ell}) > \sqrt{k}$ implies that the interval $I^\ast$ given by the subpath $x_{i_1}, x_{i_1 + 1} , \dots , x_{i_2} $ of $T^\ast$ is saturated ($|I^\ast| \leq 4k$ since $x_{i_2} \in S( x_{i_1})$).
Every interval $I_z$ for some $z\in [m]$ satisfies $I_z \cap I^\ast = \emptyset$ or $I_i \subseteq I^\ast$, because $x_{i_1}$ and $x_{i_2}$ are unmarked.
Hence, removing all saturated intervals of $I_1, \dots , I_m$ that intersect $I^\ast$ and adding the interval $I^\ast$ instead still yields a family of pairwise disjoint saturated intervals.
This contradicts the choice of our family maximizing $\sum_{j = 1}^m |I_j|$ since $I^\ast$ contains unmarked vertices.
We conclude that, indeed,  $\congestion(x_\ell) \leq \sqrt{k}$ for all $\ell \in [k^2]$.

We consider the set
 \[
 Z \coloneqq \parset{ v \in V_k : c_k(x,v) \geq \sqrt{k} - 1 \text{ for all } x \in C_j}
 \]
 of vertices that are expensive to reach from column $C_j$.
Note that 
\begin{equation*}
|V_k \setminus Z| \leq \left|\bigcup_{i = 0}^{\floor*{\sqrt{k}} - 1} C_{j + i} \right| \leq \sqrt{k} \cdot k,
\end{equation*}
where $C_{z} \coloneqq C_{z - k}$ for $z > k$.
We consider the edges from an unmarked vertex $x_i$ in column $C_j$ to a vertex in~$S(x_i)$:
\begin{equation*}
\widetilde{E} \coloneqq \bigcup_{x_i \in C_j} \big\{(x_i,x_{\ell}) : \text{$x_i$ unmarked and $x_{\ell} \in S(x_i)$}\big\} \subseteq \delta^+(C_j).
\end{equation*}
By definition of $S(x_i)$ and our assumption that at least $k/2$ vertices of $C_j$ are unmarked, we have $|\widetilde{E}| \geq 4k \cdot k/2 = 2k^2$.
Because all vertices have congestion at most $\sqrt{k}$, at most $|V_k \setminus Z| \cdot \sqrt{k} \leq k^2$ edges of $\widetilde{E}$ end in a vertex from $V_k \setminus Z$.
This shows that $|\widetilde{E} \cap \delta^- (Z)| \geq k^2$.

We conclude
\begin{equation*}
\begin{split}
\sum_{e \in \delta^+(C_j)} \probabilitycustom{A \sim p_k}{e \in E(T^\ast[A])} \cdot c_k(e) &\geq \sum_{e \in \widetilde{E}} \probabilitycustom{A \sim p_k}{e \in E(T^\ast[A])} \cdot c_k (e) \\
\overset{\eqref{eq:adaptivity_lb_lb_eq1}}&{\geq}
 \sum_{e \in \widetilde{E}} (1/k)^2 \cdot \frac{1}{(2e)^4} \cdot c_k(e) \\
&\geq \sum_{e \in \widetilde{E} \cap \delta^- (Z)} (1/k)^2 \cdot \frac{1}{(2e)^4} \cdot c_k(e) \\
&\geq \sum_{e \in \widetilde{E} \cap \delta^- (Z)} (1/k)^2 \cdot \frac{1}{(2e)^4} \cdot (\sqrt{k}-1)\\
&\geq \sum_{e \in \widetilde{E} \cap \delta^- (Z)} (1/k)^2 \cdot \frac{1}{2^5e^4} \cdot \sqrt{k}\\
&\geq k^2 \cdot (1/k)^2 \cdot \frac{1}{2^5e^4} \cdot \sqrt{k} \\
&= \frac{\sqrt{k}}{2^5e^4}.
\end{split}
\end{equation*}

\addparskip

We have shown that in both cases
\begin{equation*}
\sum_{e \in \delta^+(C_j)} \probabilitycustom{A \sim p_k}{e \in E(T^\ast[A])} \cdot c_k(e) \geq \frac{\sqrt{k}}{2^9e^4}.
\end{equation*}
This completes our proof, because
\begin{equation*}
\begin{split}
\expectationcustom{A \sim p_k}{c_k(T^\ast[A]) } 
&= \sum_{e \in E^{(k)}} \probabilitycustom{A \sim p_k}{e \in E(T^\ast[A])} \cdot c_k(e) \\
&= \sum_{j=1}^k \sum_{e \in \delta^+(C_j)} \probabilitycustom{A \sim p_k}{e \in E(T^\ast[A])} \cdot c_k(e) \\
&\geq \frac{k^{3/2}}{2^9e^4}. 
\end{split}
\end{equation*}
\end{proof}

The tour shown in \Cref{fig:asymp_opt_apriori_for_grid} shows that the result of \Cref{lem:adaptivity_lb_lb} is asymptotically tight.
For a detailed proof of the tightness, see \Cref{appendix:grid}.
\begin{figure}
\begin{center}
\begin{tikzpicture}[
vertex/.style={circle,draw=black, fill, inner sep=1pt},
edge/.style={-latex},
box/.style={Peach, very thick, fill=Violet, fill opacity=0.05, rounded corners=8pt},
scale=0.5
]
\def\k{16}
\def\sqrtk{4}
\pgfmathsetmacro{\kkm}{16*16 - 1}

\begin{scope}[every node/.style={vertex}]

\foreach \i in {1,...,\k} {
\foreach \j in {1,...,\k}{
\node (x\i y\j) at (\j,16 - \i + 1) {};
}
}

\end{scope}

\newcommand{\getindex}[1]{
\pgfmathsetmacro{\num}{#1}

\pgfmathsetmacro{\t}{mod(floor((\num - 1) / \sqrtk), \k) + 1}
\pgfmathsetmacro{\l}{mod(\num - 1, \sqrtk) + 1}
\pgfmathsetmacro{\j}{floor((\num - 1) / (\k * \sqrtk))}

\pgfmathsetmacro{\first}{\j * \sqrtk + \l}
\pgfmathsetmacro{\second}{\t}

\pgfmathtruncatemacro{\firstindex}{\first}
\pgfmathtruncatemacro{\secondindex}{\second}
}

\foreach \i in {1,...,\kkm} {
\ifnum\i=64\else        
\ifnum\i=128\else        
\ifnum\i=192\else        

\getindex{\i}
\pgfmathsetmacro{\prevx}{\firstindex}
\pgfmathsetmacro{\prevy}{\secondindex}

\getindex{\i +1}
\pgfmathsetmacro{\nextx}{\firstindex}
\pgfmathsetmacro{\nexty}{\secondindex}

\ifnum\numexpr\i-4*(\i/4)\relax=0
\draw[edge, out=40, in=-130, looseness=0.4] (x\prevx y\prevy) to (x\nextx y\nexty);
\else
\draw[edge] (x\prevx y\prevy) to (x\nextx y\nexty);
\fi\fi\fi\fi
}

\draw[edge] (x4y16) .. controls (15,12.2) and (2,12.8) .. (x5y1);  
\draw[edge] (x8y16) .. controls (15,8.2) and (2,8.8) .. (x9y1);  
\draw[edge] (x12y16) .. controls (15,4.2) and (2,4.8) .. (x13y1);  

\draw[] (x16y16) .. controls (16.9,2) .. (16.7,14);
\draw[] (16.7,14) .. controls (16.7,16.7) .. (14,16.7);
\draw[edge] (14,16.7) .. controls (2,16.7) .. (x1y1);

\draw[decorate, decoration={brace}, thick] (0.5,9) -- (0.5,12) node[midway,left] {$\sqrt{k}$ \ };

\def\xslack{0.4}	
\def\yslack{0.5}	

\end{tikzpicture}
 \end{center}
\vspace*{2mm}
\caption{
An example for $k=16$.
An asymptotically optimal a priori tour $T$ for $(V_k,c_k,p_k)$ is obtained by grouping the rows into blocks of size $\sqrt{k}$ and traversing each block in a column-wise manner.
}
\label{fig:asymp_opt_apriori_for_grid}
\end{figure}

\subsection{Upper Bound on the Expected Cost of an Optimal A Posteriori Tour}

We now prove~\eqref{eq:adaptivity_lb_ub}.
We can easily characterize the cost ${\rm TSP}(A, c_k)$ of an optimal tour on any set $A \subseteq V_k$:

\begin{lemma}
\label{lem:OPTATSP_for_grid}
For any $A \subseteq V_k$, we have 
\begin{equation*}
{\rm TSP}(A, c_k) = k \cdot \max \parset{|A \cap C_j| : j=1, \dots , k}.
\end{equation*}
\end{lemma}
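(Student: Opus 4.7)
Let $M \coloneqq \max\{|A \cap C_j| : j \in [k]\}$. The plan is to prove the two matching inequalities ${\rm TSP}(A,c_k) \geq kM$ and ${\rm TSP}(A,c_k) \leq kM$ by separate short arguments, both exploiting the fact that $c_k$ is a shortest-path metric and that moving between two distinct vertices of the same column costs exactly $k$ (one must traverse all other columns to come back).

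For the lower bound, I would pick a column $C_{j^*}$ attaining the maximum, enumerate $A \cap C_{j^*} = \{u_1,\dots,u_M\}$, and fix an arbitrary tour $T$ on $A$. Cyclically renumbering, $T$ visits these $M$ vertices in some order $u_{\pi(1)}, u_{\pi(2)}, \dots, u_{\pi(M)}, u_{\pi(1)}$, splitting $T$ into $M$ consecutive sub-walks. Since $u_{\pi(p)}$ and $u_{\pi(p+1)}$ are distinct vertices of the same column, $c_k(u_{\pi(p)}, u_{\pi(p+1)}) = k$, and the triangle inequality applied to the $p$-th sub-walk yields cost at least $k$ for that sub-walk. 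Summing over $p$ gives $c_k(T) \geq Mk$.

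For the upper bound, I would use a ``slot and shortcut'' argument. Enlarge $A$ to $\tilde A$ by adding dummy vertices so that each column ends up with exactly $M$ vertices, and index $\tilde A$ by pairs $(j,i)$ with $j\in[k]$, $i\in[M]$, where $(j,i)$ lies in $C_j$. Consider the Hamiltonian cycle on $\tilde A$ that visits the slots in the lexicographic order of $(i,j)$: within lap $i$ we go $(1,i) \to (2,i)\to \cdots \to (k,i)$, and between laps we go $(k,i)\to (1,i+1)$, and finally $(k,M)\to(1,1)$. Every consecutive pair in this cycle lies in columns differing by exactly $1$ modulo $k$, so each of the $kM$ edges has $c_k$-cost $1$, giving total cost $kM$. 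Shortcutting past all dummy vertices (well-defined because $c_k$ satisfies the triangle inequality) produces a tour on $A$ of cost at most $kM$, completing the proof.

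The two bounds together give ${\rm TSP}(A,c_k)=kM$. No step is really a main obstacle here; the only thing to be mildly careful about is verifying that the slot cycle has column-difference exactly $1$ at every step (in particular across lap boundaries $(k,i)\to(1,i+1)$ and the closing edge), so that all $kM$ edges contribute cost $1$ rather than $k$.
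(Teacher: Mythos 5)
Your proposal is correct and follows essentially the same route as the paper's own proof. For the lower bound, your "split the tour at the $M$ vertices of a maximizing column and bound each sub-walk by the triangle inequality" is precisely the argument behind the paper's inequality ${\rm TSP}(A, c_k) \ge {\rm TSP}(A \cap C_{j^*}, c_k) \ge kM$ (shortcutting to $A \cap C_{j^*}$ is your sub-walk decomposition, and each of the $M$ edges of the shortcut tour costs exactly $k$). For the upper bound, the paper relabels so that all of $A$ lies in the first $M$ rows and then traverses row by row, column $1$ through $k$; that is the same construction as your lexicographic slot cycle with dummy vertices, just phrased without introducing $\tilde{A}$ explicitly. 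Both versions implicitly assume $M \ge 2$ in the lower bound (for $|A|=1$ the left-hand side is $0$ while $kM=k$, and the sub-walk argument degenerates since $u_{\pi(1)}=u_{\pi(2)}$), but this edge case is shared with the paper's proof and is immaterial since only the upper bound is used downstream.
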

\begin{proof}
Because any two vertices in the same column have distance $k$ from each other (in either direction), we have
${\rm TSP}(A\cap C_j, c_k)  \geq k \cdot |C_j|$ for each $j\in [k]$.
Thus, ${\rm TSP}(A, c_k) \geq {\rm TSP}(A\cap C_j, c_k) \geq k \cdot |C_j|$, 
implying that ${\rm TSP}(A, c_k)$ is at least as large as claimed.

For the upper bound, suppose that at most $t$ vertices per column are active.
After relabeling all vertices, we can assume that without loss of generality all vertices of $A$ are contained in the first $t$ rows.
We can visit all vertices in these rows by a tour of length $t\cdot k$ by traversing one row after the other, always visiting first the vertex in $C_1$, then in $C_2$, and so on until $C_k$.
\end{proof}

Using \Cref{lem:OPTATSP_for_grid}, we obtain the following upper bound on the expected cost of an optimal a posteriori tour:

\begin{lemma}
\label{lem:adaptivity_lb_ub}
For $k \geq 3$, we have 
\begin{equation*}
\expectationcustom{A \sim p_k}{{\rm TSP}(A,c_k)} \leq 5 \cdot k \log k.
\end{equation*}
\end{lemma}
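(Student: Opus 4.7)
The plan is to combine \Cref{lem:OPTATSP_for_grid} with a union bound on the column occupancies. By \Cref{lem:OPTATSP_for_grid}, $\mathrm{TSP}(A,c_k) = k \cdot M$ with $M \coloneqq \max_{j \in [k]} |A \cap C_j|$, so it suffices to show that $\expectationcustom{A \sim p_k}{M} \leq 5 \log k$.

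First I would observe that, for each $j \in [k]$, the random variable $|A \cap C_j|$ is the sum of $k$ independent Bernoulli$(1/k)$ variables, hence has a Binomial$(k,1/k)$ distribution. A direct calculation then gives, for each integer $t \geq 1$,
\[
\probabilitycustom{A \sim p_k}{|A \cap C_j| \geq t} \ \leq\ \binom{k}{t} (1/k)^t \ \leq\ \frac{1}{t!},
\]
and a union bound over the $k$ columns yields $\probabilitycustom{A \sim p_k}{M \geq t} \leq \min\{1,\, k/t!\}$.

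Next, applying the tail-sum identity $\expectationcustom{A \sim p_k}{M} = \sum_{t \geq 1} \probabilitycustom{A \sim p_k}{M \geq t}$, I would split the sum at a threshold $T \coloneqq \lceil e \log k \rceil$. The first $T$ terms contribute at most $T$ in total. For the tail, Stirling's inequality $t! \geq (t/e)^t$ yields $T! \geq (\log k)^{e \log k}$, which exceeds $k$ once $k$ is sufficiently large; hence $\sum_{t > T} k/t! \leq 2k/(T+1)! = O(1)$. Combining these bounds gives $\expectationcustom{A \sim p_k}{M} \leq e \log k + O(1)$.

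The only subtlety is obtaining the precise constant $5$: the argument above naturally yields $\expectationcustom{A\sim p_k}{M} = O(\log k)$, and tightening this to $5 \log k$ for all $k \geq 3$ can be done by tuning the threshold $T$ slightly more carefully, or by treating the finitely many small values of $k$ where the asymptotic estimate is too loose via the trivial bound $M \leq k$ (which gives $\expectationcustom{A\sim p_k}{M} \leq k \leq 5\log k$ on a bounded range). Neither step presents a real obstacle; the only mild care needed is in the numerical bookkeeping for the constant.
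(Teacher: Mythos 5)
Your argument is correct and takes a genuinely different route from the paper's. You bound the per-column tail by the elementary combinatorial inequality $\Pr[\mathrm{Bin}(k,1/k)\ge t]\le \binom{k}{t}k^{-t}\le 1/t!$ and then sum the tail $\sum_{t\ge 1}\Pr[M\ge t]$ after splitting at a threshold $T\approx e\log k$, using Stirling for the factorial growth. The paper instead applies a multiplicative Chernoff bound to each $\widehat X_j$, union-bounds, and then conditions on the event $\{Y\ge 2\log k\}$ (which has probability $\le e/k$), writing $\expectationcustom{A\sim p_k}{Y}$ as a weighted sum over the two events. The two proofs exploit the same balls-into-bins phenomenon but package it differently: yours is more elementary (it avoids Chernoff entirely), and the $1/t!$ bound is actually sharp enough to give $\expectationcustom{A\sim p_k}{M}=O(\log k/\log\log k)$ if one optimizes the threshold, whereas the Chernoff route naturally yields only $O(\log k)$. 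The price you pay is slightly more constant bookkeeping in the tail sum; you correctly identify that the needed numerics (e.g.\ $T!\ge k$ once $k\ge 5$, and the trivial bound $M\le k\le 5\log k$ covering $k\in\{3,4\}$) do work out, so I regard the sketch as complete rather than as a gap.
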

\begin{proof}
Let $X_{i,j}$ be the random indicator variable for the event that $v \in A$ (i.e., $X_{i,j}$ is a Bernoulli random variable with parameter $1/k$).
We denote by $\widehat{X}_j \coloneqq \sum_{i = 1}^k X_{i,j}$ the number of active vertices for each column $j$.
Note that $\expectationcustom{A \sim p_k}{\widehat{X}_j} = 1$.
By a Chernoff-bound,
\begin{equation*}
\begin{split}
\probabilitycustom{A \sim p_k}{\widehat{X}_j \geq 1+ (2\log k - 1)} 
\leq e^{- \frac{(2\log k - 1)}{3}}
\leq \frac{e}{k^2}.
\end{split}
\end{equation*}
We denote $Y \coloneqq\displaystyle \max_{j = 1, \dots, k} \widehat{X}_j$.
Then, by a union bound, we obtain $\probabilitycustom{A \sim p_k}{Y  \geq 2\log k} \leq e/k$.
Thus, we conclude
\begin{equation*}
\begin{split} 
\expectationcustom{A \sim p_k}{Y} &= \probabilitycustom{A \sim p_k}{Y  \geq 2\log k} \cdot \expectationcustom{A \sim p_k}{Y \mid Y  \geq 2\log k } \\ & \quad + \probabilitycustom{A \sim p_k}{Y  < 2\log k} \cdot \expectationcustom{A \sim p_k}{Y \mid Y  < 2\log k } \\ 
& \leq \frac{e}{k} \cdot k + 1 \cdot (2 \log k) \\
&\leq 5 \log k.
\end{split}
\end{equation*}
Finally, by \Cref{lem:OPTATSP_for_grid}, we have
\begin{equation*}
\expectationcustom{A \sim p_k}{{\rm TSP}(A,c_k)} \leq \expectationcustom{A \sim p_k}{k \cdot Y} \leq 5k\log k.
\qedhere
\end{equation*}
\end{proof}

\subsection{Lower Bound on the Adaptivity Gap}

We can now conclude the claimed lower bound for $\adaptivitygap(n)$:

\LowerBoundAdaptivity*
\begin{proof}
Let $n \in \bN, n \geq 25$.
For $k^\prime \coloneqq \floor*{\sqrt{n}}$ and $n^\prime \coloneqq(k^\prime)^2$ we have $n \geq n^\prime$.
Observe that $\adaptivitygap$ is monotonically increasing since we can simply add vertices with activation probability $0$, which changes neither the numerator nor the denominator of the ratio in~\eqref{eq:adaptivity_gap}.\footnote{If one does not allow vertices with activation probability zero, it is still possible to add vertices with arbitrarily small activation probability as the adaptivity gap is defined as a supremum.}
Therefore, it suffices to show that $\adaptivitygap(n^\prime) \geq c \cdot \frac{n^{1/4}}{\log n}$ for some constant $c \in \bR$.

Note that $|V_k| = (k^\prime)^2 = n^\prime$ and $k^\prime \geq 5$.
Applying \Cref{lem:adaptivity_lb_lb} and \Cref{lem:adaptivity_lb_ub} to $(V_{k^\prime}, c_{k^\prime}, p_{k^\prime})$
we obtain
\begin{equation*}
\adaptivitygap(n^\prime) \geq \frac{\OPT(V_{k^\prime}, c_{k^\prime}, p_{k^\prime})}{\expectationcustom{A \sim p_{k^\prime}}{{\rm TSP}(A,c_{k^\prime})}}
\geq c \cdot \frac{k^\prime \cdot \sqrt{k^\prime}}{k^\prime \log k^\prime} = c \cdot \frac{\sqrt{k^\prime} }{\log k^\prime}
\end{equation*}
for $c \coloneqq 1/(5\cdot 2^9e^4)$.
We have $\sqrt{n} \geq k^\prime  \geq \sqrt{n} - 1$.
Thus,
\begin{equation*}
c \cdot \frac{\sqrt{k^\prime} }{\log k^\prime}
\geq c \cdot \frac{\sqrt{\sqrt{n} - 1} }{\log \sqrt{n}}
\geq c \cdot \frac{\sqrt{\sqrt{n}/2 }}{\log \sqrt{n}}
\geq \frac{c}{\sqrt{2}} \cdot \frac{n^{1/4}}{\log n}.
\end{equation*}
This concludes the proof.
\end{proof}

\section{Reducing to Hop-ATSP}\label{sec:reducing_hop_atsp}

The goal of this section is to prove \Cref{thm:reducing_hop_atsp}, i.e., to reduce \APrioriATSP to Hop-ATSP.
Along the way, we give a simple polynomial time $O(\sqrt n)$-approximation algorithm for Asymmetric A Priori TSP and an upper bound of $O(\sqrt n)$ on the adaptivity gap for Asymmetric A Priori TSP, as claimed in \Cref{thm:simple_approx}.
We start by recalling the definition of Hop-ATSP and explaining the high-level ideas behind our reduction.

\subsection{Outline of our reduction}\label{sec:outline_hop_atsp}

Before reducing to Hop-ATSP, we will first show that we may assume that all vertices have the same activation probability.
Then, when computing the expected cost of an a priori tour $T$ that visits vertices $v_1,v_2,\dots, v_{r+1} = v_1$ in this order, the contribution of an edge $(v_i, v_{i+\Delta})$ decreases as $\Delta$ grows because the probability that at least one more vertex between $v_i$ and $v_{i+\Delta}$ is active increases. 
Instead of gradually decreasing this contribution, we show that we can also let edges fully contribute to the cost if $\Delta \leq k$ and ignore edges where $\Delta > k$ for some suitable $k\in \mathbb{N}$.
To describe this formally, we use the $k$-hop cost, as introduced in \cref{sec:overview}:

\begin{definition}[$k$-hop cost]
	Given a vertex set $V$, a cost function $c\colon V\times V \to \mathbb R_{\geq 0}$ satisfying the triangle inequality, and a hop distance $k\in \mathbb N$ with $k<|V|$, the $k$-hop cost of a tour $T$ in $V$ visiting the vertices $v_1,v_2,\dots, v_r$ in this order, is
	\[c^{(k)}(T) = \sum_{\Delta = 1}^k\sum_{i = 1}^r c(v_i, v_{i + \Delta}),\]
	where we define $v_i \coloneqq v_{i - r}$ for $i > r$.
    For a non-closed walk $W$ visiting vertices $v_1,v_2,\dots, v_r$ in this order, i.e., where $v_1 \neq v_r$, we also define
    \[c^{(k)}(W) = \sum_{\Delta = 1}^k\sum_{i = 1}^r c(v_i, v_{i + \Delta}),\]
    but now define $c(v_i, v_{\ell}) \coloneqq 0$ for $\ell>r$.
\end{definition}

Then we define Hop-ATSP as follows:

\defHopATSP*

Recall that a tour on $V$ is a closed walk that visits every vertex at least once.
If a tour visits a vertex a second time, we can skip the second visit of this vertex and, as we will show next, this does not increase the $k$-hop cost of the tour.
In particular, every Hop-ATSP instance has an optimal solution that is a Hamiltonian cycle on $V$.

\begin{lemma}
	\label{lem:khop_skipping_vertices}
	Let $(V,c,k)$ be an instance of \HopATSP.
    If a tour $W^\prime$ results from a tour $W$ by omitting a (second) visit of a vertex, we have $c^{(k)}(W^\prime) \leq c^{(k)}(W)$. 
    
	Similarly, if a walk $W^\prime$ results from a non-closed walk $W$ by omitting one visit of a vertex, we have $c^{(k)}(W^\prime) \leq c^{(k)}(W)$. 
\end{lemma}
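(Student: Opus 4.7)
The plan is to compare $c^{(k)}(W')$ and $c^{(k)}(W)$ term-by-term by matching the position pairs that contribute to each sum. I will fix notation for the non-closed walk case first: write $W = v_1, \dots, v_r$ and say the omitted occurrence is at position $j$, so $W' = w_1, \dots, w_{r-1}$ with $w_i = v_i$ for $i < j$ and $w_i = v_{i+1}$ for $i \geq j$. Each summand of $c^{(k)}(W')$ corresponds to a pair of positions $(i,i')$ with $1 \leq i < i' \leq r-1$ and $i' - i \leq k$, contributing $c(w_i, w_{i'})$; similarly for $c^{(k)}(W)$. Rewritten in $v$-coordinates, $(w_i, w_{i'}) = (v_i, v_{i'})$ if $i' < j$, $(v_{i+1}, v_{i'+1})$ if $i \geq j$, and $(v_i, v_{i'+1})$ if $i < j \leq i'$.

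In the first two cases, $c(w_i, w_{i'}) = c(v_a, v_b)$ for a position pair $(a,b)$ in $W$ with the same gap $i' - i \leq k$, so the term already appears in $c^{(k)}(W)$ and cancels. In the straddling case, the gap in $W$ is $i' - i + 1$, which is at most $k$ whenever $i' - i \leq k - 1$, so those terms likewise appear in $c^{(k)}(W)$. The only terms of $c^{(k)}(W')$ not directly matched are the straddling pairs with $i' - i = k$ exactly, which correspond to $c(v_i, v_{i+k+1})$---an edge of gap $k+1$ in $W$. For each such pair, the triangle inequality gives
\[
c(v_i, v_{i+k+1}) \ \leq\ c(v_i, v_j) + c(v_j, v_{i+k+1}).
\]

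The heart of the argument is then to observe that both $(v_i, v_j)$ and $(v_j, v_{i+k+1})$ are valid position pairs in $W$ of gap at most $k$ (using $i < j \leq i+k$), so their costs already appear in $c^{(k)}(W)$ but do not appear anywhere in $c^{(k)}(W')$ (because $v_j$ has been deleted, so no position pair in $W'$ has $v_j$ as an endpoint). Moreover, distinct boundary pairs produce distinct edges $(v_i, v_j)$ and distinct edges $(v_j, v_{i+k+1})$, so each such edge is used at most once across the triangle-inequality expansions. Summing everything, these expansions are dominated by the edges incident to $v_j$ that contribute to $c^{(k)}(W)$ but not to $c^{(k)}(W')$, yielding $c^{(k)}(W') \leq c^{(k)}(W)$.

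The closed-tour case reduces to the same analysis by cyclically re-indexing so that the deleted occurrence is at position $j$; the hypothesis $k < |V| \leq r$ ensures that every position pair of gap in $[k]$ has a well-defined ``straddling'' relationship with $j$, and the triangle-inequality bookkeeping is identical. The main obstacle is not the triangle inequality itself but the careful matching: I need to be confident that the three case-mappings (straddling versus non-straddling) together with the triangle-inequality expansions in the boundary case use each position pair of $c^{(k)}(W)$ at most once. The clean observation that makes this work is that all expanded edges have $v_j$ as an endpoint while no directly matched edge does, so the two sets of used edges are automatically disjoint.
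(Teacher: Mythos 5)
Your proposal is correct and takes essentially the same approach as the paper: both identify the $k$-hop edges that spanned the deleted vertex (or are newly created by its removal) and apply the triangle inequality through the deleted vertex to each new edge, using the two deleted edges incident to it. The paper's version is slicker in that it writes $c^{(k)}(W) - c^{(k)}(W')$ as an exact sum of $k$ triangle-inequality expressions of the form $c(w_{i-\ell},w_i)+c(w_i,w_{i-\ell+k+1})-c(w_{i-\ell},w_{i-\ell+k+1})\ge 0$, which makes the injectivity/disjointness bookkeeping you carry out explicitly unnecessary, but the underlying content is the same.
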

\begin{proof}
Let $w_1, \dots, w_t$ be the vertices visited by  $W$ in this order (where the vertices do not need to be distinct).
Let $1 \leq i \leq t$ such that $W^\prime$ results from $W$ by omitting the visit $w_i$, that is $W^\prime$ visits the vertices $w_1, \dots, w_{i-1}, w_{i+1}, \dots, w_t$ in this order.
	
	Suppose the removal of the visit $w_i$ introduces a new $k$-hop edge $(w_{i-\ell}, w_{i - \ell + k + 1})$ for some $\ell \in \parset{1, \dots, k}$.
	Then, the edges $(w_{i-\ell}, w_i)$ and $(w_i,w_{i - \ell + k + 1})$ no longer contribute to $c^{(k)}(W^\prime)$.
	Thus, if $W$ is a tour, by the triangle inequality, we have
	\begin{equation*}
		\begin{split}
			c^{(k)}(W) - c^{(k)}(W^\prime)\
			&=\  \sum_{\ell = 1}^k \big( c(w_{i-\ell}, w_i) + c(w_i,w_{i - \ell + k + 1}) - c(w_{i-\ell}, w_{i - \ell + k + 1})  \big)\ 
			\geq\  0,
		\end{split}
	\end{equation*}
	where we define $w_j \coloneqq w_{j - t}$ for $j > t$ and $w_j \coloneqq w_{i + t}$ for $j < 0$.
	Similarly, if $W$ is a non-closed walk, we have
	\begin{equation*}
		\begin{split}
			c^{(k)}(W) - c^{(k)}(W^\prime)\ 
			&=\  \sum_{\ell = \max(1, k + i - t + 1)}^{\min(k,i-1)} \big( c(w_{i-\ell}, w_i) + c(w_i,w_{i - \ell + k + 1}) - c(w_{i-\ell}, w_{i - \ell + k + 1})  \big)\
			\geq\  0.
		\end{split}
	\end{equation*}
\end{proof}

The main goal of this section is to prove \Cref{thm:reducing_hop_atsp}, i.e., to reduce \APrioriATSP to Hop-ATSP.
We will first prove that we may assume that all vertices have the same activation probability $p(v)=\delta$, and then show that for $k\coloneqq \lfloor \frac{1}{\delta}\rfloor$, every Hamiltonian cycle $T$ on $V$ has cost $c(T) = \Theta(\delta^2) \cdot c^{(k)}(T)$
(see \Cref{thm:reduction_to_khop}).

To prove that we may assume that all vertices have the same activation probability $\delta$, we proceed similarly as prior work for the Symmetric A Priori TSP \cite{blauth_et_al}.
The high-level idea is to replace every vertex~$v$ by many co-located copies with activation probability $\delta$, where the number of copies is chosen such that the probability that at least one of the copies is active is approximately the original activation probability $p(v)$.
Note however, that this strategy does not work for vertices of very small activation probability because we would then need to choose $\delta$ very small, which would lead to a super-polynomial number of copies of other vertices (and also to a super-polynomial number $k$).
Thus, we handle vertices of very small activation probability separately.

We show that we can assume that one vertex is always active, i.e.\@ there is a \emph{depot} $d$ with $p(d) = 1$
(\Cref{sec:reduce_depot}), which can be shown following the same argument as described in  \cite{blauth_et_al} for the symmetric special case.
We use this to handle vertices with very low activation probability by connecting each of them directly to the depot. 
In \Cref{sec:uniform_prob}, we then prove that we can assume uniform activation probabilities for the remaining vertices, which allows us to reduce to Hop-ATSP (\Cref{sec:complete_hop_atsp_reduction}). 

Finally, we go one step further and show that it suffices to consider \emph{well-scaled} instances of Hop-ATSP, i.e., we can assume that $c$ takes only integer values, the diameter of the complete graph on $V$ is upper bounded by $2n^3$ and the optimum solution costs at least $n^2$ (where $n:= |V|$).
This can be achieved easily by scaling all edge costs appropriately, rounding them down to the next integer, and taking the metric closure to ensure that $c$ still satisfies the triangle inequality (see \Cref{subsec:well_scaled_instances}).

Along the way, we give a simple polynomial time $O(\sqrt n)$-approximation algorithm for Asymmetric A Priori TSP and prove an upper bound of $O(\sqrt n)$ on the adaptivity gap for Asymmetric A Priori TSP (\Cref{sec:simple_approx}).

\subsection{Reducing to instances with a depot}\label{sec:reduce_depot}

As explained in \Cref{sec:outline_hop_atsp}, it will later be helpful to assume that there exists a vertex $d\in V$ (called the \emph{depot}) that is always active, i.e.\@ $p(d) = 1$. 
The goal of this subsection is to show that, as long as we do not care about constant factors, this assumption is safe to make.
This follows by essentially the same argument as the symmetric special case \cite{blauth_et_al}.

More precisely, we prove: if there is an algorithm that finds an $f(n)$-approximation for instances of Asymmetric A Priori TSP with depot, then there is an algorithm that finds a $6\cdot f(n)$-approximation for general instances of Asymmetric A Priori TSP.

This follows from the next lemma if the total activation probability $p(V)\coloneqq \sum_{v\in V}p(v)$ is larger than some positive constant, e.g.\@ $p(V)\geq \frac12$.
The proof is adapted from Lemma~15 in \cite{blauth_et_al}:
\begin{lemma}
	\label{lem:atsp_reduction_to_depot_approx}
		For all instances $(V,c,p)$ of \APrioriATSP and all tours $S$ on $V$, there exists a  vertex $d\in V$ such that the instance $(V, c, p')$ of Asymmetric A Priori TSP with Depot defined by
	\[
	p'(v) = \begin{cases}
		1  & v = d\\
		p(v) & \text{else,}
	\end{cases}
	\]
	  satisfies
	 \begin{equation}
		\label{eq:atsp_reduction_to_depot0_approx}
		\expectationAP{c\left(T[A]\right)} \\
		\leq \expectationcustom{A \sim p^\prime}{c\left(T[A]\right)}
	\end{equation}
	for all tours $T$ on $V$ and
	\begin{equation}
		\label{eq:atsp_reduction_to_depot1_approx}
		\begin{split}
			\expectationcustom{A \sim p^\prime}{c(S[A])}
			\leq 2\left(1 + \frac{1}{p(V)} \right)\expectationAP{c(S[A])}.
		\end{split}
	\end{equation}
\end{lemma}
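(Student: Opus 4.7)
The first inequality~\eqref{eq:atsp_reduction_to_depot0_approx} is the easy direction and holds for every choice of depot $d$. Since $p^\prime(v) \ge p(v)$ pointwise, I would couple the two distributions by sampling $A \sim p$ and then setting $A^\prime \coloneqq A \cup \{d\}$; the resulting $A^\prime$ has distribution $p^\prime$ and satisfies $A \subseteq A^\prime$. A direct consequence of the triangle inequality (the same type of argument used in \Cref{lem:khop_skipping_vertices}) is that the shortcut tour cost $c(T[\cdot])$ is monotone under set inclusion of the active vertices, so $c(T[A]) \le c(T[A^\prime])$ pointwise; taking expectations then gives~\eqref{eq:atsp_reduction_to_depot0_approx} for any $d$.

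For~\eqref{eq:atsp_reduction_to_depot1_approx}, the plan is to choose $d \in V$ at random with $\Pr[d=v] = p(v)/p(V)$, show the inequality in expectation over this choice, and conclude the existence of a deterministic depot that works. The starting identity is
\[
\expectationcustom{A^\prime \sim p^\prime}{c(S[A^\prime])} \ =\ \expectationAP{c\bigl(S[A\cup\{d\}]\bigr)},
\]
since sampling from $p^\prime$ is the same as sampling $A \sim p$ and adjoining $d$. Decomposing $c(S[A\cup\{d\}]) = c(S[A]) + \Delta(A,d)$, where $\Delta(A,d) = 0$ whenever $d \in A$ and otherwise $\Delta(A,d) \le c(u,d) + c(d,w)$ for $u,w$ the predecessor and successor of $d$ in $S[A \cup \{d\}]$, reduces the task to bounding $\mathbb{E}_d\mathbb{E}_A[\Delta(A,d)]$ by $(1 + 2/p(V))\,\mu$, with $\mu \coloneqq \expectationAP{c(S[A])}$. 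Combining with the $\mu$ contribution from the $c(S[A])$ term then yields the claimed factor $2(1+1/p(V))$.

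The main technical step is the bound on $\mathbb{E}_d\mathbb{E}_A[\Delta(A,d)]$, which I would prove by an edge-charging argument. Expand the expectation as a sum, over edges $(u,w)$ that could appear in the shortcut tour $S[A]$, of the expected insertion cost of a random $d$ falling between $u$ and $w$ in $S$'s order. Since $d$ is drawn with probability proportional to $p(v)$, the inner average over candidate insertion points $v$ between $u$ and $w$ carries a factor $1/p(V)$ coming from the normalization of this distribution, and the triangle inequality allows one to re-charge each of the two new edges $(u,d)$ and $(d,w)$ back to edges of the original shortcut tour $S[A]$ (or an obvious nearby modification). The main obstacle will be ensuring that each edge of $S[A]$ receives only $O(1)$ total charge in this scheme, so that the total is at most $(1 + 2/p(V))\mu$; the factor $2$ tracks the two edges incident to the inserted $d$. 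This is exactly the accounting carried out for the symmetric case in the proof of Lemma~15 of~\cite{blauth_et_al}, and since that argument uses only the triangle inequality and linearity of expectation (never the symmetry of~$c$), it transfers to the asymmetric setting essentially verbatim.
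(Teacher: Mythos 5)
Your outline is correct and close in spirit to the paper's proof. The first inequality is proved the same way (monotonicity of $c(T[\cdot])$ under set inclusion, via the triangle inequality; holds for any depot). For the second, choosing $d$ at random with $\Pr[d = v] = p(v)/p(V)$ and invoking the probabilistic method is equivalent to the paper's averaging step: the paper picks the vertex $\hat v$ minimizing $\expectationcustom{A\sim p}{D_{\hat v}(S[A\cup\{\hat v\}])}$, which is at most the $p$-weighted average. Your decomposition $c(S[A\cup\{d\}]) = c(S[A]) + \Delta(A,d)$ with $\Delta(A,d)\le D_d(S[A\cup\{d\}])$ is also a touch tighter than the paper's $c(S[A\cup\{v\}])\le 2c(S[A]) + D_v$ (the paper keeps the looser form only so the argument parallels the companion statement about optimal a posteriori tours), so your route would deliver $1+2/p(V) \le 2(1+1/p(V))$, which is fine.

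The one place your sketch misstates the mechanism is the explanation of the charging step. You write that ``the triangle inequality allows one to re-charge each of the two new edges $(u,d)$ and $(d,w)$ back to edges of the original shortcut tour $S[A]$.'' Taken literally this cannot work: $c(u,d)$ and $c(d,w)$ can be far larger than every edge cost in $S[A]$, and the triangle inequality alone gives no bound on them in terms of edges of the existing tour. What actually closes the charge is an \emph{independence} argument, not a triangle-inequality re-charge. The quantity $D_v\bigl(S[A\cup\{v\}]\bigr)$ depends only on $A\setminus\{v\}$ and is therefore independent of the event $\{v\in A\}$. Consequently
\[
\sum_{v\in V} p(v)\,\expectationcustom{A\sim p}{D_v\bigl(S[A\cup\{v\}]\bigr)}
= \expectationcustom{A\sim p}{\sum_{v\in A} D_v\bigl(S[A]\bigr)}
= 2\,\expectationcustom{A\sim p}{c\bigl(S[A]\bigr)},
\]
where the last equality is the degree-sum identity: each edge of $S[A]$ contributes to $D_v(S[A])$ for exactly its two endpoints. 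Dividing by $p(V)$ gives $\mathbb{E}_d\,\mathbb{E}_A\bigl[D_d(S[A\cup\{d\}])\bigr] = 2\mu/p(V)$, the bound you want. You do defer to \cite{blauth_et_al} for the accounting, and the argument there is indeed of this form and symmetry-independent, so the plan goes through — but the intuition you offer for \emph{why} each edge receives $O(1)$ charge is not the one that carries the proof and should be replaced by the independence-plus-degree-sum identity above.
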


If we apply \Cref{lem:atsp_reduction_to_depot_approx} for $S$ being an optimal a priori tour and we assume $p(V)$ to be lower bounded by some positive constant, then it indeed implies the desired reduction because we can guess the depot $d$ by enumerating all possibilities.
In order to prove \Cref{thm:simple_approx} and prove an upper bound on the adaptivity gap, we will also need the following variant of \Cref{lem:atsp_reduction_to_depot_approx}:

\begin{lemma}
	\label{lem:atsp_reduction_to_depot_adaptivity}
	For all instances $(V,c,p)$ of \APrioriATSP, there exists a  vertex $d\in V$ such that the instance $(V, c, p')$ of Asymmetric A Priori TSP with Depot defined by
	\[
	p'(v) = \begin{cases}
		1  & v = d\\
		p(v) & \text{else,}
	\end{cases}
	\]
	  satisfies
	\begin{equation}
		\label{eq:atsp_reduction_to_depot0_adaptivity}
		\expectationAP{c\left(T[A]\right)} \\
		\leq \expectationcustom{A \sim p^\prime}{c\left(T[A]\right)}
	\end{equation}
	for all tours $T$ on $V$ and
	\begin{equation}
		\label{eq:atsp_reduction_to_depot1_adaptivity}
		\begin{split}
			\expectationcustom{A \sim p^\prime}{{\rm TSP}(A, c )}
			\leq 2\left(1 + \frac{1}{p(V)} \right)\expectationAP{{\rm TSP}(A,c)}.
		\end{split}
	\end{equation}
\end{lemma}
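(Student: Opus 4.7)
The plan mirrors the proof of \Cref{lem:atsp_reduction_to_depot_approx} (which in turn adapts Lemma~15 of \cite{blauth_et_al}), but with the fixed reference tour $S$ replaced by the random quantity $\mathrm{TSP}(A,c)$. Inequality~\eqref{eq:atsp_reduction_to_depot0_adaptivity} holds for \emph{every} candidate depot: couple $A \sim p$ and $A' \sim p'$ via $A' \coloneqq A \cup \{d\}$; since $T[A]$ arises from $T[A']$ by shortcutting additional vertices, the triangle inequality yields $c(T[A]) \leq c(T[A'])$ pointwise, hence in expectation. We are therefore free to choose $d$ solely to establish~\eqref{eq:atsp_reduction_to_depot1_adaptivity}.

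For~\eqref{eq:atsp_reduction_to_depot1_adaptivity} I would sample a random depot $D \in V$ with $\Pr[D = v] = p(v)/p(V)$ and prove the averaged inequality
\[
\mathbb{E}_{D}\,\mathbb{E}_{A \sim p}\bigl[\mathrm{TSP}(A \cup \{D\},c)\bigr]\ \leq\ 2\bigl(1 + 1/p(V)\bigr)\,\mathbb{E}_{A \sim p}\bigl[\mathrm{TSP}(A,c)\bigr].
\]
Some deterministic $d$ in the support of $D$ (necessarily with $p(d) > 0$) then witnesses the claimed bound. The starting point is the insertion estimate: for a realization $A$ with $|A| \ge 1$ and an optimal tour $T_A^* = (v_1,\dots,v_{|A|},v_1)$ on $A$, inserting $D$ between consecutive vertices $v_i, v_{i+1}$ yields a valid tour on $A \cup \{D\}$, and averaging over the $|A|$ possible insertion positions gives
\[
\mathrm{TSP}(A \cup \{D\},c) - \mathrm{TSP}(A,c)\ \leq\ \frac{1}{|A|}\sum_{u \in A}\bigl(c(u,D)+c(D,u)\bigr)\ -\ \frac{\mathrm{TSP}(A,c)}{|A|};
\]
the case $A = \emptyset$ is trivial. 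Taking expectations over $D \sim p/p(V)$ and $A \sim p$ reduces the task to controlling $\mathbb{E}_{A \sim p}\bigl[\tfrac{1}{|A|}\sum_{u \in A} R(u,D)\bigr]$ with $R(u,v)\coloneqq c(u,v)+c(v,u)$.

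The main obstacle is that the naive pointwise bound $R(u,v)\le \mathrm{TSP}(A \cup \{v\},c)$ can inflate the answer by a factor of order $|A|$ and is therefore too weak. To recover the factor $2(1+1/p(V))$ I would split the analysis according to whether $D \in A$: on $\{D \in A\}$ the insertion cost vanishes, and for $u, v \in A$ one has $R(u,v)\le 2\,\mathrm{TSP}(A,c)$ because both $c(u,v)$ and $c(v,u)$ are bounded by the length of $T_A^*$, which after summing over $v$ weighted by $p(v)/p(V)$ contributes at most $2\,\mathbb{E}_{A\sim p}[\mathrm{TSP}(A,c)]$; on $\{D \notin A\}$ the normalization $1/p(V)$ in the distribution of $D$ is precisely what is needed to replicate the probabilistic coupling used in the proof of \Cref{lem:atsp_reduction_to_depot_approx}, and combined with the savings $-\mathrm{TSP}(A,c)/|A|$ from the insertion bound this yields the remaining $\tfrac{2}{p(V)}\,\mathbb{E}_{A \sim p}[\mathrm{TSP}(A,c)]$ contribution.
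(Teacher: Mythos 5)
Your treatment of~\eqref{eq:atsp_reduction_to_depot0_adaptivity} is correct and matches the paper (coupling $A' = A\cup\{d\}$, triangle inequality). Your idea of choosing a random depot $D$ with $\Pr[D=v]=p(v)/p(V)$ and proving the bound in expectation is also a sound reformulation of the paper's ``pick the best $\hat v$'' step. The gap is in your insertion estimate and what follows from it.

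The paper inserts $v$ into $T^*_A$ via its \emph{cheapest} incoming and outgoing edges (plus a doubled subpath), yielding
$c(T^*_{A\cup\{v\}}) \leq 2\,c(T^*_A) + D_v(T^*_{A\cup\{v\}})$
where $D_v(T)$ is the cost of the two edges at $v$ in $T$. The reason this specific quantity is used is that it admits the exact identity $\sum_{v\in V}\mathds{1}_{v\in A}\,D_v(T^*_{A\cup\{v\}}) = 2\,c(T^*_A)$, and $\mathds{1}_{v\in A}$ is independent of $D_v(T^*_{A\cup\{v\}})$ (the latter depends only on $A\cup\{v\}$, hence only on $A\setminus\{v\}$). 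This gives $\sum_v p(v)\,\mathbb{E}_A[D_v(T^*_{A\cup\{v\}})] = 2\,\mathbb{E}_A[c(T^*_A)]$, and dividing by $p(V)$ is exactly the expectation over your random depot. Plugging this into the paper's insertion bound yields $2\mathbb{E}[\mathrm{TSP}] + \frac{2}{p(V)}\mathbb{E}[\mathrm{TSP}]$, done.

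You instead average over all $|A|$ insertion positions, arriving at the quantity $\frac{1}{|A|}\sum_{u\in A}R(u,D)$ (minus the savings $\mathrm{TSP}(A,c)/|A|$). This quantity does not satisfy any analogous identity. On $\{D\in A\}$ your two stated claims (vanishing insertion cost vs.\ a $2\mathbb{E}[\mathrm{TSP}]$ contribution from $R(u,v)\le 2\mathrm{TSP}(A,c)$) pull in different directions, and it is unclear which you actually use. On $\{D\notin A\}$ the step is the real problem: there is no direct bound on $R(u,D)$ in terms of $\mathrm{TSP}(A,c)$, and bounding it by $\mathrm{TSP}(A\cup\{D\},c)$ is circular; the phrase ``the normalization $1/p(V)$ \ldots is precisely what is needed to replicate the probabilistic coupling'' is a gesture without an argument, since the coupling in the paper relies on the specific form of $D_v$. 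Even granting everything you sketch, the pieces add to $3+2/p(V)$ rather than $2+2/p(V)$, so the savings term would have to do work you have not established. To fix the proof you should discard the position-averaged insertion estimate and use the paper's closest-neighbor insertion, after which your random-depot averaging goes through cleanly.
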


\begin{proof}[Proof of \Cref{lem:atsp_reduction_to_depot_approx} and \Cref{lem:atsp_reduction_to_depot_adaptivity}]
	For a fixed $v\in V$, we consider the modified activation probabilities that result from declaring $v$ to be a depot:
	\begin{equation*}
		 p^{v=d}(u) \coloneqq \begin{cases}1 &v=u,\\ p(u) &\text{else}, \end{cases}
	\end{equation*}
	for all $u \in V$.
	For any family of tours $(T_A)_{A\subseteq V}$ where $T_A$ is a tour on $A$ for each $A\subseteq V$, we have
	\begin{equation}
		\label{eq:atsp_reduction_to_depot2}
		\expectationAP{c\left(T_{A \cup \parset{v}}\right)} = \expectationcustom{A \sim p^{v=d}}{c\left(T_{A}\right)}.
	\end{equation}
    (See \cite{blauth_et_al} for a detailed proof.)
	This implies \eqref{eq:atsp_reduction_to_depot0_approx} and \eqref{eq:atsp_reduction_to_depot0_adaptivity}, since by the triangle inequality
	\begin{equation*}
		\expectationAP{c\left( T[A] \right)} 
		\leq \expectationAP{c\left( T[A \cup \parset{v}] \right)}
		\overset{\eqref{eq:atsp_reduction_to_depot2}}{=} \expectationcustom{A \sim p^{v=d}}{c\left( T[A] \right)},
	\end{equation*}
	for any tour $T$ on $V$. 
	Note that this holds independently of the choice of $v$. 
	We will now see that there is also a choice of $d$ satisfying \eqref{eq:atsp_reduction_to_depot1_approx},
	and a choice of $d$ satisfying \eqref{eq:atsp_reduction_to_depot1_adaptivity}:
	
	For $A \subseteq V$ we denote by $T^\ast_A$ an optimal ATSP tour on $A$ with respect to $c$. For $v \in V$ and a tour $T$ on some subset $A\subseteq V$ with $v\in A$ define $D_v(T)$ to be the sum of the cost of the edges entering and leaving $v$ in $T$.
	
	Any tour $T$ on a subset $A \subseteq V \setminus \parset{v}$ can be extended to a tour on $A \cup \parset{v}$ for some $v \in V$ by adding the edges $(x,v), (v,y)$ and $E(T_{[y,x]})$, where $x = \argmin_{x\in A\setminus\parset{v}} c(x,v)$, $y = \argmin_{y\in A\setminus\parset{v}} c(v,y)$ and $T_{[y,x]}$ denotes the $y$-$x$-subpath on $T$.
	The resulting edge set is connected and Eulerian and thus can be cut short to an ATSP tour on $A\cup \parset{v}$.
	Since $c(E(T_{[y,x]})) \leq c(T)$, we get for all $A\subseteq V, v\in V$:
	\begin{equation}
		\label{eq:atsp_reduction_to_depot3}
		c\left(T^\ast_{A \cup \parset{v}}\right) \leq 2c\left(T^\ast_{A}\right) +  D_v(T^*_{A\cup \parset{v}}).
	\end{equation}
	Furthermore, we have
	$
	\sum_{v \in V} \mathds{1}_{v \in A} \cdot D_v(T^*_{A\cup \parset v}) = 2c(T^\ast_A).
	$
	This holds for every choice of $A$, hence we can apply expectation to both sides of this inequality yielding
	\begin{equation*}
		\begin{split}
			\sum_{v \in V} p(v) \cdot \expectationcustom{A \sim p}{D_v(T^*_{A \cup \{v\}})} 
			&= \expectationcustom{A \sim p}{\sum_{v \in V} \mathds{1}_{v \in A} \cdot D_v(T^*_{A\cup \parset v})} \\
			&= 2\expectationcustom{A \sim p}{c(T^\ast_A)},
		\end{split}
	\end{equation*}
	because $\mathds{1}_{v \in A}$ and $D_v(T^*_{A\cup\parset v})$ are independent random variables as the definition of $D_v(T^*_{A\cup \parset v})$ does not depend on whether $A$ contains $v$.
	In particular, this means that there exists $\hat{v} \in V$ such that
	\begin{equation}
		\label{eq:atsp_reduction_to_depot4}
		\expectationcustom{A \sim p}{D_{\hat{v}}(T^*_{A\cup \parset {\hat v}})} \leq \frac{2\expectationcustom{A \sim p}{c(T^\ast_A)}}{p(V)}. 
	\end{equation}
	We will choose $\hat{v}$ as the vertex minimizing this expression.
	This proves \eqref{eq:atsp_reduction_to_depot1_adaptivity} since
		\begin{align*}
						\expectationcustom{A \sim p^{\hat{v}=d}}{{\rm TSP}(A,c)} 
						&=\expectationcustom{A \sim p^{\hat{v}=d}}{c\left( T^\ast_{A} \right)} \\
						&\overset{\eqref{eq:atsp_reduction_to_depot2}}= \expectationcustom{A \sim p}{c\left( T^\ast_{A \cup \parset{\hat{v}}} \right)} \\
						&\overset{\eqref{eq:atsp_reduction_to_depot3}}{\leq} 2\expectationcustom{A \sim p}{c\left(T^\ast_{A}\right) } + \expectationcustom{A \sim p}{D_{\hat{v}}(T^*_{A\cup\parset{\hat v}}) } \\
						&\overset{\eqref{eq:atsp_reduction_to_depot4}}{\leq} 2(1+1/p(V))  \expectationcustom{A \sim p}{c\left(T^\ast_{A}\right) } \\
						&= 2(1+1/p(V))\expectationAP{{\rm TSP}(A,c)}.
		\end{align*}
	For any tour $S$ on $V$ we have for all $A\subseteq V, v\in V$ by the triangle inequality:
	\[c(S[A]) \leq c(S[A\cup\parset{v}]) \leq c(S[A]) + D_v(A) \leq 2c(S[A]) + D_v(A)\]
	as an analogous inequality to \eqref{eq:atsp_reduction_to_depot3}. Using the same arguments as for \eqref{eq:atsp_reduction_to_depot1_adaptivity}, only replacing $T^*_A$ by $S[A]$ (and of course $T^*_{A\cup \parset v}$ by $S[A\cup\parset v]$) thus shows \eqref{eq:atsp_reduction_to_depot1_approx}.
\end{proof}

Thus, if we are given an instance of \APrioriATSP where $p(V)\geq \frac12$, we can guess an optimal depot vertex $d$, increase its activation probability to $1$, solve the resulting instance, and view the solution we obtain as a solution to our original instance.
In this reduction we lose at most a factor of $2(1+\frac 1{p(V)}) \leq 6$ in the approximation ratio.
If the total activation probability however is small, we take advantage of the following lemma:

\begin{lemma}
	\label{lem:apx_for_tiny_activations}
	Let $(V,c,p)$ be an instance of \APrioriATSP.
	If $p(V) \leq \frac{1}{2}$, then any tour on $V$ has expected cost at most $2\expectationAP{{\rm TSP}(A,c)}$.
\end{lemma}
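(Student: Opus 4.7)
The plan is to sandwich both expectations by the auxiliary quantity
\[
\sigma\ \coloneqq\ \sum_{\substack{v,w \in V\\ v \neq w}} c(v,w)\,p(v)\,p(w),
\]
where the sum runs over ordered pairs of distinct vertices, and to show that for every tour $T$ on $V$ we have $\expectationAP{c(T[A])}\leq\sigma\leq 2\,\expectationAP{\mathrm{TSP}(A,c)}$.

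For the first inequality I would expand $\expectationAP{c(T[A])}$ as a sum over ordered pairs $(v,w)$ of $c(v,w)$ times the probability that $(v,w)$ appears as an edge of $T[A]$. Such an edge can only occur when both $v$ and $w$ are active, so by independence this probability is at most $p(v)\,p(w)$; summing gives $\expectationAP{c(T[A])}\leq\sigma$. Any additional condition (namely, that the vertices strictly between $v$ and $w$ along $T$ are inactive) can only decrease this probability, so it is safely discarded.

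For the second inequality I would restrict the expectation $\expectationAP{\mathrm{TSP}(A,c)}$ to the events $\{A=\{v,w\}\}$ with $v\neq w$. On each such event the (asymmetric) tour on $A$ is unique and has cost $c(v,w)+c(w,v)$, while
\[
\probability{A=\{v,w\}}\ =\ p(v)\,p(w)\prod_{u\notin\{v,w\}}\bigl(1-p(u)\bigr).
\]
The hypothesis $p(V)\leq\tfrac12$ together with the elementary inequality $\prod_i(1-x_i)\geq 1-\sum_i x_i$ on $[0,1]$ bounds the tail product below by $1-p(V)\geq\tfrac12$. Regrouping $\sigma$ over unordered pairs as $\sigma=\sum_{\{v,w\},\, v\neq w}\bigl(c(v,w)+c(w,v)\bigr)\,p(v)\,p(w)$ then yields $\expectationAP{\mathrm{TSP}(A,c)}\geq\sigma/2$, and the two bounds combine to give the lemma.

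The only place where the assumption $p(V)\leq\tfrac12$ is used is the tail-product estimate in the second step; I expect this to be the only nontrivial calculation in the proof. The approach works because the atomic events $\{A=\{v,w\}\}$ already carry a constant fraction of $\sigma$, and on these events any a priori tour automatically coincides with the a posteriori optimum, so there is no need to handle configurations with larger $|A|$ at all.
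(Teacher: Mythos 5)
Your proof is correct. For the lower bound on $\expectationAP{{\rm TSP}(A,c)}$ it is essentially the paper's own argument: restrict to events $A = \{v,w\}$, factor out $p(v)\,p(w)$, and bound the remaining tail product from below by $1 - p(V) \geq \tfrac12$ (the paper phrases this same estimate via Markov's inequality applied to $\probability{A = \emptyset}$). For the upper bound on $\expectationAP{c(T[A])}$ your route is different and more direct: you expand the expectation edge by edge and bound $\probability{(v,w) \in E(T[A])}$ simply by $p(v)\,p(w)$. The paper instead bounds each realized $c(T[A])$ above by the cost $\sum_{v \in A}\big(c(s,v) + c(v,s)\big)$ of a star walk through a reference vertex $s$, then averages over $s$ with weights $p(s)/p(V)$, inserting an indicator for $|A| \ge 2$ and applying Markov's inequality to the conditional probability $\probability{|A| \geq 2 \mid v \in A}$ in order to cancel the $p(V)$ in the denominator. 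Your version avoids the reference vertex, the conditioning, and the averaging, and in fact produces an upper bound on $\expectationAP{c(T[A])}$ that is a factor of two smaller than the expression the paper's chain yields; the stated factor of $2$ in the lemma comes out the same either way.
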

\begin{proof}
	Let $T$ be any tour on $V$.
	If less than $2$ vertices are active, the cost of $T$ short-cutted to the active vertices is zero. Otherwise let $v_1,\dots v_k$ be the active vertices in this order on $T$. By the triangle inequality this tour costs at most $\sum_{i = 1}^k (c(s, v_i) + c(v_i,s))$ for any $s\in V$. This gives an upper bound on the expected cost for each $s\in V$. Taking the average over all these bounds weighted by the activation probability of $s$ yields
	\begin{align*}
		\mathbb E_{A\sim p}[c(T[A])]
		& \leq \sum_{s \in V} \frac{p(s)}{p(V)} \cdot \left(\sum_{v \in V} \probabilitycustom{A \sim p}{|A| \geq 2 \text{ and } v \in A}  \cdot (c(s,v) + c(v,s)) \right) \\
		&= \sum_{s \in V} \frac{p(s)}{p(V)} \cdot \left( \sum_{v \in V} p(v) \cdot \probabilitycustom{A \sim p}{|A| \geq 2 \mid v \in A} \cdot (c(s,v) + c(v,s)) \right) \\
		&\leq  \sum_{s \in V} \sum_{v \in V} p(s)p(v)(c(s,v) + c(v,s)),
	\end{align*}
	where the last inequality follows from the fact that 
	\[
	\probabilitycustom{A \sim p}{|A| \geq 2 \mid v \in A} = \probabilitycustom{A \sim p}{|A \setminus \parset{v}| \geq 1} \leq \expectationAP{|A \setminus \parset{v}|} = p(V \setminus \parset{v})
	\]
	by Markov's inequality.
	
	On the other hand, we can bound
	\begin{equation*}
		\begin{split}
			\expectationAP{{\rm TSP}(A,c)} 
			&\geq \sum_{x,y \in V} \probabilitycustom{A \sim p}{A = \parset{x,y}} \cdot \left( c(x,y) + c(y,x) \right) \\
			&= \sum_{x,y \in V}  p(x)p(y)\cdot \probabilitycustom{A \sim p}{A \subseteq \parset{x,y} } \cdot \left( c(x,y) + c(y,x) \right) \\
			&\geq \sum_{x,y \in V}  p(x)p(y)\cdot \probabilitycustom{A \sim p}{A = \emptyset } \cdot \left( c(x,y) + c(y,x) \right).
		\end{split}
	\end{equation*}
	Now, by Markov's inequality,
	\begin{equation*}
		\probabilitycustom{A \sim p}{A = \emptyset } = 1 - \probabilitycustom{A \sim p}{|A| \geq 1} \geq 1- \expectationAP{|A|} = 1 - p(V) \geq \frac{1}{2}.
	\end{equation*}
	Therefore we conclude
	\begin{equation*}
		\begin{split} 
			\expectationAP{{\rm TSP}(A,c)} 
			&\geq \sum_{x,y \in V}  \frac{1}{2} p(x)p(y) \left( c(x,y) + c(y,x) \right) 
			\geq \frac{1}{2} \cdot
			\mathbb E_{A\sim p}[c(T[A])] .
		\end{split}
	\end{equation*}
\end{proof}

This leads us to the following conclusion:

\begin{corollary}\label{cor:depot}
Let $\alpha: \bN \to \bR_{\geq 1}$.
Suppose there is a polynomial-time algorithm that, given an instance $(V^\prime,c^\prime,p^\prime)$ of Asymmetric A Priori TSP with depot, returns an a priori tour $T^\prime$ on $V^\prime$ that satisfies
\begin{equation}
\label{eq:cor_depot}
\expectationcustom{A^\prime \sim p^\prime}{c^\prime(T^\prime[A^\prime])} \leq \alpha(n^\prime) \cdot \expectationcustom{A^\prime \sim p^\prime}{{\rm TSP}(A^\prime, c^\prime)},
\end{equation}
then there is a polynomial-time algorithm that, given a general instance $(V,c,p)$ of Asymmetric A Priori TSP, returns a tour $T$ on $V$ that satisfies
\begin{equation}
\expectationcustom{A \sim p}{c(T[A])} \leq 6 \cdot \alpha(n) \cdot \expectationcustom{A \sim p}{{\rm TSP}(A, c)},
\end{equation}
where $n^\prime \coloneqq |V^\prime|$ and $n \coloneqq |V|$.
\end{corollary}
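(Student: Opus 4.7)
The plan is to perform a case split on the total activation probability $p(V) \coloneqq \sum_{v \in V} p(v)$, applying \Cref{lem:apx_for_tiny_activations} when $p(V) \leq \tfrac{1}{2}$ and \Cref{lem:atsp_reduction_to_depot_adaptivity} when $p(V) > \tfrac{1}{2}$. In the first case, simply return an arbitrary Hamiltonian cycle $T$ on $V$: \Cref{lem:apx_for_tiny_activations} then gives $\expectationAP{c(T[A])} \leq 2\expectationAP{{\rm TSP}(A,c)} \leq 6\alpha(n)\expectationAP{{\rm TSP}(A,c)}$, using that $\alpha(n) \geq 1$.

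If instead $p(V) > \tfrac{1}{2}$, \Cref{lem:atsp_reduction_to_depot_adaptivity} guarantees the existence of some $\hat{d} \in V$ for which the derived depot instance $(V, c, p^{\hat{d}})$, obtained by setting $p^{\hat{d}}(\hat{d}) = 1$ and $p^{\hat{d}}(v) = p(v)$ for $v \neq \hat{d}$, satisfies both \eqref{eq:atsp_reduction_to_depot0_adaptivity} and \eqref{eq:atsp_reduction_to_depot1_adaptivity}. Since we cannot identify $\hat{d}$ directly, we enumerate all $n$ candidates $d \in V$: for each, invoke the depot algorithm on $(V, c, p^d)$ to obtain a tour $T_d$ on $V$, compute $\expectationAP{c(T_d[A])}$ in polynomial time, and output the tour $T_d$ achieving the smallest such value. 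This expectation is a sum over the $O(n^2)$ ordered pairs $(u,w)$ of vertices of $c(u,w) \cdot p(u)p(w) \prod_{v}(1-p(v))$, with the product taken over vertices $v$ visited strictly between $u$ and $w$ on the cyclic tour $T_d$; this is the probability that $(u,w)$ is a consecutive pair in $T_d[A]$, and the quantity is clearly computable in polynomial time. For the correct guess $d = \hat{d}$, chaining the three relevant inequalities yields
\[
\expectationAP{c(T_{\hat{d}}[A])} \overset{\eqref{eq:atsp_reduction_to_depot0_adaptivity}}{\leq} \expectationcustom{A \sim p^{\hat{d}}}{c(T_{\hat{d}}[A])} \overset{\eqref{eq:cor_depot}}{\leq} \alpha(n)\, \expectationcustom{A \sim p^{\hat{d}}}{{\rm TSP}(A,c)} \overset{\eqref{eq:atsp_reduction_to_depot1_adaptivity}}{\leq} 2\alpha(n)\bigl(1 + \tfrac{1}{p(V)}\bigr) \expectationAP{{\rm TSP}(A,c)},
\]
and since $p(V) > \tfrac{1}{2}$, the factor $2(1+1/p(V))$ is strictly less than $6$, giving the claimed approximation ratio.

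There is no serious obstacle here: the preceding two lemmas have done the heavy lifting, and only the case split and the polynomial (factor $n$) enumeration over depot guesses remain. The one routine detail worth verifying is that for a fixed tour $T$ the expected shortcut cost $\expectationAP{c(T[A])}$ is efficiently computable, which is precisely the closed-form expression above summed over consecutive pairs.
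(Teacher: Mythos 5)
Your proposal is correct and follows essentially the same route as the paper: a case split on whether $p(V) \leq \tfrac12$, handling the small-activation case via \Cref{lem:apx_for_tiny_activations}, and otherwise enumerating all depot candidates and applying \Cref{lem:atsp_reduction_to_depot_adaptivity} to chain \eqref{eq:atsp_reduction_to_depot0_adaptivity}, \eqref{eq:cor_depot}, and \eqref{eq:atsp_reduction_to_depot1_adaptivity}. Your added remark on how to evaluate $\expectationAP{c(T_d[A])}$ in polynomial time is a routine but welcome detail that the paper leaves implicit.
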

\begin{proof}
Let $(V,c,p)$ be a general instance of Asymmetric A Priori TSP.
If $p(V) = \sum_{v \in V} p(v) \leq 1/2$, then, by \Cref{lem:apx_for_tiny_activations}, any tour tour $T$ on $V$ satisfies $\expectationcustom{A \sim p}{c(T[A])} \leq 2\expectationAP{{\rm TSP}(A,c)}$.

Otherwise, for every $v \in V$, we obtain an instance of Asymmetric A Priori TSP $(V,c,p^{v})$ with depot $v$ by
\begin{equation*}
		 p^{v}(u) \coloneqq \begin{cases}1 &u=v,\\ p(u) &\text{else}. \end{cases}
\end{equation*}
For each such instance, we compute an priori tour $T_v$ that satisfies~\eqref{eq:cor_depot}, and return the one minimizing $\expectationcustom{A \sim p}{c(T_v[A])}$.

Applying \Cref{lem:atsp_reduction_to_depot_adaptivity} to $(V,c,p)$ implies that there exists a vertex $d \in V$ such that~\eqref{eq:atsp_reduction_to_depot0_adaptivity} and~\eqref{eq:atsp_reduction_to_depot1_adaptivity} of \Cref{lem:atsp_reduction_to_depot_adaptivity} are satisfied.
Since $p(V) \geq 1/2$, this yields
\begin{equation*}
\begin{split}
\min_{v \in V} \expectationcustom{A \sim p}{c(T_v[A])}
&\leq \expectationcustom{A \sim p}{c(T_d[A])} \\
\overset{\eqref{eq:atsp_reduction_to_depot0_adaptivity}}&{\leq} \expectationcustom{A \sim p^d}{c(T_d[A])} \\
\overset{\eqref{eq:cor_depot}}&{\leq} \alpha(n) \cdot \expectationcustom{A \sim p^{d}}{{\rm TSP}(A,c)} \\
\overset{\eqref{eq:atsp_reduction_to_depot1_adaptivity}}&{\leq} \alpha(n) \cdot 2\left(1+ \frac{1}{p(V)} \right) \cdot \expectationcustom{A \sim p}{{\rm TSP}(A,c)} \\
&\leq 6 \cdot \alpha(n) \cdot \expectationcustom{A \sim p}{{\rm TSP}(A,c)}.
\end{split}
\end{equation*}
\end{proof}

Using \cref{lem:atsp_reduction_to_depot_approx} instead of \Cref{lem:atsp_reduction_to_depot_adaptivity}, one could prove a similar reduction statement for approximation algorithms comparing to $\OPT(V,c,p)$ instead of $\expectationAP{{\rm TSP}(A,c)}$.

\subsection{A simple $O(\sqrt n)$-approximation algorithm}
\label{sec:simple_approx}

Being able to reduce to instances with a depot allows us to give a simple polynomial-time approximation algorithm with approximation ratio $O(\sqrt n)$ and an upper bound on the adaptivity gap, as claimed in \Cref{sec:our_contribution}:

\SimpleApprox*

Our approach is to partition the instance  into two instances of \APrioriATSP with vertex sets $V_1$ and $V_2$, which we consider separately.
The vertex sets $V_1$ and $V_2$ will satisfy $V = V_1 \cup V_2$ with $V_1 \cap V_2 = \parset{d}$ (where $d\in V$ is a depot).
We will choose a probability threshold $p^\ast \in [0,1]$ and define
	\begin{equation*}
		V_1 \coloneqq \parset{v \in V : p(v) \geq p^\ast} \quad\text{and}\quad V_2 \coloneqq \parset{v \in V : p(v) < p^\ast} \cup \parset{d}.
	\end{equation*}
	Ultimately, we will choose $p^\ast \coloneqq \frac{1}{\sqrt{n}}$.
	Intuitively the instance with vertex set $V_1$, containing only vertices with ``large'' activation probabilities, resembles a problem that is close to standard \ATSP.
	
\begin{proposition}
	\label{prop:large_activations}
	Let $(V,c,p)$ be an instance of \APrioriATSP with a depot $d\in V$ and $p_{\min} \coloneqq \min_{v \in V}p(v)$.
	Let $T$ be an $\alpha$-approximate tour for the instance $(V,c)$ of \ATSP.
	Then,
	\begin{equation*}
		c\left( T \right) \leq \alpha \cdot \ceil*{\frac{1}{p_{\mathrm{min}}}} \cdot \expectationAP{{\rm TSP}(A,c)}.
	\end{equation*}
	\begin{proof}
		Let $k \coloneqq \ceil*{\frac{1}{p_{\text{min}}}}$.
		For the sake of this proof, we conduct the following random experiment: 
		Start with $S_1, \dots, S_k = \parset{d}$.
		Now, for every $v \in V \setminus \parset{d}$ choose an index $i \in \parset{1, \dots, k}$ uniformly and independently at random and add $v$ to $S_i$.
		We have
		\begin{equation*}
			\probability{v \in S_i} = \frac{1}{k} \leq p_{\mathrm{min}} \leq p(v),
		\end{equation*}
		for all $v \in V \setminus \parset{d}$ and fixed $i \in \parset{1, \dots, k}$.
		Hence, $\expectation{{\rm TSP}(S_i, c)} \leq \expectationAP{{\rm TSP}(A, c)}$, using that the index choices were independently made.
		Hence,
		\begin{equation*}
			c\left( T \right) \leq \alpha \cdot \expectation{\sum_{i=1}^k {\rm TSP}(S_i, c)} \leq \alpha \cdot k \cdot \expectationAP{{\rm TSP}(A, c)},
		\end{equation*}
		since for any realization of $S_1, \dots, S_k$, their optimal tours can be concatenated into a single tour on $V$.
	\end{proof}
\end{proposition}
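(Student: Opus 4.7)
The plan is to combine two observations. First, because $T$ is an $\alpha$-approximate ATSP tour, we have $c(T)\leq \alpha\cdot {\rm TSP}(V,c)$. Second, we bound ${\rm TSP}(V,c)$ in terms of $\mathbb{E}_{A\sim p}[{\rm TSP}(A,c)]$ by constructing a random partition of $V$ into $k\coloneqq \lceil 1/p_{\min}\rceil$ groups, each containing the depot. The factor $k$ in the final bound comes naturally from this partitioning.

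The construction would be as follows. Randomly assign each vertex $v\in V\setminus\{d\}$ independently to one of $k$ bins $S_1',\dots,S_k'$ uniformly at random, and set $S_i\coloneqq S_i'\cup\{d\}$. Since every $S_i$ contains the depot $d$ and $\bigcup_i S_i=V$, for every realization we can concatenate optimal tours on $S_1,\dots,S_k$ through $d$ to obtain a closed walk on $V$; short-cutting via the triangle inequality gives the deterministic inequality
\[
 {\rm TSP}(V,c) \ \leq\ \sum_{i=1}^k {\rm TSP}(S_i,c).
\]

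For the stochastic comparison, fix $i\in [k]$. For every $v\in V\setminus\{d\}$ we have $\Pr[v\in S_i]=1/k\leq p_{\min}\leq p(v)$, and vertex inclusions are independent. Hence we can couple $S_i$ with a random active set $A\sim p$ so that $S_i\subseteq A$ almost surely (and $d\in A$ does not matter since we may augment $A$ with $d$ for free, or handle it separately using the depot property). Any tour on $A$ can then be short-cut to a tour on $S_i$ using the triangle inequality, yielding ${\rm TSP}(S_i,c)\leq {\rm TSP}(A,c)$ under the coupling, and thus $\mathbb{E}[{\rm TSP}(S_i,c)]\leq \mathbb{E}_{A\sim p}[{\rm TSP}(A,c)]$. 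Chaining the inequalities and taking expectation gives
\[
 c(T)\ \leq\ \alpha\cdot \mathbb{E}\!\left[\sum_{i=1}^k {\rm TSP}(S_i,c)\right]\ \leq\ \alpha\cdot k \cdot \mathbb{E}_{A\sim p}[{\rm TSP}(A,c)].
\]

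The only real subtlety is the stochastic comparison between $S_i$ and $A$; the rest is a straightforward triangle-inequality concatenation argument. Since the assignments are independent and each marginal is dominated by $p(v)$, the coupling is standard, so no step should pose a genuine obstacle.
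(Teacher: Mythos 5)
Your proof is correct and follows essentially the same route as the paper's: a random partition of $V\setminus\{d\}$ into $k=\lceil 1/p_{\min}\rceil$ bins (each augmented with the depot), coupling each $S_i$ with $A\sim p$ via $\Pr[v\in S_i]=1/k\le p(v)$, and concatenating the optimal tours on the $S_i$ through $d$. The only minor difference is that you spell out the coupling explicitly and flag a non-issue about $d\in A$ (since a depot has $p(d)=1$, $d\in A$ always); the paper leaves the stochastic domination implicit but is otherwise identical.
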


	Observe that the length of any tour on $k$ vertices can be at most a factor of $k$ larger than the length of the optimal tour.
	This leads to an algorithm with an approximation guarantee that depends on the expected number of active vertices, which is good when all vertices (except for the depot) have ``small'' activation probabilities, as it is the case for the vertices in $V_2$ .

\begin{proposition}
	\label{prop:small_activations}
	Let $(V,c,p)$ be an instance of \APrioriATSP with depot $d \in V$.
	Then any Hamiltonian cycle $T$ on $V$ satisfies
	\begin{equation*}
		\expectationAP{c(T[A])} \ \leq\ 6p(V) \cdot \expectationAP{{\rm TSP}(A,c)}.
	\end{equation*}
\end{proposition}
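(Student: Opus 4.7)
The plan is to reduce the statement to two elementary inequalities involving the depot round-trip distance $g(v)\coloneqq c(d,v)+c(v,d)$, and then settle the remaining estimate via a layer-cake argument.

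First I would prove the deterministic bound
\[
c(T[A]) \ \le\ \sum_{v\in A\setminus\{d\}} g(v) \qquad \text{for every } A\ni d.
\]
If $T[A]$ visits $d=u_0,u_1,\dots,u_{m-1},u_m=d$ in this order, the triangle inequality gives $c(u_i,u_{i+1})\le c(u_i,d)+c(d,u_{i+1})$ for each $i$, and summing over $i$ while using $c(d,d)=0$ yields the bound. Taking expectations over $A\sim p$ (and noting $g(d)=0$) gives $\expectationAP{c(T[A])} \le \sum_{v\neq d} p(v)\,g(v)$. In the opposite direction, I would establish $\expectationAP{{\rm TSP}(A,c)} \ge \expectationAP{\max_{v\in A\setminus\{d\}} g(v)}$, using that any tour on $A\ni d$ passing through some $v\ne d$ must contain both a $d$-to-$v$ and a $v$-to-$d$ path, whose costs are at least $c(d,v)$ and $c(v,d)$ respectively by the triangle inequality.

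These two reductions reduce the claim to showing $\sum_{v\ne d} p(v)g(v) \le 6p(V)\cdot\expectationAP{\max_{v\in A\setminus\{d\}} g(v)}$. I would prove this by a layer-cake decomposition. Defining
\[
f(t)\coloneqq\sum_{v\ne d:\, g(v)>t} p(v) \qquad\text{and}\qquad h(t)\coloneqq 1-\prod_{v\ne d:\, g(v)>t}(1-p(v)),
\]
the two sides equal $\int_0^\infty f(t)\,dt$ and $\int_0^\infty h(t)\,dt$ respectively. The key pointwise estimate is $h(t)\ge \tfrac{1}{2}\min(1,f(t))$, which follows from $1-\prod(1-p_v)\ge 1-e^{-\sum p_v}$ together with the elementary inequality $1-e^{-x}\ge \tfrac{1}{2}\min(1,x)$ for all $x\ge 0$. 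Splitting both integrals at $T^\ast\coloneqq\sup\{t:f(t)>1\}$ and using that $f$ is non-increasing with $f(0)\le p(V)-1$, one bounds $\int_0^{T^\ast} f \le T^\ast(p(V)-1)\le 2(p(V)-1)\int_0^{T^\ast} h$ (since $h\ge \tfrac{1}{2}$ on $[0,T^\ast]$) and $\int_{T^\ast}^\infty f \le 2\int_{T^\ast}^\infty h$ (since $h\ge f/2$ on $(T^\ast,\infty)$). Summing gives $\int f \le 2p(V)\int h$, which is comfortably within the claimed factor $6p(V)$.

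The main obstacle is precisely the sum-versus-max comparison in the last step. A naive approach that bounds $c(T[A])\le(|A|-1)\cdot{\rm TSP}(A,c)$ and then applies Cauchy-Schwarz or the FKG inequality goes in the wrong direction: both $|A\setminus\{d\}|$ and ${\rm TSP}(A,c)$ are monotone in $A$ and therefore positively correlated, so their covariance would help to bound the product from below, not above. The layer-cake split, combined with the subadditive exponential estimate $1-\prod(1-p_v)\ge \tfrac{1}{2}\min(1,\sum p_v)$, is exactly what is needed to extract the factor $p(V)$ cleanly.
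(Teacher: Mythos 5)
Your proof is correct, and the key analytical step differs from the paper's in a genuine way. Both proofs use the depot and the triangle inequality to reduce to the round-trip costs $g(v)=c(d,v)+c(v,d)$: you both upper-bound $\expectationAP{c(T[A])}$ by $\sum_{v\neq d}p(v)g(v)$, and you both lower-bound $\expectationAP{{\rm TSP}(A,c)}$ using the fact that any tour through a vertex $v$ costs at least $g(v)$. The divergence is in how the resulting sum-versus-something comparison is closed. The paper lower-bounds ${\rm TSP}(A,c)$ by the \emph{average} $\frac{1}{|A|}\sum_{v\in A}g(v)$, rewrites the expectation as $\sum_v p(v)g(v)\cdot\mathbb{E}[1/|A|\mid v\in A]$, and then shows $\mathbb{E}[1/|A|\mid v\in A]\geq \tfrac{1}{6p(V)}$ by a short Markov-inequality argument on $|A|$. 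You instead lower-bound ${\rm TSP}(A,c)$ by the \emph{max} $\max_{v\in A\setminus\{d\}}g(v)$ and compare $\sum_v p(v)g(v)$ with $\mathbb{E}[\max g]$ via a layer-cake decomposition and the estimate $1-\prod(1-p_v)\geq\tfrac{1}{2}\min(1,\sum p_v)$. Your route actually yields the slightly sharper factor $2p(V)$ in place of $6p(V)$; the paper's route avoids the integral split and the exponential bound, relying only on Markov, which makes it shorter. Both are valid self-contained arguments. One minor remark: your split at $T^\ast$ degenerates gracefully when $f(0)\leq 1$ (take $T^\ast=0$ so the first integral vanishes), and your use of $p(V)-1\geq 0$ is justified because $p(d)=1$.
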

\begin{proof}
	By the triangle inequality, we can bound
	\begin{equation*}
		\expectationAP{{\rm TSP}(A,c)} \ \geq\  \expectationAP{ \frac{\sum_{v \in A} c(v,d) + c(d,v)}{|A|} }.
	\end{equation*}
	We can rewrite the right hand side as
	\begin{equation*}
		\begin{split} 
			\expectationAP{ \frac{\sum_{v \in A}c(v,d) + c(d,v)}{|A|} } \
			&= \  \sum_{v \in V} p(v)\cdot (c(v,d) + c(d,v)) \cdot \condexp{A \sim p}{\frac{1}{|A|}}{v \in A}.
		\end{split}
	\end{equation*}
	By Markov's inequality,
	\begin{equation*}
		\begin{split}
			\condexp{A \sim p}{\frac{1}{|A|}}{v \in A} & \geq \expectationAP{\frac{1}{|A| + 1}} \\ 
			&> \mathbb{P}_{A \sim p}\Big[ |A| < 2 \cdot \mathbb{E}_{A\sim p}\big[|A|\big] \Big] \cdot \frac{1}{ 2 \cdot \expectationAP{|A|} + 1 } \\
			&\geq \frac{1}{2} \cdot \frac{1}{2 \cdot \expectationAP{|A|} + 1} \\
			&\geq \frac{1}{6p(V)},
		\end{split}
	\end{equation*}
	because $\expectationAP{|A|} = p(V) \geq 1$ as $V$ contains a depot.
	Therefore, putting everything together yields
	\begin{equation*}
			\expectationAP{{\rm TSP}(A,c)} \geq \sum_{v \in V} p(v)\cdot (c(v,d) + c(d,v)) \cdot \frac{1}{6p(V)},
	\end{equation*}
	which concludes the proof as every Hamiltonian cycle $T$ on $V$ satisfies $\expectationAP{c(T[A])}  \leq \sum_{v \in V} p(v)\cdot (c(v,d) + c(d,v))$ by the triangle inequality.
\end{proof}

We can now combine \Cref{prop:large_activations,prop:small_activations} into a single algorithm (see also \Cref{fig:split_and_connect}):

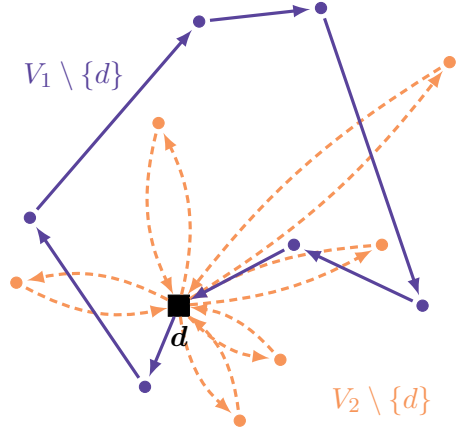
\begin{figure}
\begin{center}
\begin{tikzpicture}[
vertex/.style={circle,draw=black, fill,inner sep=1.5pt, outer sep=2.0pt},
edge/.style={-latex, very thick},
scale=0.9
]
\definecolor{color1}{named}{Peach} 
\definecolor{color2}{named}{Violet}

\node[rectangle,draw=black, fill,inner sep=4pt] (d) at (3,2.9) {};

\begin{scope}[every node/.style={vertex, color1}, edge/.append style={draw=color1, densely dashed}]

\node[] (v1) at (0.6,3.24) {};
\node[] (v2) at (4.5,2.1) {};
\node[] (v3) at (2.7,5.6) {};
\node[] (v4) at (6.0,3.8) {};
\node[] (v5) at (3.9,1.2) {};
\node[] (v6) at (7.0,6.5) {};

\draw[edge] (v1) to[bend right=20] (d);
\draw[edge] (d) to[bend right=20] (v1);

\draw[edge] (v2) to[bend right=20] (d);
\draw[edge] (d) to[bend right=20] (v2);

\draw[edge] (v3) to[bend right=20] (d);
\draw[edge] (d) to[bend right=20] (v3);

\draw[edge] (v4) to[bend right=12] (d);
\draw[edge] (d) to[bend right=12] (v4);

\draw[edge] (v5) to[bend right=20] (d);
\draw[edge] (d) to[bend right=20] (v5);

\draw[edge] (v6) to[bend right=10] (d);
\draw[edge] (d) to[bend right=10] (v6);
\end{scope}

\begin{scope}[every node/.style={vertex, color2}, edge/.append style={draw=color2}]

\node[] (v7) at (5.1,7.3) {};
\node[] (v8) at (0.8,4.2) {};
\node[] (v9) at (3.3,7.1) {};
\node[] (v10) at (4.7,3.8) {};
\node[] (v11) at (2.5,1.7) {};
\node[] (v12) at (6.6,2.9) {};

\draw[edge] (d) -- (v11);
\draw[edge] (v11) -- (v8);
\draw[edge] (v8) -- (v9);
\draw[edge] (v9) -- (v7);
\draw[edge] (v7) -- (v12);
\draw[edge] (v12) -- (v10);
\draw[edge] (v10) -- (d);
\end{scope}

\node at (6,1.5) [] {\textcolor{color1}{${V_{2} \setminus \parset{d}}$}};
\node at (1.45,6.3) [] {\textcolor{color2}{${V_{1} \setminus \parset{d}}$}};

\node at (d) [below=4pt] {\textcolor{black}{{$\boldsymbol{d}$}}};
        
\end{tikzpicture} \end{center}
\caption{
Illustration of how we combine \Cref{prop:large_activations,prop:small_activations}.
We concatenate an $\alpha$-approximate ATSP tour on $V_1$ (the vertices with ``large'' activation probabilities) and any Hamiltonian cycle on $V_2$ (those with ``small'' activation probabilities).
In our analysis, the expected cost of the latter is upper bounded by the cost of visiting each (active) vertex $v \in V_2$ individually and returning to the depot $d$ between every visit.
Note that the depot is important to bound the expected cost of the concatenated tour by the sum of the expected costs of the two individual tours given by \Cref{prop:large_activations,prop:small_activations}.
}
\label{fig:split_and_connect}
\end{figure}

\begin{corollary}
	\label{cor:apx_for_large_and_tiny_activations}
	Let $(V,c,p)$ be an instance of \APrioriATSP with a depot $d\in V$.
	Let  $n\coloneqq |V|$ and
	 \begin{equation*}
	 	V_1 \coloneqq \parset{v \in V : p(v) \geq \frac 1 {\sqrt n}} \quad\text{and}\quad V_2 \coloneqq \parset{v \in V : p(v) < \frac 1{\sqrt n}} \cup \parset{d}.
	 \end{equation*}
	 Concatenating an $\alpha$-approximate tour $T_1$ for the instance $(V_1, c)$ of \ATSP and any Hamiltonian cycle $T_2$ on $V_2$ results in an a priori tour $T$ for $(V,c,p)$ with expected cost at most
	\begin{equation}
		\label{eq:apx_for_large_and_tiny_activations}
		(2\alpha + 12)\cdot \sqrt{n}\cdot \expectationAP{{\rm TSP}(A,c)}.
	\end{equation}
\end{corollary}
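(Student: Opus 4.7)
The plan is to use the depot $d$ as a natural cutting point. Since $d$ lies in both $V_1$ and $V_2$ and is always active, the concatenation $T$ of $T_1$ and $T_2$ at $d$ can be analyzed as follows. Write $T_1 = (d, v_1, \dots, v_k, d)$ and $T_2 = (d, u_1, \dots, u_m, d)$, so that $T = (d, v_1, \dots, v_k, u_1, \dots, u_m, d)$. For every $A \subseteq V$ with $d \in A$, the short-cut tour $T[A]$ contains a unique edge crossing from the $V_1$-part to the $V_2$-part; bounding its cost by a detour through $d$ via the triangle inequality gives the key decomposition
\[
  c(T[A])\ \leq\ c\bigl(T_1[A \cap V_1]\bigr) + c\bigl(T_2[A \cap V_2]\bigr).
\]
Since $d$ is always active, this holds almost surely for $A \sim p$, so it remains to bound the two expected summands separately and sum them.

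For the first term, I would observe that short-cutting never increases cost, so $c(T_1[A\cap V_1]) \leq c(T_1)$, and then invoke \Cref{prop:large_activations} on the instance $(V_1, c|_{V_1 \times V_1}, p|_{V_1})$ with depot $d$. By the definition of $V_1$, every vertex has activation probability at least $\tfrac{1}{\sqrt n}$, so $\lceil 1/p_{\min}\rceil \leq \lceil \sqrt n\rceil \leq 2\sqrt n$, giving
\[
  c(T_1)\ \leq\ 2\alpha\sqrt n \cdot \mathbb{E}_{A_1 \sim p|_{V_1}}\bigl[\mathrm{TSP}(A_1, c)\bigr].
\]
Because $c$ is a metric, $\mathrm{TSP}(\cdot, c)$ is monotone under set inclusion (any tour on a superset can be short-cut to a subset without increasing cost), hence $\mathbb{E}_{A_1 \sim p|_{V_1}}[\mathrm{TSP}(A_1, c)] = \mathbb{E}_{A \sim p}[\mathrm{TSP}(A \cap V_1, c)] \leq \mathbb{E}_{A \sim p}[\mathrm{TSP}(A, c)]$, yielding the bound $2\alpha\sqrt n\cdot \mathbb{E}_{A \sim p}[\mathrm{TSP}(A, c)]$ on the first summand.

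For the second term, I would apply \Cref{prop:small_activations} to the instance $(V_2, c|_{V_2 \times V_2}, p|_{V_2})$ with depot $d$. Each non-depot vertex in $V_2$ has activation probability strictly less than $\tfrac{1}{\sqrt n}$, so $p|_{V_2}(V_2) \leq 1 + (n-1)/\sqrt n \leq 2\sqrt n$, which combined with the same monotonicity as above gives
\[
  \mathbb{E}_{A_2 \sim p|_{V_2}}\bigl[c(T_2[A_2])\bigr]\ \leq\ 6 \cdot p|_{V_2}(V_2)\cdot \mathbb{E}_{A \sim p}[\mathrm{TSP}(A, c)]\ \leq\ 12\sqrt n \cdot \mathbb{E}_{A \sim p}[\mathrm{TSP}(A, c)].
\]
Adding the two bounds yields the claimed $(2\alpha + 12)\sqrt n \cdot \mathbb{E}_{A \sim p}[\mathrm{TSP}(A,c)]$. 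The only subtle step is the initial decomposition, which relies entirely on $d$ being always active and lying in both $V_1$ and $V_2$; after that, the corollary is a direct combination of \Cref{prop:large_activations} and \Cref{prop:small_activations}.
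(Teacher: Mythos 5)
Your proof is correct and follows essentially the same approach as the paper: apply \Cref{prop:large_activations} to the large-activation subinstance $(V_1,c)$, apply \Cref{prop:small_activations} to the small-activation subinstance $(V_2,c)$, and combine. The one thing you spell out that the paper treats only implicitly (gesturing at it in the caption of \Cref{fig:split_and_connect}) is the key decomposition $c(T[A]) \leq c(T_1[A\cap V_1]) + c(T_2[A\cap V_2])$ for $d\in A$, which follows from the triangle inequality by replacing the single $V_1$-to-$V_2$ crossing edge in $T[A]$ with a detour through $d$; making this explicit is a small but genuine improvement in rigor, though not a different argument.
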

\begin{proof}
	On the one hand, we can bound $c(T_1)$ by \Cref{prop:large_activations} by
	\begin{equation*}
		\begin{split}
			c(T_1) 
			&\leq \alpha \cdot \ceil*{\sqrt{n}} \cdot \expectationAP{{\rm TSP}(A \cap V_1,c)} \\
			&\leq \alpha \cdot \ceil*{\sqrt{n}} \cdot \expectationAP{{\rm TSP}(A,c)}\\
			&\leq2 \alpha \cdot \sqrt{n} \cdot \expectationAP{{\rm TSP}(A,c)}.
		\end{split}
	\end{equation*}
	
	On the other hand, by \Cref{prop:small_activations} and the triangle inequality we have
	\begin{equation*}
		\begin{split}
		    \expectationAP{c(T_2[A \cap V_2])} 
		    &\leq 6p(V_2) \cdot \expectationAP{{\rm TSP}(A \cap V_2,c)} \\
			&\leq 12\sqrt{n} \cdot \expectationAP{{\rm TSP}(A,c)},
		\end{split}
	\end{equation*}
	because $p(V_2) < 1 + \sum_{v \in V_2} \frac{1}{\sqrt{n}} \leq 2\sqrt{n}$.
\end{proof}

Finally, combining \cite{svensson2020constant, traub25} with \cref{cor:apx_for_large_and_tiny_activations} and \cref{cor:depot} proves \cref{thm:simple_approx}.

\subsection{Reducing to uniform activation probabilities}\label{sec:uniform_prob}

For our reduction to Hop-ATSP, we will need that the activation probabilities of different vertices do not differ too much. 
We will even show that we can assume that all vertices have the same activation probability.
Using a similar approach as in \Cref{sec:simple_approx} to handle vertices with small activation probability (although with a different threshold for what ``small'' means), we can restrict our attention to instances in which all vertices have large activation probability.
The idea is that we then replace each vertex by a set of copies that then each have a smaller activation probability, similar to an approach used in \cite{blauth_et_al}.
The number of vertices will stay polynomial in the original number of vertices.

We will replace each vertex $v$ by so many co-located copies with the new uniform activation probability that the probability that one of these copies is active is approximately the same as the original activation probability $p(v)$, however it will in general not be exactly equal to $p(v)$.
To prove that this is not a problem, we show that changing the activation probabilities of all vertices by at most a small factor, does not have a large impact on the expected cost.
First, we show that decreasing activation probabilities cannot lead to a larger expected cost.

\begin{lemma}\label{lem:smaller_probabilities_decrease_cost}
	Let $(V,c,p)$ and $(V,c,p')$ be two instances of \APrioriATSP on the same vertex set $V$ and the same cost function $c$ and $p'(v) \leq p(v)$ for all $v\in V$ (where $n:= |V|$). Let $T$ be a tour on $V$. Then the expected cost of $T$ with respect to $p'$ is at most as large as the cost of $T$ with respect to $p$, i.e.
	\[\expectationcustom{A\sim p'}{c(T[A])} \leq \expectationAP{c(T[A])}.\]
\end{lemma}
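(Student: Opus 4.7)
The plan is to use a standard monotonicity coupling combined with the triangle inequality. The intuition is that short-cutting a tour to a smaller set of active vertices cannot increase its cost, so a stochastically smaller active set should yield a smaller expected cost.

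Concretely, I would proceed in three steps. First, construct a coupling of the two random sets: sample independent uniform random variables $U_v \in [0,1]$ for each $v \in V$, and define $A \coloneqq \{v \in V : U_v \leq p(v)\}$ and $A' \coloneqq \{v \in V : U_v \leq p'(v)\}$. Since $p'(v) \leq p(v)$ for every $v$, we have $A' \subseteq A$ deterministically, while marginally $A \sim p$ and $A' \sim p'$. Second, argue that for any two (deterministic) subsets $B' \subseteq B \subseteq V$, we have $c(T[B']) \leq c(T[B])$. This follows from \Cref{lem:khop_skipping_vertices}-style reasoning but in the plain (non-hop) setting: the walk $T[B']$ is obtained from $T[B]$ by iteratively short-cutting the visits of vertices in $B \setminus B'$, and each such short-cut replaces two consecutive edges $(x,v),(v,y)$ by the single edge $(x,y)$, whose cost is at most $c(x,v)+c(v,y)$ by the triangle inequality. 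Third, take expectations over the coupling: pointwise we have $c(T[A']) \leq c(T[A])$, and hence
\[
\expectationcustom{A'\sim p'}{c(T[A'])} \ =\ \mathbb{E}\big[c(T[A'])\big] \ \leq\ \mathbb{E}\big[c(T[A])\big] \ =\ \expectationAP{c(T[A])}.
\]

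There is no real obstacle here; the only subtlety is making sure the short-cutting lemma $c(T[B']) \leq c(T[B])$ for $B' \subseteq B$ is invoked cleanly. This can be stated as a one-line observation from the triangle inequality (it is the same argument that justifies using walks rather than Hamiltonian cycles as solutions throughout the paper), so the lemma follows immediately from the coupling.
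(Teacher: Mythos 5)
Your proof is correct, and it takes a genuinely different route from the paper's. You use the standard monotone coupling: draw independent uniforms $U_v$, set $A = \{v : U_v \le p(v)\}$ and $A' = \{v : U_v \le p'(v)\}$, obtain $A' \subseteq A$ pointwise, invoke the monotonicity $c(T[B']) \le c(T[B])$ for $B' \subseteq B$ (triangle inequality), and take expectations. The paper instead reduces to the case where $p$ and $p'$ differ at a single vertex $\hat{v}$, pairs up each set $U \ni \hat v$ with $r(U) = U \setminus \{\hat v\}$, observes that the probability of the pair $\{U, r(U)\}$ is the same under $p$ and $p'$, and then compares the conditional weights of $U$ versus $r(U)$ within each pair; the general statement follows by induction over vertices. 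The two arguments are really the same idea at different scales: the paper's pairing is a hand-rolled coupling for one coordinate at a time, whereas your construction couples all coordinates simultaneously and so dispenses with both the induction and the explicit probability bookkeeping. Your version is shorter and arguably more transparent; the paper's has the minor virtue of making the exchange of probability mass fully explicit without appealing to the (standard, but external) notion of a coupling.
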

\begin{proof}
	We prove this for the case where there is exactly one vertex $\hat{v}$ with $p'(v) < p(v)$. The lemma then follows by induction. Let $V = V_1 \cup V_2$ with $V_1 = \{A\subseteq V: \hat v \in A\}$ and $V_2 = \{A\subseteq V: \hat v \notin A\}$. Then $r \colon V_1 \to V_2, A \mapsto A\setminus \parset{\hat v}$ is a bijection. Note that for any $U\in V_1$, we have
	\begin{align*}
	\probabilitycustom{A \sim p}{A = U \text{ or } A = r(U)} &= \prod_{v \in r(U)} p(v) \cdot \prod_{v\notin U}(1-p(v))\\
	 &=  \prod_{v \in r(U)} p'(v) \cdot \prod_{v\notin U}(1-p'(v))\\
	 &= \probabilitycustom{A \sim p'}{A = U \text{ or } A = r(U)}.
	\end{align*}
	The triangle inequality gives us $c(T[r(U)]) \leq c(T[U])$ and therefore
	\begin{align*}
		\expectationcustom{A\sim p'}{c(T[A])}
		&= \sum_{U\subseteq V} \probabilitycustom{A \sim p'}{A = U}\cdot c(T[U]) \\
		&= \sum_{U\in V_1} \probabilitycustom{A \sim p'}{A = U\text{ or } A = r(U)}\cdot \left(p'(\hat{v})\cdot c(T[U]) + (1-p'(\hat{v})) \cdot c(T[r(U)])\right)\\
		&= \sum_{U\in V_1} \probabilitycustom{A \sim p'}{A = U\text{ or } A = r(U)}\cdot \left(c(T[r(U)]) + p'(\hat{v})\cdot (c(T[U]) - c(T[r(U)]))\right)\\
		&\leq \sum_{U\in V_1} \probabilitycustom{A \sim p}{A = U\text{ or } A = r(U)}\cdot \big(c(T[r(U)]) + p(\hat{v})\cdot (c(T[U]) - c(T[r(U)]))\big)\\
		&= \sum_{U\subseteq V} \probabilitycustom{A \sim p}{A = U}\cdot c(T[U]) \\
		&= \expectationcustom{A\sim p}{c(T[A])}.
	\end{align*}
\end{proof}

Next, we show that when decreasing each activation probability by a factor of at most $\beta$, the expected tour cost can decrease at most by a factor $\beta^2$.

\begin{lemma}\label{lem:bounded_cost_decrease}
	Let $(V,c,p)$ and $(V,c,p')$ be two instances of \APrioriATSP on the same vertex set $V$ and the same cost function $c$ and $\beta\cdot p(v) \leq p'(v) \leq p(v)$ for all $v\in V$ (where $0\leq \beta\leq 1$). Let $T$ be a tour on $V$. Then
	the expected cost of $T$ with respect to $p'$ is not much smaller than the cost of $T$ with respect to $p$, in particular
	\[\expectationcustom{A\sim p'}{c(T[A])} \geq \beta^2\expectationAP{c(T[A])}.\]
\end{lemma}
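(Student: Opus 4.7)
The plan is to expand $\expectationAP{c(T[A])}$ as a sum over the potential shortcut edges of the tour. By \Cref{lem:khop_skipping_vertices} we may assume $T$ is a Hamiltonian cycle on $V$; let $v_1,\dots,v_n$ be its vertices in order (with indices mod $n$). An edge $(v_i,v_{i+\Delta})$ appears in $T[A]$ exactly when $v_i$ and $v_{i+\Delta}$ are active and $v_{i+1},\dots,v_{i+\Delta-1}$ are all inactive. Hence
\[
\expectationAP{c(T[A])} \;=\; \sum_{i=1}^{n}\sum_{\Delta=1}^{n-1} p(v_i)\,p(v_{i+\Delta})\Bigl(\prod_{j=1}^{\Delta-1}(1-p(v_{i+j}))\Bigr)\cdot c(v_i,v_{i+\Delta}),
\]
and analogously for $p'$.

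I would then compare the two sums term by term. For the two "activation" factors, the lower bound $p'(v)\ge \beta\,p(v)$ gives
\[
p'(v_i)\,p'(v_{i+\Delta}) \;\ge\; \beta^2\, p(v_i)\,p(v_{i+\Delta}).
\]
For the "inactivation" factors, the upper bound $p'(v)\le p(v)$ yields $1-p'(v_{i+j}) \ge 1-p(v_{i+j})$, hence
\[
\prod_{j=1}^{\Delta-1}(1-p'(v_{i+j})) \;\ge\; \prod_{j=1}^{\Delta-1}(1-p(v_{i+j})).
\]
Multiplying these inequalities (all quantities are nonnegative) and $c(v_i,v_{i+\Delta})\ge 0$ gives the claim term by term, and summing yields $\expectationcustom{A\sim p'}{c(T[A])}\ge \beta^2\,\expectationAP{c(T[A])}$.

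There is essentially no obstacle: the direction of both bounds on $p'(v)$ conveniently pushes in the same direction once the expected cost is written as a sum over shortcut edges. The only minor technicality is justifying the reduction to a Hamiltonian cycle at the start, which is handled by \Cref{lem:khop_skipping_vertices} (or, alternatively, one can write the same formula for a general closed walk by indexing over visit positions rather than vertices, which does not affect the term-by-term comparison).
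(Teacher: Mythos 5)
Your core argument — expanding the expected short-cut cost as a sum over potential short-cut edges and comparing term by term using $p'(v)\ge\beta\,p(v)$ for the two activation factors and $p'(v)\le p(v)$ for the deactivation factors — is exactly the paper's proof.

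However, your preliminary reduction to a Hamiltonian cycle via \Cref{lem:khop_skipping_vertices} does not work, for two reasons. First, that lemma bounds the $k$-hop cost $c^{(k)}$, not the a~priori objective $\expectationAP{c(T[A])}$, so it is not the relevant statement here. Second, and more fundamentally, even with an analogous ``skipping a repeated visit does not increase the expected a~priori cost'' statement, the reduction would go the wrong way: if $T'$ results from $T$ by omitting a repeated visit, then \emph{both} $\expectationcustom{A\sim p'}{c(T'[A])}\le\expectationcustom{A\sim p'}{c(T[A])}$ and $\expectationAP{c(T'[A])}\le\expectationAP{c(T[A])}$ hold by the triangle inequality, so the lower bound $\expectationcustom{A\sim p'}{c(T'[A])}\ge\beta^2\,\expectationAP{c(T'[A])}$ for the Hamiltonian cycle $T'$ does \emph{not} lift to $T$. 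Your parenthetical alternative — working directly with the general closed walk by indexing over visit positions — is the right fix and is what the paper sketches with ``the case where $T$ visits vertices multiple times can be treated analogously.'' One should note, though, that the position-indexed formula is not literally identical to the Hamiltonian one: if the same vertex occurs at several of the relevant positions, the product of $p$- and $(1-p)$-factors must only count each distinct vertex once (and the probability vanishes if the same vertex is required to be both active and inactive); the term-by-term comparison still goes through, using $\beta\le 1$ in the degenerate case where the two ``activation'' positions carry the same vertex.
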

\begin{proof}
	First, consider the case where $T$ is a cycle.
	Given two vertices $s,t\in V$, let $V_{(s,t)}$ denote the set of vertices that lie (strictly) between $s$ and $t$ on $T$. The probability that we have to pay for the edge $(s,t)$ after cutting $T$ short to the active vertices drawn randomly w.r.t.\@ $p'$ is
	\[p'(s)\cdot p'(t) \cdot\prod_{v \in V_{(s,t)}} (1-p'(v)) \geq \beta^2 \cdot p(s)\cdot p(t)\cdot\prod_{v \in V_{(s,t)}}(1-p(v)).\]
	Thus
	\begin{align*}
		\expectationcustom{A\sim p'}{c(T[A])} &= \sum_{s,t\in V} c(s,t) \cdot p'(s)\cdot p'(t) \cdot\prod_{v \in V_{(s,t)}}(1-p'(v)) \\
		&\geq \beta^2\cdot \sum_{s,t\in V} c(s,t) \cdot p(s)\cdot p(t)\cdot\prod_{v \in V_{(s,t)}}(1-p(v))\\
		&= \beta^2 \cdot \expectationAP{c(T[A])}.
	\end{align*}
	The case where $T$ visits vertices multiple times can be treated analogously.
\end{proof}

Finally, we show that when all vertices have large activation probability, we may assume uniform activation probabilities.
Later we will apply the following proposition for $\epsilon = \frac{1}{n}$.

\begin{proposition}\label{prop:uniform_activation_probabilities}
	Given an instance $(V,c,p)$ of \APrioriATSP where $p_{\min} := \min_{v\in V}p(v) \geq \epsilon > 0$, we can transform it in polynomial time (in $|V|$ and $1/\epsilon$) into an instance $(V',c',p')$ of \APrioriATSP such that 
	\begin{enumerate}
		\item $p'(v) = \epsilon/2$ for any $v \in V'$,
		\item $\OPT(V',c',p') \leq \OPT(V,c,p)$,
		\item any tour $T'$ in $(V',c',p')$ can be converted into a tour $T$ in $(V,c,p)$ in polynomial time such that $\expectationcustom{A\sim p'}{c'(T'[A])} \geq \frac 14\expectationAP{c(T[A])}$, and 
	    \item $|V| \leq |V'|\leq 4|V|/\epsilon$.
	\end{enumerate}
\end{proposition}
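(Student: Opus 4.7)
The plan is to construct $(V',c',p')$ by replacing every vertex $v\in V$ by $k(v)$ co-located copies, each with activation probability $\epsilon/2$; the cost $c'$ is inherited from $c$ through the canonical projection $\pi\colon V'\to V$, so that copies of the same vertex have distance $0$. The number of copies is chosen to be
\[
k(v)\ :=\ \min\Big\{ \lfloor \log(1-p(v))/\log(1-\epsilon/2) \rfloor,\ \lfloor 4/\epsilon\rfloor \Big\},
\]
with the convention that $k(v) := \lfloor 4/\epsilon\rfloor$ when $p(v)=1$. Writing $\hat p(v) := 1-(1-\epsilon/2)^{k(v)}$ for the probability that at least one copy of $v$ is active under $p'$, the key property to verify first is the two-sided bound $\tfrac{1}{2}\,p(v)\leq \hat p(v)\leq p(v)$ for every $v\in V$.

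To establish this bound, two cases arise. When the cap $\lfloor 4/\epsilon\rfloor$ is not attained, $k(v)$ is the largest integer with $(1-\epsilon/2)^{k(v)}\geq 1-p(v)$, giving $\hat p(v)\leq p(v)$ directly. For the matching lower bound, the crucial observation is that $\hat p(k+1)/\hat p(k)\leq 2$ for all $k\geq 1$; this is equivalent to $(1-\epsilon/2)^{k}(1+\epsilon/2)\leq 1$ and follows from $(1-\epsilon/2)(1+\epsilon/2)=1-\epsilon^2/4\leq 1$. Combined with $\hat p(k(v)+1)>p(v)$ by maximality of $k(v)$, this yields $\hat p(v)>p(v)/2$. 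The inequality $k(v)\geq 1$ needed to apply the ratio bound follows from $p(v)\geq\epsilon$ via $\log(1-\epsilon)/\log(1-\epsilon/2)\geq 1$. When the cap is attained, which forces $p(v)\geq 1-e^{-2}$, the estimate $(1-\epsilon/2)^{\lfloor 4/\epsilon\rfloor}\leq 2/e^2<1/2$ (using $\epsilon\leq 1$ so that $1-\epsilon/2\geq 1/2$) gives $\hat p(v)>1/2\geq p(v)/2$, while $\hat p(v)\leq p(v)$ still holds since capping only decreases $\hat p$ relative to the uncapped value.

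With the two-sided bound in hand, the four claims follow quickly. Property (i) and the size bound $|V|\leq|V'|\leq 4|V|/\epsilon$ in (iv) are immediate from $1\leq k(v)\leq\lfloor 4/\epsilon\rfloor\leq 4/\epsilon$. For (ii), lift an optimal a priori tour $T^*$ on $(V,c,p)$ to a tour $T^{*\prime}$ on $V'$ by visiting the copies of each vertex consecutively in the original order; because co-located copies have $c'$-distance $0$, one gets $c'(T^{*\prime}[A'])=c(T^*[\pi(A')])$ for every $A'\subseteq V'$, and $\pi(A')$ with $A'\sim p'$ has independent marginals $\hat p\leq p$, so \Cref{lem:smaller_probabilities_decrease_cost} gives $\OPT(V',c',p')\leq \OPT(V,c,p)$. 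For (iii), given any tour $T'$ on $V'$, define $T$ on $V$ by projecting $T'$ through $\pi$ and short-cutting repeated visits; the triangle inequality yields $c(T[\pi(A')])\leq c'(T'[A'])$, and \Cref{lem:bounded_cost_decrease} applied with $\beta=1/2$ then gives
\[
\expectationcustom{A'\sim p'}{c'(T'[A'])}\ \geq\ \expectationcustom{B\sim\hat p}{c(T[B])}\ \geq\ \tfrac{1}{4}\,\expectationAP{c(T[A])}.
\]

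The main technical obstacle is squaring the need for $\hat p(v)\geq p(v)/2$ with the size bound when $p(v)$ approaches $1$: without the cap, $k(v)$ would grow like $(2/\epsilon)\log(1/(1-p(v)))$, ruining the bound on $|V'|$, yet an arbitrary cap risks pushing $\hat p(v)$ below $p(v)/2$. The resolution rests on a convenient numerical coincidence: the cap is triggered only once $p(v)\geq 1-e^{-2}$, and at $\lfloor 4/\epsilon\rfloor$ copies we already have $\hat p(v)\geq 1-2/e^2>1/2$, which is always at least $p(v)/2$. The computation of $k(v)$ runs in polynomial time via binary search over $\{1,\dots,\lfloor 4/\epsilon\rfloor\}$, comparing $(1-\epsilon/2)^k$ with $1-p(v)$ using rational arithmetic of polynomial bit length.
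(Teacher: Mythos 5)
The construction is essentially the paper's, with one cosmetic difference in the choice of cluster size: you take $k(v)$ to be the \emph{largest} integer with $1-(1-\epsilon/2)^{k(v)}\leq p(v)$, capped at $\lfloor 4/\epsilon\rfloor$, whereas the paper takes $k_v=\lceil\log_{1-\epsilon/2}(1-p(v)/2)\rceil$, the smallest integer achieving $\hat p(v)\geq p(v)/2$. The paper's choice never needs a cap (since $1-p(v)/2\geq 1/2$ bounds $k_v\leq 4/\epsilon$ automatically), so your cap-and-case-split, while it checks out, is extra work the paper avoids. Your verification of the two-sided bound $\tfrac12 p(v)\leq \hat p(v)\leq p(v)$, the size bound, and property (ii) are fine.

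However, there is a genuine gap in your proof of (iii). You assert that if $T$ is obtained by projecting $T'$ through $\pi$ and short-cutting repeated visits, then for every $A'\subseteq V'$ the triangle inequality gives $c(T[\pi(A')])\leq c'(T'[A'])$. This is false: the two sides involve Hamiltonian cycles on $\pi(A')$ in potentially \emph{different cyclic orders}, and the triangle inequality does not compare distinct orderings. Concretely, the projection of $T'[A']$, with its repeats collapsed, is a cheap tour on $\pi(A')$ — but it is not $T[\pi(A')]$, because short-cutting $T$ to $\pi(A')$ forces a fixed occurrence of each original vertex (say, its first appearance along $T'$), and that choice can disagree with the order in which $T'[A']$ visits copies. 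A small asymmetric counterexample: take $V=\{a,b,c\}$ with $c(a,b)=c(b,c)=c(c,a)=1$ and $c(b,a)=c(c,b)=c(a,c)=2$, let $V'$ have two copies $a_1,a_2$ of $a$, and let $T'=(a_1,c_1,a_2,b_1)$, so that $T=(a,c,b)$. For $A'=\{a_2,c_1,b_1\}$ the left side is $6$ and the right side is $3$. This is exactly why the paper first proves the exchange argument that any $T'$ can be reordered, without increasing expected cost, so that the copies of each cluster are visited consecutively — only then does the projection align the two short-cuttings and make \Cref{lem:bounded_cost_decrease} applicable. Your proof omits this step entirely, and without it the chain of inequalities in (iii) does not hold. (Incidentally, the cap that you flagged as ``the main technical obstacle'' is the easy part; the reordering is where the real content lies.)
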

\begin{proof}
	We construct $(V',c',p')$ from $(V,c,p)$ by replacing each $v\in V$ by a set of vertices $C(v)$ (which we will also call \emph{cluster}), i.e.\@ we set $V' = \bigcup_{v\in V}C(v)$. 
	Each $C(v)$ will contain at least one vertex.	
	Let $\pi\colon V'\to V$ be the projection mapping a vertex $v'\in V'$ to the original vertex $v\in V$ such that $v'\in C(v)$.
	Define $c'$ by setting $c'(v',w') = c(\pi(v'), \pi(w'))$ for $v',w'\in V'$.
	Set $p'(v') = \epsilon/2$ for all $v'\in V'$.
	For each $v \in V$ we choose the number of vertices in $C(v)$ as
	\begin{equation*}
	k_v \coloneqq \ceil*{\log_{1-\epsilon/2}\left(1-\frac{p(v)}2\right)}.
	\end{equation*}		
	We claim that this satisfies
	\begin{equation}\label{eq:cluster_probabilities}
		p(v) \geq \mathbb P_{A\sim p'}[A\cap C(v)\neq \emptyset] \geq \frac12\cdot p(v).
	\end{equation}
	Since we can write $\probabilitycustom{A\sim p'}{A\cap C(v) \neq \emptyset} = 1 - (1-\epsilon / 2)^{k_v}$, the second inequality is clear by our choice of $k_v$. 
	For the first inequality, we may assume that $p(v) < 1$.
	Since $p(v) \geq \epsilon$, we have $(1-\epsilon/2)\cdot (1-\frac{p(v)}{2})\geq (1-p(v))$.
	This implies
	\[k_v \leq \log_{1-\epsilon/2}\left(1-\frac{p(v)}2\right) + 1 \leq \log_{1-\epsilon/2}(1-p(v)),\]
	and hence $1 - (1-\epsilon/2)^{k_v} \leq p(v)$, proving the first inequality of~\eqref{eq:cluster_probabilities}.
	
	Note that $\log_{1-\epsilon/2}\left(1-\frac{p(v)}2\right) \geq 1$.
	Moreover, since $(1- \epsilon/2)^{2/\epsilon} \leq \frac1e \leq \frac 12 \leq 1- \frac{p(v)}2$, we have $\log_{1-\epsilon/2}(1-\frac{p(v)}2) \leq 2/\epsilon$.
	We conclude that $1 \leq k_v \leq 4/\epsilon$ and therefore
	\[|V| \leq |V'| \leq \frac{4|V|}{\epsilon}. \]
	Note that (i) is satisfied by construction.

	For (ii), consider an optimum a priori tour $T^*$ for $(V,c,p)$. Let the tour $T'$ on $V'$ arise from $T^*$ by visiting the clusters $C(v)$ in the same order as $T^*$ visits the vertices $v\in V$, and visit vertices inside each cluster consecutively, but in an arbitrary order.
	Edges inside $C(v)$ have cost 0 and we have $\mathbb P_{A\sim p'}[C(v) \cap A \neq \emptyset] \leq p(v)$
	by \eqref{eq:cluster_probabilities}. Thus \cref{lem:smaller_probabilities_decrease_cost} implies that $\expectationcustom{A\sim p'}{c'(T'[A])} \leq \expectationAP{c(T^*[A])}$ which yields (ii).
	
	Similarly, if a tour $T'$ in $(V',c',p')$ visits all vertices of each cluster consecutively, it corresponds to a tour $T$ in $(V,c,p)$ with $\expectationcustom{A\sim p'}{c'(T'[A])}\geq \frac14\expectationAP{c(T[A])}$ by \cref{lem:bounded_cost_decrease}.
	In order to prove (iii) it thus only remains to show that any tour $T'$ in $(V',c',p')$ can be transformed into a tour $T''$ in $(V',c',p')$ that visits all vertices of the same cluster $C(v)$ consecutively in polynomial time without increasing the cost.
	To this end let $T'$ visit a vertex $v_1 \in  C(v)$, then follow a path $P$, then visit $v_2\in C(v)$, and follow a path $Q$ before returning to $v_1$.
	Construct $T_1$ out of $T'$ by visiting $v_2$ immediately after $v_1$ thus concatenating $P$ and $Q$ without visiting $v_2$ in between, and $T_2$ out of $T'$ by visiting $v_1$ immediately after $v_2$ (thus concatenating $Q$ and $P$ without visiting $v_1$ in between).
	Let $A\subseteq V'$ be the set of active vertices. If $v_1\in A$ then $c'(T_1[A]) \leq c'(T'[A])$ by the triangle inequality because we get rid off the detour to $v_2$ between $P$ and $Q$.
	If $v_2\in A$, but $v_1\notin A$, then $T_1[A]$ visits $v_2$ followed by the concatenation of $P[A]$ and $Q[A]$ before returning to $v_2$ while $T'[A]$ visits $v_2$ followed by the concatenation of $Q[A]$ and $P[A]$ in this order.
	If the expected cost of the path resulting from visiting $Q$ after $P$ is larger than the one of the path resulting from visiting first $Q$, then $P$, $T_1$ is thus more expensive in expectation than $T'$. But then $T_2$ is cheaper than $T'$: If $v_2 \in A$ then we easily get $c'(T_2[A]) \leq c'(T'[A])$.
	If $v_1\in A$, but $v_2\notin A$, then $T_2[A]$ visits first $Q[A]$, then $P[A]$ while $T'[A]$ visits $Q[A]$ after $P[A]$. Therefore at least one of $T_1$ and $T_2$ has expected cost at most the expected cost of $T'$.
	By induction we can hence rearrange $T'$ such that it visits all vertices of each cluster consecutively without increasing the cost. Note that this rearrangement can be done in polynomial time because in each step we can compute the expected cost of attaching $Q$ after $P$ resp.\@ $P$ after $Q$ and decide for the better in polynomial time.
\end{proof}

\subsection{Reducing to Hop-ATSP}\label{sec:complete_hop_atsp_reduction}

In this subsection we reduce Asymmetric A Priori TSP with uniform activation probabilities to Hop-ATSP.
Suppose we have a Hamiltonian cycle $T$ on $V$ and uniform activation probabilities $p(v) = \delta$ for each $v \in V$.
We label the vertices of $V$ as $v_1, \dots, v_n$ in the order in which they are visited by $T$.
We follow our usual convention that we interpret vertices $v_{j+n} \coloneqq v_j$.
When sampling a set $A$ according to $p$, the probability that an edge $e=(v_j, v_{j+\Delta})$ is contained in $T[A]$ is $\delta^2 \cdot (1-\delta)^{\Delta - 1}$, because this happens if and only if $v_j,v_{j+\Delta} \in A$ and $\{v_{j+1}, \dots, v_{j + \Delta - 1}\} \subseteq V \setminus A$.
Hence, we can write the expected cost of $T$ as
\begin{equation}
\label{eq:a_priori_cost_fct}
\expectationcustom{A \sim p}{ c\left(T[A] \right)} = \sum_{\Delta=1}^{n - 1} \sum_{j=1}^n \delta^2(1-\delta)^{\Delta - 1} \cdot c(v_j,v_{j+\Delta}),
\end{equation}

We assume for the sake of this explanation that $\delta > \frac{1}{|V|}$ and define $k \coloneqq \floor*{1/\delta} < |V|$.
In order to reduce Asymmetric A Priori TSP with uniform activation probabilities to Hop-ATSP, we will show that~\eqref{eq:a_priori_cost_fct} is up to constant factors the same as $\delta^2$ times the $k$-hop cost of $T$. 
To this end, we first observe that $(1-\delta)^{\Delta - 1} \geq \beta$ for some suitable constant $\beta \in \bR_{\geq 0}$ and any $\Delta \in [k]$, implying that~\eqref{eq:a_priori_cost_fct} is lower-bounded by $ \delta^2 \cdot \beta \cdot c^{(k)}(T)$, 

In order to obtain an upper bound of $ \delta^2 \cdot \gamma \cdot c^{(k)}(T)$ for~\eqref{eq:a_priori_cost_fct}, where $\gamma \in \bR_{\geq 0}$ is another constant, we show that \begin{equation*}
\sum_{\Delta=k+1}^{n-1} \sum_{j=1}^n (1-\delta)^{\Delta - 1} \cdot c(v_j,v_{j+\Delta}) 
\leq (\gamma - 1) \cdot \sum_{\Delta=1}^{k} \sum_{j=1}^n c(v_j,v_{j+\Delta})
= (\gamma - 1)\cdot c^{(k)}(T).
\end{equation*}
By the triangle inequality, for an edge $e=(v_j,v_{j+\Delta})$ with $\Delta \in \{k+1, \dots, n-1\}$, we can bound the cost $c(e)$ by the cost of any $v_j$-$v_{j+\Delta}$ path $c(P^{(e)})$.
We will choose $P^{(e)}$ such that each of its edges contributes to the $k$-hop cost of $T$.
The following lemma will be useful for this:

\begin{lemma}
	Let $t$ and $k$ be integers with $t\geq k\geq 16$. Then there are positive integers $a_1,\dots a_m$ such that $\sum_{j = 1}^m a_j = t$ and $2\lceil \frac k{16}\rceil \leq a_j \leq \frac k2$ for all $j\in\parset{1,\dots,m}$.
\end{lemma}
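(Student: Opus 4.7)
The plan is to set $L := 2\lceil k/16 \rceil$ (the lower bound) and $U := \lfloor k/2 \rfloor$ (the largest integer that is at most $k/2$), and to produce the $a_j$'s by starting from a collection of parts all equal to $U$ and then shaving off a small amount from each.

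First I would verify the structural inequality $2L \leq U$ for every $k \geq 16$. Writing $k = 16j + s$ with $j \geq 1$ and $0 \leq s \leq 15$, one has $L = 2j$ if $s=0$ and $L = 2j+2$ otherwise, while $U \geq 8j$, so in either case a one-line case distinction yields $2L \leq U$. In particular, $U - L \geq U/2 \geq L$.

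Next, define $m := \lceil t/U \rceil$ and $r := mU - t \in [0, U)$. Because $t \geq k$ and $U \leq k/2$, we have $m \geq t/U \geq 2$, and thus
\[
  m(U - L) \;\geq\; 2 \cdot \frac{U}{2} \;=\; U \;>\; r.
\]
Divide with remainder: write $r = mq + s$ with $0 \leq s < m$. The displayed inequality forces $q \leq \lfloor r/m \rfloor < U - L$, so $q + 1 \leq U - L$. Now start from the vector $(U,\dots,U) \in \mathbb Z^m$ and decrease $s$ of the entries by $q+1$ and the remaining $m - s$ entries by $q$. The resulting values $a_1,\dots,a_m$ sum to $mU - r = t$, each lies between $U - (q+1) \geq L$ and $U$, and each is a positive integer. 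This is exactly the partition required by the lemma.

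The step that really needs to be checked carefully is the inequality $2L \leq U$, which is the quantitative heart of the argument: it is precisely what makes the ``room'' $U - L$ large enough (relative to $U$) to absorb any deficit $r < U$ once we have at least two parts. Everything else is bookkeeping around the choice $m = \lceil t/U \rceil$ and a standard even-distribution of the remainder. I do not expect any genuine obstacle beyond confirming the small-case computation for $k$ near $16$ (which, as noted above, reduces to a short residue-class check).
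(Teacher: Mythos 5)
Your proof is correct, but it takes a genuinely different route from the paper. The paper's argument works from the lower end: it greedily lays down as many parts of size exactly $L := 2\lceil k/16 \rceil$ as possible (i.e.\ chooses $m$ maximal with $mL \le t$), sets $a_1 = \dots = a_{m-1} = L$, and dumps the remainder into $a_m = t-(m-1)L$; maximality of $m$ gives $a_m < 2L$, and the single inequality $2L = 4\lceil k/16\rceil \le 4(k/16+1) \le k/2$ (valid for $k\ge 16$) closes the argument. You instead work from the upper end: you start with $m = \lceil t/U\rceil$ parts each equal to $U := \lfloor k/2\rfloor$ and distribute the deficit $r = mU - t < U$ as evenly as possible; the structural fact you isolate, $2L \le U$, is what guarantees $m(U-L) \ge U > r$ so that each part drops by at most $U-L$ and hence stays $\ge L$. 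Both arguments are short and elementary. The paper's version is a couple of lines shorter and needs only the one numeric inequality, whereas yours needs both the $2L\le U$ bound and a small balancing computation; in exchange, your construction produces an almost-balanced partition (all parts within $1$ of each other), which is extra structure the lemma doesn't ask for but which can be handy in other contexts. One could also view the two proofs as dual: the paper shows that putting the remainder all in one part keeps it below $k/2$, while you show that spreading the deficit keeps every part above $2\lceil k/16\rceil$.
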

\begin{proof}
	Let $m \in \mathbb Z$ be the largest integer such that $m\cdot 2\lceil \frac k{16}\rceil \leq t$. Define $a_j := 2\lceil \frac k{16}\rceil$ for $1\leq j \leq m-1$ and $a_{m} = t - (m-1)\cdot2\lceil\frac k{16}\rceil$. By maximality of $m$, we have $a_m < 4\lceil\frac k{16}\rceil \leq 4(\frac k{16} +1) \leq \frac k2$ since $k\geq 16$.
\end{proof}
This immediately implies the following lemma:
\begin{lemma}
	\label{lem:block_decomp}
	Let $P$ be a path on at least $k$ vertices.
	If $k \geq 16$, then we can subdivide $P$ into vertex disjoint subpaths $P_1, \dots, P_m$ of $P$ such that $\dot \bigcup_{j=1}^m V(P_j) = V(P)$ and $2\ceil*{\frac{k}{16}} \leq |V(P_j)| \leq \frac{k}{2}$ for all $j \in \parset{1, \dots, m}$. \hfill \qed
\end{lemma}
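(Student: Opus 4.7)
The plan is to obtain this lemma as an immediate consequence of the preceding integer-decomposition lemma, applied to $t \coloneqq |V(P)|$. Since $P$ has at least $k$ vertices, we have $t \geq k \geq 16$, so the hypothesis of the previous lemma is satisfied, and we obtain positive integers $a_1, \dots, a_m$ with $\sum_{j=1}^m a_j = t$ and $2\lceil \frac{k}{16} \rceil \leq a_j \leq \frac{k}{2}$ for every $j \in \{1, \dots, m\}$.

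Next, I would use these integers to slice $P$ into consecutive subpaths of the prescribed lengths. Concretely, letting $u_1, u_2, \dots, u_t$ denote the vertices of $P$ in the order they appear, I would define $P_j$ to be the subpath induced by the consecutive block
\[
\Big\{ u_{1+\sum_{i<j} a_i},\ u_{2+\sum_{i<j} a_i},\ \dots,\ u_{\sum_{i \leq j} a_i} \Big\}
\]
for $j = 1, \dots, m$. By construction, $|V(P_j)| = a_j$, so the size bounds carry over directly from the previous lemma, and the blocks are consecutive and disjoint. Because $\sum_j a_j = t = |V(P)|$, the blocks exhaust $V(P)$, giving $\dot\bigcup_{j=1}^m V(P_j) = V(P)$.

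There is essentially no obstacle here: once the previous lemma is in hand, the only thing left is to translate an integer composition of $t$ into a partition of the vertex sequence of $P$ into consecutive runs. The proof should therefore be a one or two sentence application of the prior lemma, mirroring the paper's own remark that the result follows immediately.
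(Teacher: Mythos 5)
Your proposal is correct and matches the paper's approach exactly: the paper states that the preceding integer-composition lemma (applied with $t = |V(P)| \geq k$) immediately implies this result, and you have simply spelled out the straightforward translation from an integer composition of $t$ into consecutive vertex-disjoint subpaths of $P$. Nothing is missing.
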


Recall that $e = (v_j,v_{j+\Delta})$.
Applying the previous lemma to the $v_j$-$v_{j+\Delta}$ subpath of $T$ allows us to choose the path $P^{(e)}$ such that it contains approximately $\frac{\Delta}{k}$ edges.
This gives us an upper bound on the number of edges contributing to $c^{(k)}(T)$ that additionally need to pay for the cost of $(1-\delta)^{\Delta - 1} \cdot c(v_j, v_{j+\Delta})$.
Note that
\begin{equation*}
\sum_{j=1}^n \sum_{\Delta = k+1}^{n-1} \frac{\Delta}{k} \cdot  (1 - \delta)^{\Delta - 1} = \frac{n}{k} \cdot O(1/\delta^2) = O(n\cdot k).
\end{equation*}
Moreover, there are exactly $n\cdot k$ edges that contribute to $c^{(k)}(T)$.
By carefully selecting these paths $P^{(v_j,v_{j+\Delta})}$ for any $j \in [n]$ and $\Delta \in \{k+1, \dots, n-1\}$, we will be able to ensure that every edge contributing to $c^{(k)}(T)$ only needs to additionally pay a constant multiple of its own cost.

In fact, we will not only choose a single path $P^{(e)}$ with $e = (v_j,v_{j+\Delta})$, but we choose many such paths $P^{(e)}_1, \dots, P^{(e)}_r$ together with weights $\xi^{(e)}_1, \dots, \xi^{(e)}_{r}$ such that $\sum_{i=1}^r \xi^{(e)}_i = 1$.
This allows us to distribute the cost $(1-\delta)^{\Delta - 1} \cdot c(v_j, v_{j+\Delta})$ evenly onto all edges contributing to $c^{(k)}(T)$.
The following lemma formalizes this:

\begin{lemma}
	\label{lem:path_distribution}
	Let $P$ be an $s$-$t$-path on at least $k+2$ vertices in a complete directed graph $G$ for some $k \in \bZ_{\geq 16}$. Enumerate the vertices on $P$ in the order given by $P$ as $s = x_0, x_1,\dots, x_{\ell + 1} = t$. 
	Then there exists a collection of $s$-$t$-paths $P_1, \dots, P_r$ together with weights $\xi_1, \dots, \xi_r \in [0,1]$ with $\sum_{i=1}^r \xi_i = 1$ such that
	\begin{equation}
		\label{eq:path_distribution1}
		\begin{cases}
			\chi_e \leq 2^4/k &\text{if }e = (x_i, x_j)\in \left(\delta^+(s) \cup \delta^-(t) \right) \text{ and }j-i\leq k, \\
			\chi_e \leq 2^8/k^2 &\text{if }e = (x_i, x_j)\notin \delta^+(s) \cup \delta^-(t)\text{ and }j-i\leq k, \\
			\chi_e = 0 &\text{if } e= (x_i, x_j)\text{ with }j-i>k,
		\end{cases}
	\end{equation}
	where $\chi \in [0,1]^{E(G)}$ is defined as  $\chi \coloneqq \sum_{i = 1}^r\xi_i \cdot \chi(P_i)$ (writing $\chi(P_i)\in \{0,1\}^{E(G)}$ as the characteristic vector of $P_i$, i.e.\@ $\chi(P_i)_e = 1$ if and only if $e\in P_i$).
\end{lemma}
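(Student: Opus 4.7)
The plan is to construct the distribution explicitly from a block decomposition. First I would apply Lemma~\ref{lem:block_decomp} to the sub-path of $P$ on the internal vertices $x_1, \dots, x_\ell$ (which has $\ell \ge k$ vertices since $P$ has at least $k+2$ vertices), obtaining consecutive blocks $B_1, \dots, B_m$ of sizes $a_1, \dots, a_m$ with $2\lceil k/16\rceil \le a_j \le k/2$ for each $j$. In particular each block has size at least $k/8$, and $m \ge 2$ because $\ell \ge k$ while $a_j \le k/2$.

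For every tuple $(y_1, \dots, y_m) \in B_1 \times \cdots \times B_m$, I would include the $s$-$t$-path $P_{(y_1,\dots,y_m)}$ visiting $s, y_1, y_2, \dots, y_m, t$ in this order, weighted by $\xi_{(y_1, \dots, y_m)} := \prod_{j=1}^m 1/a_j$; these weights sum to one. The first thing to verify is that each such path uses only edges of hop length at most $k$: consecutive internal picks $y_j \in B_j$ and $y_{j+1} \in B_{j+1}$ differ in index by at most $a_j + a_{j+1} \le k$, the first edge $(s, y_1)$ has length at most $a_1 \le k/2$, and the last edge $(y_m, t)$ has length at most $a_m \le k/2$. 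This immediately gives $\chi_e = 0$ whenever $j - i > k$, settling the third case of~\eqref{eq:path_distribution1}.

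The other two bounds follow from identifying exactly which tuples use a given edge. For $e = (x_0, x_j) \in \delta^+(s)$ with $j \le k$, the path $P_{(y_1,\dots,y_m)}$ uses $e$ precisely when $x_j \in B_1$ and $y_1 = x_j$; the contribution is therefore either $0$ or $1/a_1 \le 8/k \le 2^4/k$, and the case $e \in \delta^-(t)$ is symmetric. For an internal edge $e = (x_i, x_j)$ with $j - i \le k$, letting $B_\alpha, B_\beta$ denote the blocks containing $x_i$ and $x_j$ with $\alpha \le \beta$, the edge is used only when $\beta = \alpha + 1$, $y_\alpha = x_i$, and $y_{\alpha + 1} = x_j$, giving $\chi_e \le 1/(a_\alpha a_{\alpha + 1}) \le 64/k^2 \le 2^8/k^2$.

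I do not foresee any genuine obstacle, as Lemma~\ref{lem:block_decomp} absorbs the hardest combinatorial part; the only point that requires care is ensuring that \emph{both} the per-edge $O(1/k)$ bound on the endpoint-incident edges and the per-edge $O(1/k^2)$ bound on the internal edges hold simultaneously, which is why it is important that every block has size $\Theta(k)$ (so both $1/a_j = O(1/k)$ and $1/(a_\alpha a_{\alpha+1}) = O(1/k^2)$).
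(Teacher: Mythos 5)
Your proposal is correct and follows essentially the same approach as the paper's proof: both build the fractional $s$-$t$-path collection from the block decomposition of Lemma~\ref{lem:block_decomp}, and both arrive at the same marginal vector $\chi$ (with the same $\le 8/k$ and $\le 64/k^2$ per-edge values). The paper packages the construction as an explicit $s$-$t$-flow $f$ on edges between consecutive blocks and then invokes flow decomposition, whereas you list the supporting paths directly as a product distribution over one representative per block, but the resulting $\chi$ is literally the flow $f$, so the two arguments coincide.
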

\begin{proof}
	Apply \Cref{lem:block_decomp} to the $x_1$-$x_{\ell}$ subpath $P_{[x_1, x_{\ell}]}$ of $P$ and obtain vertex disjoint subpaths $Q_1, \dots, Q_m$.
	We denote $X_j \coloneqq V(Q_j)$ for all $j \in \parset{1, \dots, m}$.
	Every $X_j$ contains at least $k/16$ and at most $k/2$ vertices.
	
	Next, we define an $s$-$t$-flow $f$ in $G$ by
	\begin{equation*}
		f_e = \begin{cases}
			\frac{1}{|X_1|} &\text{if }e=(s,x) \text{ with } x \in X_1, \\
			\frac{1}{|X_j|\cdot|X_{j+1}|} &\text{if }e=(x,y) \text{ with } x \in X_j, y \in X_{j+1}, j \in \parset{1, \dots, m-1}, \\
			\frac{1}{|X_m|} &\text{if }e=(x,t) \text{ with } x \in X_m, \\
			0 &\text{else},
		\end{cases}
	\end{equation*}
	for all $e\in E(G)$.
	One easily sees that $f$ indeed satisfies the flow conservation rule.
	We observe that its value is $\sum_{x\in X_1} \frac{1}{|X_1|} = 1$.
	The support of $f$, i.e., $\operatorname{supp}(f) \coloneqq \parset{e \in E(G) : f_e > 0}$, is acyclic by construction.
	Therefore we can apply flow decomposition to $f$ and $(V, \operatorname{supp}(f))$ yielding paths $P_1, \dots, P_r$ and $\xi_1, \dots, \xi_r \in [0,1]$ such that $\sum_{i=1}^r \xi_i = 1$ and
	\begin{equation*}
	f = \sum_{i=1}^r \xi_i \cdot \chi(P_i)
	\end{equation*}
	for all $e\in E(G)$.
	Since each $X_j$ contains at most $k/2$ vertices, $\operatorname{supp}(f)$ only contains edges $(x_i, x_j)$ with $j-i <k$.
	The remaining properties \eqref{eq:path_distribution1} now follow from the definition of $f$ using that all $X_j$ contain at least $k/16$ vertices.
\end{proof}

We are now finally able to prove that the $k$-hop cost and the expected cost of an a priori tour only differ by some constant factor (and a scaling factor of $\delta^2$):
\begin{theorem}
	\label{thm:reduction_to_khop}
	Let $(V,c,p)$ be an instance of \APrioriATSP, such that
	$n \geq 2^6$ and $p(v)  = \delta$ for all $v \in V$, where $n \coloneqq |V|$ and $\delta \in \left[\frac{2}{n}, \frac{1}{2^5}\right]$.
	Let $T$ be a Hamiltonian cycle on $V$.
	Then, we have
	\begin{equation}
		\label{eq:reduction_to_khop_ub}
		\expectationcustom{A \sim p}{c\left( T[A] \right)} \leq  2^{13}\delta^2 \cdot c^{(k)}(T)
	\end{equation}
	and
	\begin{equation}
		\label{eq:reduction_to_khop_lb}
		\delta^2 \cdot c^{(k)}(T) \leq 4 \cdot \expectationcustom{A \sim p}{c\left( T[A] \right)},
	\end{equation}
	where $k \coloneqq \floor*{\frac{1}{\delta}}$.
\end{theorem}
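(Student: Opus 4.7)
The starting point is the expansion
\[
 \expectationcustom{A \sim p}{c(T[A])}\ =\ \sum_{\Delta=1}^{n-1}\delta^2(1-\delta)^{\Delta-1}\,f(\Delta),\qquad f(\Delta)\coloneqq\sum_{j=1}^n c(v_j,v_{j+\Delta}),
\]
from~\eqref{eq:a_priori_cost_fct}. I will split the sum at $\Delta=k$. The plan is to prove the lower bound~\eqref{eq:reduction_to_khop_lb} by showing that the first $k$ terms alone already dominate $\tfrac{1}{4}\delta^2 c^{(k)}(T)$, and to prove the upper bound~\eqref{eq:reduction_to_khop_ub} by showing that the tail $\Delta>k$ is controlled by $c^{(k)}(T)$ via \Cref{lem:path_distribution}.

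For the lower bound the key point is that $g(\delta)\coloneqq (1-\delta)^{1/\delta}$ is decreasing in $\delta$, so for every $\delta\in(0,\tfrac{1}{32}]$ one has $g(\delta)\ge g(\tfrac{1}{32})=(31/32)^{32}>\tfrac14$ (indeed $(1-\tfrac1n)^n$ is increasing in $n$, with value $\tfrac14$ at $n=2$). Since $k-1<1/\delta$, this gives $(1-\delta)^{\Delta-1}\ge(1-\delta)^{k-1}>\tfrac14$ for every $\Delta\in\{1,\dots,k\}$, hence
\[
 \expectationcustom{A\sim p}{c(T[A])}\ \ge\ \sum_{\Delta=1}^{k}\delta^2(1-\delta)^{\Delta-1}f(\Delta)\ \ge\ \tfrac14\,\delta^2 c^{(k)}(T),
\]
which is~\eqref{eq:reduction_to_khop_lb}.

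For the upper bound the main work is bounding the tail. The first $k$ terms contribute at most $\delta^2 c^{(k)}(T)$ because $(1-\delta)^{\Delta-1}\le 1$. For $\Delta\ge k+1$, the subpath of $T$ from $v_j$ to $v_{j+\Delta}$ has $\Delta+1\ge k+2$ vertices, so I can apply \Cref{lem:path_distribution} (using $k\ge 32\ge 16$) to each edge $(v_j,v_{j+\Delta})$. The triangle inequality gives $c(v_j,v_{j+\Delta})\le\sum_{e'}\chi^{(j,\Delta)}_{e'}\,c(e')$, where $\chi^{(j,\Delta)}_{e'}=0$ unless $e'=(v_i,v_\ell)$ with $\ell-i\le k$. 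Summing over $j$ and counting for which $j$ a fixed edge $e'=(v_i,v_\ell)$ occurs with $v_i,v_\ell\in\{v_j,\dots,v_{j+\Delta}\}$, at most $\Delta$ values of $j$ put $e'$ in the interior (each contributing at most $2^8/k^2$) and at most two values put $e'$ at a path endpoint (each contributing at most $2^4/k$). Using $\Delta/k\ge 1$, this yields
\[
 f(\Delta)\ \le\ \Bigl(\Delta\cdot\tfrac{2^8}{k^2}+2\cdot\tfrac{2^4}{k}\Bigr)\,c^{(k)}(T)\ \le\ \tfrac{2^9\Delta}{k^2}\,c^{(k)}(T).
\]
Combining this with the identity $\sum_{\Delta\ge 1}\Delta(1-\delta)^{\Delta-1}=\delta^{-2}$ and using $k\ge 1/(2\delta)$, the tail is bounded by
\[
 \sum_{\Delta=k+1}^{n-1}\delta^2(1-\delta)^{\Delta-1}f(\Delta)\ \le\ \tfrac{2^9}{k^2}\,c^{(k)}(T)\ \le\ 2^{11}\delta^2 c^{(k)}(T),
\]
and adding the head gives $\expectationcustom{A\sim p}{c(T[A])}\le(1+2^{11})\delta^2 c^{(k)}(T)\le 2^{13}\delta^2 c^{(k)}(T)$, which is~\eqref{eq:reduction_to_khop_ub}.

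The main obstacle is the bookkeeping in the tail estimate: one has to carefully count, for every edge $e'$ with $\ell-i\le k$ and every $\Delta>k$, how many choices of $j$ place $e'$ into the support of $\chi^{(j,\Delta)}$, and one has to distinguish whether $e'$ appears as a boundary or interior edge (which determines whether the factor is $2^4/k$ or $2^8/k^2$ from \Cref{lem:path_distribution}). Once this charging is done cleanly, the remaining calculation is routine, and the separate handling of head and tail keeps the two directions decoupled.
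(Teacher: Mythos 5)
Your proof is correct and takes essentially the same approach as the paper: both split the geometric-weighted expansion $\sum_\Delta \delta^2(1-\delta)^{\Delta-1}f(\Delta)$ at $\Delta = k$, establish the lower bound by observing that the weights are at least $\tfrac14$ on the head, and establish the upper bound by invoking \Cref{lem:path_distribution} to charge the tail against $c^{(k)}(T)$, with the same endpoint-versus-interior counting producing the $2^4/k$ and $2^8/k^2$ factors. The only cosmetic difference is the organization of the double sum in the tail estimate — you fix the long hop length $\Delta$ and bound $f(\Delta)$ as a whole, whereas the paper fixes a short edge $f$ and sums over long hops — but the underlying charging is identical.
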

\begin{proof}
	First note that $k \leq 1/\delta \leq n/2 \leq n - 2$ and $k \geq 1/\delta - 1 \geq 1/2\delta \geq 2^4$.
	Enumerate the vertices of $V$ as $v_1,\dots, v_n$ in the order that they appear on $T$. An edge $e = (v_j, v_{j + i})$ (where we set $v_{j}  =v_{j - n}$ for $j > n$) is part of $T$ after cutting it short to the active vertices $A$ if and only if $v_j$ and $v_{j+i}$ are active, but the vertices in between are not, i.e.
	\begin{equation}
		\label{eq:reduction_to_khop_edge_act_prob}
		\delta^2 \cdot (1 - \delta)^{i-1}
		= \probabilitycustom{A \sim p}{e \in E(T[A])}.
	\end{equation}
	Therefore,
	\begin{equation*}
		\begin{split}
			c^{(k)}(T)
			&= \sum_{i=1}^{k} \sum_{j = 1}^n c(v_j, v_{j + i}) \\
			&\leq \sum_{i=1}^{k} \sum_{j = 1}^n c(v_j, v_{j + i}) \cdot \left( 1 - \frac{1}{k} \right)^i \cdot 4 \\
			&\leq 4 \cdot \sum_{i=1}^{n-1} \sum_{j = 1}^n c(v_j, v_{j + i}) \cdot \left( 1 - \delta \right)^{i-1} \\
			\overset{\eqref{eq:reduction_to_khop_edge_act_prob}}&{\leq}\frac{4}{\delta^2} \cdot \expectationcustom{A \sim p}{c\left( T[A] \right)}.
		\end{split}
	\end{equation*}
	The first inequality follows from the fact that $k\geq 2$, $i\leq k$ and thus $\left(1-\frac1k\right)^i \geq \left(1-\frac1k\right)^k \geq \left(1-\frac12\right)^2 = \frac 14$.
	This shows \eqref{eq:reduction_to_khop_lb}.
	
	In order to show \eqref{eq:reduction_to_khop_ub}, we can bound the expected cost after shortcutting inactive vertices by
	\begin{equation}
		\label{eq:reduction_to_khop_initial_ub}
		\begin{split}
			\expectationcustom{A \sim p}{c\left( T[A] \right)}
			\overset{\eqref{eq:reduction_to_khop_edge_act_prob}}&{=} \sum_{i = 1}^{n-1} \sum_{j = 1}^n c(v_j, v_{j+i}) \cdot \delta^2 \cdot (1 - \delta)^{i-1} \\
			&\leq \delta^2 \cdot \left( \sum_{i=1}^{k} \sum_{j = 1}^n c(v_{j}, v_{j+i})  + \sum_{i = k+1}^{n-1}  \sum_{j = 1}^n  c(v_j, v_{j+i}) \cdot \left( 1- \delta \right)^{i-1}  \right).
		\end{split}
	\end{equation}
	The first part of this sum is just $c^{(k)}(T)$, but we still have to deal with the second part of the sum, i.e.\@ the cost of the edges that skip at least $k$ vertices on the tour (and are thus not counted in $c^{(k)}(T)$).
	Let $i \in \parset{k+1, \dots, n-1}$ and $e = (v_j, v_{j + i})$ be fixed.
	We denote by $P_e \coloneqq T_{[v_j,v_{j+i}]}$ the subpath from $v_j$ to $v_{j+i}$ on $T$. Since $P_e$ is a path on at least $k+2$ vertices,
	we can apply \Cref{lem:path_distribution} to $P_e$ and obtain paths $Q^{(e)}_1, \dots, Q^{(e)}_{r_e}$ and $\xi^{(e)}_1, \dots , \xi^{(e)}_{r_e}$ with a vector $\chi^{(e)} \coloneqq \sum_{j=1}^{r_e} \xi^{(e)}_j \cdot \chi(Q_j^{e})$.
	By the triangle inequality, 
	\begin{equation}
		\label{eq:reduction_to_khop_path_bounds}
		\begin{split}
			c(e) = \sum_{j=1}^{r_e} \xi^{(e)}_{j} c(e) &\leq \sum_{j=1}^{r_e} \xi^{(e)}_{j} c\left(Q^{(e)}_{j}\right) 
			= \sum_{j=1}^{r_e} \xi^{(e)}_{j} c^{\intercal}\chi \left(Q^{(e)}_{j}\right)
			= c^{\intercal} \chi^{(e)},
\end{split}
	\end{equation}
	where we interpret $c \in \bR_{\geq 0}^{E}$ by $c_e = c(e)$ for each $e\in V \times V$. 
	
	Now, let $i\in \parset{k+1, \dots, n-1}$ and $f = (v_j, v_{j + \Delta})$ for some $\Delta \leq k$ be fixed.
	We are aiming to find an upper bound on $\sum_{\ell = 1}^n \chi_f^{(v_\ell, v_{\ell + i})} $.
	To this end, consider edges $e = (v_{\ell}, v_{\ell + i})$. There is exactly on such edge $e$ such that $e$ and $f$ end in the same vertex (namely for $\ell = j + \Delta -i$). In this case we have $\chi_f^{(e)} \leq 2^4/k$ by \eqref{eq:path_distribution1}. By symmetry, the same argument applies in the case that $e$ and $f$ start in the same vertex. Furthermore there are at most $i-2$ edges $e = (v_\ell, v_{\ell +i})$ such that $v_j, v_{j + \Delta}\in V(P_e)$, but $f = (v_j, v_{j + \Delta})$ and $e$ do not share any common vertices. 
	In these cases, we can bound $\chi^{(e)}_f \leq 2^8/k^2$ by \eqref{eq:path_distribution1}.
	For all other choices of $e$ we have $\chi^{(e)}_f = 0$.
	Hence,
	\begin{equation}
		\label{eq:reduction_to_khop_edge_bound}
		\begin{split}
			\sum_{\ell = 1}^n \chi_f^{(v_{\ell}, v_{\ell + i})}
			&\leq  2\cdot\frac{2^4}{k} + (i-2)\cdot\frac{2^8}{k^2} \\
			&\leq 2^6\delta + (i-2) \cdot 2^{10}\delta^2
		\end{split}
	\end{equation}
	We can use this to bound
	\begin{equation*}
		\begin{split}
			&\frac{1}{2} \cdot \sum_{i = k+1}^{n-1}  \sum_{\ell = 1}^n  c(v_{\ell}, v_{\ell + i}) \cdot \left( 1- \delta \right)^{i-1} \\
			&\leq \sum_{i = k+1}^{n-1}  \sum_{\ell = 1}^n  c(v_{\ell}, v_{\ell + i}) \cdot \left( 1- \delta \right)^{i} \\
			\overset{\eqref{eq:reduction_to_khop_path_bounds}}&{\leq}\sum_{i = k+1}^{n-1}  \sum_{\ell = 1}^n  c^{\intercal} \chi^{(v_{\ell}, v_{\ell + i})} \cdot \left( 1- \delta \right)^i  \\
			&= \sum_{\Delta = 1}^k \sum_{j=1}^n c(v_j,v_{j+\Delta})\cdot \left( \sum_{i = k+1}^{n-1}  \sum_{\ell = 1}^n  \chi^{(v_{\ell}, v_{\ell + i})}_{(v_j,v_{j+\Delta})} \left( 1- \delta \right)^{i} \right) \\
			\overset{\eqref{eq:reduction_to_khop_edge_bound}}&{\leq} \sum_{\Delta = 1}^k \sum_{j=1}^n c(v_j,v_{j+\Delta})\cdot \left( \sum_{i = k+1}^{n-1}  \left(2^6 \delta + (i-2) \cdot 2^{10}\delta^2 \right) \left( 1- \delta \right)^{i} \right) \\
			&\leq \sum_{\Delta = 1}^k \sum_{j=1}^n c(v_j,v_{j+\Delta})\cdot 2^{10} \cdot \left( 
			\delta \cdot \sum_{i = 0}^{\infty} \left( 1- \delta \right)^i + 
			\delta^2 \cdot \sum_{i = 1}^{\infty} i\left( 1- \delta \right)^i 
			\right) \\
			&= \sum_{\Delta = 1}^k \sum_{j=1}^n c(v_j, v_{j+\Delta})\cdot 2^{10} \cdot \left( 
			\delta \cdot \frac{1}{1-\left(1 - \delta \right)} + 
			\delta^2 \cdot \frac{\left( 1- \delta \right)}{\left( 1- \left( 1- \delta \right) \right)^2}
			\right) \\
			&\leq \sum_{\Delta = 1}^k \sum_{j=1}^n c(v_j,v_{j+\Delta})\cdot 2^{11} \\
			&= 2^{11} \cdot c^{(k)}(T),
		\end{split}
	\end{equation*}
	where the equality in the penultimate line is a well known identity about geometric resp.\@ arithmetico-geometric series. 
	Plugging this into \eqref{eq:reduction_to_khop_initial_ub} yields \eqref{eq:reduction_to_khop_ub}.
\end{proof}

\subsection{Reducing to well-scaled instances of Hop-ATSP}
\label{subsec:well_scaled_instances}

We write $\diam(G,c)$ to denote the diameter, i.e., the maximum length of a shortest path in a graph $G$ with respect to edge costs $c$.
If $G$ is complete and $c$ satisfies the triangle inequality, this is the maximum cost of an edge.

Our next and final goal is to round the edge costs $c(e)$ to integral and polynomially bounded values.
We refer to this property as \emph{well-scaled}:

\begin{definition}[Well-scaled instances]
	We say that a \HopATSP instance $(V,c, k)$ is \emph{well-scaled} if $c$ takes only integer values, $\diam(G,c) \leq 2n^3$, and $\OPT(V,c,k) \geq n^2$, where $G$ is the complete directed graph on $V$ and $n \coloneqq |V|$.
\end{definition}

Reducing to such instances is rather straightforward and induces only a constant-factor loss (in addition to a scaling factor~$K$):

\begin{proposition}
	\label{prop:well_scaled_instances}
	Let $(V,c,k)$ be an instance of \HopATSP with $\diam(G,c) > 0$ and $k \leq n - 1$, where $G$ is the complete directed graph on $V$ and $n \coloneqq |V|$.
	Then, one can compute a well-scaled instance $(V,\hat{c}, k)$ of \HopATSP in $O(n^3)$, such that
	\begin{equation}
		\label{eq:well_scaled_instances}
		\hat{c}^{(k)}(T) \leq \frac{c^{(k)}(T)}{K} \leq 2 \hat{c}^{(k)}(T)
	\end{equation}
	for any Hamiltonian cycle $T$ on $V$, where $K \coloneqq \frac{\diam(G,c)}{2n^3}$.
\end{proposition}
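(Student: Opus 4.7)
The plan is to perform a standard scale-and-round construction. I would define $\tilde c(u,v) \coloneqq \lfloor c(u,v)/K \rfloor$ for all $u,v \in V$ and then let $\hat c$ be the \emph{metric closure} of $\tilde c$, i.e., $\hat c(u,v)$ is the length of a shortest $u$-$v$ path in the complete digraph on $V$ with edge weights $\tilde c$. Both steps run in $O(n^3)$ time (the first trivially, the second via Floyd--Warshall). By construction, $\hat c$ satisfies the triangle inequality, it is non-negative, and it takes only integer values since $\tilde c$ does.

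The upper bound $\hat c^{(k)}(T) \le c^{(k)}(T)/K$ is immediate from the pointwise estimate $\hat c(u,v) \le \tilde c(u,v) \le c(u,v)/K$ for every pair. The lower bound is the main technical point: for each pair $(u,v)$, a shortest path under $\tilde c$ can be assumed simple and hence has at most $n-1$ edges, so combining $\tilde c(e) \ge c(e)/K - 1$ with the triangle inequality for the original $c$ yields $\hat c(u,v) \ge c(u,v)/K - (n-1)$. Summing this bound over the $kn$ edge slots contributing to the $k$-hop cost of a tour $T$ gives $\hat c^{(k)}(T) \ge c^{(k)}(T)/K - kn(n-1)$.

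The remaining task, which I expect to be the key obstacle, is to absorb the additive error $kn(n-1)$ into half of $c^{(k)}(T)/K$. For this I would use the elementary observation that $c^{(k)}(T) \ge c^{(1)}(T) = c(T) \ge \diam(G,c) = 2n^3 K$, since any Hamiltonian cycle on $V$ must traverse a pair of vertices achieving the diameter (by the triangle inequality applied to the induced sub-walk). Combined with the hypothesis $k \le n-1$ this gives $kn(n-1) \le n^3 \le c^{(k)}(T)/(2K)$, so $\hat c^{(k)}(T) \ge c^{(k)}(T)/(2K)$, which is exactly the claimed inequality.

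Finally, I would read off the three well-scaled properties from the same estimates: $\hat c$ is integer-valued by construction; $\hat c(u,v) \le c(u,v)/K \le \diam(G,c)/K = 2n^3$ bounds the diameter under $\hat c$ by $2n^3$; and for any optimal tour $T^*$ of $(V,\hat c,k)$, the lower bound yields $\OPT(V,\hat c,k) = \hat c^{(k)}(T^*) \ge c^{(k)}(T^*)/(2K) \ge \diam(G,c)/(2K) = n^3 \ge n^2$.
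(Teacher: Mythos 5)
Your proposal is correct and follows essentially the same route as the paper: scale by $K$, floor, take the metric closure via shortest paths, bound from above pointwise, and bound from below using the fact that each shortest path has $O(n)$ edges each losing at most $1$ in rounding, then absorb the resulting additive $O(n^3)$ error into half of $c^{(k)}(T)/K$ using $c^{(k)}(T) \geq c(T) \geq \diam(G,c) = 2n^3K$. The only tiny detail worth making explicit is that the final step (showing $\OPT(V,\hat c,k) \geq n^2$) applies the lower bound to an optimal tour, which requires first invoking \Cref{lem:khop_skipping_vertices} to ensure the optimum is attained by a Hamiltonian cycle; the paper states this explicitly while your argument leaves it implicit.
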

\begin{proof}
	We set
	\begin{equation*}
		\tilde{c}(e) \coloneqq \floor*{\frac{c(e)}{K}}
	\end{equation*}
	for all $e \in E(G)$.
	Then, set $\hat{c}$ to be the shortest path metric with respect to $(G, \tilde{c})$.
We denote by $P_{x,y}$ the shortest path in $G$ with respect to $\tilde{c}$.
	Note that
	\begin{equation*}
		\diam(G,\hat{c}) \leq \diam(G,\tilde{c}) \leq \frac{\diam(G,c)}{K} = 2n^3.
	\end{equation*}
	Let $T$ be an arbitrary but fixed Hamiltonian cycle on $V$ and enumerate the vertices of $V$ as $v_1,\dots, v_n$ in the order of $T$. Define $F := \{(v_i, v_{\Delta + i}):  \Delta \in [k], i \in [n] \}$ where $v_j = v_{j -n}$ for $j > n$.
	In particular, $c^{(k)}(T) = c(F)$.
	Then, we can bound
	\begin{equation*}
		\begin{split}
			K \cdot \hat{c}(F) 
			= \sum_{(x,y) \in F} K \cdot \tilde{c}(P_{x,y})
			&\leq \sum_{(x,y) \in F} K \cdot \tilde{c} (x,y)  \\
			&\leq \sum_{(x,y) \in F} c (x,y)  \\
			&= c(F).
		\end{split}
	\end{equation*}
	Moreover, $|F| = nk \leq n^2$ because $T$ is a Hamiltonian cycle.
	Hence,
	\begin{equation*}
		\begin{split}
			K \cdot \hat{c}(F) 
			= \sum_{(x,y) \in F} \sum_{e \in E(P_{x,y})} K \cdot \tilde{c}(e)
			&\geq \sum_{(x,y) \in F} \sum_{e \in E(P_{x,y})}  \left(c (e) - K \right) \\
			&\geq \sum_{(x,y) \in F} \left( c (x,y) - nK \right) \\
			&\geq c(F) - n^3K \\
			&\geq c(F)/2,
		\end{split}
	\end{equation*}
	because $c(F) = c^{(k)}(T) \geq \diam(G,c) = 2Kn^3$ by the triangle inequality.
	In particular, we have $\hat{c}(F) \geq \frac{c(F)}{K} - n^3 \geq n^3 \geq n^2$.
	By \Cref{lem:khop_skipping_vertices}, this implies that $\OPT(V,\hat{c}, k) \geq n^2$ and $(V,\hat{c}, k)$ is indeed a well-scaled instance.
\end{proof}

Putting everything together, we can now prove \cref{thm:reducing_hop_atsp}, which we restate here for convenience:

\thmHopATSP*
\begin{proof}
	If we can bound the  total activation probability by  $\sum_{v\in V}p(v) \leq \frac12$, then by \cref{lem:apx_for_tiny_activations} any tour on $V$ is a 2-approximation. Otherwise, by \cref{lem:atsp_reduction_to_depot_approx}, there exists a vertex $v\in V$ such that we lose only a factor $6$ in the approximation guarantee by changing the activation probability of $v$ to 1. Let $\mathcal J_v = (V,c,p_v)$ be the instance that arises from $\mathcal J$ by setting $p_v(v) = 1$ and $p_v(w) = p(w)$ for $w\neq v$. We will perform the following steps for each possible choice of $v$ and in the end choose the best. This costs us a factor $n:= |V|$ in the running time.
	
	Next, we need to get rid of vertices with activation probability less than $\frac 1n$. Split $\mathcal J_v$ into two subinstances: $\mathcal J_v'$ with vertex set $V_v' := \{v\}\cup\{w\in V:p_v(w) < \frac1n\}$ and $\hat{\mathcal J_v}$ with vertex set $\hat V_v := \{w\in V: p_v(w) \geq \frac1n\}$. By \cref{prop:small_activations} any tour on $V'$ has expected cost at most $6p_v(V')\cdot \OPT(\mathcal J_v') \leq 6\cdot (n \cdot \frac1n + 1) \cdot \OPT(\mathcal J_v') \leq 12\cdot \OPT(\mathcal J_v)$.
	
	If $|\hat V_v| \leq 2^5$, we find the best a priori tour in constant time by enumeration.
	Otherwise, we then transform $\hat{\mathcal J_v}$ into an instance $\overline{\mathcal J_v} = (\overline {V_v}, \overline c, \overline p)$ with uniform activation probabilities $\overline{p}(w) = \frac1{2n}$ for all $w\in V$, as described in \cref{prop:uniform_activation_probabilities}, using $\varepsilon = \frac1n$. While doing so, we make sure to replace $v$ (i.e.\@ a vertex with $p(v) = 1$) by at least $4n$ vertices. Together with the bound given in \cref{prop:uniform_activation_probabilities}, we get $\overline n := |\overline {V_v}| \leq 4n+ 4|\hat V_v|\cdot n \leq 5n^2$.
	  
	We define $k := 2n$ and interpret $(\overline{V_v}, \overline c, k)$ as an Hop-ATSP instance. Since we made sure that $\overline n \geq 4n$, we get $k\leq \frac{\overline n}2$. Moreover we know that $n\geq |\hat V_v| > 2^5$, and thus $\overline n \geq 2^6$. Hence for $\delta := \frac 1k$ we get $\frac 2{\overline n}\leq \delta \leq \frac1{2^5}$ and therefore we can later apply \cref{thm:reduction_to_khop} to $(\overline{V_v}, \overline c, k)$ with $\delta = \frac 1k$.

	Using \cref{prop:well_scaled_instances}, we find a cost function $\tilde{c}$ such that $(\overline{V_v}, \tilde{c}, k)$ is well-scaled and $\tilde{c}^{(k)}(T) \leq \frac{\overline c^{(k)}(T)}{K} \leq 2 \tilde{c}^{(k)}(T)$ for any Hamiltonian cycle $T$ on $\hat V_v$ (where $K:= \operatorname{diam}(G,\overline c)/2\overline n^3$ and $G$ is the complete directed graph on $\overline {V_v}$).
	
	Now let $T^{\operatorname{hop}}$ be an optimum Hop-ATSP-tour in the Hop-ATSP instance  $(\overline{V_v}, \tilde{c}, k)$ and $T^*$ be an optimum a priori tour in $\overline{\mathcal J_v}$.
	Suppose we can compute a tour $T$ in time $t(\overline n)$ with $\tilde c^{(k)}(T) \leq \alpha(\overline{n})\cdot\tilde c^{(k)}( T^{\operatorname{hop}})$. Then by \cref{thm:reduction_to_khop} we have
	\begin{align*}
		\expectationcustom{A\sim \overline p}{\overline c(T[A])}
		 & \leq 2^{13} \delta^2 \cdot \overline c^{(k)}(T) \\
		 & \leq 2^{13}\delta ^2 \cdot 2K\cdot  \tilde c^{(k)}(T)\\
		 & \leq 2^{13}\delta ^2 \cdot 2K\cdot \alpha(\overline n) \cdot  \tilde c^{(k)}( T^{\operatorname{hop}})\\
		 & \leq 2^{13}\delta ^2 \cdot 2K\cdot \alpha(\overline n) \cdot  \tilde c^{(k)}(T^*)\\
		 & \leq 2^{13}\delta ^2 \cdot 2\cdot \alpha(\overline n) \cdot  \overline c^{(k)}(T^*)\\
		 & \leq 2^{13} \cdot 2\cdot \alpha(\overline n) \cdot 4 \cdot  \expectationcustom{A\sim \overline p}{\overline c(T^*[A])}.
	\end{align*}
	Hence we can compute a $(2^{16}\cdot \alpha(\overline n))$-approximate a priori tour in $\overline{\mathcal J_v}$. By \cref{prop:uniform_activation_probabilities} we lose a factor of $4$ when transforming this back to a tour in $\hat{\mathcal J_v}$, i.e.\@ we get a $(2^{18} \cdot \alpha(\overline n))$-approximate a priori tour in $\hat{\mathcal J_v}$. Together with the tour of cost at most $12\cdot \OPT(\mathcal J_v)$ for $\mathcal J_v'$, we therefore get a tour of cost at most $(2^{18} \cdot \alpha(\overline n) + 12)\cdot \OPT(\mathcal J_v)$. Additionally respecting the factor 6 that we lost when declaring one vertex as a depot by \cref{lem:atsp_reduction_to_depot_approx}, we end up with a $6\cdot (2^{18}\cdot \alpha(\overline n) + 12)$-approximation for the Asymmetric A Priori TSP.
	We can bound $6\cdot (2^{18})\cdot \alpha(\overline n) + 12 \leq 6\cdot (2^{18} + 12)\cdot \alpha(\overline n) \leq 2^{21}\cdot\alpha(\overline n)$, and we already established $\overline n \leq 5n^2$. This proves the statement about the approximation guarantee of the algorithm for Asymmetric A Priori TSP.
	
	For the running time we need $t(\overline n)$ time to find $T$, and time that is polynomial in $n$ for the transformation steps where we introduce uniform activation probabilities and a cost-function for a well-scaled Hop-ATSP instance. Since we might need to perform these steps $n$ times to find the best vertex $v$ to declare as a depot, and because $\overline n \leq 5n^2$, we end up with the claimed running time.
	
\end{proof}

\section{Reducing to Hierarchically Ordered Instances}\label{sec:hierarchically_ordered}

In this section we present our reduction of well-scaled instances of Hop-ATSP to hierarchically ordered instances.
We first recall the definition of hierarchically ordered instances and introduce some useful notation in this context (\Cref{sec:definition_hierarchically_ordered}).
Then we describe the known results on directed low-diameter decompositions from \cite{bernstein2022negative,bringmann2025near,li2025simpler} that we will use (\Cref{sec:ldd}), and finally
we provide our reduction, i.e., the proof of \Cref{thm:hierarchically_ordered} (\Cref{sec:proof_hierarchically_ordered}).

\subsection{Hierarchically ordered instances}\label{sec:definition_hierarchically_ordered}

First, recall that in a hierarchically ordered instance we have a hierarchical partition $\cH=(\cH_{\ell})_{\ell\in L}$, consisting of partitions $\cH_1,\dots, \cH_L$, where each partition $\cH_{\ell}$ is a refinement of the partition $\cH_{\ell-1}$.

\begin{definition}[Hierarchical partition]
A \emph{hierarchical partition} of a finite set $V$ is a family $(\cH_{\ell})_{\ell=1}^L$ of partitions of $V$ such that
\begin{enumerate}
\item
$\cH_1 = \{V\}$ is the trivial partition into a single set, and
\item
$\cH_L = \{\{v\} : v \in V\}$ is the trivial partition into singletons,
\item
for all $1 < \ell \leq L$ and $H \in \cH_{\ell}$, there exists  as set $H^\prime \in \cH_{\ell - 1}$ with $H \subseteq H^\prime$.
\end{enumerate}
We say that $A \subseteq V$ is \emph{$\ell$-confined} if there exists $H \in \cH_{\ell}$ with $A \subseteq H$.
We define $\level(A)$ as the maximum $\ell \in [L]$ such that $A$ is $\ell$-confined.
For edges $e=(x,y)$ we simply abbreviate $\level(e) \coloneqq  \level(\{x,y\})$.
For a vertex $v \in V$ and a level $\ell \in [L]$, we denote by $H_{\ell}(v)$ the unique set $H \in \cH_{\ell}$ such that $v \in H$.
\end{definition}

In a hierarchically ordered instance we also have a total order $\prec$ on the vertex set $V$.

\begin{definition}[Compatible partition]
We say that a partition $\cH_{\ell}$ of $V$ is \emph{compatible} with a total order $\prec$ on $V$ if every set $H\in \cH_{\ell}$ (for $\ell\in [L]$) is a set of vertices appearing consecutively in the order $\prec$.
A hierarchical partition $\cH = (\cH_{\ell})_{\ell=1}^L$ is compatible with $\prec$ if the partition $\cH_{\ell}$ is compatible with $\prec$ for all $\ell \in [L]$.
\end{definition}

Then a hierarchically ordered instance of Hop-ATSP is defined as follows.

\begin{definition}[Hierarchically ordered instance]\label{def:hierarchically_ordered}
A \emph{hierarchically ordered instance} of Hop-ATSP is a tuple $(V, L, \cH, D, \prec, c, k)$, where
\begin{itemize}
\item $(V,c,k)$ is an instance of Hop-ATSP,
\item $L\in \bN$ is the \emph{depth},
\item $(\cH_{\ell})_{\ell=1}^L$ is a hierarchical partition of $V$,
\item $D \in \bZ^{L}$ with $D_{\ell} \geq 2\cdot D_{\ell +1}$ for each $\ell\in[L-1]$,
\item $\prec$ is a total order on $V$ such that $\cH$ is compatible with $\prec$, and
\item for every edge $e=(x,y)\in V\times V$, we have $c(e) \leq D_{\level(e)}$ if $e$ is a forward edge, i.e., $x\prec y$, and we have $c(e)=D_{\level(e)}$, if $e$ is a backward edge, i.e., $y\prec x$.
\end{itemize}
\end{definition}

The goal of this section is to provide a reduction of well-scaled instances of Hop-ATSP to instances of hierarchically ordered instances with depth $L=O(\log n)$, where $n\coloneqq |V|$.
The reduction will have a poly-logarithmic overhead in the approximation ratio and a polynomial overhead in the running time.

\subsection{Directed low-diameter decompositions}\label{sec:ldd}

We will prove \Cref{thm:hierarchically_ordered} by repeatedly applying directed low-diameter decompositions.
Directed low-diameter decompositions are defined for general graphs $G$ with nonnegative cost function on the edges.
In this general setting, it is important to distinguish between the weak diameter and the strong diameter of a subset $S$ of vertices of $G$, where the strong diameter is the diameter of the subgraph $G[S]$ induced by $S$ and the weak diameter is the maximum distance of two vertices from $S$ in the original graph $G$.
(The weak diameter can be smaller than the strong diameter if every shortest path between two vertices in $S$ uses vertices outside of $S$.)
In complete graphs where the edge costs satisfy the triangle inequality, there is no difference between the two diameter notions and the diameter of a graph is the same as the maximum cost of an edge.
Because we will only use low-diameter decompositions in this setting, we won't have to worry about the difference between weak and strong diameter in what follows.

Directed low-diameter decompositions allow us to determine a random set $F$ of edges whose removal ensures that every strongly connected component has bounded (weak) diameter, where the probability of an edge $e$ being included in $F$ depends on $c(e)$.

\begin{definition}[Directed low-diameter decomposition]
\label{def:directed_low-diameter_decomposition}
Let $G=(V,E)$ be a directed graph and $c: E \to \bZ_{\geq 0}$.
Given positive integers $\alpha$ and $D$, a \emph{directed low-diameter decomposition with loss $\alpha$} is a collection of edge sets $\cF \subseteq 2^{E}$ such that
\begin{enumerate}
\item
for all $F \in \cF$, every strongly connected components of $(V, E \setminus F)$ has weak diameter at most  $D$, 
\item
for all $e \in E$ we have $\bP[e \in F] \leq \alpha \cdot \frac{c(e)}{D}$, where $F$ is drawn uniformly at random from $\cF$.
\end{enumerate}
\end{definition}

In our reduction we will compute such a directed low-diameter decomposition for the complete directed graph $(V,E)$ on vertex set $V$ with edge cost $c$, and then sample a set $F\in \cF$.
We will choose the partition $\cH_2$ corresponding to the strongly connected components of $(V, E \setminus F)$ and recurse on each set in $\cH_2$ to refine the partition, leading eventually to the complete hierarchical partition $\cH$.
Edges in $F$ correspond to edges that might become backward edges in our hierarchically ordered instance.

Directed low-diameter decompositions have been developed in \cite{bernstein2022negative} and improved and simplified versions have been given in \cite{bringmann2025near,li2025simpler}.

\begin{theorem}[\cite{bringmann2025near}]
\label{thm:directed_low-diameter_decomposition}
Let $G,c$ and $D$ be given as in \Cref{def:directed_low-diameter_decomposition}.
Suppose that $D \leq \poly (n)$, where $n \coloneqq |V(G)|$.
Then, there is a deterministic algorithm that computes a directed low-diameter decomposition $\cF$ with loss at most $O(\log n \log \log n)$ and $|\cF| \leq \poly(n)$ in polynomial time.
\end{theorem}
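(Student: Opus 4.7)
The plan is to prove this result via a recursive ball-carving scheme, following the framework of Bernstein--Nanongkai--Wulff-Nilsen~\cite{bernstein2022negative} refined by the radius-selection strategy of \cite{bringmann2025near} that yields the $\log\log n$ improvement. The algorithm produces a single decomposition as a randomized procedure and is then derandomized by enumeration of all ``choices'' it makes, producing the family $\cF$. The key observation is that because $D \leq \poly(n)$ and every recursive call makes only polynomially many choices (a pivot vertex, an integer radius in $\{0,1,\dots,D\}$), and the recursion tree has depth $O(\log n)$, the total number of choice sequences is $\poly(n)$, which bounds $|\cF|$.

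The recursive procedure $\mathtt{Decompose}(V')$ works as follows. If $V'$ already has weak diameter at most $D$ in $G$, return $\emptyset$. Otherwise pick a pivot $v \in V'$ and consider the sequences of out-balls $B^+_r \coloneqq B^{out}(v,r)\cap V'$ and in-balls $B^-_r \coloneqq B^{in}(v,r)\cap V'$ for integer $r \in [0, D/4]$. Sample a radius $r^\ast$ in this range and cut either $\partial^+(B^+_{r^\ast})$ or $\partial^-(B^-_{r^\ast})$, depending on which of the two balls is ``smaller'' in a measure to be chosen; recurse on the inner ball and on the complement. The inner ball has weak diameter at most $2r^\ast \leq D$ by the triangle inequality (so it becomes a valid SCC-piece after cutting its boundary), and the recursion terminates because each cut separates $V'$ into two strictly smaller parts. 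Determinism is obtained by enumerating all valid pivots and radii at each node of the recursion tree and taking $\cF$ to be the collection of all resulting $F$'s.

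The analysis of the per-edge cut probability proceeds in two parts. \emph{Recursion depth:} by always cutting off the ``light'' side (measured by ball-size or a similar potential), each vertex is contained in the recursive subproblem for at most $O(\log n)$ levels, so any fixed edge $e=(x,y)$ is ``alive'' in at most $O(\log n)$ calls. \emph{Per-level probability:} in a single recursive call, the classical ball-growing argument shows that a uniformly random integer radius $r^\ast \in [0,D/4]$ separates $x$ from $y$ with probability $O(c(e)/D)$ times some logarithmic factor capturing how fast ball sizes can grow. A naive geometric-scale sampling gives per-level probability $O(\log n \cdot c(e)/D)$ and total loss $O(\log^2 n)$; the refined scheme of \cite{bringmann2025near} samples from a carefully interleaved two-scale distribution so that, averaged over the enumerated radii, the per-level contribution drops to $O(\log\log n \cdot c(e)/D)$. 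Multiplying by the $O(\log n)$ depth yields the claimed loss of $O(\log n \log\log n)$.

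The main obstacle will be establishing the $O(\log\log n \cdot c(e)/D)$ per-level bound. This requires a more delicate argument than the classical Lov\'asz-style ball-growing lemma: one must exploit the ``doubling'' structure of ball growth across scales so that only few radii in the enumeration can possibly separate any given edge, and simultaneously ensure that the recursion still reduces the subproblem size by a constant factor at each level. Once this refined probabilistic lemma is in place, the overall theorem follows by combining it with the standard derandomization-by-enumeration argument and the polynomial bound on $D$ to control $|\cF|$ and the runtime.
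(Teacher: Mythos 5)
The paper does not prove this statement at all: it is imported as a black box from \cite{bringmann2025near} (building on \cite{bernstein2022negative,li2025simpler}) and is only \emph{used} in the reduction to hierarchically ordered instances. So the relevant question is whether your sketch would stand as an independent proof, and it does not. There are two concrete gaps. First, the derandomization-by-enumeration count is wrong: the recursion tree is not a single root-to-leaf path but has up to $\Theta(n)$ nodes (every piece of the final low-diameter partition corresponds to a leaf), so enumerating a pivot and an integer radius in $\{0,\dots,D\}$ at every node yields on the order of $(\poly(n))^{\Theta(n)}$ choice sequences, not $\poly(n)$; the $O(\log n)$ depth does not rescue this, and even restricting attention to one root-to-leaf path would only give $n^{O(\log n)}$, which is quasi-polynomial. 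Obtaining a family $\cF$ with $|\cF|\leq \poly(n)$ such that a \emph{uniformly random} member satisfies $\bP[e\in F]\leq \alpha\cdot c(e)/D$ for every edge simultaneously requires a genuinely different derandomization argument, which is precisely part of the technical content of the cited result.

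Second, the heart of the quantitative claim, the per-level cut probability $O(\log\log n\cdot c(e)/D)$, is exactly the step you defer as ``the main obstacle'' and justify only by appealing to ``the refined scheme of \cite{bringmann2025near}'', i.e., to the theorem being proved. The ball-growing argument you actually sketch gives $O(\log n)$ per level and hence total loss $O(\log^2 n)$, which recovers the guarantee of \cite{bernstein2022negative} but not the claimed $O(\log n\log\log n)$. As written, the proposal is an outline of the known proof strategy with its two decisive ingredients (the improved radius-selection lemma and a correct polynomial-size derandomization) missing or incorrect; for the purposes of this paper the statement should simply be cited, as the authors do.
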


\subsection{Proof of \Cref{thm:hierarchically_ordered}}\label{sec:proof_hierarchically_ordered}

We now prove \Cref{thm:hierarchically_ordered}, which we restate here for convenience.

\thmHierarchy*

\noindent
The algorithm we will use to prove \Cref{thm:hierarchically_ordered} starts with the partition $\cH_1=\{V\}$ and iteratively refines it to partitions $\cH_2,\dots, \cH_L =\{\{v\} : v\in V\}$.
Together with these partitions we will construct a total order~$\prec_{\ell}$ on each partition $\cH_{\ell}$, eventually leading to an order$\prec$ on $V$.

The algorithm proceeds a s follows:
\begin{enumerate}[label =(\arabic*)]
\item Let $G$ be the complete directed graph on $V$ and set $\cH_1 \coloneqq \{V\}$ and $D_1 \coloneqq \diam(G,c)$.
 Let $\prec_1$ be the trivial total order on $\cH_1$.
\item Define a sequence of integers by $D_{\ell} \coloneqq \lfloor \frac{D_{\ell -1}}{2} \rfloor$ and let $L\in \bN$ be minimal such that $D_L=0$.
\item For $\ell =2,\dots, L-1$, do the following:
\begin{itemize}
\item Initialize $\cH_{\ell} = \emptyset$.
\item  For every set $H\in \cH_{\ell -1}$, apply \Cref{thm:directed_low-diameter_decomposition} to the induced subgraph $G[H]$ with edge costs $c$ and diameter bound $D_{\ell}$ to obtain a directed low-diameter decomposition $\cF$.
Then sample $F\in \cF$ uniformly at random, let $G_H \coloneqq G[H] - F$ be the graph that results from $G[H]$ by deleting $F$, and fix a topological order of the connected components of $G_H$.
Then add the vertex sets of the connected components of $G_H$ to $\cH_{\ell}$.
\item Define a total order $\prec_{\ell}$ on $\cH_{\ell}$ by defining $X \prec_{\ell} Y$ as follows:
\begin{itemize}
\item If $X$ and $Y$ are contained in different parts $\cH_{\ell -1}$, say $X\subseteq X'\in \cH_{\ell-1}$ and $Y\subseteq Y'\in \cH_{\ell-1}$, then $X \prec_{\ell} Y$ if and only if $X' \prec_{\ell-1} Y'$.
\item If $X$ and $Y$ are contained in the same part $H$ of $\cH_{\ell -1}$, then $X\prec_{\ell}Y$ if and only if and $X$ is before $Y$ in the fixed topological order of the strongly connected components of $G[H]$.
\end{itemize}
\end{itemize}
\item Define $\cH_L = \{ \{v\} : v\in V\}$ to be the partition of $V$ into singletons and define $\prec$ to be any total order on $V$ such that for any distinct sets $X,Y\in \cH_{L-1}$ with $X\prec_{L-1} Y$, we have $x\prec y$ for all $x\in X$ and $y\in Y$.
\end{enumerate}
In this algorithm we defined a total order $\prec_{\ell}$ on the parts of the partition $\cH_{\ell}$ for $\ell=1,\dots,L-1$.
For our analysis it will be useful to also define an order $\prec_L$ on the singleton partition $\cH_L$, which we define as $\{x\} \prec_L \{y\}$ if and only if $x\prec y$.

Having constructed the total order $\prec$ on $V$, the hierarchical partition $\cH=(\cH_{\ell})_{\ell=1}^L$ and the vector $D\in \bZ_{\geq 0}^L$, we define the cost function $\tilde{c} : V \times V \to \mathbb{Z}_{\geq 0}$ by
\[
\tilde{c}(e) = 
\begin{cases}
c(e) &\text{ if $e$ is a forward edge}\\
D_{\level(e)} &\text{ if $e$ is a backward edge}.
\end{cases}
\]

A key invariant of our algorithm (following directly from the definition of directed low-diameter decompositions) is that the graph $G[H]$ with edge cost $c$ has (weak) diameter at most $D_{\ell}$ for every set $H\in \cH_{\ell}$.
Because $G$ is a complete graph satisfying the triangle inequality, this is equivalent to $c(e)\leq D_{\level(e)}$ for each edge $e$ of $G$.
In particular, this also implies $c(e) \leq \tilde{c}(e)$ for every edge $e\in V\times V$.
We also note that since we consider a well-scaled instance, the maximum cost $c(e)$ of an edge $e\in V\times V$ is at most $2n^3$, implying $L\leq \log_2(\diam(G,c)) + 1 = O(\log n)$.

In order to prove that $\cJ =(V,L,\cH,D,\prec,\tilde{c},k)$ is a hierarchically ordered instance, it remains to prove that $\tilde{c}$ satisfies the triangle inequality. 
All other properties of hierarchically ordered instances follow directly from the construction.

\begin{lemma}
The function $\tilde c: V\times V \to \bZ_{\geq 0}$ satisfies the triangle inequality.
\end{lemma}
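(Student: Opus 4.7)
The plan is to fix three vertices $x,y,z\in V$ and prove $\tilde c(x,z)\le \tilde c(x,y)+\tilde c(y,z)$ by a short case analysis on the orientations of the three edges. Throughout I will use three easy facts established in the construction: (a) $c(e)\le \tilde c(e)$ for every $e\in V\times V$, since for a backward edge $e$ with $\level(e)=\ell$ the invariant $\diam(G[H],c)\le D_\ell$ for $H\in\cH_\ell$ gives $c(e)\le D_\ell=\tilde c(e)$; (b) if $u,v$ lie in the same element of $\cH_\ell$ and $v,w$ lie in the same element of $\cH_{\ell'}$, then all three lie in the same element of $\cH_{\min(\ell,\ell')}$, so $\level(u,w)\ge \min(\level(u,v),\level(v,w))$; and (c) the sequence $(D_\ell)_\ell$ is non-increasing.

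The easy case is when $(x,z)$ is a \emph{forward} edge: then $\tilde c(x,z)=c(x,z)\le c(x,y)+c(y,z)\le \tilde c(x,y)+\tilde c(y,z)$ using the triangle inequality of $c$ together with fact (a).

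The remaining case is when $(x,z)$ is a \emph{backward} edge, so $z\prec x$. Note that if both $(x,y)$ and $(y,z)$ were forward then $x\prec y\prec z$, contradicting $z\prec x$; so at least one is backward. The crucial observation, which I will exploit in each subcase, is that compatibility of $\cH$ with $\prec$ forces every $H\in\cH_\ell$ to be a consecutive interval, so any vertex lying between two members of $H$ in the order $\prec$ also belongs to $H$. Concretely:
\begin{itemize}
\item If $(x,y)$ is forward and $(y,z)$ is backward, then $z\prec x\prec y$, so $y,z$ being in a common set of $\cH_{\level(y,z)}$ forces $x$ into that set; hence $\level(x,z)\ge \level(y,z)$ and $\tilde c(x,z)=D_{\level(x,z)}\le D_{\level(y,z)}=\tilde c(y,z)$.
\item If $(x,y)$ is backward and $(y,z)$ is forward, then $y\prec z\prec x$, so $x,y$ being in a common set of $\cH_{\level(x,y)}$ forces $z$ into it; hence $\level(x,z)\ge \level(x,y)$ and $\tilde c(x,z)\le \tilde c(x,y)$.
\item If both $(x,y)$ and $(y,z)$ are backward, I just apply (b) and (c): $\tilde c(x,z)=D_{\level(x,z)}\le D_{\min(\level(x,y),\level(y,z))}=\max\bigl(D_{\level(x,y)},D_{\level(y,z)}\bigr)\le \tilde c(x,y)+\tilde c(y,z)$.
\end{itemize}

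In every case the bound $\tilde c(x,z)\le \tilde c(x,y)+\tilde c(y,z)$ follows. There is no genuine obstacle; the main thing to keep straight is that the cases where $(x,z)$ is backward are handled not by bounding $D_{\level(x,z)}$ against a numerical quantity but by using compatibility with $\prec$ to \emph{increase} $\level(x,z)$ relative to the other levels, after which monotonicity of $D$ closes the argument.
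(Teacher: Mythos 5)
Your proof is correct and takes essentially the same approach as the paper. Both handle the forward case $x \prec z$ via the triangle inequality for $c$ together with $c \leq \tilde{c}$, and both handle the backward case by using $\prec$-compatibility (partition cells are intervals) and monotonicity of $D_\ell$ to argue that at least one of $\tilde c(x,y)$, $\tilde c(y,z)$ is already $\geq D_{\level(x,z)}$; the only difference is that you organize the argument by casing on the orientations of $(x,y)$ and $(y,z)$, whereas the paper cases on the position of $H_{\ell+1}(y)$ relative to $H_{\ell+1}(x)$ and $H_{\ell+1}(z)$, which accomplishes the same thing in slightly fewer words.
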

\begin{proof}
Let $x,y,z \in V$ be three distinct vertices.
If $x \prec z$, then
\begin{equation*}
\tilde{c}(x,z) = c(x,z) \leq c(x,y) + c(y,z) \leq \tilde{c}(x,y) + \tilde{c}(y,z).
\end{equation*}

Otherwise we have $x \succ z$ and thus $\tilde{c}(x,z)=  D_{\ell}$ for $\ell\coloneqq \level(x,y) < L$.
In this case, $H_{\ell +1}(z) \prec_{\ell+1} H_{\ell +1}(x)$, implying that we have $H_{\ell +1}(y) \prec_{\ell+1} H_{\ell+1}(x)$ or $H_{\ell+1}(z) \prec_{\ell+1} H_{\ell+1}(y)$, implying $\tilde{c}(x,y) \geq D_{\ell}$ or $\tilde{c}(y,z) \geq D_{\ell}$.
We conclude $\tilde{c}(x,z) = D_{\ell} \leq \tilde{c}(x,y) + \tilde{c}(y,z)$.
\end{proof}

In order to complete the proof of  \Cref{thm:hierarchically_ordered}, the only thing left to show is 
\begin{equation}\label{eq:expected_opt_increase_hierarchy}
\mathbb{E}[\OPT(\cJ)]\ \leq\ L \cdot O(\log n \cdot \log \log n) \cdot \OPT(\cI),
\end{equation}
where $\cI=(V,c,k)$ is the given well-scaled instance of Hop-ATSP.
The following is the key lemma we use to show this property.

\begin{lemma}\label{lem:expected_increase_edge_length}
For every edge $e \in V\times V$, we have
\[
\bE \big[ \tilde{c}(e) \big] \leq L \cdot O(\log n \cdot \log\log n) \cdot c(e) + 1.
\]
\end{lemma}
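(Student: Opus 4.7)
The plan is to split the expectation of $\tilde{c}(e)$ according to whether $e = (x,y)$ is a forward or backward edge in the final order $\prec$, and in the backward case, to stratify by $\level(e)$. If $e$ is a forward edge, then $\tilde{c}(e) = c(e)$, contributing $c(e)$. If $e$ is a backward edge with $\level(e) = \ell$, then $\tilde{c}(e) = D_\ell$. Letting $E_\ell$ denote the (pairwise disjoint) events ``$e$ is backward and $\level(e) = \ell$'', one obtains
\[
\bE\big[\tilde{c}(e)\big] \ \leq\ c(e) \ +\ \sum_{\ell=1}^{L-1} D_\ell \cdot \bP[E_\ell],
\]
so the task reduces to bounding each $\bP[E_\ell]$.

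The main step, which I expect to be the crux, is the following charging argument: for $\ell \leq L-2$, the event $E_\ell$ forces $e$ to lie in the edge set $F$ sampled at the low-diameter decomposition step for level $\ell+1$. To see this, let $H \in \cH_\ell$ be the common part containing $x$ and $y$. If $e \notin F$, then $e=(x,y)$ remains an edge of $G[H] - F$. Since $\level(e) = \ell$ forces $x$ and $y$ into distinct strongly connected components at level $\ell+1$, the chosen topological order must place the SCC of $x$ before the SCC of $y$; but this order is preserved when $\prec$ is refined to the singleton level, so $x \prec y$, making $e$ a forward edge — a contradiction. Combining this with \Cref{thm:directed_low-diameter_decomposition} yields $\bP[E_\ell] \leq \bP[e \in F] \leq \alpha \cdot c(e)/D_{\ell+1}$ for $\alpha = O(\log n \cdot \log\log n)$.

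The remainder is routine arithmetic. From $D_{\ell+1} = \lfloor D_\ell/2 \rfloor$ one gets $D_\ell \leq 2 D_{\ell+1} + 1$, and by the minimality of $L$ we have $D_{\ell+1} \geq 1$ whenever $\ell + 1 \leq L-1$, so $D_\ell \leq 3 D_{\ell+1}$. Hence every summand with $\ell \leq L-2$ contributes at most $3\alpha \cdot c(e)$, for a total of $O(L \cdot \log n \cdot \log\log n) \cdot c(e)$. Finally, the last level $\ell = L-1$ has $D_{L-1} = 1$ (again by minimality of $L$, since $\lfloor D_{L-1}/2\rfloor = D_L = 0$ forces $D_{L-1}\leq 1$), so this summand contributes at most $D_{L-1} \cdot 1 = 1$ — precisely the additive $+1$ in the statement. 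Summing these bounds yields the claim.
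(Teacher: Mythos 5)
Your proposal is correct and takes essentially the same route as the paper: decompose $\bE[\tilde c(e)]$ by whether $e$ is forward or backward, stratify the backward case by $\level(e)$, show via the topological-order argument that the event $E_\ell$ forces $e$ into the sampled removal set $F$ at the LDD step producing $\cH_{\ell+1}$, and finally use $D_\ell = O(D_{\ell+1})$ together with $D_{L-1}=1$ to sum the contributions. The only cosmetic differences are your arithmetic constant ($D_\ell \le 3D_{\ell+1}$ vs.\ the paper's $D_\ell/D_{\ell+1}\le 4$) and that the paper is a touch more explicit about the conditioning: the set $F$ is only sampled conditionally on $\level(e)\ge\ell$, so the clean statement is $\bP[E_\ell]\le \bP[E_\ell\mid \level(e)\ge\ell]\le \alpha\,c(e)/D_{\ell+1}$ rather than an unconditional $\bP[e\in F]$ — but since $E_\ell$ implies $\level(e)\ge\ell$, this is immediate and your conclusion is unaffected.
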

\begin{proof}
Let $B_{\ell}$ be the event that $e$ is a backward edge and $\level(e)=\ell$.
Using $D_L=0$ and $D_{L-1} =1$, we obtain
\begin{align*}
\bE\big[ \tilde{c}(e) \big] \ =&\ \bP\big[ e\text{ forward edge}\big]\cdot c(e) + \sum_{\ell=1}^{L} \bP[B_{\ell}\big]  \cdot D_{\ell}
\ \leq\ c(e) + 1 + \sum_{\ell=1}^{L-2} \bP\big[B_{\ell}\big]  \cdot D_{\ell}.
\end{align*}
Consider $\ell \leq L-2$ and suppose that $\level(e) \geq \ell$.
Then the two endpoints of $e$ are contained in the same set $H\in \cH_{\ell}$.
In our algorithm we apply \Cref{thm:directed_low-diameter_decomposition} to the graph $G[H]$ with edge costs $c$ and diameter bound $D_{\ell +1}$ to obtain a directed low-diameter decomposition $\cF$, and sample $F\in \cF$ uniformly at random.
By our definition of $\prec_{\ell + 1}$ and the fixed topological order on the connected components of $G[H] - F$, the event $B_{\ell}$ happens only if $e\in F$.
By \Cref{thm:directed_low-diameter_decomposition}, this happens with probability at most $\frac{c(e)}{D_{\ell + 1}} \cdot O(\log n \log \log n)$, and therefore $\bP\big[B_{\ell} \mid \level(e) \geq \ell \big] \leq \frac{c(e)}{D_{\ell + 1}} \cdot O(\log n \log \log n)$.
Using that $B_{\ell}$ can only happen if $\level(e) \geq \ell$, we conclude that for all $\ell \leq L-2$, 
\begin{equation}
\bP\big[B_{\ell}\big] \ \leq\ \bP\big[B_{\ell} \mid \level(e) \geq \ell \big] 
\ \leq\ \frac{c(e)}{D_{\ell + 1}} \cdot O(\log n \log \log n).
\end{equation}
Since $\frac{D_{\ell}}{D_{\ell + 1}} = \frac{D_{\ell}}{\lfloor \frac{D_{\ell}}{2} \rfloor} \leq 4$ for all $\ell \in [L-2]$, we obtain
\[
\bE\big[ \tilde{c}(e) \big] \ \leq\ c(e) + 1 + \sum_{\ell=1}^{L-2} c(e) \cdot O(\log n \log \log n)\ \le\ 1 + L \cdot O(\log n \log \log n) \cdot c(e).
\qedhere
\]
\end{proof}

Applying \Cref{lem:expected_increase_edge_length} to each of the $kn$ edges contributing to the $k$-hop cost of a fixed optimum solution to the given Hop-ATSP instance~$\cI$, linearity of expectation implies
\[
\bE \big[ \OPT(\cJ) \big]\  \leq \ L \cdot O(\log n \cdot \log\log n) \cdot \OPT(\cI)+ kn,
\]
where $n = |V|$ is the number of vertices.
Because the instance $\cI$ is well-scaled, we have $\OPT(\cI) \geq n^2$, which implies \eqref{eq:expected_opt_increase_hierarchy}.
This completes the proof of \Cref{thm:hierarchically_ordered}.

Hence, in the following, we will focus on hierarchically ordered instances.
In this context, we will use the following terminology:

\begin{definition}[Left/right]\label{def:left_right}
We say that a vertex $a$ is left of a vertex $b$ if $a\prec b$, and $a$ is right of $b$ if $b\prec a$.
A vertex $a$ is left/right of a set $B\subseteq V$ if $a$ is left/right of every element of $B$.
If $a$ is left or right of $B$, we also write $a \prec B$ or $B\prec a$, respectively.
Moreover, a vertex set $A$ is left/right of a vertex set $B$ if every element of $A$ is left/right of every element of $B$.
If $A$ is left or right of $B$, we also write $A \prec B$ or $B\prec A$, respectively.
\end{definition}

\section{Reducing to a Covering Problem}\label{sec:reducing_to_covering}

In this section we provide a reduction from well-scaled instances of Hop-ATSP to a covering problem (\Cref{problem:covering}).
We first introduce the covering problem (\Cref{sec:our_covering_problem}) and state the overall result of the section (\Cref{thm:covering_reduction}).
Next, we show that every solution to the covering problem can be turned into a Hop-ATSP solution without significant increase in the objective value (\Cref{sec:covers_to_tours}).
The reverse direction is unfortunately not true in general, but we show that it is true for non-degenerate instances (\Cref{sec:useful_tools_covering,sec:tours_to_covers,sec:splitting,sec:nonsplitting,sec:complte_cover_reduction}) and we can reduce general instances to non-degenerate ones (\Cref{sec:degenerate}).

\subsection{Our Covering Problem}\label{sec:our_covering_problem}

Recall that a path is called monotone if it contains only forward edges.
Moreover, a \emph{path-level pair} is a pair $(P,\ell)$ where $P$ is a monotone path,  $\ell\in [L]$, and $V(P) \subseteq H$ for some $H\in \cH_{\ell}$.
Then the covering problem to which we reduce is defined as follows:

\CoveringProblem*

If a pair $(P,j)$ with $j<\ell$ satisfies \ref{item:responsibility} for a set $H\in \cH_{\ell}$, we also say that the pair \emph{takes responsibility} for the set $H$.
We remark that, technically speaking, this is not well-defined in case the same set $H$ appears in multiple partitions $\cH_{\ell}$ for different levels $\ell$.
To guarantee uniqueness, one could extend this definition by representing elements of $\cH_{\ell}$ explicitly as pairs $(H,\ell)$.
However, to avoid cumbersome notation, we will assume that $H \subseteq V $ implicitly carries its level information (which will always be clear from the context).

The goal of this section is to provide a reduction from Hop-ATSP to \Cref{problem:covering}.
Let $\gamma > 0$ be a constant such that $L \leq \gamma \cdot \log_2 n$ for every instance $\cI =(V, L, \cH, D, \prec, c, k)$ resulting from \Cref{thm:hierarchically_ordered}.
We say that a hierarchically ordered instance $\cI =(V, L, \cH, D, \prec, c, k)$ of Hop-ATSP has \emph{low depth} if  $L \leq \gamma \cdot \log_2 n$.
Then the reduction we provide is the following:

\begin{restatable}{theorem}{CoveringReduction}\label{thm:covering_reduction}
Let $\alpha : \bN \to \bR$ and $t: \bN \to \bN$ be monotonically increasing.
	Suppose we have an algorithm that computes for every instance $\cI =(V, L, \cH, D, \prec, c, k)$ of Path Covering with low depth, in time $t(n)$ a solution $\cP$
 with $w(\cP) \leq \alpha(n) \cdot \OPT(\cI)$, where $n$ denotes the number of vertices in $\cI$.
 
	Then there is a randomized algorithm that computes for every well-scaled instance $\cJ=(V,c,k)$ of Hop-ATSP in time $O(t(n)) + \poly(n)$ a tour $T$ such that in expectation $c^{(k)}(T)\leq O(\log^6 n) \cdot \alpha(n) \cdot \OPT(\cJ)$, where $n$ denotes the number of vertices in $\cJ$ and $\poly(n)$ denotes some fixed polynomial.
\end{restatable}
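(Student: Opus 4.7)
The plan is to establish \Cref{thm:covering_reduction} by chaining together two reductions: the hierarchical reduction of \Cref{thm:hierarchically_ordered}, followed by a two-directional reduction between hierarchically ordered Hop-ATSP and \Cref{problem:covering}. First, given a well-scaled instance $\cJ=(V,c,k)$ with $n=|V|$, I apply \Cref{thm:hierarchically_ordered} to obtain, in polynomial time, a hierarchically ordered instance $\cI=(V,L,\cH,D,\prec,\tilde c,k)$ with $\tilde c\ge c$, $L=O(\log n)$, and $\bE[\OPT(\cI)]\le L\cdot O(\log n\log\log n)\cdot\OPT(\cJ)$. By the choice of the constant $\gamma$, $\cI$ is low-depth, so the hypothesized approximation algorithm for \Cref{problem:covering} is applicable to it. Since $\tilde c\ge c$, any tour $T$ satisfies $c^{(k)}(T)\le \tilde c^{(k)}(T)$, so it suffices to approximate Hop-ATSP on $\cI$ with respect to $\tilde c$.

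The reduction between hierarchically ordered Hop-ATSP and \Cref{problem:covering} will have two directions, following \Cref{sec:overview_covering}. The \emph{easy} direction is that any feasible cover $\cP$ for $\cI$ can be turned in polynomial time into a tour $T$ on $V$ with $\tilde c^{(k)}(T)\le O(1)\cdot w(\cP)$: concatenating the monotone paths in a carefully chosen order introduces, at each merge of paths with levels at most $\ell$, at most $k^2$ new $k$-hop edges of cost at most $D_\ell$, and the additive $k^2\cdot D_\ell$ term built into $w(P,\ell)$ absorbs this charge. The second feasibility condition of \Cref{problem:covering} ensures that vertices of a set $H\in\cH_\ell$ that are not themselves contained in some lower-level pair are nevertheless visited at controlled extra cost (\Cref{fig:example_covering_needs_complicated_condition} shows why that condition is necessary). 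The \emph{hard} direction is to show that $\OPT(\cI_{\mathrm{PC}})\le O(L^3)\cdot\OPT(\cI)$, obtained by transforming an optimal Hop-ATSP tour $T^*$ into a feasible cover: at each level $\ell$, sort length-$k$ windows to bound the number of backward edges, then either split at a level-$\ell$ backward edge (charging the resulting $k^2\cdot D_\ell$ against the $k$-window around the split, which must contain sufficiently many large-cost edges) or evict obstructing vertices into a remainder set that is re-inserted at the bottom of the recursion. The recursion is arranged so that each local modification is charged against a \emph{disjoint} portion of $c^{(k)}(T^*)$, yielding an $O(L^3)$ bound rather than the $C^L=n^{\Theta(1)}$ bound a naive recursion would produce.

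Putting everything together, I run the assumed algorithm on $\cI$ to obtain a cover $\cP$ with $w(\cP)\le\alpha(n)\cdot\OPT(\cI_{\mathrm{PC}})\le O\bigl(L^3\cdot\alpha(n)\bigr)\cdot\OPT(\cI)$, and then apply the easy direction to get a tour $T$ with $\tilde c^{(k)}(T)\le O\bigl(L^3\cdot\alpha(n)\bigr)\cdot\OPT(\cI)$. Taking expectations and plugging in the bound from \Cref{thm:hierarchically_ordered} yields
\[
\bE\bigl[c^{(k)}(T)\bigr]\ \le\ \bE\bigl[\tilde c^{(k)}(T)\bigr]\ \le\ O\bigl(L^4\cdot\log n\cdot\log\log n\bigr)\cdot\alpha(n)\cdot\OPT(\cJ)\ =\ O(\log^6 n)\cdot\alpha(n)\cdot\OPT(\cJ),
\]
since $L=O(\log n)$ and $\log\log n\le\log n$. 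The only non-polynomial part of the running time is the single call to the covering oracle, giving $O(t(n))+\poly(n)$ in total. I expect the main obstacle to be the hard direction of the covering reduction: keeping the charging scheme disjoint across the $L$ levels is delicate, and the definitions of $w(P,\ell)$ and of the two covering conditions in \Cref{problem:covering} must be chosen jointly so that both the easy and the hard direction go through with only a constant-factor loss per level.
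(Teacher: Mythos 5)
Your overall architecture matches the paper's: apply \Cref{thm:hierarchically_ordered}, then reduce in both directions between hierarchically ordered Hop-ATSP and \Cref{problem:covering}, and close the loop with the oracle. The exponent bookkeeping ($L^3$ from the tour-to-cover direction, an additional $L\cdot O(\log n\log\log n)$ from the hierarchical sampling, yielding $O(\log^6 n)$) is also correct.

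However, there is a genuine gap in your \emph{hard} direction. You claim that for every hierarchically ordered instance $\cI$ one has $\OPT(\cI_{\mathrm{PC}})\le O(L^3)\cdot\OPT(\cI)$, but this is false in general. Any feasible Path Covering solution must contain at least one pair $(P,1)$ (condition~\ref{item:overview_extra_covering_condition} at $\ell=2$ forces a pair of level~$<2$), and hence $\OPT(\cI_{\mathrm{PC}})\ge k^2 D_1$. On the other hand, if the instance is \emph{degenerate} --- say $V=H\cup\{v\}$ with $H\in\cH_2$ and $D_2=0$ --- then the optimal tour has $c^{(k)}$-cost only $O(kD_1)$, so the ratio $\OPT(\cI_{\mathrm{PC}})/\OPT(\cI)$ can be as large as $\Omega(k)$, which is not bounded by $\poly(\log n)$. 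The sampled instance $\cI$ produced by \Cref{thm:hierarchically_ordered} has no reason to be non-degenerate, so your claimed inequality does not hold for it. The paper resolves this by first isolating the degenerate case (\Cref{lem:reduction_to_non_degenerate}): it recursively strips off the large set $H^\ast\in\cH_2$ witnessing degeneracy, reinserts the few evicted vertices at cost $O(kD_1)$ each (bounded against $\OPT$), and only invokes the tour-to-cover transformation (\Cref{lem:tours_to_covers}), together with the lower bound $c^{(k)}(T)\ge 2^{-8}k^2D_1$ of \Cref{lem:first_level_fixed_cost_lb}, on the resulting non-degenerate instance. This adds only an additive $O(L)$ to the approximation factor and so does not affect your final exponent, but without this step your chain of inequalities is broken at the point where you bound $w(\cP)$ in terms of $\OPT(\cI)$. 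You should state and prove the reduction to non-degenerate instances (or an equivalent patch) before asserting $\OPT(\cI_{\mathrm{PC}})\le O(L^3)\cdot\OPT(\cI)$.
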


It will sometimes be convenient to view \Cref{problem:covering} as an instance of Set Cover.
For a hierarchically ordered instance $\cI =(V, L, \cH, D, \prec, c, k)$ of Hop-ATSP, we define the \emph{corresponding Set Cover instance} $(\cU, \cS, w)$ by defining
\begin{itemize}
\item $\cU \coloneqq V\times [L]$,
\item $\cS$ to be the set of all path-level pairs, 
\item $w(P,\ell) \coloneqq c^{(k)}(P) + k^2 \cdot D_{\ell}$ for all $(P,\ell)\in S$,
\end{itemize}
and saying that a path-level pair $(P,j)$ covers an element $(v,\ell)$ if
\begin{itemize}
\item $v\in V(P)$ and $j \leq \ell$, or
\item $(P,j)$ takes responsibility for $H_{\ell+1}(v)$, i.e, we have $|V(P)\cap H_{\ell+1}(v)| \geq k$ and $ j \leq \ell < L$.
\end{itemize}

Then our Path Covering problem is indeed equivalent to the corresponding Set Cover problem, as the following lemma shows:

\begin{lemma}\label{eq:equivalence_to_set_cover}
Let $\cI =(V, L, \cH, D, \prec, c, k)$ be a hierarchically ordered instance of Hop-ATSP and $(\cU, \cS, w)$ the corresponding Set Cover instance.
Then $\cP \subseteq \cS$ is a feasible solution to the Path Covering instance $\cI$ if and only if every element of $\cU$ is covered by some pair $P\in \cP$.
\end{lemma}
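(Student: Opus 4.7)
The plan is a direct book-keeping argument: the Path Covering conditions~(a) and~(b) have been tailored so that, taken together, they are equivalent to the covering of every element of $\cU = V\times[L]$ in the associated Set Cover instance. The only subtle point, and really the only thing to watch, is an index shift --- Path Covering condition~(b) is phrased at a level $\ell\ge 2$ and a set in $\cH_\ell$, whereas the Set Cover covering rule is phrased in terms of $H_{\ell+1}(v)$ and a comparison $j\le\ell<L$. My plan is simply to match the cases on both sides carefully and check that nothing else is needed.

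For the forward direction, I would assume $\cP$ satisfies~(a) and~(b), fix $(v,\ell)\in V\times[L]$, and split on whether $\ell=L$. If $\ell=L$, condition~(a) produces some $(P,j)\in\cP$ with $v\in V(P)$; since $j\le L$, this covers $(v,L)$ via the first option of the Set Cover rule. If $\ell<L$, I would apply~(b) at level $\ell+1$ to the set $H\coloneqq H_{\ell+1}(v)\in\cH_{\ell+1}$: sub-case~(i) yields a pair $(P,j)\in\cP$ with $v\in V(P)$ and $j<\ell+1$, i.e.\ $j\le\ell$ and $v\in V(P)$, which covers $(v,\ell)$ via the first option; sub-case~(ii) yields $(P,j)\in\cP$ with $j\le\ell$ and $|V(P)\cap H_{\ell+1}(v)|\ge k$, which covers $(v,\ell)$ via the second option (available because $\ell<L$).

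For the reverse direction, I would assume every element of $\cU$ is covered. Condition~(a) follows by inspecting the cover of $(v,L)$ for each $v\in V$: since the second option of the Set Cover rule requires $\ell<L$, only the first option can apply, giving some $(P,j)\in\cP$ with $v\in V(P)$. To derive condition~(b), I would fix $\ell\in\{2,\dots,L\}$ and $H\in\cH_\ell$ and inspect the covers of the elements $(v,\ell-1)$ for $v\in H$. If for some $v\in H$ this cover uses the second option, the witnessing pair $(P,j)$ satisfies $j<\ell$ and $|V(P)\cap H_\ell(v)|\ge k$; using $H_\ell(v)=H$, this is exactly sub-case~(ii). Otherwise every $(v,\ell-1)$ with $v\in H$ is covered via the first option, furnishing for each $v\in H$ a pair $(P,j)\in\cP$ with $v\in V(P)$ and $j<\ell$, which is exactly sub-case~(i). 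No individual step is technically difficult; the "hard part" is only making sure the index shift $\ell\leftrightarrow\ell-1$ and the boundary case $\ell=L$ are handled consistently on both sides.
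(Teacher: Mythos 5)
Your proposal is correct and takes essentially the same route as the paper's own proof: both directions are handled by the same case split on $\ell=L$ versus $\ell<L$, and the reverse direction derives condition~(b) by inspecting covers of the elements $(v,\ell-1)$ for $v\in H$, exactly as in the paper (the paper phrases it as ``if~(i) fails, the cover of $(v,\ell-1)$ must use the second option, hence~(ii)''; your ``some cover uses second option $\Rightarrow$~(ii), else all use first option $\Rightarrow$~(i)'' is the same argument restated). The index-shift bookkeeping you flagged as the one potential pitfall is handled correctly.
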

\begin{proof}
First consider a set $\cP \subseteq \cS$ covering each element of $\cU$.
We show that $\cP$ is a feasible solution to the Path Covering problem.
To show \ref{item:pseudo_cover_condition}, let $v\in V$ and let $(P,j)$ be a pair covering the element $(v,L)\in \cU$.
Then by the definition of the Set Cover instance $(\cU, \cS, w)$, we must have $v\in V(P)$.
To show \ref{item:overview_extra_covering_condition}, let $\ell \in \{2,\dots, L\}$ and $H\in \cH_{\ell}$.
Suppose \ref{item:direct_covering} is not satisfied, i.e., there exists a vertex $v\in H$ such that $j \geq \ell$ for every pair $(P,j)\in \cP$ with $v\in V(P)$.
Let $(P,j)\in \cP$ be the pair covering the element $(v,\ell-1)\in \cU$.
Then $(P,j)$ takes responsibility for $H$.

Now consider a feasible solution $\cP$ to the Path Covering problem.
We show that $\cP$ covers $\cU$.
To this end, let $(v,\ell) \in \cU$.
 If $\ell = L$, then any pair $(P,j)\in \cP$ with $v\in V(P)$ covers $(v,\ell)$ and such a pair $(P,j)$ exists by \ref{item:pseudo_cover_condition}.
Otherwise, since \ref{item:responsibility} applies to $H_{\ell +1}(v)$, there must be either a pair $(P,j)\in \cP$ with $j\leq \ell$ and $v\in V(P)$, or a pair $(P,j)\in \cP$ that takes responsibility for $H_{\ell +1}(v)$. 
In both cases the pair $(P,j)$ covers the element $(v,\ell)$ of $\cU$.
\end{proof}

\subsection{Turning Covering Solutions into Tours}\label{sec:covers_to_tours}

In order to prove \Cref{thm:covering_reduction}, we will first apply \Cref{thm:hierarchically_ordered} to obtain a hierarchically ordered instance.
Having a hierarchically ordered instance $\cI=(V, L, \cH, D, \prec, c, k)$, we show that any solution $\cP$ to our covering problem can be turned into a tour $T$ of cost $c^{(k)}(T) \leq 2 w(\cP)$.
The following lemma states a key argument for this first part of our proof.
Condition~\ref{item:overview_extra_covering_condition} in our definition of the Path Covering problem plays a crucial role here.
(Recall the example from \Cref{fig:example_covering_needs_complicated_condition} to see why this is necessary.)

\begin{lemma}
\label{lem:walk_insertion}
Let $W_{\rm{main}} = w_1, \dots, w_{t}$ be a walk on $V$ and $W_{\rm{sub}} = w_{s+1}, \dots, w_{s+k}$ a subwalk of $W_{\rm{main}}$. 
Suppose there is a set $H \in \cH_{\ell}$ such that $V(W_{\rm{sub}}) \subseteq H$ and let $Q=q_1, \dots, q_m$ be another walk with vertex set $V(Q) \subseteq H$.
Define the walk 
\begin{equation*}
W^\prime_{\rm{main}} \coloneqq w_1, \dots, w_{s}, q_1, \dots, q_m, w_{s+1}, \dots, w_{s+k}, \dots, w_t
\end{equation*}
that results from $W_{\rm{main}}$ by inserting $Q$ right before the start of $W_{\rm{sub}}$.
Then we have
\begin{equation*}
c^{(k)}(W^\prime_{\rm{main}}) \leq c^{(k)}(W_{\rm{main}}) + c^{(k)}(Q) + 2k^2D_{\ell}.
\end{equation*} 
\end{lemma}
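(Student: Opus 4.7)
The plan is to classify the $k$-hops of $W_{\rm{main}}'$ according to where their two endpoints lie. Writing $u_1,\dots,u_{t+m}$ for the vertex sequence of $W_{\rm{main}}'$ (so that $u_i = w_i$ for $i\leq s$, $u_{s+b}=q_b$ for $b\in[m]$, and $u_{s+m+j}=w_{s+j}$ for $j\geq 1$), each $k$-hop $(u_i,u_j)$ with $1\leq j-i\leq k$ falls into one of three classes: (i) both endpoints outside $Q$; (ii) both endpoints in $Q$; or (iii) a \emph{cross-edge} with exactly one endpoint in $Q$. The class~(i) hops are exactly those $k$-hops of $W_{\rm{main}}$ whose distance in $W_{\rm{main}}'$ is still at most $k$ --- i.e.\ all non-crossing $k$-hops of $W_{\rm{main}}$, together with those crossing hops $(w_a,w_b)$ (with $a\leq s<b$, $b-a\leq k$) that additionally satisfy $b-a+m\leq k$. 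Letting $M$ denote the total cost of the \emph{missed} $k$-hops of $W_{\rm{main}}$ --- the crossing hops with $b-a+m>k$ --- class~(i) contributes exactly $c^{(k)}(W_{\rm{main}})-M$ and class~(ii) contributes exactly $c^{(k)}(Q)$. Thus the lemma reduces to showing that the total cost of cross-edges is at most $M+2k^2 D_\ell$.

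I split cross-edges into \emph{type~A} hops $(w_a,q_b)$ with $a\leq s$ and \emph{type~B} hops $(q_b,w_{s+j'})$. For a type~B hop the distance constraint $(s+m+j')-(s+b)\leq k$ forces $j'\leq k$, so $w_{s+j'}\in W_{\rm{sub}}\subseteq H$; together with $q_b\in H$ this gives edge cost at most $D_\ell$, and a direct count shows there are at most $k(k+1)/2\leq k^2$ such pairs, so their total contribution is at most $k^2 D_\ell$. For type~A the endpoint $w_a$ need not lie in $H$, so I will apply the triangle inequality
\[
c(w_a,q_b)\ \leq\ c(w_a,w_{s+c}) + c(w_{s+c},q_b)
\]
with a carefully chosen $c\in[k]$: since then $w_{s+c}\in W_{\rm{sub}}\subseteq H$, we have $c(w_{s+c},q_b)\leq D_\ell$, so the task reduces to controlling the sum of the costs $c(w_a,w_{s+c})$ over all type~A pairs.

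The main obstacle, and the crux of the argument, is choosing the map $(a,b)\mapsto c$ so that (1) the edge $(w_a,w_{s+c})$ is always a missed $k$-hop of $W_{\rm{main}}$ (hence its cost contributes to $M$ and is not already double-counted as a surviving class~(i) hop), and (2) distinct type~A pairs are sent to distinct missed hops. The naive choice $c\coloneqq b$ collides with surviving crossing hops when $a$ is "large", so I split on $a$: for \emph{small} $a$ (i.e.\ $a\leq s+m-k$, where every crossing hop out of $w_a$ is missed) I set $c\coloneqq b$; for \emph{large} $a$ (i.e.\ $a\geq s+m-k+1$) I set $c\coloneqq (k-m-s+a)+b$, which shifts $b$ into the missed range $\{k-m-s+a+1,\dots,k-s+a\}$ of size $m$. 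A short counting argument verifies that in both cases $c\in\{1,\dots,k\}$, that $(w_a,w_{s+c})$ is indeed a missed crossing hop, and that $(a,b)\mapsto(a,c)$ is a bijection between type~A pairs and missed crossing hops. Consequently $\sum_{\text{type A}} c(w_a,w_{s+c})=M$ and $|\text{type A}|\leq k^2$, so the type~A contribution is at most $M+k^2 D_\ell$. Combining with the type~B bound yields a total cross-edge cost of at most $M+2k^2 D_\ell$, completing the proof.
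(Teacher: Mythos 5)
Your argument is correct, and it takes a meaningfully different route from the paper's. The paper writes $c^{(k)}(W)$ as a sum of per-vertex outgoing costs $C(W,v)$ and observes that $C(W'_{\rm main},w_i)$ (resp.\ $C(W'_{\rm main},q_j)$) differs from $C(W_{\rm main},w_i)$ (resp.\ $C(Q,q_j)$) only for the $k$ vertices $w_{s-k+1},\dots,w_s$ (resp.\ $q_{m-k+1},\dots,q_m$); it then bounds each such per-vertex difference by $kD_\ell$ by comparing, for every hop length $j$, the $j$-th successor of $w_i$ before and after insertion --- whenever they differ, both lie in $H$, so the triangle inequality gives an extra $D_\ell$ per $j$. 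You instead decompose $c^{(k)}(W'_{\rm main})$ hop-by-hop into classes (i)/(ii)/cross, make the missed-hop cost $M$ explicit, and charge each type~A cross-edge against a distinct \emph{missed} hop of $W_{\rm main}$ via a bespoke bijection, so that the type~A total is exactly $M$ plus at most $k^2 D_\ell$. Both arguments rest on the same core fact --- every new or displaced hop can be routed through a vertex of $W_{\rm sub}\cup V(Q)\subseteq H$ at extra cost $D_\ell$ --- and both yield $2k^2 D_\ell$. The paper's index-aligned pairing is a bit leaner, avoiding your small-$a$/large-$a$ case split and the verification that the image lands only on missed hops; your version instead makes the ``savings'' $M$ from vanishing crossing hops explicit, which is a clean structural way to see why the inserted vertices do not cause cost to accumulate. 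When you fill in the counting argument, note that you (like the paper) implicitly use $s+k\le t$, i.e.\ that $w_{s+c}$ exists for every $c\in\{1,\dots,k\}$; this is exactly what the hypothesis that $W_{\rm sub}=w_{s+1},\dots,w_{s+k}$ is a subwalk of $W_{\rm main}$ provides, so you should record it.
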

\begin{proof}
For a walk $W$ with vertices $u_1,\dots, u_r$ we write $C(W,u_i) \coloneqq \sum_{\Delta =1}^k c(u_i, u_{i+\Delta})$ to denote the total cost of the outgoing edges of $v_i$ that contribute to the $k$-hop cost of the walk $W$. (Here, $c(u_i, u_{i+\Delta})$ is defined as zero in case $i+\Delta > r$.)
Then $c^{(k)}(W) = \sum_{i=1}^r  C(W,u_i)$.

To prove the lemma, we therefore need to prove an upper bound on
\begin{align*}
& c^{(k)}(W^\prime_{\rm{main}}) - \Big(c^{(k)}(W_{\rm{main}}) + c^{(k)}(Q)\Big) \\
& = \sum_{i=1}^t \Big( C(W^\prime_{\rm main},w_i) -C(W_{\rm main},w_i)\Big) + \sum_{j=1}^m \Big( C(W^\prime_{\rm main},q_j) - C(Q,q_j) \Big).
\end{align*}
We first observe that $C(W^\prime_{\rm main},w_i) - C(W_{\rm main},w_i) = 0$ for all  $i\in \{1,\dots, s-k\}\cup \{s+1,\dots, t\}$ because for each such vertex $w_i$, the $k$ vertices following on $w_i$ in the walk did not change when inserting $Q$.
The same argument shows that $C(W^\prime_{\rm main},q_j) - C(Q,q_j) = 0$ for all $j\in \{1, \dots, m - k\}$.
To complete the proof, we will show that each of the at most $2k$ remaining summands is at most $k\cdot D_{\ell}$.
\addparskip

Let $j\in \{m-k+1, \dots, m\}$. To prove that $C(W^\prime_{\rm main},q_j) - C(Q,q_j) \leq k\cdot D_{\ell}$, we observe that the $k$ vertices following on $q_j$ in the walk $W^\prime_{\rm main}$ are all contained in the set $H\in \cH_{\ell}$ that contains $q_j$.
Thus, each of the $k$ edges contributing to $C(W^\prime_{\rm main},q_j)$ has cost at most $D_{\ell}$, implying $C(W^\prime_{\rm main},q_j) \leq k\cdot D_{\ell}$.
Because $C(Q,q_j) \geq 0$, we can conclude that indeed $C(W^\prime_{\rm main},q_j) - C(Q,q_j) \leq k\cdot D_{\ell}$.
\addparskip

Now, let $i\in \{s-k+1,\dots, s\}$.
We will prove $C(W^\prime_{\rm main},w_{i}) -C(W_{\rm main},w_i) \leq k\cdot D_{\ell}$.
Let $v_1,\dots, v_k$ be the $k$ vertices following on $w_i$ in the walk $W_{\rm main}$ (i.e., before insertion of $Q$) and let $v^\prime_1,\dots, v^\prime_k$ be the $k$ vertices following on $w_i$ in the walk $W^\prime_{\rm main}$ (i.e., after insertion of $Q$).
It suffices to show $c(w_i, v_j^\prime) - c(w_i, v_j) \leq D_{\ell}$ for all $j\in [k]$.
First, we observe
\[
 v^{\prime}_1 = v_1 = w_{i+1}, \quad  v^{\prime}_2 = v_2 = w_{i+2}, \quad  \dots \quad v^{\prime}_{s-i} = v_{s-i} =  w_s
\]
and thus  $c(w_i, v_j^\prime) - c(w_i, v_j) = 0 \leq D_{\ell}$ for all $j\in \{1,\dots, s-i\}$.

Next, we observe that the vertices $v_{s-i+1}, \dots, v_k$ are all contained in $\{w_{s+1},\dots,w_{s+k} \} \subseteq H$ and the vertices $v_{s-i+1}^\prime, \dots, v_k^\prime$ are all contained in $\{q_1,\dots, q_m, w_{s+1},\dots,w_{s+k} \} \subseteq H$.
Thus, for all $j \in \{s-i+1, \dots k\}$, we have $v_j, v^\prime_j \in H\in \cH_{\ell}$, implying 
\[
c(w_i, v_j^\prime) \leq c(w_i, v_j)  + c(v_j, v^\prime_j) \leq  c(w_i, v_j)  + D_{\ell},
\]
and therefore $c(w_i, v_j^\prime) - c(w_i, v_j) \leq D_{\ell}$.
\end{proof}

Using \Cref{lem:walk_insertion}, we can now show how we can construct a sufficiently cheap tour from a given covering solution:
\begin{lemma}
\label{lem:tour_from_set_cover}
Given a Path Covering solution $\cP$, we can construct a tour $T$ on $V$ with $c^{(k)} (T) \leq 2w(\cP)$ in polynomial time (in $|V|$ and $|\cP|$).
\end{lemma}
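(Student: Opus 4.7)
My plan is to construct the tour $T$ by building a walk $W$ incrementally, processing the pairs of $\cP$ in order of non-decreasing level and closing $W$ by a single return edge at the end. As initialization, I concatenate the monotone paths of all level-$1$ pairs $(P,1)\in\cP$ in arbitrary order into a walk $W_0$. Since every pair of vertices lies in $V=H\in\cH_1$, each connecting edge has cost at most $D_1$, and by the same argument as in the $L=2$ outline in \Cref{sec:overview_covering} each concatenation contributes at most $k^2 D_1$ to the $k$-hop cost. Hence $c^{(k)}(W_0) \le \sum_{(P,1)\in\cP} c^{(k)}(P) + |\cP_1|\cdot k^2 D_1 \le \sum_{(P,1)\in\cP} w(P,1)$.

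For each level $\ell=2,\dots,L$ and each pair $(P,\ell)\in\cP$ with $V(P)\subseteq H\in\cH_\ell$, I invoke the covering condition~\ref{item:overview_extra_covering_condition}. If case~\ref{item:direct_covering} applies to $H$, then every vertex of $H$ is covered by some pair of level $<\ell$; by induction on $\ell$ (using that level-$1$ pairs are never skipped) every such vertex is already in $W$, so $V(P)\subseteq H$ is present in $W$ and I skip $(P,\ell)$, its weight being paid for but unused. If case~\ref{item:responsibility} applies, there is a pair $(P^*,j^*)\in\cP$ with $j^*<\ell$ and $|V(P^*)\cap H|\ge k$; since $P^*$ is monotone and $H$ is an interval of $\prec$, the vertices of $V(P^*)\cap H$ lie consecutively along $P^*$. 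Assuming these $k$ vertices still form a contiguous subwalk of the current $W$, I insert $P$ immediately before them via \Cref{lem:walk_insertion} applied with the set $H\in\cH_\ell$, incurring additional $k$-hop cost at most $c^{(k)}(P) + 2k^2 D_\ell \le 2w(P,\ell)$. After all pairs have been processed, condition~\ref{item:pseudo_cover_condition} together with the induction above guarantees $V(W)=V$. Closing $W$ into a tour by a single return edge adds at most $k^2 D_1$ to the $k$-hop cost, which is absorbed into the level-$1$ budget because $\cP_1\neq\emptyset$ for any non-trivial instance (forced by condition~\ref{item:overview_extra_covering_condition} at $\ell=2$). Summing all contributions gives $c^{(k)}(T)\le 2w(\cP)$, and every step is clearly implementable in time polynomial in $|V|$ and $|\cP|$.

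The principal technical difficulty is maintaining, at each insertion step, the invariant that the $k$ anchor vertices of $V(P^*)\cap H$ still form a contiguous subwalk of the current $W$. I plan to resolve this by fixing, for each $H$ to which case~\ref{item:responsibility} applies, a canonical anchor (for instance the $k$ leftmost vertices of $V(P^*)\cap H$ along $P^*$) and by maintaining the invariant that every insertion is placed strictly before the anchor selected for its set, so that no selected anchor is ever split by a later insertion. A case analysis over the nested hierarchical structure should show this invariant is preserved: while processing level $\ell$, every prior insertion either lies outside $H$ or lies strictly before $H$'s anchor region, and the level-$\ell$ insertion itself is placed immediately before $H$'s anchor, leaving the anchor intact both for the remaining level-$\ell$ pairs on the same $H$ and for all anchors at deeper levels still to be chosen.
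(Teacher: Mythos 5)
The approach you take is genuinely different from the paper's. The paper first reduces to an inclusion-wise minimal covering $\cP$, then uses condition~\ref{item:responsibility} to assign each pair $(P,\ell)$ with $\ell>1$ a parent, forming a branching (an arborescence per root). It then processes pairs in \emph{decreasing} order of level, inserting the current deepest leaf into its parent walk; since every remaining child $(\tilde Q,\tilde\ell)$ of that parent has $\tilde\ell\le\ell$, the set $\tilde H$ is either disjoint from $H$ or a superset of it, and in both cases the parent's ``$k$ consecutive vertices from $\tilde H$'' invariant is trivially preserved. You process in \emph{increasing} order of level with explicitly chosen anchors and a ``strictly before the anchor'' insertion rule, and the invariant-preservation argument in your setting is more delicate because future anchors live at \emph{deeper} levels, so they are not automatically supersets.

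There is, however, a concrete gap that you do not address. You do not pass to an inclusion-wise minimal $\cP$, and you allow skipping pairs $(P,\ell)$ for which case~\ref{item:direct_covering} applies. But you then take as anchor the intersection $V(P^*)\cap H$ for \emph{some} pair $(P^*,j^*)$ satisfying case~\ref{item:responsibility}. Nothing prevents that pair $(P^*,j^*)$ from itself having been skipped at an earlier level: in that case its vertices are present in $W$ only via lower-level pairs that inserted them at scattered positions, and $V(P^*)\cap H$ is never a contiguous subwalk of $W$, so \Cref{lem:walk_insertion} cannot be applied and your algorithm has no legal insertion point. Your ``canonical anchor + insert strictly before it'' plan only protects an anchor that was contiguous at some point; it offers nothing when the anchor was never contiguous. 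This is exactly why the paper passes to a minimal $\cP$ at the very start: with minimality, case~\ref{item:direct_covering} never applies to the set of any pair with $\ell>1$, so no pair is skipped and every anchor lives inside a pair that was genuinely inserted as a contiguous block. If you add that one preprocessing step, your algorithm is sound, though the ``case analysis over the nested hierarchical structure'' you defer to still needs to be carried out (one must check, using that a prior insertion at level $\ell'<\ell$ is placed before the first vertex of $V(P^*)\cap H_Q$ for some $H_Q\supseteq H$ or $H_Q\cap H=\emptyset$, together with monotonicity of $P^*$ and the hierarchical structure, that the insertion point is never strictly between two vertices of a yet-to-be-used anchor). This is strictly more work than the paper's two-line argument enabled by the deepest-first order.
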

\begin{proof}
We may assume that $\cP$ is inclusionwise minimal.
For every pair $(P,\ell)\in \cP$ with $\ell > 1$, there is a unique set $H\in \cH_{\ell}$ containing the vertex set $V(P)$ of the path $P$.
Because $\cP$ is inclusionwise minimal, property~\ref{item:direct_covering} from \Cref{problem:covering} cannot apply to $H$. (Otherwise, $\cP\setminus \{ (P,\ell)\}$ would also be a feasible Path Covering solution.)
Therefore, property~\ref{item:responsibility} must apply, i.e., there exists a pair $(P',j)\in \cP$ that takes responsibility for $H$.
We then define ${\rm parent}(P,\ell) \coloneqq (P',j)$.
There might be several pairs taking responsibility for $(P,\ell)$, in which case we choose an arbitrary one as the parent of $(P,\ell)$.
This defines a branching $B$ on the node set $\cP$ with arcs $\big({\rm parent}(P,\ell), (P,\ell)\big)$ for all pairs $(P,\ell)\in \cP$ with $\ell > 1$.
Every connected component of this branching is an arborescence with a root $(P,1)\in \cP$.
\addparskip

Throughout our algorithm, we will maintain a branching with node set $\cW$, where initially $\cW=\cP$.
We will maintain the following invariants:
\begin{enumerate}
\item\label{item:merging_pair_invariant} Every node is a pair $(W,\ell)$, where $\ell\in[L]$ and $W$ is a walk completely contained in some set $H\in \cH_{\ell}$.
\item\label{item:merging_root_invariant} A pair $(W,\ell)\in \cW$ has an incoming arc if and only if $\ell > 1$.
\item\label{item:branching_arc_invariant} For every pair $(Q,\ell)\in \cW$ with parent $(W,j)$, we have $j < \ell$ and the walk $W$ visits at least $k$ vertices from $H$ consecutively, where $H$ is the unique set $H\in \cH_{\ell}$ containing the vertices of $Q$.
\item\label{item:invariant_visit_all_vertices} Every vertex $v\in V$ is contained in at least one walk $W$ with $(W,\ell)\in \cW$.
\end{enumerate}
We start with the branching $B$ defined above, which satisfies all these invariants.
As long as our branching has at least one arc, we consider a pair $(Q,\ell)$ with maximum level $\ell$.
Then $(Q,\ell)$ is a leaf of our branching and it has an incoming arc.
Let $(W,j)$ be the parent of $(Q,\ell)$.
By invariant~\ref{item:branching_arc_invariant}, we can insert $Q$ into $W$ by applying \Cref{lem:walk_insertion}, obtaining a walk $W'$.
We delete the leaf $(Q,\ell)$ and its incoming arc from our branching and replace the node $(W,j)$ by $(W',j)$, maintaining its incident arcs.
\addparskip

We now show that this operation maintains our invariants.
For invariants~\ref{item:merging_root_invariant} and~\ref{item:invariant_visit_all_vertices} this is obvious from the construction.
For invariant~\ref{item:merging_pair_invariant}, let $H_{j}\in \cH_j$ such that all vertices of $W$ are contained in $H_j$, and let $H_{\ell}\in \cH_{\ell}$ such that all vertices of $Q$ are contained in $H_{\ell}$.
Because $W$ contains vertices from $H_{\ell}$ and $j < \ell$, the hierarchical structure of our instance implies $H_{\ell} \subseteq H_j$.
In particular, all vertices of the new walk $W'$ are contained in~$H_{j}$.

Invariant~\ref{item:branching_arc_invariant} could be possibly violated only for outgoing arcs of the new node $(W',j)$.
Consider a child $(\tilde{Q}, \tilde{\ell})$ of $(W',j)$.
Then our choice of the leaf $(Q,\ell)$ implies $\tilde{\ell} \leq \ell$.
Therefore, the unique set $\tilde{H} \in \cH_{\tilde{\ell}}$ containing all vertices from $\tilde{Q}$ is either disjoint from the set $H\in \cH_{\ell}$ containing all vertices of $Q$, or it is a superset of $H$.
If $\tilde{H}$ is disjoint from $H$, any $k$ vertices from $\tilde{H}$ that were previously consecutive in $W$ are still consecutive in $W'$.
If $\tilde{H}$ is a superset of $H$, we observe that $W'$ still contains $k$ consecutive vertices from $H$, and thus also from $\tilde{H}$.
\addparskip

\Cref{lem:walk_insertion} implies that in every iteration $c^{(k)}(W') \leq c^{(k)}(W) + c^{(k)}(Q) + 2 k^2 \cdot D_{\ell}$,
and hence the following quantity is nonincreasing during the algorithm:
\[
\sum_{(W,\ell) \in \cW} \Big( c^{(k)}(W)  + 2k^2 \cdot D_{\ell} \Big)
\]
Initially, this value is bounded by $2 w(\cP)$ and hence this will remain the case throughout the algorithm.
\addparskip

After less than $|\cP|$ iterations, our branching has no edges anymore.
Then the branching has vertices $(W_1,1),(W_2,1),\dots,(W_r,1)$ for some $r\in \bN$, and we have
$
\sum_{i=1}^r \big( c^{(k)}(W_i)  + k^2 \cdot D_1 \big)\ \leq 2 \cdot w(\cP)
$.
We concatenate the walks $W_1,\dots, W_r$ to form a closed walk $T$.
By invariant~\ref{item:invariant_visit_all_vertices}, this is indeed a tour.
Because $c(x,y) \leq D_1$ for all $x,y\in V$, we have
\[
c^{(k)}(T) \ \leq\ \sum_{i=1}^r \Big( c^{(k)}(W_i)  +  k^2 \cdot D_1 \Big)\ \leq 2 \cdot w(\cP).\qedhere
\]
\end{proof}

\subsection{Degenerate Instances}\label{sec:degenerate}

Ideally, we would like to show that every tour $T$ can be transformed into a Covering Solution $\cP$ so that the weight $w(\cP)$ is not much larger than the $k$-hop cost $c^{(k)}(T)$.
Note however, that $\cP$ will contain at least one pair $(P,\ell)$ with level $\ell=1$ and we therefore have $w(\cP)\geq k^2D_1$.
In general the total cost $c^{(k)}(T)$ may be much smaller than $k^2D_1$.
To address this issue, we observe that it can only occur for particular instances, which we call \emph{degenerate}.
We will then provide a reduction from general instances of Path Covering to non-degenerate ones.

\begin{definition}[Degenerate instance]
We say that a hierarchically ordered instance $\cI =(V, L, \cH, D, \prec, \allowbreak c, k)$ of Hop-ATSP is \emph{degenerate} if there exists a set $H \in \cH_2$ with $|V \setminus H| \leq \frac{k}{2}$.
Otherwise, it is called \emph{non-degenerate}.
\end{definition}

We now show that for non-degenerate instances, we indeed have $c^{(k)}(T) = \Omega(k^2 D_1)$.

\begin{lemma}
\label{lem:first_level_fixed_cost_lb}
Let $\cI$ be a non-degenerate hierarchically ordered instance of Hop-ATSP.
Then  we have 
\[
c^{(k)}(T) \geq 2^{-8}\cdot k^2 D_1
\]
 for every tour $T$ on $V.$
\end{lemma}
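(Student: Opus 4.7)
By \Cref{lem:khop_skipping_vertices}, I can assume without loss of generality that $T$ is a Hamiltonian cycle visiting $v_1,\dots,v_n$. Non-degeneracy provides $\cH_2=\{H_1,\dots,H_m\}$ with $m\ge 2$ parts and $|V\setminus H_j|>k/2$ for every $j$, where we order the parts so that $H_1\prec\dots\prec H_m$. The plan is to split the parts into an ``early'' set $A=H_1\cup\dots\cup H_t$ and a ``late'' set $B=H_{t+1}\cup\dots\cup H_m$, choosing $t$ to be the smallest index with $|A|\ge k/4$. Non-degeneracy gives $|H_t|<n-k/2$, hence $|A|<k/4+|H_t|\le n-k/4$, so $|B|\ge k/4$; the same bound applied to $H_m$ forces $t\le m-1$. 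Since $A$ and $B$ consist of disjoint $\cH_2$-parts and $A\prec B$, every directed edge from a $B$-vertex to an $A$-vertex is a level-$1$ backward edge of cost exactly $D_1$.

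Let $N_{BA}$ count the pairs $(i,\Delta)\in[n]\times[k]$ with $v_i\in B$ and $v_{i+\Delta}\in A$ (cyclic indices); then $c^{(k)}(T)\ge N_{BA}\cdot D_1$, so it suffices to show $N_{BA}\ge 2^{-8}k^2$. The key structural fact I would establish is the identity $N_{BA}=N_{AB}$, where $N_{AB}$ is the symmetric count with roles swapped. Writing $X_i=\mathbbm{1}[v_i\in A]$, $Y_i=1-X_i$, and letting $P_A,P_B$ denote the same-side ordered pair counts in $E_k$, a direct reindexing using $X+Y\equiv 1$ yields $P_A-P_B=k(|A|-|B|)$; combined with $N_{BA}=k|A|-P_A$ and $N_{AB}=k|B|-P_B$ this gives $N_{BA}=N_{AB}$. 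Together with $P_A+P_B+N_{BA}+N_{AB}=nk$, this yields
\[
N_{BA}\ =\ \tfrac12\bigl(nk-P_A-P_B\bigr),
\]
so the task reduces to upper bounding $P_A+P_B$.

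The main obstacle is the combinatorial claim that $P_A+P_B$ is maximized (and hence $N_{BA}$ minimized) over all placements of $A$ and $B$ on the tour by the ``layered'' configuration in which $A$-positions and $B$-positions each form a single contiguous block. This is a standard rearrangement argument: moving two $A$-positions closer on the cycle can only increase the number of $A$-pairs at cyclic distance $\le k$, and similarly for $B$. For the layered tour a direct computation gives $N_{BA}=|A|(2k-|A|+1)/2$ whenever $|A|\le k\le|B|$ and $n>2k$, which is at least $7k^2/32$ once $|A|\ge k/4$; when $|A|,|B|\ge k$ the formula becomes $N_{BA}=k(k+1)/2\ge k^2/2$, and by symmetry the same holds with $A$ and $B$ swapped. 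The residual case $n\le 2k$ is easier: every pair of distinct positions then has cyclic tour distance $\le k$ in at least one direction, so $N_{BA}+N_{AB}\ge|A|\cdot|B|$, which combined with $N_{BA}=N_{AB}$ and the bounds $|A|,|B|\ge k/4$ yields $N_{BA}\ge|A||B|/2\ge 3k^2/32$. In every case $N_{BA}\ge 2^{-8}k^2$, completing the proof.
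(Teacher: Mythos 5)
Your argument takes a genuinely different route from the paper's, and much of it is correct: the construction of the split $A\prec B$ from $\cH_2$ with $|A|,|B|\ge k/4$ and all $B$-to-$A$ edges of cost $D_1$, the double-counting identity $N_{BA}=N_{AB}$ (both equal $k|A|-N_{AA}$), and the closed-form computations for the layered arrangement and the case $n\le 2k$ all check out and comfortably exceed $2^{-8}k^2$. For comparison, the paper's proof avoids any global extremal claim: after building an analogous partition it removes one tour edge, chops the resulting path into blocks of size between $k/8$ and $k/2$ via \Cref{lem:block_decomp}, and runs a local case analysis on whether each block contains at least $k/16$ vertices of $A$; since consecutive blocks have combined size at most $k$, every inter-block edge contributes to $c^{(k)}(T)$, and each of three cases directly exhibits $\Omega(k^2)$ cost-$D_1$ edges.

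The gap is precisely at the step you yourself flag as ``the main obstacle.'' You assert that for $n>2k$ the contiguous (layered) placement of the $|A|$ type-$A$ positions on the $n$-cycle maximizes $P_A+P_B$, i.e.\ minimizes $N_{BA}$, and you justify this only by ``a standard rearrangement argument: moving two $A$-positions closer on the cycle can only increase the number of $A$-pairs at cyclic distance $\le k$.'' This is an edge-isoperimetric statement for the circulant graph $C(n;\{1,\dots,k\})$ and is not routine: the sketched move is not a well-defined single step (bringing two $A$-positions together displaces $B$-positions, and one must track the effect on $P_B$ simultaneously with $P_A$), there is no argument that the only local optima are intervals, and the hypothesis $n>2k$ is doing essential work that your sketch does not use --- at $n=2k$ the claim is false (for $n=4$, $k=2$, $|A|=2$ the alternating arrangement $ABAB$ has $P_A+P_B=4$ while the layered $AABB$ has $P_A+P_B=2$), so the bare intuition ``compaction always helps'' fails at the boundary. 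As written this is an unproven extremal combinatorics claim whose difficulty is at least comparable to the lemma itself; you would need a careful compression/swap argument with a termination proof, or a citation to a specific isoperimetric theorem for powers of cycles, to close it.
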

\begin{proof}
First, we claim that there exists a partition of $V$ into sets $A$ and $B$ such that 
\begin{itemize}
\item $|A|\geq \frac{k}{4}$ and $|B| \geq \frac{k}{4}$, and
\item $c(y,x) = D_1$ for all $x\in A$ and $y\in B$.
\end{itemize}
To prove this claim, we distinguish two cases.
First, consider the case where we have $|H| \geq \frac{k}{4}$ for some $H \in \cH_2$.
We define 
\begin{align*}
V_{\rm left} \coloneqq&\  \{ v\in V: v\text{ is left of }H\} \\
V_{\rm right} \coloneqq&\ \{ v\in V: v\text{ is right of }H\} .
\end{align*}
(Recall the notion of left/right from \Cref{def:left_right}.)
Because $H\in \cH_2$, every edge from $V_{\rm right}$ to $H$ and every edge from $H$ to $V_{\rm left}$ is a backward edge of level~$1$ and has therefore cost $D_1$.
Because $V = V_{\rm left} \cup H \cup V_{\rm right}$ and the instance $\cI$ is non-degenerate, we have $|V_{\rm left}| \geq \frac{k}{4}$ or $|V_{\rm right}| \geq \frac{k}{4}$.
In the former case, we can choose $A \coloneqq V_{\rm left}$ and $B\coloneqq H \cup V_{\rm right}$, and in the latter case, we can choose $A \coloneqq V_{\rm left} \cup H$ and $B\coloneqq V_{\rm right}$.

Now consider the remaining case, where $|H| < \frac{k}{4}$ for all $H \in \cH_2$.
We number the elements of $\cH_2 = \parset{H_1, \dots, H_m}$ so that $H_j$ is left of $H_{j+1}$ for all $j \in[m-1]$, and we define 
\begin{equation*}
j^\ast \coloneqq \min \left\{j \in [m] : \sum_{i = 1}^j |H_{i}| \geq \tfrac{k}{2} \right\}.
\end{equation*}
Then we have $\sum_{j=1}^{j^\ast} |H_j| < \frac{3}{4}k $ because $|H_{j^\ast}| < \frac{k}{4}$.
Therefore, choosing $A \coloneqq \bigcup_{i=1}^{j^\ast} H_i$ and $B\coloneqq V\setminus A$, we have $|A|\geq \frac{k}{4}$ and $|B| \geq k - \frac{3}{4}k = \frac{k}{4}$.
Moreover, every edge starting in $B$ and ending in $A$ is a backward edge of level $1$.
This completes the proof of our claim.
\addparskip

Now let $A, B$ be a partition of $V$ as in our claim and let $T$ be an arbitrary tour on $V$.
By \Cref{lem:khop_skipping_vertices}, we can assume that $T$ is a Hamiltonian cycle.
Let $P$ be a path obtained from $T$ by removing a single edge.
We may assume $k\geq 16$ because otherwise $2^{-8}\cdot k^2 \leq 1$, in which case the statement of the lemma is trivially satisfied (because every tour contains at least one backward edge of level~$1$).
Then we can find  vertex disjoint subpaths $P_1, \dots, P_m$ of $P$ with $\bigcup_{j=1}^m V(P_j) = V(P)$ and $\frac{k}{8} \leq |V(P_j)| \leq \frac{k}{2}$ for all $j \in [m]$ (cf.~\Cref{lem:block_decomp}).
Without loss of generality, we can assume that the subpaths are numbered according to their order along the path $P$.
Then, because each subpath $P_j$ contains at most $\frac{k}{2}$ vertices, the cost of every edge $(x,y)$ with $x\in P_j$ and $y\in P_{j+1}$ contributes to the $k$-hop cost of the tour $T$. (Here, we define $P_{m+1} \coloneqq P_1$.)
In particular, we have
\begin{equation*}
c^{(k)}(T) \ \geq\ \sum_{i=1}^m \ \sum_{x \in V(P_i) } \ \sum_{y \in V(P_{i+1})} c(x,y),
\end{equation*}
where $P_{m+1} \coloneqq P_1$.

Let $j \in [m]$.
We say that the subpath $P_j$ is \emph{favored} if $P_j$ contains at least $\frac{k}{16}$ vertices from $A$.
Otherwise we call $P_j$ \emph{unfavored}.
If $P_j$ is unfavored, we have $|B \cap V(P_j)| \geq \frac{k}{16}$,  because $|V(P_j)| \geq \frac{k}{8}$.
\addparskip

We now distinguish several cases.
If each subpath $P_j$ is favored, then we claim that among the $k$ vertices following on $y\in B$ in the tour $T$, there are at least $\frac{k}{16}$ vertices from $A$.
Indeed, the vertex $y$ is contained in some subpath $P_j$ and all vertices from $P_{j+1}$ are among the $k$ vertices following on $y$ in the tour $T$.
Because $P_{j+1}$ is favored, at least $\frac{k}{16}$ of its vertices belong to $A$.
This shows that the $k$ outgoing edges of $y$ that contribute to the $k$-hop cost of $T$ have total cost at least $\frac{k}{16}\cdot D_1$.
Summing this over all vertices $y\in B$, we get $c^{(k)}(T) \geq |B| \cdot \frac{k}{16} D_1 \geq 2^{-6} \cdot k^2 D_1$.

If each subpath $P_j$ is unfavored, we can apply a symmetric argument to see that among the $k$ vertices preceding a vertex $x\in A$ in the tour $T$, there are at least $\frac{k}{16}$ vertices from $B$ and deduce $c^{(k)}(T) \geq 2^{-6} \cdot k^2 D_1$ also in this case.

It remains to consider the case where at least one subpath is favored and one is unfavored.
Then, there exists an index $j \in [m]$ such that $P_j$ is unfavored and $P_{j+1}$ is favored.
We obtain
\begin{equation*}
c^{(k)}(T)\ \geq\ \sum_{i=1}^m \ \sum_{x \in V(P_i) } \ \sum_{y \in V(P_{i+1})} c(x,y)
\ \geq\
 \sum_{\substack{ x \in V(P_j), \\ x \in B}} \ \sum_{\substack{ y \in V(P_{j+1}), \\ y \in A}} c(x,y) \geq \left(\tfrac{k}{16} \right)^2 D_1.
\end{equation*}
\end{proof}

Next, we reduce general instances to non-degenerate ones.
In case we have a degenerate instance, we will iteratively remove vertices and shrink our hierarchical partition, until either the resulting instance is non-degenerate, or at most $k$ vertices are remaining.

\begin{lemma}\label{lem:reduction_to_non_degenerate}
Let $\alpha : \bN \to \bR$ and $t: \bN \to \bN$ be monotonically increasing.
Suppose we have an algorithm that computes for every non-degenerate hierarchically ordered instance $\cJ$ of Hop-ATSP with low depth, in time $t(n)$ a tour $T$
 with $c^{(k)}(T) \leq \alpha(n) \cdot \OPT(\cI)$, where $n$ denotes the number of vertices in $\cI$.
 
Then there is an algorithm that computes for  every hierarchically ordered  instance $\cI$ of Hop-ATSP with low depth, in time $t(n)+O(nL)$ a tour $T$ with $c^{(k)}(T) \leq (\alpha(n) + 4L ) \cdot \OPT(\cI)$, where $n$ denotes the number of vertices in $\cI$.
\end{lemma}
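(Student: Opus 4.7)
The plan is induction on the depth $L$. If $\cI = (V, L, \cH, D, \prec, c, k)$ is non-degenerate, we directly invoke the hypothesized algorithm to obtain a tour $T$ with $c^{(k)}(T) \leq \alpha(n)\OPT(\cI) \leq (\alpha(n)+4L)\OPT(\cI)$. Otherwise, fix a set $H \in \cH_2$ with $|V \setminus H| \leq k/2$, and form the hierarchically ordered sub-instance $\cI_H$ on vertex set $H$ of depth $L-1$, by discarding $\cH_1$ and shifting the remaining levels and $D$-values down by one. Recurse on $\cI_H$ to obtain a tour $T_H$ on $H$, then insert the at most $k/2$ vertices of $V \setminus H$ into $T_H$ at arbitrary positions to obtain a tour $T$ on $V$.

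For the cost analysis, the induction hypothesis yields $c^{(k)}(T_H) \leq (\alpha(|H|)+4(L-1))\OPT(\cI_H) \leq (\alpha(n)+4(L-1))\OPT(\cI)$, using monotonicity of $\alpha$ and that shortcutting any tour on $V$ to $H$ cannot increase $k$-hop cost (\Cref{lem:khop_skipping_vertices}). Inserting a single vertex $w$ into a tour increases the $k$-hop cost by at most $2kD_1$: the $2k$ new $k$-hop edges incident to $w$ each cost at most $D_1$, while every edge not involving $w$ counted in the new $k$-hop cost was already counted in the old one, because inserting a vertex only increases relative cyclic distances between other pairs. Hence the total insertion increase is at most $|V\setminus H|\cdot 2k D_1$.

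The crucial step is to charge this insertion cost against $\OPT(\cI)$ directly. For any tour $T^*$ on $V$ and any $v \in V\setminus H$ lying to the right of $H$: among the $k$ successors of $v$ in $T^*$, at most $|V\setminus H|-1 \leq k/2-1$ can lie outside $H$, so at least $k/2+1$ of them lie in $H$, and each such edge from $v$ is a backward edge of level $1$ and hence of cost exactly $D_1$. A symmetric argument via predecessors handles the case $v$ is left of $H$. Summing over $v\in V\setminus H$---the contributing edges are pairwise disjoint, being outgoing from right-vertices or incoming to left-vertices---gives $c^{(k)}(T^*) \geq |V\setminus H|(k/2+1)D_1$, whence $|V\setminus H|\cdot 2kD_1 \leq 4\OPT(\cI)$. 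Combining yields $c^{(k)}(T) \leq (\alpha(n)+4(L-1))\OPT(\cI)+4\OPT(\cI)= (\alpha(n)+4L)\OPT(\cI)$, as required.

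The main obstacle is precisely this charging: a naive bound of $2kD_1$ per inserted vertex times $L$ recursive levels would be disastrous, because although $D_\ell$ halves at each level, so does (at best) the optimum of each sub-instance, and the induction would not close multiplicatively. The level-$1$ backward edges induced by $\cH_2$ at the \emph{current} recursion step are what let us absorb each insertion into $4\OPT(\cI)$ rather than $4\OPT$ of a much smaller sub-instance, producing the clean additive loss of $4L\OPT(\cI)$. Minor technical points---that ``low depth'' is preserved under recursion up to a constant in the defining inequality (since $|H|>n/2$ so $\log_2|H| \geq \log_2 n - 1$), and that sufficiently small sub-instances (e.g.\ $|H|\leq k$, which are not valid Hop-ATSP instances) must be handled by a trivial base-case algorithm---are routine. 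The total running time is $t(n)+O(nL)$, since each recursion layer does $O(n)$ work to identify $H$ and perform the insertions, and the hypothesized algorithm is invoked only once, on the non-degenerate instance reached at the bottom of the recursion.
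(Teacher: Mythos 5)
Your proof takes essentially the same route as the paper's: induct on $L$, recurse on the sub-instance supported on the large set $H\in\cH_2$ (shifting the hierarchy down one level), reinsert the at most $k/2$ dropped vertices at cost at most $2kD_1$ each, and charge the total reinsertion cost against $\OPT(\cI)$ by observing that, in any Hamiltonian tour on $V$, every vertex of $V\setminus H$ is incident to at least $k/2$ level-$1$ backward edges contributing to the $k$-hop cost, and these edge sets are pairwise disjoint across the vertices of $V\setminus H$. One place you under-specify that the paper treats explicitly: when $|H|\le k$ the proposed sub-instance is not a valid Hop-ATSP instance (you note this), and the paper's handling is not a separate ``trivial base-case algorithm'' but a specific construction inside this same case distinction --- pick a set $V\setminus R \supseteq H$ of exactly $k+1$ vertices, note that every Hamiltonian cycle on $k+1$ vertices has the same $k$-hop cost (so an optimum is free), and reinsert $R$ using the same $2kD_1$ charging. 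That construction is what makes the additive loss stay bounded by $4\OPT(\cI)$ rather than requiring a further recursion; your write-up should name it explicitly rather than deferring it, since without it the running-time claim and the $(\alpha(n)+4L)$ bound are not yet closed.
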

\begin{proof}
We prove the statement by induction on the depth $L\geq 2$ of the instance $\cI = (V, L, \cH, D, \prec, c, k)$.
For $L=2$, no non-degenerate instances exist because $\cH_1 =\{V\}$, $\cH_2 = \cH_L =\{ \{v\} : v\in V\}$ and $1 \leq k < |V|$.
Hence, the statement is trivially satisfied for $L=2$.
Now, suppose $L>2$.
If we are given a degenerate hierarchically ordered instance $\cI=(V, L, \cH, D, \prec, c, k)$ of Hop-ATSP, let $H^\ast \in \cH_2$ with $|V \setminus H^\ast| \leq \frac{k}{2}$.
We distinguish two cases.
\addparskip

First, consider the case $|H^\ast| > k$.
Then we construct an  instance $\cI^\prime=(V^\prime, L^\prime, \cH^\prime, D^\prime, \prec, c, k)$ with $L^\prime=L-1$ by setting
$V^\prime \coloneqq H^\ast$, and
\begin{align*}
\cH^\prime_{\ell} \coloneqq&\ \big\{ H \in \cH_{\ell+1} : H \subseteq H^\ast \big\}\\
D^\prime_{\ell} \coloneqq&\ D_{\ell+1}
\end{align*}
for all $\ell \in [L^\prime]$.
By the induction hypothesis applied to this instance $\cI^\prime$, we obtain a tour $T'$ on $V'=H^\ast$ of $k$-hop cost
$c^{(k)}(T^\prime) \leq \big((\alpha(|V^\prime|) + 4L^\prime\big)\cdot \OPT(\cI^\prime) $.
Inserting each vertex from $V\setminus H^\ast$ into $T'$ at an arbitrary position, yields a tour $T$ on $V$ with
\begin{align*}
c^{(k)}(T)\ \leq&\ c^{(k)}(T^\prime) + |V\setminus H^\ast| \cdot 2k D_1\\
 \leq&\ \big(\alpha(|V^\prime|) + 4L^\prime \big) \cdot \OPT(\cI^\prime) +   |V\setminus H^\ast| \cdot 2k D_1\\
 \leq&\ \big(\alpha(n) + 4(L-1)\big) \cdot \OPT(\cI) +   |V\setminus H^\ast| \cdot 2k D_1,
\end{align*}
where we used $\OPT(\cI^\prime) \leq \OPT(\cI)$ by \Cref{lem:khop_skipping_vertices}.
To complete the proof in the first case, it remains to show $|V\setminus H^\ast| \cdot k D_1\leq 2 \cdot \OPT(\cI)$.

Consider a vertex $w \in V \setminus H^\ast $ and assume that $w$ is left of $H^\ast$.
(The case with $w$ right of $H^\ast$ is symmetric.)
Because there are at most $\frac{k}{2}$ vertices in $V\setminus H^\ast$, at least $\frac{k}{2}$ of the incoming edges of $w$ that contribute to the $k$-hop costs of an optimal tour on $V$ are edges of the form $(v,w)$ with $v\in H^\ast$.
All these edges are backward edges of level $1$ and therefore have cost $D_1$.
Because all these edges have $w$ as one endpoint and the other endpoint in $H^\ast$, these $\frac{k}{2}$ edges are different ones for each vertex $w \in V \setminus H^\ast $.
This shows $|V\setminus H^\ast| \cdot k D_1\leq 2 \cdot \OPT(\cI)$ and completes the proof in the case  $|H^\ast| > k$.

\addparskip
Now consider the remaining case $|H^\ast| \leq k$.
We choose a set $R \subseteq V\setminus H^\ast$ such that  $|V\setminus R|= k+1$. 
By \Cref{lem:khop_skipping_vertices}, the optimal value $\OPT(V\setminus R, c, k)$ of the Hop-ATSP instance $(V\setminus R,c, k)$ is at most $\OPT(\cI)$ and is attained by a Hamiltonian cycle on $V\setminus R$.
Because $|V\setminus R|= k+1$, any such cycle has the same $k$-hop cost (namely the sum over all costs $c(x,y)$ with $x,y\in V\setminus R$) and hence we can compute an optimal solution $T^\prime$ to the Hop-ATSP instance $(V\setminus R,c, k)$.
Inserting each vertex from $R$ into the tour $T^\prime$, we obtain a tour $T$ on $V$ with 
\[
c^{(k)}(T)\ \leq\ c^{(k)}(T^\prime) + |R| \cdot 2 k D_1
\]
and by the same argument as in the first case,
we have $|R| \cdot k D_1\leq 2 \cdot \OPT(\cI)$.
\end{proof}

For non-degenerate instances, we will be able to show that a Path Covering solution cannot be much more expensive than the $k$-hop cost of a tour:

\begin{restatable}{lemma}{TourToCover}\label{lem:tours_to_covers}
Let $\cI= (V, L, \cH, D, \prec, c, k)$ be a non-degenerate instance of Hop-ATSP and $T$ a tour on $V$.
Then there exists a Path Covering  $\cP$ solution for $\cI$ with
\[
w(\cP) \ \leq\ O(L^3) \cdot c^{(k)}(T).
\]
\end{restatable}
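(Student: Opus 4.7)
The plan is to mimic the argument sketched for the case $L = 2$ in Section 2.3, but to carry it out simultaneously across all levels of the hierarchy rather than recursing level-by-level, in order to avoid an $\Omega(\gamma^L)$ multiplicative blowup. By the vertex-skipping lemma I may assume $T$ is a Hamiltonian cycle, and removing one edge yields a path $P$ with $c^{(k)}(P) \leq c^{(k)}(T)$. I then preprocess $P$ by partitioning its vertices into consecutive blocks of $k$ and reordering each block according to $\prec$: the triangle inequality ensures this increases $c^{(k)}(P)$ by only a constant factor, and afterwards any two backward edges of $P$ are separated by at least $k$ vertices, so their half-windows $V^{e}$ are pairwise disjoint.

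Next, I process each remaining backward edge $e$ of level $\ell = \mathrm{level}(e)$ via the same two-case distinction as in the $L=2$ outline, but using $D_{\ell}$ in place of $D_{1}$. Either (i) split $P$ at $e$ into two monotone subpaths, adding them as path-level pairs at level $\ell+1$ and charging the splitting overhead $k^{2} D_{\ell+1}$ against the window's contribution $\geq (k/2)^{2} D_{\ell}$ to $c^{(k)}(P)$; or (ii) remove the outlier set $R_{e}$ (the vertices of $V^{e}$ that would create new backward edges after reordering $\prec$-monotonically), rearrange the remainder of $V^{e}$ monotonically, and charge $|R_{e}| \cdot 2k D_{\ell}$ against $|R_{e}| \cdot (k/2) D_{\ell}$ within $c^{(k)}(V^{e})$. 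Since the half-windows are disjoint and at each level the total charge is a constant multiple of $c^{(k)}(P)$, summing over all $L$ levels contributes one factor of $L$. Each vertex ever deposited in some $R_{e}$ is subsequently reinserted into an appropriate coarser path-level pair, incurring cost proportional to $2k D_{\ell'}$ which is absorbed into the same charging scheme.

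To satisfy the responsibility condition (b)(ii) of \Cref{problem:covering}, whenever a set $H \in \cH_{\ell}$ is not already covered vertex-by-vertex by lower-level pairs, some pair $(P',j)$ with $j < \ell$ must contain at least $k$ vertices of $H$; I ensure this by tracking, for each $H$, the longest consecutive stretch of $H$-vertices remaining in the (reordered) top-level path after all splits, and showing via the preprocessing that such a stretch always has length $\geq k$ or the set $H$ has already been split at every occurrence by a backward edge whose split produced a qualifying subpath. The required bookkeeping contributes a second factor of $L$. A further factor of $L$ arises because residual vertices reinserted at a coarser level can themselves generate new backward edges to be handled, and tracking the inflation across all levels costs another $L$; combined this yields the claimed $O(L^{3})$ bound. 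Finally, I add a single monotone spanning path as a level-$1$ pair to ensure coverage of every vertex; by non-degeneracy and \Cref{lem:first_level_fixed_cost_lb}, its fixed cost $k^{2} D_{1}$ is $O(c^{(k)}(T))$.

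The main obstacle is the global bookkeeping. The $L=2$ argument is a local one-shot charge, and extending it requires that the case-analysis decisions, the local monotone reorderings, and the placement of residual vertices fit together globally so that every instance of the responsibility condition (b)(ii) is satisfied and each unit of $c^{(k)}(P)$ is charged against at most $O(L^{2})$ times. The delicate point is that reordering within a half-window $V^{e}$ of level $\ell$ must not destroy the structure needed by backward edges of level $> \ell$ nested inside $V^{e}$; I expect to handle this by processing backward edges in order of decreasing level (finest to coarsest) and maintaining as an invariant that at every intermediate stage the current path remains sorted within each level-$(\ell'+1)$ block for all $\ell' \geq \ell$ still to be processed.
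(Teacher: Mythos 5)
Your high-level plan captures the right intuition—reorder locally, split at backward edges, charge against the $k$-hop cost—but the central difficulty of this lemma is precisely the part you flag as the "main obstacle" and then do not resolve, and the mechanism you propose in its place would not close the gap.

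The paper \emph{does} recurse level by level (coarsest to finest, induction on $L-\ell$ in \Cref{lem:main_recursion_covering_strong}), and avoids the $\gamma^L$ blowup not by processing levels simultaneously but via a specific cost-bookkeeping trick you don't reproduce: after eliminating backward edges of level $\ell$, the resulting path $P'$ satisfies $c^{(k)}_{>\ell}(P')\leq c^{(k)}(P)$ (\Cref{lem:many_replacements}, inequality~\eqref{eq:many_replacements1}). Since the recursion at level $\ell+1$ only ever sees $c^{(k)}_{>\ell}(P')$, each level's constant-factor overhead is charged against the \emph{original} cost $c^{(k)}(P_{\rm old})$, not the already-inflated cost — this is exactly what turns $\gamma^L$ into $L\cdot\gamma$. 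Your proposal asserts that "at each level the total charge is a constant multiple of $c^{(k)}(P)$," but this is the claim that needs a proof, and the claim is false if you simply iterate the $L=2$ argument naively: after processing level $1$, the path cost may have increased by a constant factor, and level $2$'s charges should a priori be measured against the new, larger path. Without an analogue of the $c^{(k)}_{>\ell}$-monotonicity lemma (which in turn requires the $\ell$-refinement operation of \Cref{lem:i-refinement}, not an arbitrary monotone reordering within $V^e$), the recursion blows up.

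Two further points. First, your invariant for the finest-to-coarsest pass is self-contradictory as stated (you say the path should be "sorted within each level-$(\ell'+1)$ block for all $\ell'\geq\ell$ still to be processed," which at the start of the pass is the whole path), and in any case when a coarse-level split cuts across many fine blocks it will destroy the separation of $k$ vertices between fine-level backward edges you set up in preprocessing, so windows stop being disjoint across iterations. Second, your handling of the responsibility condition is not enough: \Cref{problem:covering}\ref{item:responsibility} requires a pair $(P,j)$ with $j<\ell$ \emph{and} $|V(P)\cap H|\geq k$, and "a backward edge whose split produced a qualifying subpath" does not guarantee the level constraint $j<\ell$. The paper addresses this with the notion of a main-tour cover and its condition $|V(P_{\rm main})|\geq k$ or $V(P_{\rm main})=U$, carried through the induction; some explicit structure of that kind is needed. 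Finally, your three factors of $L$ are stated as consequences of different informal bookkeeping steps but are not derived from any concrete inequality; the paper's $O(L^3)$ arises as $\eta^\ast = O(L^2)$ (from the $1/(L-1)$ widths in the buffering and splitting definitions) times one $L$ from recursion depth, a genuinely different decomposition. As written, the proposal does not constitute a valid proof of the lemma.
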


Using \Cref{lem:tours_to_covers}, we can then prove \Cref{thm:covering_reduction}, which we restate here for convenience:

\CoveringReduction*
\begin{proof}
Given a well-scaled instance $\cJ$ of Hop-ATSP, we first apply \Cref{thm:hierarchically_ordered} to sample a hierarchically ordered instance  $\cI$ of Hop-ATSP with low depth such that in expectation $\mathbb{E}[\OPT(\cI)] \leq L \cdot O(\log n \cdot \log \log n) \cdot \OPT(\cJ)$.

By \Cref{lem:reduction_to_non_degenerate}, it suffices to show that there is an algorithm that computes for every non-degenerate hierarchically ordered instance $\cI$ of Hop-ATSP with low depth and $n$ vertices, in time $t(n)$ a tour $T$  with $c^{(k)}(T) \leq O(\log^3 n)\cdot \alpha(n) \cdot \OPT(\cI)$.
\addparskip

Let $\cI =(V, L, \cH, D, \prec, c, k)$ be a non-degenerate hierarchically ordered instance of Hop-ATSP with $n$ vertices and low depth.
By \Cref{lem:tours_to_covers}, the weight $w(\cP^\ast)$ of a minimum-weight Path Covering solution for $\cI$ satisfies $w(\cP^\ast) \leq O(L^3)\cdot  c^{(k)}(T^\ast)$, where $T^\ast$ is a tour on $V$ that minimizes the $k$-hop cost $c^{(k)}(T^\ast)$.
Therefore, applying the given algorithm for the Path Covering problem to $\cI$ yields a Path Covering solution $\cP$ with weight 
\[
w(\cP) \ \leq\ \alpha(n) \cdot O(L^3)\cdot  c^{(k)}(T^\ast)
\]
in running time $t(n)+\poly(n)$.
Because any inclusion-wise minimal Path Covering solution contains a polynomial number of path-level pairs, we may assume that the $|\cP|$ is polynomially bounded in $n$.
Then by \Cref{lem:tour_from_set_cover}, we can turn this Path Covering solution $\cP$ into a tour $T$ with
\[
c^{(k)}(T) \ \leq\ 2 \cdot w(\cP) \ \leq\ \alpha(n) \cdot O(L^3)\cdot  c^{(k)}(T^\ast) \ = \ O(\log^3 n)\cdot \alpha(n) \cdot \OPT(\cI).
\]
\end{proof}

In the remaining part of this section we prove  \Cref{lem:tours_to_covers}, i.e, we focus on non-degenerate instances and show how to construct a good Path Covering solution from a given tour on $V$.

\subsection{Useful Observations on Hop-Costs}\label{sec:useful_tools_covering}

Before explaining how our construction works, we make a couple of useful observations about how the $k$-hop cost of a path changes when we apply certain useful operations such as changing the order of the vertices in a particular way.
Sometimes we will be interested not only in the overall $k$-hop cost of our path, but also in the contribution of edges of particular levels to the $k$-hop cost.
For a path $P$ visiting the vertices $v_1,\dots, v_r$ in this order, we define
\[
c^{(k)}_{= \ell}(P) := \sum_{i = 1}^r \sum_{\substack{\Delta \in[k]: \\ \level(\{v_i, v_{i+\Delta}\}) = \ell} }^k c(v_i, v_{i + \Delta}).
\]
to be the contribution of edges with level exactly $\ell$, and we define
\[
c^{(k)}_{> \ell}(P) := \sum_{j=\ell + 1}^L c^{(k)}_{= j}(P)
\]
to be the contribution of edges with level $> \ell$.

We will often reorder some vertices of a path.
In the simplest case, we reorder all vertices of a path according to the total order $\prec$ on $V$:

\begin{lemma}
\label{lem:sorted_path}
Let $P$ be a path with $|V(P)| \leq k + 1$ and let $P_{\rm{sorted}}$ be the path on the same vertex set that consists only of forward edges.
Then, we have $c^{(k)} (P_{\rm{sorted}}) \leq c^{(k)} (P)$.
\end{lemma}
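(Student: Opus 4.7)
The plan is to prove this by a bubble-sort style exchange argument on the vertices of $P$. The key observation is that since $|V(P)| \le k+1$, the $k$-hop cost takes an especially simple form: if we number the vertices of $P$ as $v_1, \dots, v_r$ in the order they are visited, then every pair $i < j$ satisfies $j - i \le k$, so
\[
c^{(k)}(P) \ =\ \sum_{1 \le i < j \le r} c(v_i, v_j),
\]
and the same formula holds for $P_{\rm sorted}$ with its own order of the vertices. So it suffices to prove that this pair-sum is minimized when the vertices are arranged according to $\prec$.

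The main step is to analyze what happens under an adjacent transposition. Given a path $Q$ that visits vertices $u_1, \dots, u_r$, let $Q'$ be obtained by swapping $u_i$ and $u_{i+1}$ for some $i$. For every pair $\{u_i, u_j\}$ or $\{u_{i+1}, u_j\}$ with $j \notin \{i, i+1\}$, the relative order of the two vertices in $Q$ and $Q'$ is identical, so these pairs contribute the same to $c^{(k)}(Q)$ and $c^{(k)}(Q')$. Hence the only change is on the pair $\{u_i, u_{i+1}\}$, giving
\[
c^{(k)}(Q') - c^{(k)}(Q) \ =\ c(u_{i+1}, u_i) - c(u_i, u_{i+1}).
\]

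Now, the plan is to apply this swap operation whenever $u_{i+1} \prec u_i$ in the current ordering, i.e., whenever $(u_i, u_{i+1})$ is a backward edge. Let $\ell := \level(\{u_i, u_{i+1}\})$. By the definition of a hierarchically ordered instance, the backward edge $(u_i, u_{i+1})$ has cost exactly $D_\ell$, while the forward edge $(u_{i+1}, u_i)$ has cost at most $D_\ell$. Therefore the swap can only decrease (or preserve) $c^{(k)}$. Iterating this until no backward adjacent pair remains (standard bubble sort terminates after finitely many such swaps), we arrive precisely at $P_{\rm sorted}$, and the total cost has not increased.

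No serious obstacle is expected: once one notices the ``all pairs contribute'' simplification arising from $|V(P)| \le k+1$, everything reduces to a one-edge comparison that is resolved immediately by the defining inequality $c(\text{forward}) \le D_\ell = c(\text{backward})$ for edges at level $\ell$. The only thing to be careful about is verifying cleanly that an adjacent swap changes exactly one summand of the pair-sum formula; this is straightforward since the swap preserves the relative order of $u_i$ (and of $u_{i+1}$) with every other vertex $u_j$, $j \notin \{i, i+1\}$.
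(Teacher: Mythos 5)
Your proposal is correct and rests on the same key observation as the paper: since $|V(P)| \leq k+1$, every unordered pair of vertices contributes exactly one of $c(x,y)$ or $c(y,x)$ to the $k$-hop cost, and the forward direction is the cheaper one by the definition of a hierarchically ordered instance. The paper just does the term-by-term comparison directly on the pair-sum rather than wrapping it in a bubble-sort exchange argument, but the substance is the same.
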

\begin{proof}
We have 
\begin{equation*}
c^{(k)} (P_{\rm{sorted}}) = \sum_{\substack{x, y \in V(P), \\ x \prec y}} c(x,y),
\end{equation*}
because $|V(P_{\rm{sorted}})|\leq k + 1$.
For any $x,y\in V(P)$ with $x \prec y$, either $c(x,y)$ or $c(y,x)$ contributes to $c^{(k)}(P)$ because $|V(P)| \leq k + 1$.
Since $(x,y)$ is a forward edge, $c(x,y) \leq c(y,x)$.
Hence, we have $c^{(k)} (P_{\rm{sorted}}) \leq c^{(k)} (P)$.
\end{proof}

Sometimes we do not want to reorder the whole path, but only interchange two consecutive vertices on the path:

\begin{lemma}
\label{lem:interchange_vertices}
Let $P$ be a path and let $e = (x,y)$ be an edge of $P$.
Let $P^\prime$ be the path obtained from $P$ by exchanging the visits of $x$ and $y$.
Then, we have
\begin{equation*}
c^{(k)}(P^\prime) \leq c^{(k)}(P) + c(x,y) + c(y,x).
\end{equation*}
Furthermore, if $\level(e) = \ell \in[L]$ and $x \succ y$, i.e., $e$ is a backward edge, then
\begin{equation*}
c^{(k)}_{> \ell}(P^\prime) \leq c^{(k)}_{> \ell}(P) + c(x,y).
\end{equation*}
\end{lemma}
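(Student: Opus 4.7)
Enumerate the vertices of $P$ as $v_1,\dots,v_r$ with $x=v_i$ and $y=v_{i+1}$, so that $P'$ visits $v_1,\dots,v_{i-1},y,x,v_{i+2},\dots,v_r$. Since every vertex other than $x,y$ occupies the same position in $P$ and $P'$, every ordered pair $(u,w)$ with $u,w\notin\{x,y\}$ contributes identically to $c^{(k)}(P)$ and $c^{(k)}(P')$. It therefore suffices to track the pair $(x,y)$ itself together with the pairs that contain exactly one of $x,y$ and a vertex $u=v_j$ at position $j\notin\{i,i+1\}$.

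For $u=v_j$ with $i-k+1\le j\le i-1$, both $(u,x)$ and $(u,y)$ contribute to $c^{(k)}$ in \emph{both} paths (only the distances-to-$u$ swap), so the combined contribution $c(u,x)+c(u,y)$ is unchanged; the symmetric statement holds for $i+2\le j\le i+k$. For $j<i-k$ or $j>i+1+k$, neither pair contributes in either path. The only positions at which the contribution differs are the boundaries $j=i-k$ and $j=i+1+k$. Writing $u_L\coloneqq v_{i-k}$ and $u_R\coloneqq v_{i+1+k}$ when they exist, this yields
\[
c^{(k)}(P')-c^{(k)}(P)\ =\ \big(c(y,x)-c(x,y)\big)+\big(c(u_L,y)-c(u_L,x)\big)+\big(c(x,u_R)-c(y,u_R)\big),
\]
with a bracketed boundary term omitted if the corresponding vertex does not exist. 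Applying the triangle inequalities $c(u_L,y)\le c(u_L,x)+c(x,y)$ and $c(x,u_R)\le c(x,y)+c(y,u_R)$ collapses the right-hand side to $c(x,y)+c(y,x)$, proving the first bound.

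For the second bound, assume $e=(x,y)$ is a backward edge with $\level(e)=\ell$, so $c(x,y)=D_\ell$ and the parts $H_x,H_y\in\cH_{\ell+1}$ containing $x$ and $y$ are distinct. The edge $(x,y)$ has level exactly $\ell$, so neither $(x,y)$ nor $(y,x)$ contributes to $c^{(k)}_{>\ell}$, eliminating the first bracketed term. On the left boundary, $(u_L,x)$ contributes to $c^{(k)}_{>\ell}(P)$ only when $u_L\in H_x$, and $(u_L,y)$ to $c^{(k)}_{>\ell}(P')$ only when $u_L\in H_y$. Disjointness of $H_x,H_y$ rules out both occurring; the only case with a positive contribution is $u_L\in H_y$, bounded by $c(u_L,y)\le D_{\ell+1}\le D_\ell/2=c(x,y)/2$ via the hierarchical inequality $D_\ell\ge 2D_{\ell+1}$. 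A symmetric bound of $c(x,y)/2$ holds on the right boundary, so the total change is at most $c(x,y)$.

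The crux of the argument, and the only delicate step, is the localization: swapping two adjacent vertices of $P$ changes $c^{(k)}$ only through the swapped edge and the two vertices at offset $\pm k$ from the swap, with all other pair contributions cancelling. Once this localization is established, the first bound is immediate from the triangle inequality, while the second additionally exploits the doubling $D_\ell\ge 2D_{\ell+1}$ to fit both boundary contributions inside the budget $c(x,y)=D_\ell$.
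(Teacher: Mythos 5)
Your proof is correct and follows essentially the same route as the paper: you identify that swapping adjacent vertices only affects the $k$-hop contribution through the swapped edge and the two boundary vertices at hop-distance $k$ (your $u_L, u_R$, the paper's $a, b$), write out the resulting three-term difference, and bound it by the triangle inequality for the first claim and by the disjointness of $H_x, H_y$ together with $D_\ell \geq 2D_{\ell+1}$ for the second. The only cosmetic difference is that you make the localization explicit and phrase the second bound as a case analysis on whether $u_L$ lies in $H_x$ or $H_y$, whereas the paper simply drops the nonnegative subtracted terms and bounds the two new contributions directly by $D_{\ell+1}$ each.
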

\begin{figure}
\begin{center}
\begin{tikzpicture}[
xscale=1,
vertex/.style={circle,draw=black, fill,inner sep=1.5pt, outer sep=1pt},
edge/.style={-latex,very thick},
]

\definecolor{color1}{named}{Peach}  
\definecolor{color2}{named}{ForestGreen}  
\definecolor{color3}{named}{Violet}  
\definecolor{color4}{named}{ProcessBlue} 

\begin{scope}[every node/.style={vertex}]
\foreach \i in {1,...,12} {
\node (v\i) at ({(\i)*1},0) {};
}
\end{scope}
\node[below=4pt] (xlabel) at (v6) {$x$};
\draw[line width=0.4pt] ([xshift=4pt,yshift=4pt]xlabel.center) -- ([xshift=-4pt,yshift=-4pt]xlabel.center);
\node[below=1.3em] at (v6) {$y$};

\node[below=4pt] (ylabel) at (v7) {$y$};
\draw[line width=0.4pt] ([xshift=4pt,yshift=4pt]ylabel.center) -- ([xshift=-4pt,yshift=-4pt]ylabel.center);
\node[below=1.3em] at (v7) {$x$};

\node[below=4pt] at (v3) {$a$};
\node[below=4pt] at (v10) {$b$};

\foreach \i in {1,...,5} {
\pgfmathtruncatemacro{\next}{\i+1}
\draw[edge] (v\i) to (v\next);
}

\foreach \i in {7,...,11} {
\pgfmathtruncatemacro{\next}{\i+1}
\draw[edge] (v\i) to (v\next);
}

\draw[edge, dotted] (v6) to (v7);
\draw[edge, dashed, color1] (v3) to[bend left=45] (v6);
\draw[edge, dashed, color1] (v7) to[bend left=45] (v10);
\end{tikzpicture}
 \end{center}
\caption{
Illustration of the interchange of $x$ and $y$ on a path $P$ for $k=4$.
The only edges that contribute to $c^{(k)}(P)$ and are affected by this change are the $k$-hop edges leaving $a$ and entering $b$, as well as the edge $(x,y) \in E(P)$ itself.
}
\label{fig:swapping_vertices}
\end{figure}
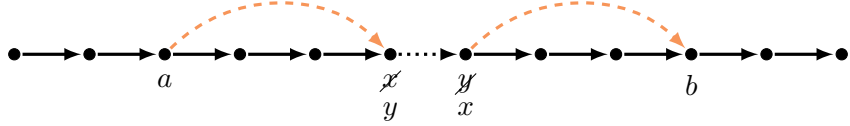
\begin{proof}
Suppose there are vertices $a,b \in V(P)$ such that $x$ is the $k$-th vertex after $a$ in $P$, and similarly, $b$ is the $k$-th vertex after $y$ in $P$.
See \Cref{fig:swapping_vertices} for an illustration.
Then, by the triangle inequality, we have
\begin{equation*}
\begin{split}
c^{(k)}(P^\prime) - c^{(k)}(P) &= c(a,y) + c(x,b) + c(y,x) - \left( c(a,x)+c(y,b)+ c(x,y)  \right) \\
&= (c(a,y) - c(a,x)) + (c(x,b) - c(y,b)) + c(y,x) - c(x,y)  \\
&\leq 2c(x,y)  + c(y,x) - c(x,y) \\
&= c(x,y) + c(y,x).
\end{split}
\end{equation*}
If no such $a$ or $b$ exists, the corresponding terms can be replaced by zero in the above inequality.
This proves the first claim.

Now, consider the case $\level(x,y) = \ell$ and $(x,y)$ is a backward edge.
Again, let $a$ and $b$ be as above.
The edges $(a,y)$ and $(x,b)$ that newly contribute to the $k$-hop costs of the path can only contribute to $c^{(k)}_{> \ell}(P^\prime)$ if $\level(\{a,y\}) > \ell$ and $\level(\{x,b\}) > \ell$, respectively.
In this case, we have $c(a,y) + c(x,b) \leq 2D_{\ell+1} \leq D_{\ell} = c(x,y)$, where the last inequality uses a property of the numbers $D_j$ from the definition of hierarchically ordered instances (cf.~\Cref{def:hierarchically_ordered}).
This proves the second claim.
\end{proof}

Many arguments in our reduction are based on \emph{replacing} a subpath $P_{\mathrm{sub}}$ of a given path $P$ with a different path $Q$.
In most cases we have $V(Q) \subseteq V(P_{\mathrm{sub}})$, so it is clear that the result is indeed a path.

\begin{lemma}
\label{lem:path_replacement}
Let $P_{\mathrm{main}}$ be a path and let $P_{\mathrm{sub}}$ be a subpath of $P_{\mathrm{main}}$.
Let $Q$ be another path with $V(Q) \subseteq V(P_{\mathrm{sub}})$, and denote by $P^\prime_{\mathrm{main}}$ the path that results from $P_{\mathrm{main}}$ by replacing $P_{\mathrm{sub}}$ with $Q$.  
Then, we have
\begin{equation*}
c^{(k)}(P^\prime_{\mathrm{main}}) \leq c^{(k)}(P_{\mathrm{main}}) + c^{(k)} (Q) + 2^{12} \cdot c^{(k)} (P_{\mathrm{sub}}).
\end{equation*}
\end{lemma}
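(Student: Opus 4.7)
I would begin by labeling the vertices so that $P_{\mathrm{main}} = u_1, \dots, u_s, p_1, \dots, p_t, w_1, \dots, w_r$ with $P_{\mathrm{sub}} = p_1, \dots, p_t$, and $Q = q_1, \dots, q_m$ with $V(Q) \subseteq V(P_{\mathrm{sub}})$, so that $P^\prime_{\mathrm{main}} = u_1, \dots, u_s, q_1, \dots, q_m, w_1, \dots, w_r$. Then I would expand $c^{(k)}(P^\prime_{\mathrm{main}})$ as a sum over ordered pairs $(x,y)$ of vertices of $P^\prime_{\mathrm{main}}$ at hop-distance at most $k$, and split these pairs into six classes according to the segment (prefix $U$, middle $V(Q)$, suffix $W$) in which each endpoint lies: UU, WW, QQ, UQ, QW, and UW.

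Pairs of type UU and WW have the same hop-distance in $P^\prime_{\mathrm{main}}$ as in $P_{\mathrm{main}}$, since the prefix and suffix are unchanged, so their total contribution is at most $c^{(k)}(P_{\mathrm{main}})$. Pairs of type QQ are realized as hop-edges inside the subpath $Q$ of $P^\prime_{\mathrm{main}}$, so they contribute at most $c^{(k)}(Q)$. Hence the entire task reduces to bounding the three ``crossing'' classes UQ, QW, and UW by $2^{12}\cdot c^{(k)}(P_{\mathrm{sub}})$. For each crossing pair $(x,y)$, I would apply the triangle inequality to route $c(x,y)$ through a pair of anchor vertices $p_\alpha, p_\beta \in V(P_{\mathrm{sub}})$. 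For a UQ pair $(u_i, q_b)$, for instance, $u_i$ lies within $k$ hops of the start $p_1$ of $P_{\mathrm{sub}}$ in $P_{\mathrm{main}}$, so $c(u_i, p_\alpha)$ can be chosen to be a $k$-hop edge of $P_{\mathrm{main}}$; the remaining piece $c(p_\alpha, q_b) = c(p_\alpha, p_{\pi(b)})$ is then decomposed along $P_{\mathrm{sub}}$ into a bounded number of sub-hops of length $\leq k$, each of whose cost appears in $c^{(k)}(P_{\mathrm{sub}})$. The QW and UW classes are handled symmetrically, using $p_t$ (and, for UW pairs, both $p_1$ and $p_t$, which is only possible when $m<k$).

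The main obstacle is controlling the charging so that no edge of $P_{\mathrm{sub}}$ (or $P_{\mathrm{main}}$) is used too many times across all crossing pairs. A naive decomposition costs a factor of $\Theta(k)$ because each of the $O(k^2)$ crossing pairs could reuse the same long ``backbone'' inside $P_{\mathrm{sub}}$. To avoid this, I would choose the anchors $p_\alpha, p_\beta$ as a function of the relative position of the pair's endpoints within $P^\prime_{\mathrm{main}}$, for instance using a coarsening of $P_{\mathrm{sub}}$ into blocks of $\Theta(k)$ consecutive vertices (in the spirit of \Cref{lem:block_decomp,lem:path_distribution}), so that the anchor pair is determined by $O(1)$ combinatorial parameters and each edge of $P_{\mathrm{sub}}$ contributing to $c^{(k)}(P_{\mathrm{sub}})$ is used only $O(1)$ times in the decomposition. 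The constant $2^{12}$ then absorbs the three crossing classes, the constant number of triangle-inequality steps per pair, the slack from passing from a ``long'' edge inside $P_{\mathrm{sub}}$ to a short chain of hops, and the overhead from counting each $P_{\mathrm{sub}}$-edge at most a constant number of times.
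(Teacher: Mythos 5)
Your reduction step is where the argument breaks. After charging the $UU$ and $WW$ pairs to $c^{(k)}(P_{\mathrm{main}})$ and the $QQ$ pairs to $c^{(k)}(Q)$, you claim the remaining task is to bound the crossing classes $UQ$, $QW$, $UW$ of $P'_{\mathrm{main}}$ by $2^{12}\cdot c^{(k)}(P_{\mathrm{sub}})$. This reduction is false: take $P_{\mathrm{sub}}$ to be two co-located vertices $p_1,p_2$ at the end of the prefix $U$, with $Q = P_{\mathrm{sub}}$. Then $UQ$ contains $(u_s,p_1)$ with $c(u_s,p_1)$ arbitrarily large while $c^{(k)}(P_{\mathrm{sub}}) = c(p_1,p_2) = 0$, so no multiple of $c^{(k)}(P_{\mathrm{sub}})$ bounds the crossing contribution. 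The lemma still holds here (since $P'_{\mathrm{main}} = P_{\mathrm{main}}$) only because $P_{\mathrm{main}}$ pays for the \emph{same} crossing edges — information your decomposition has discarded by giving all of $c^{(k)}(P_{\mathrm{main}})$ to $UU+WW$. What must actually be bounded is the \emph{difference} $(UQ' - UP) + (QW' - PW) + (UW' - UW) \le 2^{12}\,c^{(k)}(P_{\mathrm{sub}})$, where $UP,PW,UW$ are the crossing contributions of $P_{\mathrm{main}}$. Your later remark that the anchor edge $c(u_i,p_\alpha)$ ``can be chosen to be a $k$-hop edge of $P_{\mathrm{main}}$'' is an implicit (and necessary) attempt to spend the $UP$ budget on anchors — but you have already allocated all of $c^{(k)}(P_{\mathrm{main}})$ to the $UU+WW$ classes, so the bookkeeping as written is internally inconsistent.

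The paper avoids both problems by working with the difference directly. After reducing WLOG to $V(Q) = V(P_{\mathrm{sub}})$ (via \Cref{lem:khop_skipping_vertices}), it moves from $P_{\mathrm{main}}$ to $P'_{\mathrm{main}}$ through a sequence of adjacent transpositions (a Bubblesort of $P_{\mathrm{sub}}$), records in a set $F$ exactly those transposed pairs that change the cost of an incoming/outgoing crossing edge, and proves combinatorially that $|F|\le k^2$, that at most $k$ edges of $F$ meet any vertex, and that each edge enters $F$ at most once. The remaining step — showing $c(F) = O(1)\cdot c^{(k)}(P_{\mathrm{sub}})$ via the flow decomposition of \Cref{lem:path_distribution} — is the place where the multiplicity control you gesture at with ``blocks of $\Theta(k)$ consecutive vertices'' actually happens, and it crucially relies on the quantitative properties of $F$. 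Without the differencing and without those concrete bounds, the argument as proposed does not close.
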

\begin{proof}
Without loss of generality, we assume that $V(Q) = V(P_{\mathrm{sub}})$.
If $V(Q) \subsetneq V(P_{\mathrm{sub}})$, we simply remove all vertices $v \in V(P_{\mathrm{sub}}) \setminus V(Q)$ from $P^\prime_{\mathrm{main}}$, which does not increase the cost by \Cref{lem:khop_skipping_vertices}.

\addparskip

We will construct a sequence of paths $P^{1}, \dots, P^{r}$ with $P^{1} = P_{\mathrm{main}}$ and $P^{r} = P^\prime_{\mathrm{main}}$, for some $r \in \bN$.
Each path $P^{j}$ will arise from $P_{\mathrm{main}}$ by changing the order of the vertices in $V(Q) = V(P_{\mathrm{sub}})$.
For $j \in [r]$ we define
\begin{equation*}
\begin{split}
\hop_{\mathrm{in}}(j)&\coloneqq \Big\{ (v,w) : v\in V(P_{\rm main})\setminus V(Q),\ w \in V(Q), \text{ and $(v,w)$ contributes to $c^{(k)}\big(P^{j}\big)$}  \Big\}\\
\hop_{\mathrm{out}}(j)&\coloneqq \Big\{ (v,w) : v \in V(Q),\ w\in V(P_{\rm main})\setminus V(Q), \text{ and $(v,w)$ contributes to $c^{(k)}\big(P^{j}\big)$}  \Big\},
\end{split}
\end{equation*}
where an edge $(v,w)$ contributes to $c^{(k)}\big(P^{j}\big)$ if the path $P^{j}$ visits $v$ before $w$ and visits less than $k$ other vertices in between.
See \Cref{fig:hop_in_and_hop_out} for an illustration.
\begin{figure}
\begin{center}
\begin{tikzpicture}[
xscale=1,
vertex/.style={circle,draw=black, fill,inner sep=1.5pt, outer sep=1pt},
edge/.style={-latex,very thick},
]

\definecolor{color1}{named}{Peach}  
\definecolor{color2}{named}{ForestGreen}  
\definecolor{color3}{named}{Violet}  
\definecolor{color4}{named}{ProcessBlue} 

\begin{scope}[every node/.style={vertex}]
\foreach \i in {1,...,4} {
\node (v\i) at ({(\i)*1},0) {};
}
\end{scope}

\begin{scope}[every node/.style={vertex, draw=color4, fill=color4}]
\foreach \i in {5,...,11} {
\node (v\i) at ({(\i)*1},0) {};
}
\end{scope}

\begin{scope}[every node/.style={vertex}]
\foreach \i in {12,...,15} {
\node (v\i) at ({(\i)*1},0) {};
}
\end{scope}

\foreach \i in {1,...,14} {
\pgfmathtruncatemacro{\next}{\i+1}
\draw[edge] (v\i) to (v\next);
}

\foreach \i/\j in {2/5,3/5,3/6,4/5,4/6,4/7} {
\draw[edge, bend left = 20+10*(\j-\i), red!90!black] (v\i) to (v\j);
}

\foreach \i/\j in {9/12, 10/12, 10/13, 11/12,11/13,11/14} {
\draw[edge, bend left = 20+10*(\j-\i), red!90!black] (v\i) to (v\j);
}

\node[below=10pt] at (2.5,0) {\textcolor{black}{$P^{j}$}};
\node[below=10pt] at (v8) {\textcolor{color4}{$V(Q)$}};

\node[above=2.5em] at (4.5,0) {\textcolor{red!90!black}{$\hop_{\mathrm{in}}\left(j\right)$}};
\node[above=2.5em] at (11.5,0) {\textcolor{red!90!black}{$\hop_{\mathrm{out}}\left(j\right)$}};
\end{tikzpicture}
 \end{center}
\caption{
Illustration of the sets $\hop_{\mathrm{in}}(j)$ and $\hop_{\mathrm{out}}(j)$.
The edges contained in these sets are shown in red, where we have $k = 3$ in this example.
}
\label{fig:hop_in_and_hop_out}
\end{figure}
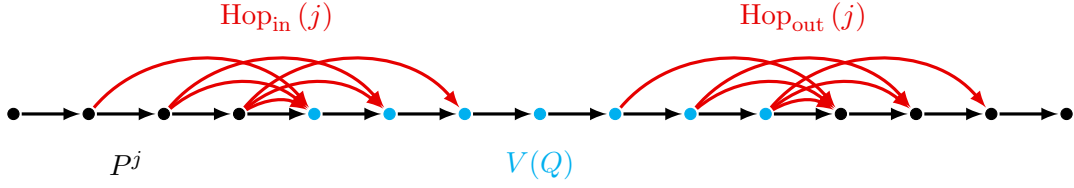

Using this notation, we get
\begin{equation*}
\begin{split}
c^{(k)}(P^\prime_{\mathrm{main}}) - c^{(k)}(P_{\mathrm{main}})
&= c^{(k)}\left(P^{r}\right) - c^{(k)}\left(P^{1}\right) \\
&\leq c^{(k)}(Q) + c\left(\hop_{\mathrm{in}}\left(r\right)\right) - c\left(\hop_{\mathrm{in}}\left(1\right)\right) 
+ c\left(\hop_{\mathrm{out}}\left(r\right)\right) - c\left(\hop_{\mathrm{out}}\left(1\right)\right).
\end{split}
\end{equation*}

Observe that $c\left(\hop_{\mathrm{in}}\left(r\right)\right) - c\left(\hop_{\mathrm{in}}\left(1\right)\right)$ and $c\left(\hop_{\mathrm{out}}\left(r\right)\right) - c\left(\hop_{\mathrm{out}}\left(1\right)\right)$ do not depend on the sequence $P^1, \dots, P^r$, as long as we have $P^{1} = P_{\mathrm{main}}$ and $P^{r} = P^\prime_{\mathrm{main}}$.
In particular, we can prove an upper bound on each of these expressions while each time working with a different sequence $P^1, \dots, P^r$.
We will show 
\begin{equation}\label{eq:bound_in_edges}
c\left(\hop_{\mathrm{in}}\left(r\right)\right) - c\left(\hop_{\mathrm{in}}\left(1\right)\right) \leq 2^{11} \cdot c^{(k)} (P_{\mathrm{sub}}),
\end{equation}
By symmetry we then also get 
\begin{equation}\label{eq:bound_out_edges}
c\left(\hop_{\mathrm{out}}\left(r\right)\right) - c\left(\hop_{\mathrm{out}}\left(1\right)\right) \leq 2^{11} \cdot c^{(k)} (P_{\mathrm{sub}}),
\end{equation}
 implying
\begin{equation*}
c^{(k)}(P^\prime_{\mathrm{main}}) - c^{(k)}(P{\mathrm{main}})
\leq c^{(k)}(Q) + 2^{11} \cdot  c^{(k)} (P_{\mathrm{sub}}) + 2^{11} \cdot  c^{(k)} (P_{\mathrm{sub}}),
\end{equation*}
as claimed.
It remains to prove \eqref{eq:bound_in_edges} (where \eqref{eq:bound_out_edges} follows by a symmetric argument).
\addparskip

To show \eqref{eq:bound_in_edges}, we construct a sequence $P^{1}, \dots, P^{r}$ with $P^{1} = P_{\mathrm{main}}$ and $P^{r} = P^\prime_{\mathrm{main}}$.
During our construction we will maintain a set $F$ of edges, which we initialize as $F \coloneqq \emptyset$.
We number the vertices of the path $Q$ as $q_1, \dots, q_m$, in the order in which $Q$ visits them.
Starting with $j=1$ and $P^j= P_{\mathrm{main}}$, we proceed as follows.
For $a=1,\dots,m$ we do the following:
While there exists a vertex $q_{b}\in V(Q)$ with $a <b$ such that the edge $(q_{b}, q_a)$ is contained in the current path $P^{j}$, we interchange the two vertices and obtain $P^{j+1}$.
If $\hop_{\mathrm{in}}\left(j\right) \neq \hop_{\mathrm{in}}\left(j+1\right)$, we add the edge $(q_b, q_a)$ to $F$.
In other words, the sequence $P^{(1)}, \dots, P^{(r)}$ is obtained by applying the Bubblesort algorithm with a particular pivot rule (prioritizing vertices that are visited earlier by $Q$).

Due to the order in which we interchange vertices, $F$ satisfies the following properties:
\begin{enumerate}
\item\label{item:property_F1}
$|F| \leq k^2$.

\item\label{item:property_F2}
For every $x \in V$, there are at most $k$ edges in $F$ that are incident to $x$.

\item\label{item:property_F3}
Every edge $(x,y) \in V \times V$ is added at most once to $F$.

\item\label{item:property_F4}
If $(x,y) \in F$, then $x$ precedes $y$ on $P_{\mathrm{sub}}$.
\end{enumerate}

Let $(x,y) \in F$ be an edge that was added to $F$ when obtaining $P^{j}$ from $P^{j-1}$.
Then, we have
\begin{equation*}
c\left(\hop_{\mathrm{in}}\left(j\right)\right) - c\left(\hop_{\mathrm{in}}\left(j-1\right)\right) \leq c(x,y),
\end{equation*}
where we used that $c$ satisfies the triangle inequality.
In particular, this shows
\begin{equation*}
c\left(\hop_{\mathrm{in}}\left(r\right)\right) - c\left(\hop_{\mathrm{in}}\left(1\right)\right) \leq c(F).
\end{equation*}

\addparskip

First, suppose that $k\leq 16$.
By property~\ref{item:property_F4} and the triangle inequality, we can bound $c(x,y) \leq c(P_{\rm sub}) \leq c^{(k)}(P_{\rm sub})$ for every $(x,y) \in F$.
In particular, property~\ref{item:property_F1} then implies that $c(F) \leq 2^8 \cdot c^{(k)}(P_{\rm sub})$.
Hence, we may assume that $k \geq 16$ in the following.

Let $\widetilde{F} \subseteq F$ be the subset of those edges $(x,y)$ for which $P_{\mathrm{sub}}$ visits less than $k$ other vertices between the visits of $x$ and $y$, i.e., $c(x,y)$ contributes to $c^{(k)}(P_{\mathrm{sub}})$.
In particular, we have
\begin{equation*}
c(\widetilde{F}) \leq c^{(k)}(P_{\mathrm{sub}}).
\end{equation*}

Next, we upper bound the cost of the edges in $F \setminus \widetilde{F}$.
Fix an edge $(x,y) \in F \setminus \widetilde{F}$.
By \ref{item:property_F4}, the vertex $x$ is visited before $y$ by the path $P_{\mathrm{sub}}$.
Because $k\geq 16$, we can apply \Cref{lem:path_distribution} to the $x$-$y$ subpath of  $P_{\mathrm{sub}}$ and let $\chi^{(x,y)} \in \bR_{\geq 0}^{V \times V}$ be the resulting vector.
The vector $\chi^{(x,y)}$ is a convex combination of incidence vectors of $x$-$y$ paths and thus, by the triangle inequality, $c(x,y) \leq c^\intercal \chi^{(x,y)}$ (where we view the cost function $c$ as a vector in $\bR_{\geq 0}^{V \times V}$).

Now, let $e = (v,w)$ be an edge contributing to the $k$-hop cost of the $x$-$y$ subpath of  $P_{\mathrm{sub}}$.
By the guarantees~\eqref{eq:path_distribution1} of \Cref{lem:path_distribution}, and the properties \ref{item:property_F1}, \ref{item:property_F2}, and \ref{item:property_F3} of $F$, we have
\begin{equation*}
\sum_{(x,y) \in F \setminus \widetilde{F}} \chi^{(x,y)}_e 
\quad \leq\ \sum_{\substack{(x,y) \in F \setminus \widetilde{F}, \\ \parset{x,y} \cap \parset{v,w} \neq \emptyset}} \frac{2^4}{k} + \sum_{\substack{(x,y) \in F \setminus \widetilde{F}, \\ \parset{x,y} \cap \parset{v,w} = \emptyset}} \frac{2^8}{k^2} 
\quad \leq\quad 2k \cdot \frac{2^4}{k} + k^2 \cdot \frac{2^8}{k^2} 
\quad \leq\quad 2^9.
\end{equation*}
This implies
\begin{equation*}
c(F \setminus \widetilde{F}) \ = \sum_{(x,y) \in F \setminus \widetilde{F}} c(x,y)
\ \leq \sum_{(x,y) \in F \setminus \widetilde{F}} c^\intercal \chi^{(x,y)}
\ \leq 2^9 \cdot c^{(k)} (P_{\mathrm{sub}}).
\end{equation*}
Hence, we conclude
\begin{equation*}
c\left(\hop_{\mathrm{in}}\left(r\right)\right) - c\left(\hop_{\mathrm{in}}\left(1\right)\right) \leq  c(F) \leq (2^9 + 1) \cdot c^{(k)} (P_{\mathrm{sub}}),
\end{equation*}
implying~\eqref{eq:bound_in_edges}.
\end{proof}

Sometimes we carefully reorder entire parts of a path $P$ in order to ensure that the resulting path has no backward edges of level $\ell$ anymore:

\begin{definition}
[$\ell$-refinement]
\label{def:i-refinement}
Let $P$ be a path and $\ell \in [L]$.
We define a total order $\prec^\prime$ on $V(P)$ as follows:
Let $x \neq y \in V(P)$ be two distinct vertices. 
\begin{itemize}
\item
If $\level(\parset{x,y}) \leq \ell$, then $x \prec^\prime y$ if and only if $x \prec y$.

\item
If $\level(\parset{x,y}) > \ell$, then $x \prec^\prime y$ if and only if $x$ appears before $y$ on $P$.
\end{itemize}
Let $P^\prime$ with $V(P^\prime) = V(P)$ be the path that visits all vertices in the order given by $\prec^\prime$. 
We say that $P^\prime$ is the \emph{$\ell$-refinement} of $P$.
\end{definition}

See \Cref{fig:i_refinement} for an example.
\begin{figure}
\begin{center}
\begin{tikzpicture}[
vertex/.style={circle,draw=black, fill,inner sep=1.5pt, outer sep=1pt},
edge/.style={-latex, very thick},
box/.style={very thick, fill=Peach, fill opacity=0.05, rounded corners=8pt},
xscale=1.5,
yscale=1.3
]
\clip (0,-2.5) rectangle (8,2.5);

\begin{scope}\def\slackx{0.1}
\def\slacky{0.4}

\def\height{2.2}

\definecolor{color1}{named}{Peach}  
\definecolor{color2}{named}{ForestGreen}  
\definecolor{color3}{named}{Violet}  
\definecolor{color4}{named}{ProcessBlue} 

\clip (0,-1) rectangle (8,6);

\begin{scope}[box/.append style={draw=color1}]
\draw[box] (-2,0) rectangle (2-\slackx,\height);
\draw[box] (2,0) rectangle (4-\slackx,\height);
\draw[box] (4,0) rectangle (6-\slackx,\height);
\draw[box] (6,0) rectangle (10,\height);
\end{scope}

\begin{scope}[every node/.style={vertex}, yshift=2.5pt]
\node (v11) at (0.75,\height-1*\slacky) {};
\node (v12) at (1.5,\height-1*\slacky) {};
\node (v13) at (2.75,\height-1*\slacky) {};
\node (v14) at (4.25,\height-1*\slacky) {};
\node (v15) at (5,\height-1*\slacky) {};
\node (v16) at (5.5,\height-1*\slacky) {};
\node (v17) at (6.5,\height-1*\slacky) {};
\node (v18) at (7.75,\height-1*\slacky) {};

\node (v21) at (2.25,\height-2*\slacky) {};
\node (v22) at (3,\height-2*\slacky) {};
\node (v23) at (3.75,\height-2*\slacky) {};

\node (v31) at (2.5,\height-3*\slacky) {};
\node (v32) at (3.5,\height-3*\slacky) {};
\node (v33) at (4.5,\height-3*\slacky) {};
\node (v34) at (7,\height-3*\slacky) {};

\node (v41) at (5.25,\height-4*\slacky) {};

\node (v51) at (0.25,\height-5*\slacky) {};
\node (v52) at (1.25,\height-5*\slacky) {};
\node (v53) at (3.25,\height-5*\slacky) {};
\node (v54) at (4.75,\height-5*\slacky) {};
\end{scope}

\foreach \i in {1,...,7}{
\pgfmathtruncatemacro{\next}{\i+1}                
\draw[edge] (v1\i) -- (v1\next);
}

\foreach \i in {1,...,2}{
\pgfmathtruncatemacro{\next}{\i+1}                
\draw[edge] (v2\i) -- (v2\next);
}

\foreach \i in {1,...,3}{
\pgfmathtruncatemacro{\next}{\i+1}                
\draw[edge] (v3\i) -- (v3\next);
}

\foreach \i in {1,...,3}{
\pgfmathtruncatemacro{\next}{\i+1}                
\draw[edge] (v5\i) -- (v5\next);
}

\draw[edge,out=-150,in=30, looseness=0.3] (v18) to (v21);
\draw[edge, out=-120,in=30, looseness=0.7] (v23) to (v31);
\draw[edge, out=-140,in=30, looseness=0.6] (v34) to (v41);
\draw[edge, out=-150,in=30, looseness=0.3] (v41) to (v51);

\node at (4,2.4) {\textcolor{color1}{$\cH_{\ell+1}$}};

\end{scope}
\begin{scope}[shift={(0,-2.3)}, edge/.append style={ForestGreen}]

\def\slackx{0.1}
\def\slacky{0.4}

\def\height{2.2}

\definecolor{color1}{named}{Peach}  
\definecolor{color2}{named}{ForestGreen}  
\definecolor{color3}{named}{Violet}  
\definecolor{color4}{named}{ProcessBlue} 

\clip (0,-1) rectangle (8,6);

\begin{scope}[box/.append style={draw=color1}]
\draw[box] (-2,0) rectangle (2-\slackx,\height);
\draw[box] (2,0) rectangle (4-\slackx,\height);
\draw[box] (4,0) rectangle (6-\slackx,\height);
\draw[box] (6,0) rectangle (10,\height);
\end{scope}

\begin{scope}[every node/.style={vertex}, yshift=2.5pt]
\node (v11) at (0.75,\height-1*\slacky) {};
\node (v12) at (1.5,\height-1*\slacky) {};
\node (v13) at (2.75,\height-1*\slacky) {};
\node (v14) at (4.25,\height-1*\slacky) {};
\node (v15) at (5,\height-1*\slacky) {};
\node (v16) at (5.5,\height-1*\slacky) {};
\node (v17) at (6.5,\height-1*\slacky) {};
\node (v18) at (7.75,\height-1*\slacky) {};

\node (v21) at (2.25,\height-2*\slacky) {};
\node (v22) at (3,\height-2*\slacky) {};
\node (v23) at (3.75,\height-2*\slacky) {};

\node (v31) at (2.5,\height-3*\slacky) {};
\node (v32) at (3.5,\height-3*\slacky) {};
\node (v33) at (4.5,\height-3*\slacky) {};
\node (v34) at (7,\height-3*\slacky) {};

\node (v41) at (5.25,\height-4*\slacky) {};

\node (v51) at (0.25,\height-5*\slacky) {};
\node (v52) at (1.25,\height-5*\slacky) {};
\node (v53) at (3.25,\height-5*\slacky) {};
\node (v54) at (4.75,\height-5*\slacky) {};
\end{scope}

\draw[edge] (v11) to (v12);
\draw[edge, out=-150,in=30, looseness=0.7] (v12) to (v51);
\draw[edge] (v51) to (v52);
\draw[edge, in=-180,out=0, looseness=1] (v52) to (v13);
\draw[edge, out=-120,in=30, looseness=0.9] (v13) to (v21);
\draw[edge] (v21) to (v22);
\draw[edge] (v22) to (v23);
\draw[edge] (v22) to (v23);
\draw[edge, out=-120,in=30, looseness=0.7] (v23) to (v31);
\draw[edge] (v31) to (v32);
\draw[edge, out=-90,in=90, looseness=0.8] (v32) to (v53);
\draw[edge, in=-130,out=0, looseness=1] (v53) to (v14);
\draw[edge] (v14) to (v15);
\draw[edge] (v15) to (v16);
\draw[edge, out=-100,in=45, looseness=0.4] (v16) to (v33);
\draw[edge, in=140,out=-30, looseness=0.7] (v33) to (v41);
\draw[edge, out=-120,in=30, looseness=0.4] (v41) to (v54);
\draw[edge, in=-160,out=0, looseness=1] (v54) to (v17);
\draw[edge] (v17) to (v18);
\draw[edge, out=-100,in=45, looseness=0.4] (v18) to (v34);
\end{scope}
\end{tikzpicture}
 \end{center}
\caption{
An example of a path $P$ (above) and its $\ell$-refinement (below).
All vertices are drawn from left to right according to the total order $\prec$.
}
\label{fig:i_refinement}
\end{figure}
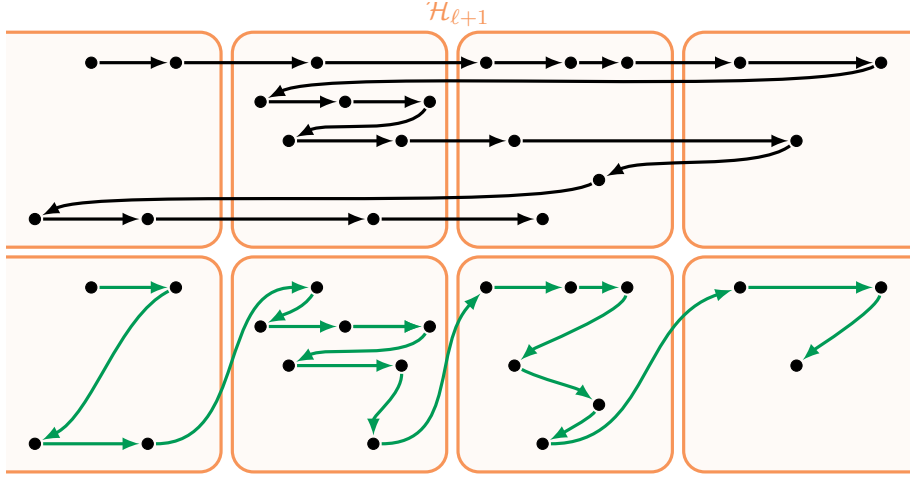
If $\level(P) > \ell$, then $P$ is its own $\ell$-refinement.
We can prove a strong upper bound on the additional costs that arise by replacing a subpath with its $\ell$-refinement:

\begin{lemma}
\label{lem:i-refinement}
Let $\ell \in [L]$.
Let $P_{\mathrm{main}}$ be a path and let $P_{\mathrm{sub}}$ be a subpath of $P_{\mathrm{main}}$ with $|V(P_{\mathrm{sub}})| \leq k + 1$ and $\level(V(P_{\mathrm{sub}})) \geq \ell$.
Let $P^\prime_{\mathrm{sub}}$ be the $\ell$-refinement of $P_{\mathrm{sub}}$ and denote by $P^\prime_{\mathrm{main}}$ the path obtained from $P_{\mathrm{main}}$ by replacing $P_{\mathrm{sub}}$ with $P^\prime_{\mathrm{sub}}$.
Then,
\begin{equation*}
c^{(k)}(P^\prime_{\mathrm{sub}} ) \leq c^{(k)}(P_{\mathrm{sub}}),
\end{equation*}
and
\begin{equation*}
c^{(k)}_{> \ell} (P^\prime_{\mathrm{main}}) \leq c^{(k)}_{> \ell} (P_{\mathrm{main}}) + c^{(k)}_{=\ell} (P_{\mathrm{sub}}).
\end{equation*}
\end{lemma}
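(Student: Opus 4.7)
The plan is to prove the two inequalities separately, both leveraging the assumption $|V(P_{\mathrm{sub}})| \leq k+1$, which ensures that every ordered pair of vertices in $V(P_{\mathrm{sub}})$ contributes its cost to $c^{(k)}$ of any path on $V(P_{\mathrm{sub}})$ exactly once.

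For the first inequality $c^{(k)}(P'_{\mathrm{sub}}) \leq c^{(k)}(P_{\mathrm{sub}})$, I would argue pair by pair. Since $\level(V(P_{\mathrm{sub}})) \geq \ell$, every pair $\{x,y\} \subseteq V(P_{\mathrm{sub}})$ has level $\geq \ell$. Pairs of level $>\ell$ appear in the same relative order in $P_{\mathrm{sub}}$ and in its $\ell$-refinement by definition of $\prec'$, so they contribute identically. A pair of level exactly $\ell$ appears as a forward edge in $P'_{\mathrm{sub}}$ (cost at most $D_{\ell}$), while in $P_{\mathrm{sub}}$ it appears either as a forward edge (cost $\leq D_{\ell}$) or as a backward edge (cost $= D_{\ell}$); in each case the contribution of the pair to $c^{(k)}(P_{\mathrm{sub}})$ dominates its contribution to $c^{(k)}(P'_{\mathrm{sub}})$. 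Summing over all pairs yields the claim.

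For the second inequality, the plan is to realize the replacement of $P_{\mathrm{sub}}$ by $P'_{\mathrm{sub}}$ inside the main path as a sequence of adjacent transpositions (Bubblesort with respect to $\prec'$) and invoke the second part of \Cref{lem:interchange_vertices} at each step. Two observations make this work. First, since an adjacent swap only flips the relative order of the two swapped vertices, and since any pair of level $>\ell$ starts in the correct $\prec'$-order (namely the $P_{\mathrm{sub}}$-order) and Bubblesort never swaps a pair already in the correct order, such a pair is never swapped. Hence every adjacent swap involves two vertices $x,y$ with $\level(\{x,y\})=\ell$ exactly (all pairs in $V(P_{\mathrm{sub}})$ have level $\geq \ell$), and the edge $(x,y)$ at the moment of the swap is backward (as $y \prec' x$ means $y \prec x$). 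Applying \Cref{lem:interchange_vertices} then bounds the increase of $c^{(k)}_{>\ell}$ of the current main path by $c(x,y) = D_{\ell}$ per swap.

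Second, the total number of Bubblesort swaps equals the number of inversions of $P_{\mathrm{sub}}$ relative to $\prec'$, which is exactly the number of level-$\ell$ pairs that appear as backward edges in $P_{\mathrm{sub}}$. Each such backward pair contributes $D_{\ell}$ to $c^{(k)}_{=\ell}(P_{\mathrm{sub}})$, so the total increase is at most $c^{(k)}_{=\ell}(P_{\mathrm{sub}})$, completing the proof of the second inequality. The only delicate point will be verifying that at each intermediate step of Bubblesort the swapped pair truly satisfies the hypothesis of \Cref{lem:interchange_vertices} (level $=\ell$ and backward); this follows from the two observations above, combined with the invariant that adjacent swaps only alter the relative order of the two vertices involved.
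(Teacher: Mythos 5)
Your proof is correct and follows essentially the same route as the paper's: both establish the second inequality by realizing the reordering as a Bubblesort sequence of adjacent transpositions, observe that only level-$\ell$ inversions are ever swapped, invoke the second part of \Cref{lem:interchange_vertices} for each swap, and bound the total by $c^{(k)}_{=\ell}(P_{\mathrm{sub}})$ using $|V(P_{\mathrm{sub}})| \leq k+1$. The only presentational difference is the first inequality: the paper derives it from the explicit identity $c^{(k)}(P'_{\mathrm{sub}}) = c^{(k)}(P_{\mathrm{sub}}) - \sum_{(x,y)\in F} c(x,y) + \sum_{(x,y)\in F} c(y,x)$ using the swap set $F$, while you argue directly pair by pair; both amount to the same observation that level-$\ell$ pairs that were backward become forward (cost drops from $D_\ell$ to at most $D_\ell$) and all other pairs are unchanged.
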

\begin{proof}
We start by proving the second inequality.
Let $\prec^\prime$ be the total order on $V(P_{\mathrm{sub}})$ as given in \Cref{def:i-refinement}.
We modify $P_{\mathrm{main}}$ in a similar way as in the proof of \Cref{lem:path_replacement}:
In the beginning, set $F \coloneqq \emptyset$.
Then, iteratively interchange the visits of vertices $x \neq y \in V(P_{\mathrm{sub}})$ if $y \prec^\prime x$ and $(x,y)$ is part of the current path, and add $(x,y)$ to $F$.
Again, in other words, we reorder $P_{\mathrm{sub}}$ according to $\prec^\prime$ using the Bubblesort algorithm, and the resulting path is  $P^\prime_{\mathrm{main}}$.
We have $(x,y) \in F$ if and only if $y \prec x$, $\level(\parset{x,y}) = \ell$ and $x$ appears before $y$ on $P_{\mathrm{sub}}$ because our initial order is given by $P_{\mathrm{sub}}$.
Furthermore, a pair of vertices is interchanged at most once.
Therefore, by the second part of \Cref{lem:interchange_vertices}, we have
\begin{equation*}
c^{(k)}_{> \ell} (P^\prime_{\mathrm{main}}) \leq c^{(k)}_{> \ell} (P_{\mathrm{main}}) + \sum_{(x,y) \in F} c(x,y).
\end{equation*} 
Since $|V(P_{\mathrm{sub}})| \leq k + 1$, each edge $(x,y) \in F$ contributes to $c^{(k)}_{=\ell}(P_{\mathrm{sub}})$.
Thus, $\sum_{(x,y) \in F} c(x,y) \leq c^{(k)}_{=\ell}(P_{\mathrm{sub}})$.
This shows the second inequality.

For the first inequality, simply observe that
\begin{equation*}
c^{(k)}(P^\prime_{\mathrm{sub}}) = c^{(k)}(P_{\mathrm{sub}}) - \sum_{(x,y) \in F} c(x,y)  + \sum_{(x,y) \in F} c(y,x),
\end{equation*}
because $|V(P_{\mathrm{sub}})| \leq k + 1$.
Since $y \prec x$ if $(x,y) \in F$, we have $c(x,y) \geq c(y,x)$.
This proves the first inequality.
\end{proof}

\subsection{Turning Tours into Covering Solutions}\label{sec:tours_to_covers}

Consider a non-degenerate hierarchically ordered instance $\cI =(V, L, \cH, D, \prec, c, k)$ of Hop-ATSP and a tour $T$ on $V$.
Our goal is to prove that there exists a Path Covering solution $\cP$ for the instance $\cI$ with $w(\cP) \leq O(L^3) \cdot c^{(k)}(T)$.
By \Cref{lem:khop_skipping_vertices}, we may assume that $T$ is a cycle.
We will first remove an arbitrary edge from $T$ to obtain a path $P$ with vertex set $V$.
If the path $P$ happens to be monotone, then $\cP=\{(P,1)\}$ would be a Path Covering solution.
Because our instance $\cI$ is non-degenerate, \Cref{lem:first_level_fixed_cost_lb} implies that 
\[
w(\cP) \ =\ c^{(k)}(P) + k^2D_1 \leq (1+ 2^8) \cdot c^{(k)}(P).
\] 
Of course $P$ will in general not be monotone, i.e., it will contain some backward edges.

Our strategy will be to first get rid of all backward edges of level 1, then those of level 2, and so on.
However, removing the backward edges of a given level can become complicated when there are many such jumps close together.
To address this, we first isolate these jumps, which will allows us to treat each jump independently. 
In order to formalize what we mean by ``isolating jumps'', we introduce the notion of a \emph{jump segment}.

\begin{definition}
[Jump segment]
\label{def:jump_segment}
An \emph{$\ell$-jump segment} is a path $P$ with $|V(P)| \leq k + 1$ and $\level(V(P)) = \ell$ that consists of two vertex disjoint monotone paths $P_{\mathrm{entry}}$ and $P_{\mathrm{exit}}$, that are connected by a backward edge~$e$ (where $P_{\mathrm{entry}}$ precedes $P_{\mathrm{exit}}$ on $P$).
\end{definition}
See \Cref{fig:jump_segment} for an illustration.
\begin{figure}
\begin{center}
\begin{tikzpicture}[
xscale=1,
vertex/.style={circle,draw=black, fill,inner sep=1.5pt, outer sep=1pt},
edge/.style={-latex,very thick},
box/.style={very thick, fill=Gray, fill opacity=0.05, rounded corners=8pt},
]

\definecolor{color1}{named}{Peach}  
\definecolor{color2}{named}{ForestGreen}  
\definecolor{color3}{named}{Violet}  
\definecolor{color4}{named}{ProcessBlue} 

\draw[box,draw=color1] (0.5,-0.25) rectangle (3.85,1.25);
\draw[box,draw=color1] (3.95,-0.25) rectangle (7.75,1.25);

\begin{scope}[every node/.style={vertex, draw=color2, fill=color2}]
\node (v1) at (0.75,1) {};
\node (v2) at (1.5,1) {};
\node (v3) at (2.5,1) {};
\node (v4) at (3.25,1) {};
\node (v5) at (4.5,1) {};
\node (v6) at (5.25,1) {};
\node (v7) at (6.5,1) {};
\node (v8) at (7.5,1) {};
\end{scope}
\foreach \i in {1,...,7} {
\pgfmathtruncatemacro{\next}{\i+1}
\draw[edge, color2] (v\i) to (v\next);
}
\node at (2,1.5) {\textcolor{color2}{$P_{\mathrm{entry}}$}};

\begin{scope}[every node/.style={vertex, draw=color3, fill=color3}]
\node (w1) at (2,0) {};
\node (w2) at ($(2.5,0)!0.5!(3.25,0)$) {};
\node (w3) at ($(3.2,0)!0.25!(4.5,0)$) {};
\node (w4) at ($(4.5,0)!0.5!(5.25,0)$) {};
\node (w5) at ($(5.25,0)!0.25!(6.5,0)$) {};
\node (w6) at ($(5.25,0)!0.75!(6.5,0)$) {};
\end{scope}
\foreach \i in {1,...,5} {
\pgfmathtruncatemacro{\next}{\i+1}
\draw[edge, color3] (w\i) to (w\next);
}
\node at (5.5,-0.5) {\textcolor{color3}{$P_{\mathrm{exit}}$}};

\draw[edge, densely dashed, out=-150, in=30, looseness=0.5, red] (v8) to (w1);
\node at (6.5,0.45) {\textcolor{red}{$e$}};

\node at (4,1.5) {\textcolor{color1}{$\cH_{\ell+1}$}};

\end{tikzpicture}
 \end{center}
\caption{
Illustration of an $\ell$-jump segment.
All vertices are drawn from left to right according to the total order~$\prec$.
The path contains a single backward edge $e$ that connects the two monotone subpaths $P_{\mathrm{entry}}$ and $P_{\mathrm{exit}}$.
}
\label{fig:jump_segment}
\end{figure}
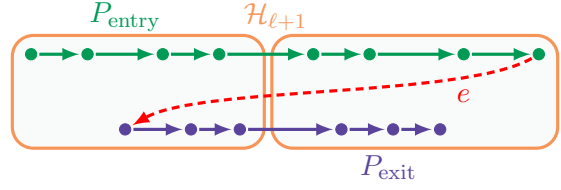
We remark that $\ell \leq \level(e) < L$ because the endpoints of the edge~$e$ are contained in $V(P)$ and $\cH_L$ contains only singletons.
Note however, that $\level(e)$ is not necessarily equal to $\ell = \level(V(P))$, but may be strictly greater.

In a path $P$ with ``isolated jumps'', we will be able to assign a jump-segment $P^e$ to each backward edge~$e$ such that these jump-segments for all backward edges are vertex disjoint subpaths of $P$.
Our strategy will be to change the path only in the local environment of $e$ that is given by the jump-segment $P^e$ when getting rid of the backward edge $e$.
Because the jump segments assigned to different backward edges are vertex disjoint, this will allow us to handle the different jumps independently.

In order to be able to get rid of a backward edge $e$ by only changing $P$ in the local environment given by its assigned jump segment, the jump segment $P^e$ must satisfy certain properties.
We will ensure that each jump segment satisfies one of the following two useful properties:
If all vertices that the path $P$ visits before the jump segment $P^e$ are left of the vertices of $P^e$, then intuitively we can handle the part of $P$ before the jump segment independently of $P^e$.
In this case, we will call $P^e$ \emph{entry-safe}.

\begin{definition}
[Entry-safe, exit-safe]\label{def:safe_jump_segments}
Let $P_{\mathrm{main}}$ be a path, and let $P_{\mathrm{sub}}$ be a subpath of $P_{\mathrm{main}}$.
We say that $P_{\mathrm{sub}}$ is \emph{entry-safe} within $P_{\mathrm{main}}$ if all vertices that appear (strictly) before $P_{\mathrm{sub}}$ on $P_{\mathrm{main}}$ are left of $V(P_{\mathrm{sub}})$.

Analogously, $P_{\mathrm{sub}}$ is \emph{exit-safe} within $P_{\mathrm{main}}$ if all vertices that appear (strictly) after $P_{\mathrm{sub}}$ on $P_{\mathrm{main}}$ are right of $V(P_{\mathrm{sub}})$.
\end{definition}

If a jump-segment $P^e$ is not entry-safe, we will ensure that $P^e_{\rm entry}$ contains many vertices, so the jump is ``far away'' from earlier jumps of $P$.
Analogously, we will ensure that if a jump-segment $P^e$ is not exit-safe, we will ensure that $P^e_{\rm exit}$ contains many vertices, so the jump is ``far away'' from later jumps of $P$.

\begin{definition}[Entry-buffered, exit-buffered]
For $\omega \in \parset{\mathrm{entry}, \mathrm{exit}}$, we say that an $\ell$-jump segment $P$ is \emph{$\omega$-buffered} if $|V(P_{\omega})| \geq \frac{k}{16} - (\ell-1)\cdot \frac{k}{32(L-1)}$.
\end{definition}

\begin{figure}
\begin{center}
\begin{tikzpicture}[
xscale=0.29,
vertex/.style={circle,draw=black, fill,inner sep=1.5pt, outer sep=1pt},
edge/.style={-latex,very thick},
]

\definecolor{color1}{named}{Peach}  
\definecolor{color2}{named}{ForestGreen}  
\definecolor{color4}{named}{purple}  
\definecolor{color3}{named}{ProcessBlue} 

\begin{scope}[every node/.style={vertex}]
\node (v1) at (0,1) {};
\node (v2) at (2,1) {};
\node (v3) at (5,1) {};
\end{scope}

\begin{scope}[every node/.style={vertex, color2}]
\node (v4) at (7,1) {};
\node (v5) at (9,1) {};
\node (v6) at (12,1) {};
\node (v7) at (14,1) {};
\node (v8) at (17,1) {};
\node (v9) at (10,0) {};
\node (v10) at (13,0) {};
\node (v11) at (16,0) {};
\end{scope}

\begin{scope}[every node/.style={vertex}]
\node (v12) at (19,0) {};
\node (v13) at (21,0) {};
\node (v14) at (23,0) {};
\node (v15) at (26,0) {};
\end{scope}

\begin{scope}[every node/.style={vertex, color3}]
\node (v16) at (28,0) {};
\node (v17) at (30,0) {};
\node (v18) at (32,0) {};
\node (v19) at (34,0) {};
\node (v20) at (36,0) {};
\node (v21) at (40,0) {};
\node (v22) at (44,0) {};
\end{scope}

\begin{scope}[every node/.style={vertex, color3}]
\node (v23) at (20,1) {};
\node (v24) at (22,1) {};
\node (v25) at (24,1) {};
\node (v26) at (27,1) {};
\node (v27) at (29,1) {};
\node (v28) at (31,1) {};
\node (v29) at (37,1) {};
\end{scope}

\begin{scope}[every node/.style={vertex, color4}]
\node (v30) at (39,1) {};
\node (v31) at (41,1) {};
\node (v32) at (43,1) {};
\node (v33) at (46,1) {};
\node (v34) at (48,1) {};
\node (v35) at (50,1) {};
\node (v36) at (52,1) {};
\end{scope}

\begin{scope}[every node/.style={vertex, color4}]
\node (v37) at (33,2) {};
\node (v38) at (35,2) {};
\node (v39) at (43,2) {};
\node (v40) at (45,2) {};
\node (v41) at (53,2) {};
\end{scope}

\begin{scope}[every node/.style={vertex}]

\end{scope}

\foreach \i in {1,...,3} {
\pgfmathtruncatemacro{\next}{\i+1}
\draw[edge] (v\i) to (v\next);
}

\foreach \i in {4,...,7} {
\pgfmathtruncatemacro{\next}{\i+1}
\draw[edge,color2] (v\i) to (v\next);
}
\draw[edge, densely dashed, out=-150, in=20, looseness=0.5, color2] (v8) to (v9);

\foreach \i in {9,...,10} {
\pgfmathtruncatemacro{\next}{\i+1}
\draw[edge,color2] (v\i) to (v\next);
}

\foreach \i in {11,...,15} {
\pgfmathtruncatemacro{\next}{\i+1}
\draw[edge] (v\i) to (v\next);
}

\foreach \i in {16,...,21} {
\pgfmathtruncatemacro{\next}{\i+1}
\draw[edge,color3] (v\i) to (v\next);
}

\draw[edge, densely dashed, out=150, in=-30, looseness=0.15, color3] (v22) to (v23);

\foreach \i in {23,...,28} {
\pgfmathtruncatemacro{\next}{\i+1}
\draw[edge,color3] (v\i) to (v\next);
}

\draw[edge] (v29) to (v30);

\foreach \i in {30,...,35} {
\pgfmathtruncatemacro{\next}{\i+1}
\draw[edge, color4] (v\i) to (v\next);
}

\draw[edge, densely dashed, out=150, in=-30, looseness=0.15, color4] (v36) to (v37);

\foreach \i in {37,...,40} {
\pgfmathtruncatemacro{\next}{\i+1}
\draw[edge, color4] (v\i) to (v\next);
}

\node[above=2pt,color2] at (13,1) {$P^{e}$};
\node[above=2pt,color3] at (28,1) {$P^{f}$};
\node[below=2pt,color4] at (48,1) {$P^{h}$};
\end{tikzpicture}
 \end{center}
\caption{\label{fig:good_jump_segments}
Illustration of an $\ell$-buffering consisting of three $\ell$-jump segments, where backward edges are dashed and vertices are drawn from left to right according to the order $\prec$.
All vertices are drawn from left to right according to the total order $\prec$.
The $\ell$-jump segment $P^e$ (green) is is entry-safe and exit-safe within the path $P$.
The $\ell$-jump segment $P^f$ (blue) is ``far away'' from other jumps in the sense that $P_{\rm entry}^f$ and $P_{\rm exit}^f$ contain many vertices.
More precisely, $P^f$ is entry-buffered and exit-buffered (assuming $7 = \frac{k}{16} - (\ell-1)\cdot \frac{k}{32(L-1)}$).
The $\ell$-jump segment $P^h$ (red) is entry-buffered and exit-safe.
}
\end{figure}
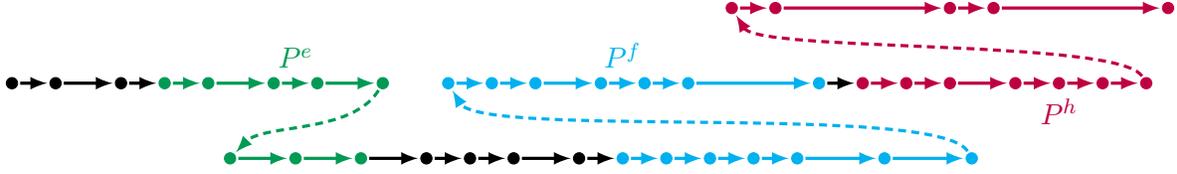

We can now formalize the aforementioned concept of ``isolated jumps'' in a path $P$.
See \Cref{fig:good_jump_segments} for an illustration.

\begin{definition}
[$\ell$-buffering]
An \emph{$\ell$-buffering} of a path $P$ is a collection $\cB$ of  of vertex disjoint subpaths of $P$ such that every backward edge $e$ of $P$ is contained in some path $P^e\in \cB$, and each path $Q\in \cB$ satisfies the following properties:
\begin{enumerate}

\item
$Q$ is an $\ell^\prime$-jump segment with $\ell^\prime \geq \ell$.

\item
$Q$ is entry-buffered or entry-safe within $P$.

\item
$Q$ is exit-buffered or exit-safe within $P$.
\end{enumerate}
We say that $P$ is \emph{$\ell$-buffered} if there exists an $\ell$-buffering of $P$.
\end{definition}

To prove that we can turn our path $P$ into a Path Covering solution $\cP$ of small weight, we will use a recursive argument.
By \Cref{eq:equivalence_to_set_cover}, we want $\cP$ to cover all elements of the Set Cover universe $\cU=V\times [L]$.
To describe our recursive construction, we need the notion of a \emph{main-tour cover}, which will be used to describe what we aim for when recursively applying our argument to a path $P$ that contains only a subset $U\subseteq V$ of the vertices.

\begin{definition}[Main-tour cover]
Let $U \subseteq V$ be an arbitrary set of vertices.
A \emph{main-tour cover} of $U$ is a monotone path $P_{\rm main}$ with $V(P_{\rm main}) \subseteq U$ together with a set $\cP \subseteq \cS$ of path-level pairs satisfying the following two conditions:
\begin{enumerate}
\item
$V(P_{\rm main}) = U$ or $|V(P_{\rm main})| \geq k$.
\item
The path-level pairs $\cP \cup \big\{ \big(P_{\rm main},\level(U)\big)\big\}$ cover the elements $U\times \big\{\level(U),\dots, L\big\} $ of the Set Cover universe $\cU$.
\end{enumerate}
\end{definition}

The key lemma, which we will prove by induction on $L-\ell$, is the following:

\begin{lemma}\label{lem:main_recursion_covering_weak}
Let $P_{\mathrm{old}}$ be an $\ell$-buffered path with $\ell \coloneqq \level(V(P_{\mathrm{old}}))$.
Then there is a main-tour cover $(P_{\mathrm{new}}, \cP)$ of $V(P_{\mathrm{old}})$ such that
\begin{equation}
\label{eq:inductive_reordering}
\begin{split}
c^{(k)}(P_{\mathrm{new}}) + w(\cP)\ \leq &\ \scconstant^\ast \cdot (L - \ell + 1) \cdot c^{(k)}(P_{\mathrm{old}}),
\end{split}
\end{equation}
where $\scconstant^\ast \coloneqq 3\scconstant_1 + \left(\scconstant_2 + 2^{14}\right)$, $\scconstant_1 \coloneqq2^{12}L^2$, and $\scconstant_2 \coloneqq 2^8L $.
\end{lemma}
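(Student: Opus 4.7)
We proceed by induction on $L - \ell$. In the base case $\ell = L$, the set $V(P_{\mathrm{old}})$ is a singleton of $\cH_L$, so $P_{\mathrm{old}}$ has no edges; we take $P_{\mathrm{new}} := P_{\mathrm{old}}$ and $\cP := \emptyset$, and both sides of \eqref{eq:inductive_reordering} are zero.

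For the inductive step with $\ell < L$, fix an $\ell$-buffering $\cB$ of $P_{\mathrm{old}}$. The plan is to eliminate every jump segment $Q \in \cB$ whose vertex set is at level exactly $\ell$, after which the remaining subpaths are all $(\ell+1)$-confined and susceptible to the induction hypothesis. For each such $Q$ with backward edge $e$, we follow the case analysis sketched in \Cref{sec:overview_covering} for $L = 2$: in the ``deep jump'' case $V(Q_{\mathrm{exit}}) \prec V(Q_{\mathrm{entry}})$, the $\Omega(k^2 D_\ell)$ level-$\ell$ content of $c^{(k)}(Q)$ absorbs the $k^2 D_\ell$ overhead of splitting $P_{\mathrm{old}}$ at $e$; in the ``overlapping'' case we apply the $\ell$-refinement from \Cref{lem:i-refinement} within $Q$, move the overlapping vertices of $Q_{\mathrm{entry}}$ and $Q_{\mathrm{exit}}$ into a remainder set $R_Q$, and charge the resulting cost either against the level-$\ell$ backward cost inside $Q$ or against the $\geq k/16 - (\ell-1)k/(32(L-1))$ vertices in the entry- or exit-buffer of $Q$, each of which contributes $\Theta(D_\ell)$ worth of backward edges to $c^{(k)}(Q)$. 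After processing every level-$\ell$ jump segment, $P_{\mathrm{old}}$ decomposes into subpaths $P^{(1)}, \ldots, P^{(m)}$ whose vertex sets each lie inside a single set of $\cH_{\ell+1}$, and each $P^{(i)}$ inherits an $(\ell+1)$-buffering by restricting $\cB$ to its vertex set.

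By induction, every $P^{(i)}$ admits a main-tour cover $(P^{(i)}_{\mathrm{new}}, \cP^{(i)})$ with combined cost at most $\scconstant^\ast (L - \ell) \cdot c^{(k)}(P^{(i)})$. We assemble $P_{\mathrm{new}}$ by concatenating the $P^{(i)}_{\mathrm{new}}$ in the $\prec$-order of their vertex sets (producing only forward edges, since distinct sets in $\cH_{\ell+1}$ are $\prec$-separated), and set $\cP := \bigcup_i \cP^{(i)} \cup \bigl\{(P^{(i)}_{\mathrm{new}}, \ell+1) : |V(P^{(i)})| \geq k\bigr\}$, together with auxiliary pairs that take responsibility for the sets $H_{\ell+1}(v)$ containing any remainder vertex $v \in R_Q$ via the responsibility condition of \Cref{problem:covering}. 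Summing the recursive and local contributions yields
\[
c^{(k)}(P_{\mathrm{new}}) + w(\cP) \leq \scconstant^\ast (L - \ell) \cdot c^{(k)}(P_{\mathrm{old}}) + \scconstant^\ast \cdot c^{(k)}(P_{\mathrm{old}}) = \scconstant^\ast (L - \ell + 1) \cdot c^{(k)}(P_{\mathrm{old}}),
\]
where the summands $3\scconstant_1$, $\scconstant_2$, and $2^{14}$ in $\scconstant^\ast$ account respectively for up to three invocations of $\ell$-refinement per jump segment (\Cref{lem:i-refinement}), the $O(L)$ slack needed to cover remainder vertices at level $\ell+1$, and the path-replacement constant from \Cref{lem:path_replacement}.

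The hard part will be verifying that after moving vertices into remainder sets and reordering via the $\ell$-refinement, the subpaths $P^{(i)}$ still admit an $(\ell+1)$-buffering compatible with $\cB$---this is delicate because a jump segment at level $> \ell$ may straddle the boundary between vertices kept in $P^{(i)}$ and vertices moved to some $R_Q$, and its entry-/exit-safe and entry-/exit-buffered statuses must all upgrade consistently. We also need to check that the covering conditions of \Cref{problem:covering} hold at every level $\ell+1, \ldots, L$ after assembly; here the responsibility option is indispensable, since covering each remainder vertex directly by a path-level pair at level $\leq \ell$ would cost $\Theta(k^2 D_\ell)$ per vertex and destroy the bound.
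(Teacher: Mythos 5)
Your high-level strategy is close to the paper's, but there are two linked gaps that together make the account unconvincing as written.

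\textbf{First gap: you cannot always concatenate the pieces into a single $P_{\mathrm{new}}$.} You plan to ``eliminate every jump segment whose vertex set is at level exactly $\ell$'', and then ``assemble $P_{\mathrm{new}}$ by concatenating the $P^{(i)}_{\mathrm{new}}$ in the $\prec$-order of their vertex sets (producing only forward edges, since distinct sets in $\cH_{\ell+1}$ are $\prec$-separated).'' This is exactly the step that fails in the presence of a splitting $\ell$-jump segment. For a splitting jump, the defining property $j^\ast_{\mathrm{entry}} > j^\ast_{\mathrm{exit}}$ forces the vertex sets of the two pieces on either side of the removed backward edge to overlap (or even be inverted) in $\prec$-position, and --- crucially --- the two pieces are not confined to a single $\cH_{\ell+1}$-set, since the jump segment has level exactly $\ell$. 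After splitting there, the pieces visit overlapping $\prec$-ranges, so you cannot merge them into one monotone path at level $\ell$ without undoing the split. The paper isolates precisely this obstruction: in the splitting-free case the cleaned path $P'$ enters and leaves each $H \in \cH_{\ell+1}$ at most once (so the monotone reassembly is safe), and in the non-splitting-free case they do \emph{not} reassemble --- they split the path once and for all, keeping the resulting sub-covers as \emph{separate} path-level pairs at level $\ell$, with $P_{\mathrm{new}}$ being merely $k$ arbitrary vertices.

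\textbf{Second gap: the weak induction hypothesis cannot pay for the fixed costs that the split forces.} Once you accept that splitting jumps produce $r$ genuinely separate pieces, you must place (roughly) $r$ path-level pairs at level $\ell$ into $\cP$, each costing $k^2 D_\ell$. The splitting-jump lower bound $\scconstant_1 \cdot c^{(k)}(Q) \geq k^2 D_\ell$ (from \Cref{lem:splitting_i_jump}) supplies $\Theta(\eta_1) \cdot c^{(k)}(P_{\mathrm{old}})$ worth of budget for these, but the recursion on each piece $Q_j$ is still at level $\ell$ (not $\ell+1$, because removing one backward edge does not lower the level of a piece), so if you invoke the weak hypothesis $\eta^\ast(L-\ell+1) \cdot c^{(k)}(Q_j)$ on each $Q_j$, the sum already uses up the entire budget $\eta^\ast(L-\ell+1) \cdot c^{(k)}(P_{\mathrm{old}})$ with nothing left for the fixed costs. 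This is exactly why the paper proves the strengthened \Cref{lem:main_recursion_covering_strong}, which subtracts a $3\scconstant_1 \cdot c^{(k)}(\cdot)$ slack whenever the input has an $\ell$-splitting-free $\ell$-buffering. Each $Q_j$ after splitting does have such a buffering, so the strengthened bound creates exactly the slack $3\scconstant_1 \sum_j c^{(k)}(Q_j) \leq 3\scconstant_1 c^{(k)}(P_{\mathrm{old}})$ that absorbs the $(r+1) k^2 D_\ell$ fixed costs (cf.\ eq.~\eqref{eq:inductive_reordering:fixed_cost_lb}). A single-pass treatment with the weak hypothesis does not produce this slack; your outline gestures at ``absorbing the $k^2 D_\ell$ overhead'' but never confronts the fact that $r$ can be large and each new piece costs a full fixed charge while still consuming its full recursive budget.

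Finally, you flag but do not resolve the delicacy of showing that the $(\ell+1)$-buffering survives the refinement and removal operations; the paper devotes an entire lemma to exactly this (with the ``$Q^e$ is entry-/exit-buffered or entry-/exit-safe'' case analysis), and it genuinely requires the definitions of $j^\ast_{\mathrm{entry}}$ and $j^\ast_{\mathrm{exit}}$ with the $-\ell \cdot k/(32(L-1))$ degradation built into ``$\omega$-buffered''. As written, your proof plan cannot be completed without introducing both the splitting/non-splitting case analysis and the strengthened induction hypothesis (or some other mechanism that generates the same slack).
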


Note that if we apply  \Cref{lem:main_recursion_covering_weak} to a $1$-buffered path $P_{\rm old}$ with vertex set $V$, we obtain a solution  $\cP\cup \{(P_{\rm new},1)\}$ of our Path Covering problem.
In order to apply \Cref{lem:main_recursion_covering_weak}, we need to start with a path that is $1$-buffered.
Using the tools that we have already established, this is not difficult to obtain:

\begin{lemma}
\label{lem:initial_reordering}
Let $P$ be a path.
Then, there is a $1$-buffered path $\hat P$ with $V(P) = V(\hat P)$ and 
\[
c^{(k)}(\hat P) \leq 2^{13} \cdot c^{(k)}(P).
\]
\end{lemma}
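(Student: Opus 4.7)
The plan is to make $\hat P$ by partitioning the vertex sequence of $P$ into consecutive blocks of $k$ vertices and sorting each block internally according to $\prec$. More precisely, number the vertices $v_1,\dots,v_n$ in the order they appear on $P$, let $I_j \coloneqq \{v_{(j-1)k+1},\dots,v_{\min(jk,n)}\}$ for $j=1,\dots,m$, and let $\hat I_j$ denote the path on $V(I_j)$ that visits its vertices in increasing order of $\prec$. The path $\hat P$ is the concatenation of $\hat I_1,\hat I_2,\dots,\hat I_m$; clearly $V(\hat P)=V(P)$.

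To bound $c^{(k)}(\hat P)$, I would apply \Cref{lem:path_replacement} iteratively, replacing $I_j$ by $\hat I_j$ for $j=1,\dots,m$. Since $|V(I_j)|\le k\le k+1$, each $\hat I_j$ is the sorted version of $I_j$ from \Cref{lem:sorted_path}, giving $c^{(k)}(\hat I_j)\le c^{(k)}(I_j)$. Telescoping the bounds yields
\[
c^{(k)}(\hat P) \ \le\ c^{(k)}(P) + \sum_{j=1}^m c^{(k)}(\hat I_j) + 2^{12}\sum_{j=1}^m c^{(k)}(I_j)\ \le\ c^{(k)}(P) + (1+2^{12})\sum_{j=1}^m c^{(k)}(I_j).
\]
Every edge counted by $c^{(k)}(I_j)$ is between two vertices of the $j$-th block that are at most $k-1$ positions apart in the original path $P$, hence it contributes to $c^{(k)}(P)$; since the blocks are vertex-disjoint these contributions are all distinct, giving $\sum_j c^{(k)}(I_j)\le c^{(k)}(P)$ and thus $c^{(k)}(\hat P)\le (2+2^{12})c^{(k)}(P)\le 2^{13}c^{(k)}(P)$.

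Next, I would construct a $1$-buffering of $\hat P$. Because each $\hat I_j$ is monotone, the only backward edges of $\hat P$ are edges connecting the last vertex of some $\hat I_j$ to the first vertex of $\hat I_{j+1}$, and there is at most one such edge per boundary. For each backward edge $e$ between $\hat I_j$ and $\hat I_{j+1}$ I would define the jump segment $P^e$ whose $P^e_{\mathrm{entry}}$ consists of the last $\lfloor k/2\rfloor$ vertices of $\hat I_j$ (in the order induced by $\hat I_j$) and whose $P^e_{\mathrm{exit}}$ consists of the first $\lceil k/2\rceil$ vertices of $\hat I_{j+1}$, connected by $e$. Then $P^e$ is a subpath of $\hat P$ of length $k$ with $|V(P^e)|\le k+1$, both $P^e_{\mathrm{entry}}$ and $P^e_{\mathrm{exit}}$ are monotone, and each has at least $\lfloor k/2\rfloor$ vertices, which exceeds the buffering threshold $k/16$ for any $\ell\ge 1$ as soon as $k\ge 2$ (the trivial case $k=1$ only requires nonempty entry/exit and is handled directly). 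Hence $P^e$ is both entry-buffered and exit-buffered. Vertex-disjointness between jump segments is immediate: two backward edges at non-adjacent boundaries use disjoint blocks, and two at adjacent boundaries exactly split the common block into its first $\lceil k/2\rceil$ and last $\lfloor k/2\rfloor$ vertices.

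The only remaining subtlety, which I expect to be the main obstacle, is the edge case in which the final block $\hat I_m$ has fewer than $\lceil k/2\rceil$ vertices and the last backward edge $e$ lies at the boundary into $\hat I_m$; then the standard construction cannot provide a buffered exit. I would instead set $P^e_{\mathrm{exit}} \coloneqq V(\hat I_m)$, which still yields $|V(P^e)|\le \lfloor k/2\rfloor + |\hat I_m|\le k$, and makes $P^e$ exit-safe within $\hat P$ because no vertex of $\hat P$ appears after $\hat I_m$. With this adjustment the collection $\{P^e\}_e$ is a $1$-buffering of $\hat P$, completing the proof.
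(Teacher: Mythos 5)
Your proof follows essentially the same strategy as the paper's: partition $P$ into blocks, sort each block according to $\prec$, bound the cost via \Cref{lem:path_replacement} and \Cref{lem:sorted_path}, and build the $1$-buffering from jump segments straddling block boundaries. The one genuine difference is the block size: you use fixed blocks of exactly $k$ vertices, while the paper invokes \Cref{lem:block_decomp} to get blocks of size in $\bigl[2\lceil k/16\rceil, k/2\bigr]$, which requires $k>16$ and therefore forces a separate treatment of $k\le 16$ (interchanging the endpoints of alternating backward edges). Your choice is a bit cleaner: the observation that $\sum_j c^{(k)}(I_j)\le c^{(k)}(P)$ (because the blocks are vertex-disjoint and have at most $k$ vertices) makes the cost accounting just as tight, and your jump segments have $\lfloor k/2\rfloor$ entry vertices and $\lceil k/2\rceil$ exit vertices, giving buffering and disjointness directly for all $k\ge 2$.

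However, you have not actually handled $k=1$. For $k=1$ every block is a singleton, sorting is a no-op, and $\hat P = P$; if $P$ contains two consecutive backward edges $(v_{j-1},v_j)$ and $(v_j,v_{j+1})$, then any two jump segments around these edges must both contain $v_j$ (a jump segment needs a nonempty entry \emph{and} a nonempty exit), so they cannot be made vertex-disjoint. Your parenthetical \emph{``only requires nonempty entry/exit and is handled directly''} addresses nonemptiness but not disjointness, which is the real obstruction. You would need either to add the paper's interchange step (swap endpoints of alternate backward edges, with cost controlled by \Cref{lem:interchange_vertices}) so that $\hat P$ has no two consecutive backward edges, or to explicitly restrict to $k\ge 2$, which is harmless in the context of the reduction (\Cref{thm:reducing_hop_atsp} always produces $k\ge 2^4$) but should then be stated in the lemma.
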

\begin{proof}
If $|V(P)| \leq k$, then we are done by \Cref{lem:sorted_path}.
If $k \leq 16$, any jump segment is entry- and exit-buffered.
Hence, to obtain $P^\prime$, we simply interchange the endpoints of every second edge that is backward.
All remaining backward edges are disjoint and form a $1$-buffering (where each jump segment consists of a single backward edge).
The cost increase can then be bounded by \Cref{lem:sorted_path,lem:path_replacement}.

Thus, we may assume $|V(P)| \geq k > 16$.
We partition $P$ into  vertex disjoint subpaths $P_1, \dots, P_r$  of $P$ with $\dot\bigcup_{j=1}^r V(P_j) = V(P)$ and $2\ceil*{k/16} \leq |V(P_j)| \leq k/2$
(cf.~\Cref{lem:block_decomp}).
For each $j\in [r]$ let $P^\prime_j$ be the path that visits the vertices from $V(P_j)$ according to the order $\prec$.
Replace each subpath $P_j$ of $P$ with $P^\prime_j$ and denote by $P^\prime$ the resulting path.

By \Cref{lem:sorted_path,lem:path_replacement}, we have
\begin{equation*}
c^{(k)}(P^\prime) \leq c^{(k)}(P) + \sum_{j=1}^r 2^{12} \cdot c^{(k)}(P_j) +  \sum_{j=1}^rc^{(k)}(P^{\prime}_j) \leq 2^{13} \cdot c^{(k)}(P).
\end{equation*}
It remains to show that $P^\prime$ is $1$-buffered.
Any backward edge $e$ of $P^\prime$ connects a subpath $P^\prime_{j}$ and $P^\prime_{j+1}$ (if we number the subpaths appropriately).
In that case, we can use $\ceil*{k/16}$ vertices from each of $P^\prime_{j}$ and $P^\prime_{j+1}$ to construct an $\ell$-jump segment around $e$ (for some $\ell \in [L]$) that is both entry- and exit-buffered. 
Doing this for every backward-edge of $P^\prime$ yields a $1$-buffering of $P^\prime$.
\end{proof}

Before proving \Cref{lem:main_recursion_covering_weak}, we first show that its proof will complete our reduction to the Path Covering.
In other words, using \Cref{lem:main_recursion_covering_weak}, we are now ready to prove \Cref{lem:tours_to_covers}, which we restate here for convenience:

\TourToCover*
\begin{proof}
Because $\cI$ is non-degenerate, \Cref{lem:first_level_fixed_cost_lb} implies $k^2 \cdot D_1 \leq 2^8 \cdot c^{(k)}(T)$.
Let $P$ be a path resulting from $T$ by removing an arbitrary edge.
We apply \Cref{lem:initial_reordering} to $P$ and then apply  \Cref{lem:main_recursion_covering_weak} to the resulting $1$-buffered path.
We obtain a main-tour cover $(\hat P_{\rm main},\hat \cP)$ of $V$ such that
\[
c^{(k)}(\hat P_{\rm main}) + w(\hat \cP)\ \leq \ \scconstant^\ast \cdot L \cdot 2 ^{13} \cdot c^{(k)}(P).
\]
Then  $ \cP \coloneqq \hat \cP\cup \big\{ \big(\hat P_{\rm main},1\big)\big\}$ is a Path Covering solution (by \Cref{eq:equivalence_to_set_cover}) and its weight is
\begin{align*}
w(\cP) \ =&\ w(\hat \cP) + c^{(k)}(\hat P) + k^2 \cdot D_1 \\ 
\leq&\ 2 ^{13} \cdot \scconstant^\ast \cdot L\cdot c^{(k)}(P) + k^2 \cdot D_1
\\
 \leq&\ \Big(2^{13} \cdot \scconstant^\ast \cdot L + 2^8 \Big) \cdot  c^{(k)}(T)\\
 =&\ O(L^3)\cdot  c^{(k)}(T),
\end{align*}
where we used $\eta^* = O(L^2)$.
\end{proof}

In the remainder of the section we prove \Cref{lem:main_recursion_covering_weak}.

\subsection{Splitting Jump Segments}\label{sec:splitting}

Recall that we want to consider each jump segment of our given $\ell$-buffered path separately and aim at getting rid of the backward edge contained in it.
One possibility to get rid of a backward edge is to split the path into two paths.
However, this is costly because for every path-level pair $(P,\ell)$ that we add to our main-tour cover, we need to pay the ``fixed cost'' $k^2D_{\ell}$ in addition to the $k$-hop costs $c^{(k)}(P)$.
Because the total weight of the path-level pairs in our main-tour cover should not be too large compared to the $k$-hop cost of our initial path, we can only afford paying these fixed costs when our initial path is sufficiently expensive.

We will classify certain $\ell$-jump segments $P$ as \emph{splitting} $\ell$-jump segments.
Such segments will satisfy the property that they contain many $j$-hop edges ($j \leq k$) that are backward edges of level $\ell$.
In particular, if they contain at least $\lambda = k^2 / (\poly (\log k))$ such $j$-hop edges, we have $c^{(k)}(P) \geq \lambda \cdot D_{\ell}$.
This will enable us to pay the fixed cost of $k^2D_{\ell}$ for a new path-level pair.

In the following, we make this classification more precise.
Let $P$ be an $\ell$-jump segment and let $H_1, \dots, H_r \in \cH_{\ell+1}$ such that 
\begin{enumerate}
\item
$V(P) \cap H_j \neq \emptyset$ for all $j \in [r]$,
\item
$V(P) \subseteq \bigcup_{j=1}^r H_j$, and
\item $H_i$ is left of $H_j$ for all $i,j\in[r]$ with $i<j$.
\end{enumerate}
Note that $r \geq 2$ because $\level(V(P)) = \ell$.
If $P$ is entry-buffered, we define
\begin{equation}
\label{eq:def_j_ast_entry}
j^\ast_{\mathrm{entry}} \coloneqq \max \parset{j \in [r] : \sum_{i = j}^r |H_{i} \cap V(P_{\mathrm{entry}})| \geq \frac{k}{64(L-1)}}.
\end{equation}
This is well-defined, because
\begin{equation*}
|V(P_{\mathrm{entry}})| \geq \frac{k}{16} - (\ell-1)\cdot \frac{k}{32(L-1)} \geq \frac{k}{32} \geq \frac{k}{64(L-1)}.
\end{equation*}
Analogously, if $P$ is exit-buffered, we define
\begin{equation}
\label{eq:def_j_ast_exit}
j^\ast_{\mathrm{exit}} \coloneqq \min \parset{j \in [r] : \sum_{i= 1}^j |H_{i} \cap V(P_{\mathrm{exit}})| \geq \frac{k}{64(L-1)}}.
\end{equation}
See \Cref{fig:choice_of_j_start_entry_and_exit} for an illustration.
\begin{figure}
\begin{center}
\begin{tikzpicture}[
vertex/.style={circle,draw=black, fill,inner sep=1.5pt, outer sep=1.0pt},
edge/.style={-latex, thick},
backedge/.style={-latex, densely dashed},
box/.style={thick, fill=Peach, fill opacity=0.05, rounded corners=8pt},
scale=2
]

\def\slackx{0.03}
\def\slacky{0.4}

\def\height{1.5}

\definecolor{color1}{named}{Peach}  
\definecolor{color2}{named}{ForestGreen}  
\definecolor{color3}{named}{Violet}  
\definecolor{color4}{named}{ProcessBlue}

\begin{scope}[box/.append style={draw=color1}]
\draw[box] (0.1,0) rectangle (1-\slackx,\height);
\draw[box] (6+\slackx,0) rectangle (6.9,\height);

\foreach \i in {1,...,5}{
\pgfmathtruncatemacro{\next}{\i+1}   
\ifnum\i=2             
\draw[box, ultra thick] (\i+\slackx,0) rectangle (\next-\slackx,\height);
\else\ifnum\i=4            
\draw[box, ultra thick] (\i+\slackx,0) rectangle (\next-\slackx,\height);
\else
\draw[box] (\i+\slackx,0) rectangle (\next-\slackx,\height);
\fi\fi
}
\end{scope}

\node at (0.5,\height+0.35) {\textcolor{color1}{$H_{1}$}};
\node at (1.5,\height+0.35) {\textcolor{color1}{$H_{2}$}};
\node at (2.5,\height+0.315) {\textcolor{color1}{$H_{3} = H_{j^\ast_{\mathrm{entry}}}$}};
\node at (3.5,\height+0.35) {\textcolor{color1}{$H_{4}$}};
\node at (3.5,-0.35) {\textcolor{color1}{$H_{4}$}};
\node at (4.5,-0.395) {\textcolor{color1}{$H_{5} = H_{j^\ast_{\mathrm{exit}}}$}};
\node at (5.5,-0.35) {\textcolor{color1}{$H_{6}$}};

\node at (6.5,-0.35) {\textcolor{color1}{$H_{7}$}};

\begin{scope}[every node/.style={vertex, draw=color2, fill=color2}]
\node (v1) at (1.5,1.25) {};
\node (v2) at (2.25,1.25) {};
\node (v3) at (2.75,1.25) {};
\node (v4) at (3.5,1.25) {};
\node (v5) at (4.25,1.25) {};
\node (v6) at (4.75,1.25) {};
\node (v7) at (5.5,1.25) {};
\node (v8) at (6.25,1.25) {};
\node (v9) at (6.7,1.25) {};
\end{scope}

\begin{scope}[every node/.style={vertex, draw=color3, fill=color3}]
\node (w1) at (0.3,0.25) {};
\node (w2) at (0.75,0.25) {};
\node (w3) at (1.25,0.25) {};
\node (w4) at (1.75,0.25) {};
\node (w5) at (2.5,0.25) {};
\node (w6) at (3.25,0.25) {};
\node (w7) at (3.75,0.25) {};
\node (w8) at (4.5,0.25) {};
\node (w9) at (5.25,0.25) {};
\node (w10) at (5.75,0.25) {};
\end{scope}

\foreach \i in {1,...,8}{
\pgfmathtruncatemacro{\next}{\i+1}                
\draw[edge, color2] (v\i) -- (v\next);
}

\foreach \i in {1,...,9}{
\pgfmathtruncatemacro{\next}{\i+1}                
\draw[edge, color3] (w\i) -- (w\next);
}

\node at (3.5,1) {\textcolor{color2}{$P_{\mathrm{entry}}$}};
\node at (3.5,0.5) {\textcolor{color3}{$P_{\mathrm{exit}}$}};

\draw[edge, out=-150,in=30, looseness=0.5, red, densely dashed, very thick] (v9) to (w1);

\draw[decorate, decoration={brace}, thick, color2] (2+\slackx,\height+0.05) -- (7-2*\slackx,\height+0.05) node[midway, above] {$\geq \frac{k}{64(L-1)}$};
\draw[decorate, decoration={brace}, thick, color3] (5-\slackx,-0.05) -- (0+2*\slackx,-0.05) node[midway, below] {$\geq \frac{k}{64(L-1)}$};
\end{tikzpicture}
 \end{center}
\caption{
Illustration of our choice of $j^{\ast}_{\mathrm{entry}}$ and $j^{\ast}_{\mathrm{exit}}$.
All vertices are drawn from left to right according to the total order $\prec$.
The $\ell$-jump segment $P$ consists of $P_{\mathrm{entry}}$ (green), the backward edge $e$ (red), and $P_{\mathrm{exit}}$ (purple).
In this example, we have a \emph{non-splitting} $\ell$-jump segment.
}
\label{fig:choice_of_j_start_entry_and_exit}
\end{figure}
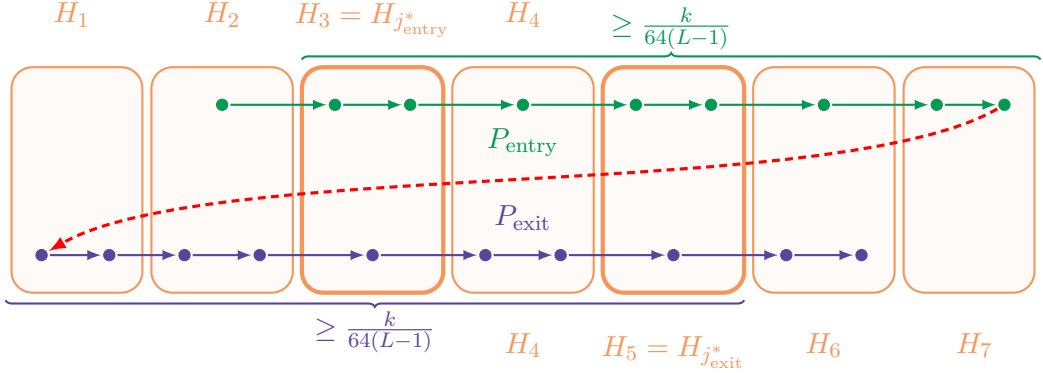
We observe that
\begin{equation}
\label{eq:entry_cutoff_ub}
\sum_{i = j^\ast_{\mathrm{entry}} + 1}^r |H_{i} \cap V(P_{\mathrm{entry}})| < \frac{k}{64(L-1)}
\end{equation}
and
\begin{equation}
\label{eq:exit_cutoff_ub}
\sum_{i = 1}^{j^\ast_{\mathrm{exit}} - 1} |H_{i} \cap V(P_{\mathrm{exit}})| < \frac{k}{64(L-1)},
\end{equation}
because otherwise this would contradict the choice of $j^\ast_{\mathrm{entry}}$ and $j^\ast_{\mathrm{exit}}$ as the maximum and minimum index with their corresponding property, respectively.
The sets $H_1, \dots, H_r$ are uniquely determined by the above properties.
Hence, the following is well-defined:

\begin{definition}
[Splitting $\ell$-jump segment]
Let $P$ be an $\ell$-jump segment that is entry- and exit-buffered.
Let $j^\ast_{\mathrm{entry}}$ and $j^\ast_{\mathrm{exit}}$ be defined as in \eqref{eq:def_j_ast_entry} and \eqref{eq:def_j_ast_exit}.
We say that $P$ is a \emph{splitting $\ell$-jump segment} if $j^\ast_{\mathrm{entry}} > j^\ast_{\mathrm{exit}}$.

An $\ell$-buffering is called $\ell$-\emph{splitting-free} if it contains no splitting $\ell$-jump segment.
\end{definition}
We remark that a $\ell$-splitting-free $\ell$-buffering may contain a splitting $\ell^\prime$-jump segment for some $\ell^\prime > \ell$.
From the definition of an $\ell$-splitting jump segment, we immediately obtain:
\begin{lemma}
\label{lem:splitting_i_jump}
Let $P$ be a splitting $\ell$-jump segment.
Then we have 
\[
\scconstant_1 \cdot c^{(k)} (P) \geq k^2D_{\ell},
\]
where $\scconstant_1 \coloneqq 2^{12} \cdot L^2$.
\end{lemma}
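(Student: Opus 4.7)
The plan is to extract a large biclique of level-$\ell$ backward edges from $P$ that contribute to $c^{(k)}(P)$, using the splitting condition $j^\ast_{\mathrm{entry}} > j^\ast_{\mathrm{exit}}$.

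First, I would set up the two vertex sets witnessing the splitting property. Let
\[
 A\ \coloneqq\ V(P_{\mathrm{entry}}) \cap \bigcup_{i=j^\ast_{\mathrm{entry}}}^{r} H_i, \qquad B\ \coloneqq\ V(P_{\mathrm{exit}}) \cap \bigcup_{i=1}^{j^\ast_{\mathrm{exit}}} H_i.
\]
By the defining maximality of $j^\ast_{\mathrm{entry}}$ in~\eqref{eq:def_j_ast_entry} we have $|A|\geq \frac{k}{64(L-1)}$, and by the defining minimality of $j^\ast_{\mathrm{exit}}$ in~\eqref{eq:def_j_ast_exit} we have $|B|\geq \frac{k}{64(L-1)}$.

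Next, I would argue that every pair $(a,b)\in A\times B$ gives a level-$\ell$ backward edge of cost exactly $D_\ell$ that contributes to $c^{(k)}(P)$. The splitting hypothesis $j^\ast_{\mathrm{entry}} > j^\ast_{\mathrm{exit}}$, together with the left-to-right ordering of the $H_i$, implies $B \prec A$, so $(a,b)$ is a backward edge. Moreover, $a$ and $b$ lie in different parts of $\cH_{\ell+1}$, hence $\level(\{a,b\}) \leq \ell$; on the other hand $a,b\in V(P)$ and $\level(V(P))=\ell$ imply $\level(\{a,b\})\geq \ell$. Therefore $\level(\{a,b\})=\ell$ and, because $(a,b)$ is a backward edge, $c(a,b)=D_\ell$ by the definition of a hierarchically ordered instance. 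Since $|V(P)|\leq k+1$ and $a\in V(P_{\mathrm{entry}})$ precedes $b\in V(P_{\mathrm{exit}})$ on $P$, the edge $(a,b)$ contributes to $c^{(k)}(P)$.

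Putting this together yields
\[
 c^{(k)}(P)\ \geq\ |A|\cdot|B|\cdot D_\ell\ \geq\ \left(\frac{k}{64(L-1)}\right)^{\!2}\! D_\ell\ =\ \frac{k^2 D_\ell}{2^{12}(L-1)^2}.
\]
Since a jump segment can only exist for $\ell<L$, we have $L\geq 2$, and using $\scconstant_1=2^{12}L^2\geq 2^{12}(L-1)^2$ we conclude $\scconstant_1\cdot c^{(k)}(P)\geq k^2 D_\ell$. There is no real obstacle here: the argument is entirely a counting/triangle argument once one unpacks the definitions of $j^\ast_{\mathrm{entry}}$, $j^\ast_{\mathrm{exit}}$, and the hierarchical cost function; the only subtle point worth double-checking is that $a$ and $b$ land in strictly separated $\cH_{\ell+1}$-parts (which is exactly what $j^\ast_{\mathrm{entry}}>j^\ast_{\mathrm{exit}}$ guarantees) so that $\level(\{a,b\})=\ell$ and $c(a,b)=D_\ell$.
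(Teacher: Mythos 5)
Your proof is correct and follows the same line of reasoning as the paper's: you identify the biclique of level-$\ell$ backward edges between $A = V(P_{\mathrm{entry}}) \cap \bigcup_{i\geq j^\ast_{\mathrm{entry}}} H_i$ and $B = V(P_{\mathrm{exit}}) \cap \bigcup_{i\leq j^\ast_{\mathrm{exit}}} H_i$, use $|V(P)|\leq k+1$ to see that all these edges contribute to $c^{(k)}(P)$, and combine the lower bounds on $|A|$ and $|B|$ from the definitions of $j^\ast_{\mathrm{entry}},j^\ast_{\mathrm{exit}}$. Your extra care in verifying $\level(\{a,b\})=\ell$ (hence $c(a,b)=D_\ell$ exactly) and $L\geq 2$ simply makes explicit what the paper leaves implicit.
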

\begin{proof}
We define the $H_1, \dots, H_r \in \cH_{\ell+1}$ as above.
Any edge $e=(x,y)$ with $x \in \bigcup_{i= j^\ast_{\mathrm{entry}}}^r H_i$ and $y \in \bigcup_{i=1}^{j^\ast_{\mathrm{exit}}} H_i$ is a backward edge of level $\ell$, because $j^\ast_{\mathrm{entry}} > j^\ast_{\mathrm{exit}}$.
Thus, each such edge $e$ has cost $c(e) = D_{\ell}$.
For each $x \in V(P_{\mathrm{entry}})$ and each $y \in V(P_{\mathrm{exit}})$, the edge $(x,y)$ contributes to $c^{(k)}(P)$, because $|V(P)| \leq k + 1$.
Therefore, we conclude
\begin{align*}
c^{(k)}(P)\  \geq&\  \textstyle \left| V(P_{\mathrm{entry}}) \cap \Big(\bigcup_{i= j^\ast_{\mathrm{entry}}}^r H_i\Big) \right| \cdot \left| V(P_{\mathrm{exit}}) \cap \Big(\bigcup_{i=1}^{j^\ast_{\mathrm{exit}}} H_i\Big) \right|  \cdot D_{\ell}\\
\geq&\ 
\left( \tfrac{k}{64(L-1)} \right)^2 \cdot D_{\ell} \\
\geq&\  \tfrac{1}{\scconstant_1} \cdot k^2D_{\ell}. 
\end{align*}
\end{proof}

\subsection{Handling Non-Splitting Jump Segments}\label{sec:nonsplitting}

Next, we explain how we handle non-splitting jump segments.
Our goal will be to prove the following:

\begin{restatable}{lemma}{ManyReplacements}
\label{lem:many_replacements}
Let $P_{\mathrm{main}}$ be a path with an $\ell$-splitting-free $\ell$-buffering.
Then, there is an $(\ell+1)$-buffered path $P^\prime_{\mathrm{main}}$ and a subset $R^\ast \subseteq V(P_{\mathrm{main}})$ with $V(P_{\mathrm{main}}) = V(P^\prime_{\mathrm{main}}) \cupp R^\ast $ and
\begin{equation}
\label{eq:many_replacements1}
c^{(k)}_{> \ell} (P^\prime_{\mathrm{main}}) \leq  c^{(k)} (P_{\mathrm{main}}),
\end{equation}
and
\begin{equation}
\label{eq:many_replacements2}
c^{(k)} (P^\prime_{\mathrm{main}}) + |R^\ast| \cdot 2kD_{\ell} \leq \left(\scconstant_2 + 2^{13}\right) \cdot c^{(k)} (P_{\mathrm{main}}),
\end{equation}
where $\scconstant_2 \coloneqq 2^8L$.
\end{restatable}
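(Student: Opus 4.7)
The plan is to process each jump segment of the given $\ell$-splitting-free $\ell$-buffering independently, exploiting that these segments are vertex-disjoint subpaths of $P_{\mathrm{main}}$. Segments that are $\ell'$-jump segments with $\ell' > \ell$ can be carried over unchanged: their backward edges already have level $\geq \ell+1$, and the buffer threshold for $(\ell+1)$-bufferedness is $\tfrac{k}{32(L-1)}$ smaller per side than for $\ell$-bufferedness, leaving the existing buffers more than sufficient. The real work lies in eliminating the backward edge $e$ of each $\ell$-jump segment $P^e$ (those with level exactly $\ell$).

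For such a $P^e$ with consecutive $\cH_{\ell+1}$-blocks $H_1 \prec \dots \prec H_r$, I will pick a removal set $R_e \subseteq V(P^e)$ and replace $P^e$ inside $P_{\mathrm{main}}$ by a monotone subpath $Q^e$ obtained as the $\ell$-refinement (\Cref{def:i-refinement}) of $V(P^e) \setminus R_e$, using \Cref{lem:path_replacement} to splice it in. If $P^e$ is entry-safe and exit-safe, I take $R_e = \emptyset$. If $P^e$ is entry-buffered and exit-buffered, the non-splitting hypothesis gives $j^\ast_{\mathrm{entry}} \leq j^\ast_{\mathrm{exit}}$, and I set $R_e$ to consist of all $P^e_{\mathrm{entry}}$-vertices lying in blocks $H_i$ with $i > j^\ast_{\mathrm{exit}}$, together with all $P^e_{\mathrm{exit}}$-vertices lying in blocks $H_i$ with $i < j^\ast_{\mathrm{entry}}$; the two mixed cases (one side safe, the other buffered) simply drop the corresponding union. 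After refinement, the blocks $H_i \cap (V(P^e) \setminus R_e)$ appear inside $Q^e$ in strictly increasing $\prec$-order, so $Q^e$ contains no level-$\ell$ backward edges at all; any residual backward edges lie within a single $\cH_{\ell+1}$-block and hence have level $\geq \ell+1$. Setting $R^\ast \coloneqq \bigcup_e R_e$ and letting $P^\prime_{\mathrm{main}}$ be the resulting path completes the construction.

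To verify that $P^\prime_{\mathrm{main}}$ is $(\ell+1)$-buffered, I will build the new buffering by keeping the $\ell'$-jump segments with $\ell' > \ell$ from the original buffering and, for each new backward edge inside a refined $Q^e$, forming an $(\ell+1)$-jump segment whose entry/exit buffers are drawn from consecutive vertices of $Q^e$ on either side. Safety of a new segment is inherited from $P^e$ whenever $P^e$ was safe on the corresponding side (since refinement never moves vertices outside $V(P^e)$), while the buffered property is underwritten by the surplus of $\tfrac{k}{32(L-1)}$ per side between the $\ell$- and $(\ell+1)$-thresholds. The cost analysis then splits: inequality \eqref{eq:many_replacements1} follows from the second inequality of \Cref{lem:i-refinement} applied to each $Q^e$, together with \Cref{lem:khop_skipping_vertices} to absorb the deletions of $R_e$; inequality \eqref{eq:many_replacements2} combines \Cref{lem:path_replacement} (whose $2^{12}$ factor per replacement yields the $2^{13} \cdot c^{(k)}(P_{\mathrm{main}})$ contribution after summing over the disjoint $P^e$) with a charging argument for the reinsertion term: by the non-splitting property and \eqref{eq:entry_cutoff_ub}--\eqref{eq:exit_cutoff_ub}, every removed $v \in R_e$ is incident within $P^e$ to at least $\tfrac{k}{64(L-1)}$ level-$\ell$ backward edges, each of cost $D_{\ell}$, whence $|R_e| \cdot 2k D_{\ell} \leq O(L) \cdot c^{(k)}(P^e)$, and summing over the vertex-disjoint segments gives the $\scconstant_2 = 2^8 L$ factor.

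The main obstacle will be the joint verification that the charging constant actually fits within $\scconstant_2 = 2^8 L$ — in particular, avoiding double-counting of level-$\ell$ backward edges that are simultaneously incident to an $R_e^{\mathrm{entry}}$- and an $R_e^{\mathrm{exit}}$-vertex, and matching the thresholds in \eqref{eq:def_j_ast_entry}--\eqref{eq:def_j_ast_exit} to the claimed bound — and that the resulting $(\ell+1)$-buffering is actually valid, i.e., that every new backward edge created by the consolidation of an $\cH_{\ell+1}$-block inside $Q^e$ really has enough buffer vertices on both sides under the stricter $(\ell+1)$-threshold. This is where a careful enumeration of the four safety combinations for $P^e$ becomes unavoidable.
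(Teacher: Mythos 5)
Your construction is the paper's: the same removal sets built from $j^\ast_{\mathrm{entry}}$ and $j^\ast_{\mathrm{exit}}$, the same replacement of each processed segment by the $\ell$-refinement of the segment minus its removal set, the same splicing via \Cref{lem:path_replacement}, and the same charging of $|R^\ast|\cdot 2kD_{\ell}$ against level-$\ell$ backward edges counted by the cutoff bounds. The cost calculations for \eqref{eq:many_replacements1} and \eqref{eq:many_replacements2} line up with the paper as well.

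One point where your sketch anticipates the wrong kind of difficulty: you expect to certify \emph{every} new backward edge inside a refined segment as $(\ell+1)$-\emph{buffered}, ``underwritten by the surplus of $\tfrac{k}{32(L-1)}$ per side,'' and you expect the hard part to be ensuring enough buffer vertices. But after the $\ell$-refinement, a refined segment $Q$ can contain one backward edge per $\cH_{\ell+1}$-block, and the interior blocks may have arbitrarily tiny entry/exit parts — there is no surplus to invoke there. The resolution (which the paper uses) is that those interior segments are entry-safe and exit-safe, not buffered: the maximal within-one-block subpath of $Q$ is flanked by level-$\ell$ forward edges, and since the resulting $P^\prime_{\mathrm{main}}$ has no backward edges of level $\leq\ell$ at all, the $\cH_{\ell+1}$-block index is non-decreasing along the path, so everything before (resp.\ after) the segment is strictly left (resp.\ right) of it. Only the new segment in $H_{j^\ast_{\mathrm{entry}}}$ (resp.\ $H_{j^\ast_{\mathrm{exit}}}$), and only when it coincides with the first (resp.\ last) vertex of $Q$ and when the original segment was entry-buffered (resp.\ exit-buffered), needs the surplus argument — and even there, the size of the entry buffer is $|V(P^e_{\mathrm{entry}}) \cap H_{j^\ast_{\mathrm{entry}}}|$, not $|V(P^e_{\mathrm{entry}})\setminus R_e|$; the non-splitting property together with \eqref{eq:entry_cutoff_ub} is what makes that quantity large enough, and it also forces the first vertex of $P^e$ to survive and remain the first vertex of $Q$ so that the splicing does not create a new backward edge at the boundary.
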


The bound~\eqref{eq:many_replacements2} will be useful, because we will later be able to re-insert the removed vertices, i.e, those from $R^\ast$, into some path at cost $2kD_{\ell}$ per vertex.
Observe that the overall cost increases by a factor of $(\scconstant_2 + 2^{13})$ and we cannot afford this to happen $L$ times, when iteratively applying the argument to each level $\ell \in [L]$.
This is where the bound~\eqref{eq:many_replacements1} will become important.
Because we will later recursively apply our argument to subpaths inside sets from $\cH_{\ell + 1}$, this bound ensures that we only lose an additive term of $(\scconstant_2 + 2^{13}-1)$ times the $k$-hop-cost of our original path in each application of the lemma.
\addparskip

To prove \Cref{lem:many_replacements}, we will remove all backward edges of level $\ell$ and ensure that the resulting path is $\ell$-buffered.
We then observe that an $\ell$-buffered path with no backward edges of level $\ell$ is $(\ell+1)$-buffered:

\begin{lemma}
\label{lem:shrink_buffering}
Let $P$ be an $\ell$-buffered path.
If $P$ does not contain any backward edge of level $\ell$, then $P$ is $(\ell+1)$-buffered.
\end{lemma}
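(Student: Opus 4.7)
Given an $\ell$-buffering $\cB$ of $P$, I will build an $(\ell+1)$-buffering $\cB'$ from it by surgically modifying only those $Q \in \cB$ that are $\ell'$-jump segments for $\ell' = \ell$; elements already at level $\ell+1$ or higher will be retained unchanged.

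The crucial structural observation is that under our hypotheses every backward edge of $P$ must sit inside a single set of $\cH_{\ell+1}$. Indeed, in any $\ell$-buffered path every backward edge has level at least $\ell$ (its two endpoints share the $\ell'$-set in $\cH_{\ell'} \subseteq \cH_\ell$ that contains the jump segment covering it), and the additional hypothesis rules out level exactly $\ell$. A clean consequence is that every $P$-edge transitioning between distinct sets of $\cH_{\ell+1}$ must be forward — otherwise it would be a backward edge of level exactly $\ell$ — so $P$ visits the $\cH_{\ell+1}$-sets in a strictly $\prec$-increasing sequence $H_1 \prec H_2 \prec \dots \prec H_q$, each in one contiguous run $P_1, P_2, \dots, P_q$, and once $P$ leaves a given $H_i$ it never returns.

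For each $Q \in \cB$ that is an $\ell$-jump segment, let $e_Q$ be its backward edge and $H \in \cH_{\ell+1}$ the common set of its endpoints; define $Q'$ as the maximal subpath of $Q$ containing $e_Q$ with $V(Q') \subseteq H$, and take $\cB'$ to be the collection of all these $Q'$ together with the unmodified higher-level jump segments of $\cB$. The monotonicity of the entry and exit portions of $Q'$ and the fact that $e_Q$ is its only backward edge are inherited from $Q$; together with $V(Q') \subseteq H \in \cH_{\ell+1}$, this shows $Q'$ is an $\ell''$-jump segment with $\ell'' \geq \ell+1$. Vertex-disjointness of $\cB'$ is inherited from $\cB$, and every backward edge of $P$ is still covered.

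The main obstacle is verifying, for each modified $Q'$, that the entry-safe-or-entry-buffered (and symmetrically the exit) condition still holds within $P$. I plan a two-case analysis depending on whether $Q$'s first $\cH_{\ell+1}$-set $H_{i_1}$ strictly precedes $H$ or equals $H$. If $H_{i_1} \prec H$, then every vertex of $P$ strictly before $Q'$ either preceded $Q$ on $P$ (hence lies in an $\cH_{\ell+1}$-set $\preceq H_{i_1} \prec H$) or sits in $V(Q_{\rm entry}) \setminus V(Q'_{\rm entry})$ (hence in an $\cH_{\ell+1}$-set strictly $\prec H$ by monotonicity of $Q_{\rm entry}$); either way it is $\prec V(Q') \subseteq H$, so $Q'$ is entry-safe. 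If $H_{i_1} = H$, then since $P$ cannot leave and re-enter $H$, the whole of $Q_{\rm entry}$ lies in $H$, so $V(Q'_{\rm entry}) = V(Q_{\rm entry})$ and the set of vertices $P$-before $Q'$ coincides with that of vertices $P$-before $Q$; the entry-safe-or-entry-buffered property of $Q$ therefore directly transfers to $Q'$, using that the entry-buffered threshold $\tfrac{k}{16} - (\ell'-1)\tfrac{k}{32(L-1)}$ is nonincreasing in $\ell'$. The exit side is completely symmetric, and I expect this case analysis to be the only nontrivial part of the argument.
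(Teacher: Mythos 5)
Your proof is correct and takes essentially the same route as the paper: shrink each $\ell$-jump segment to the maximal subsegment around its backward edge that is confined to a single $\cH_{\ell+1}$-set, then verify entry/exit conditions by a case distinction on whether the shrinking actually moved the entry endpoint. Your observation that under the hypotheses $P$ traverses the $\cH_{\ell+1}$-sets in a strictly $\prec$-increasing order (each in a single contiguous run) is the same fact the paper uses implicitly; making it explicit is a clean way to organize the entry-safety argument in the case $H_{i_1}\prec H$, which the paper instead phrases via the forward level-$\ell$ edge $(s',x)$ entering the shrunken segment.
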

\begin{proof}
Let $\cB$ be an $\ell$-buffering of $P$.
If $\cB$ contains no $\ell$-jump segment (that is, all elements of $\cB$ are $\ell^\prime$-jump segments for some $\ell^\prime > \ell$), then $\cB$ is an $(\ell+1)$-buffering.
Otherwise, let $P^e$ be an $\ell$-jump segment and let $e$ be the unique backward edge contained in $P^e$.
By our assumption, we have $\level(e) \geq \ell + 1$.
Let $Q^e$ be the maximal subpath of $P^e$ that contains $e$ and satisfies $\level(Q^e) \geq \ell + 1$.
Then $Q^e$ is an $\ell^\prime$-jump segment with $\ell^\prime \geq \ell+ 1 $.
We denote by $s$ and $x$ the start vertices of $P^e$ and $Q^e$, respectively.
See \Cref{fig:shrinking_jump_segments}.

Suppose that $s = x$.
If $P^e$ is entry-safe within $P$, then so is $Q^e$.
If $P^e$ is entry-buffered, then the entry part of $Q^e$ contains at least 
\begin{equation*}
\frac{k}{16} - (\ell-1)\cdot \frac{k}{32(L-1)}
\ \geq\  
\frac{k}{16} - (\ell^\prime -1)\cdot \frac{k}{32(L-1)}
\end{equation*}
vertices.
Hence, $Q^e$ is entry-buffered in this case.

Now, suppose that $s \neq x$.
Let $s^\prime$ be the predecessor of $x$ in the path $P^e$ and let $H \in \cH_{\ell + 1}$ with $V(Q^e) \subseteq H$.
Note that $(s^\prime, x)$ is a forward edge, since $P^e$ contains only one backward edge and this backward edge $e$ is contained in $Q^e$.
Therefore, because we chose $Q^e$ maximal with the property that $V(Q^e) \subseteq H$, the edge $(s^\prime, x)$ is a forward edge of level $\ell$.
This implies that $Q^e$ is entry-safe within $P$, because $P$ does not contain any backward edges of level $\leq \ell$.
See \Cref{fig:shrinking_jump_segments} for a visualization.
\begin{figure}
\begin{center}
\begin{tikzpicture}[
xscale=1.6,
vertex/.style={circle,draw=black, fill,inner sep=1.5pt, outer sep=1pt},
edge/.style={-latex,very thick},
box/.style={very thick, fill=Peach, fill opacity=0.05, rounded corners=8pt},
]

\definecolor{color1}{named}{Peach}  
\definecolor{color2}{named}{ForestGreen}  
\definecolor{color3}{named}{Violet}  
\definecolor{color4}{named}{ProcessBlue} 

\draw[box,draw=color1] (-0.8,0) rectangle (2.9,2);
\draw[box,draw=color1] (3,0) rectangle (6,2);
\draw[box,draw=color1] (6.1,0) rectangle (8.8,2);

\begin{scope}[every node/.style={vertex}]
\node (v1) at (0,1.66) {};
\node (v2) at (1,1.66) {};

\node (w1) at (-0.5,1) {};
\node (w2) at (0.5,1) {};

\node (v16) at (7.5,0.33) {};
\node (v17) at (8,0.33) {};
\node (v18) at (8.5,0.33) {};
\end{scope}

\begin{scope}[every node/.style={vertex,color2}]
\node (v3) at (1.5,1.66) {};
\node (v4) at (2,1.66) {};
\node (v5) at (2.5,1.66) {};

\node (v6) at (3.33,1.66) {};
\node (v7) at (4,1.66) {};
\node (v8) at (4.66,1.66) {};
\node (v9) at (5,1.66) {};
\node (v10) at (5.66,1.66) {};

\node (v11) at (3.66,0.33) {};
\node (v12) at (4.33,0.33) {};
\node (v13) at (5.33,0.33) {};

\node (v14) at (6.5,0.33) {};
\node (v15) at (7,0.33) {};
\end{scope}

\draw[edge] (w1) to (w2);
\draw[edge]  (w2) to (v1);

\foreach \i in {1,...,2} {
\pgfmathtruncatemacro{\next}{\i+1}
\draw[edge] (v\i) to (v\next);
}

\foreach \i in {3,...,9} {
\pgfmathtruncatemacro{\next}{\i+1}
\draw[edge,color2] (v\i) to (v\next);
}
\draw[edge, out=-90,in=60, looseness=0.5, color2, densely dashed] (v10) to (v11);
\foreach \i in {11,...,14} {
\pgfmathtruncatemacro{\next}{\i+1}
\draw[edge,color2] (v\i) to (v\next);
}
\foreach \i in {15,...,17} {
\pgfmathtruncatemacro{\next}{\i+1}
\draw[edge] (v\i) to (v\next);
}

\node[color1] at (4.5,2.3) {$\cH_{\ell+1}$};
\node[below=4pt, color2] at (v3) {$s$};
\node[below,color2] at (v5) {$s^\prime$};
\node[below=4pt,color2] at (v6) {$x$};
\node[color2] at (4.5,-0.3) {$P^e$};
\end{tikzpicture}
 \end{center}
\caption{
Visualization of the proof of \Cref{lem:shrink_buffering} in the case $s \neq x$.
All vertices are drawn from left to right according to the total order $\prec$.
The $\ell$-jump segment $P_j$ (green) can be shrunk so that the resulting segment is entry-safe (and in this example also exit-safe).
}
\label{fig:shrinking_jump_segments}
\end{figure}
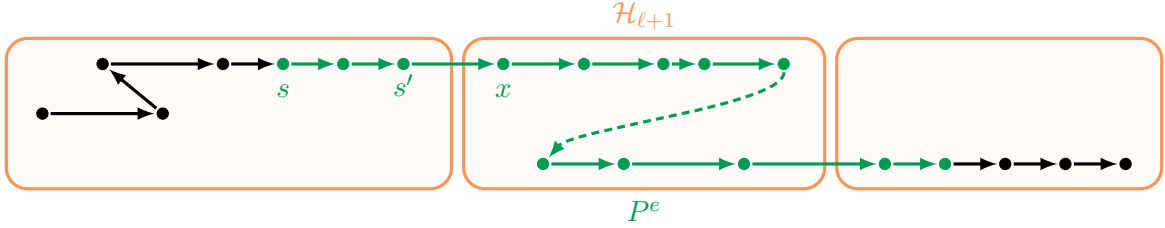
Similarly, one can show that $Q^e$ is exit-buffered or exit-safe within $P$ in any case.

In this way, we can replace every $\ell$-jump segment $P^e$ in the $\ell$-buffering $\cB$ by an $\ell^\prime$-jump segment~$Q^e$ for some $\ell^\prime > \ell$.
\end{proof}

We now discuss how we can get rid of backward edges of level $\ell$.
For an entry-safe and exit-safe $\ell$-jump segment $P$ containing a backward edge $e$ of level $\ell$,
we can get rid of $e$ (without introducing new backward edges of level $\ell$) by replacing $P$ with its $\ell$-refinement.
In particular, the path that arises by replacing $P$ with its $\ell$-refinement contains fewer backward edges of level $\ell$.
If $P$ is not entry-safe or not exit-safe, we will need to be more careful.
The idea is to first choose a removal set $R \subseteq V(P)$ such that it is safe to replace $P$ with the $\ell$-refinement of the path after removing $R$.

To define the removal set $R$, we define $H_1, \dots, H_r \in \cH_{\ell+1}$ with $j^\ast_{\mathrm{entry}}$ (if $P$ is entry-buffered) and $j^\ast_{\mathrm{exit}}$ (if $P$ is exit-buffered) as previously.
If $P$ is entry-buffered, the index $j^\ast_{\mathrm{entry}}$ is defined and we set
\begin{equation*}
R_{\mathrm{exit}} \coloneqq \bigcup_{j = 1}^{j^{\ast}_{\mathrm{entry}} - 1} H_j \cap V(P_{\mathrm{exit}}),
\end{equation*}
otherwise, we define $R_{\mathrm{exit}} \coloneqq \emptyset$.
Analogously, if $P$ is exit-buffered, we set
\begin{equation*}
R_{\mathrm{entry}} \coloneqq \bigcup_{j = j^{\ast}_{\mathrm{exit}} + 1}^{r} H_j \cap V(P_{\mathrm{entry}}),
\end{equation*}
and otherwise we set $R_{\mathrm{entry}} \coloneqq \emptyset$.
We call $R\coloneqq R_{\mathrm{entry}} \cup R_{\mathrm{exit}}$ the \emph{removal set} of the $\ell$-jump segment $P$.
See \Cref{fig:proof_of_valid_replacement} for an illustration.
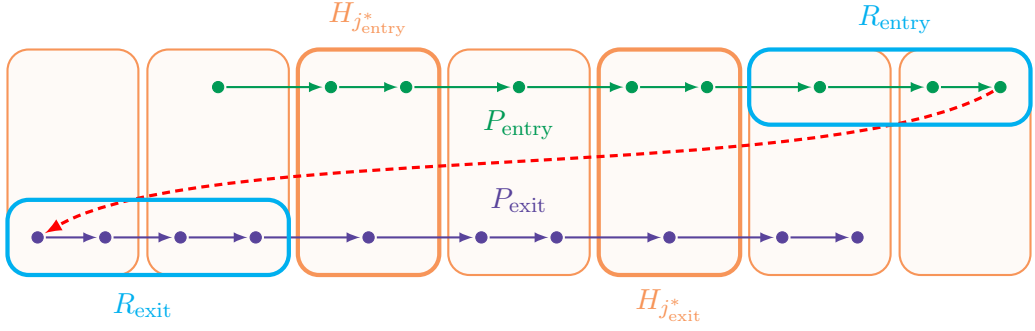
\begin{figure}
\begin{center}
\begin{tikzpicture}[
vertex/.style={circle,draw=black, fill,inner sep=1.5pt, outer sep=1.0pt},
edge/.style={-latex, thick},
backedge/.style={-latex, densely dashed},
box/.style={thick, fill=Peach, fill opacity=0.05, rounded corners=8pt},
scale=2
]

\def\slackx{0.03}
\def\slacky{0.4}

\def\height{1.5}

\definecolor{color1}{named}{Peach}  
\definecolor{color2}{named}{ForestGreen}  
\definecolor{color3}{named}{Violet}  
\definecolor{color4}{named}{ProcessBlue}

\begin{scope}[box/.append style={draw=color1}]
\draw[box] (0.1,0) rectangle (1-\slackx,\height);
\draw[box] (6+\slackx,0) rectangle (6.9,\height);

\foreach \i in {1,...,5}{
\pgfmathtruncatemacro{\next}{\i+1}   
\ifnum\i=2             
\draw[box, ultra thick] (\i+\slackx,0) rectangle (\next-\slackx,\height);
\else\ifnum\i=4            
\draw[box, ultra thick] (\i+\slackx,0) rectangle (\next-\slackx,\height);
\else
\draw[box] (\i+\slackx,0) rectangle (\next-\slackx,\height);
\fi\fi
}
\end{scope}
\node at (2.5,\height+0.2) {\textcolor{color1}{$H_{j^\ast_{\mathrm{entry}}}$}};
\node at (4.5,-0.2) {\textcolor{color1}{$H_{j^\ast_{\mathrm{exit}}}$}};

\begin{scope}[every node/.style={vertex, draw=color2, fill=color2}]
\node (v1) at (1.5,1.25) {};
\node (v2) at (2.25,1.25) {};
\node (v3) at (2.75,1.25) {};
\node (v4) at (3.5,1.25) {};
\node (v5) at (4.25,1.25) {};
\node (v6) at (4.75,1.25) {};
\node (v7) at (5.5,1.25) {};
\node (v8) at (6.25,1.25) {};
\node (v9) at (6.7,1.25) {};
\end{scope}

\begin{scope}[every node/.style={vertex, draw=color3, fill=color3}]
\node (w1) at (0.3,0.25) {};
\node (w2) at (0.75,0.25) {};
\node (w3) at (1.25,0.25) {};
\node (w4) at (1.75,0.25) {};
\node (w5) at (2.5,0.25) {};
\node (w6) at (3.25,0.25) {};
\node (w7) at (3.75,0.25) {};
\node (w8) at (4.5,0.25) {};
\node (w9) at (5.25,0.25) {};
\node (w10) at (5.75,0.25) {};
\end{scope}

\foreach \i in {1,...,8}{
\pgfmathtruncatemacro{\next}{\i+1}                
\draw[edge, color2] (v\i) -- (v\next);
}

\foreach \i in {1,...,9}{
\pgfmathtruncatemacro{\next}{\i+1}                
\draw[edge, color3] (w\i) -- (w\next);
}

\node at (3.5,1) {\textcolor{color2}{$P_{\mathrm{entry}}$}};
\node at (3.5,0.5) {\textcolor{color3}{$P_{\mathrm{exit}}$}};

\draw[edge, out=-150,in=30, looseness=0.5, red, densely dashed, very thick] (v9) to (w1);

\begin{scope}[box/.append style={draw=color4, fill opacity=0, ultra thick}]
\draw[box] (0.1,0) rectangle (2-\slackx,0.5);
\draw[box] (5+\slackx,1) rectangle (6.9,1.5);
\end{scope}
\node at (6,\height+0.2) {\textcolor{color4}{$R_{\mathrm{entry}}$}};
\node at (1,-0.2) {\textcolor{color4}{$R_{\mathrm{exit}}$}};

\end{tikzpicture}
 \end{center}
\caption{
Illustration of our choice of $R_{\mathrm{exit}}$ and $R_{\mathrm{entry}}$.
All vertices are drawn from left to right according to the total order $\prec$.
Observe that removing $R \coloneqq R_{\mathrm{exit}} \cup R_{\mathrm{entry}}$ from the depicted non-splitting $\ell$-jump segment and replacing it with its $\ell$-refinement changes neither its start vertex nor its end vertex.
}
\label{fig:proof_of_valid_replacement}
\end{figure}

\begin{figure}
\begin{center}
\begin{tikzpicture}[
vertex/.style={circle,draw=black, fill,inner sep=1.5pt, outer sep=1.0pt},
edge/.style={-latex, thick},
backedge/.style={-latex, densely dashed},
box/.style={thick, fill=Peach, fill opacity=0.05, rounded corners=8pt},
scale=2
]

\def\slackx{0.03}
\def\slacky{0.4}

\def\height{1.5}

\definecolor{color1}{named}{Peach}  
\definecolor{color2}{named}{ForestGreen}  
\definecolor{color3}{named}{Violet}  
\definecolor{color4}{named}{ProcessBlue}

\begin{scope}[box/.append style={draw=color1}]
\draw[box] (0.1,0) rectangle (1-\slackx,\height);
\draw[box] (6+\slackx,0) rectangle (6.9,\height);

\foreach \i in {1,...,5}{
\pgfmathtruncatemacro{\next}{\i+1}   
\ifnum\i=2             
\draw[box, ultra thick] (\i+\slackx,0) rectangle (\next-\slackx,\height);
\else\ifnum\i=4            
\draw[box, ultra thick] (\i+\slackx,0) rectangle (\next-\slackx,\height);
\else
\draw[box] (\i+\slackx,0) rectangle (\next-\slackx,\height);
\fi\fi
}
\end{scope}
\node at (2.5,\height+0.2) {\textcolor{color1}{$H_{j^\ast_{\mathrm{entry}}}$}};
\node at (4.5,-0.2) {\textcolor{color1}{$H_{j^\ast_{\mathrm{exit}}}$}};

\begin{scope}[every node/.style={vertex, draw=color2, fill=color2}]

\end{scope}

\begin{scope}[every node/.style={vertex}]
\node (v1) at (1.5,1.25) {};
\node (v2) at (2.25,1.25) {};
\node (v3) at (2.75,1.25) {};
\node (v4) at (3.5,1.25) {};
\node (v5) at (4.25,1.25) {};
\node (v6) at (4.75,1.25) {};

\node (w5) at (2.5,0.25) {};
\node (w6) at (3.25,0.25) {};
\node (w7) at (3.75,0.25) {};
\node (w8) at (4.5,0.25) {};
\node (w9) at (5.25,0.25) {};
\node (w10) at (5.75,0.25) {};
\end{scope}

\draw[edge] (v1) -- (v2);
\draw[edge] (v2) -- (v3);
\draw[edge, out=-90,in=90, looseness=1, densely dashed] (v3) to (w5);
\draw[edge, out=0,in=-180, looseness=1] (w5) to (v4);
\draw[edge, out=-90,in=90, looseness=1, densely dashed] (v4) to (w6);
\draw[edge] (w6) -- (w7);
\draw[edge, out=0,in=-150, looseness=0.5] (w7) to (v5);
\draw[edge] (v5) -- (v6);
\draw[edge, out=-90,in=90, looseness=1, densely dashed] (v6) to (w8);
\draw[edge] (w8) -- (w9);
\draw[edge] (w9) -- (w10);
\end{tikzpicture}
 \end{center}
\caption{
The $\ell$-refinement $Q$ of the path $P[V(P)\setminus R]$ we obtain for the $\ell$- jump segment $P$ from \Cref{fig:proof_of_valid_replacement}.
All three dashed edges are new backward edges whose level is strictly larger than $\ell$.
}
\label{fig:proof_of_valid_replacement_path_Q}
\end{figure}
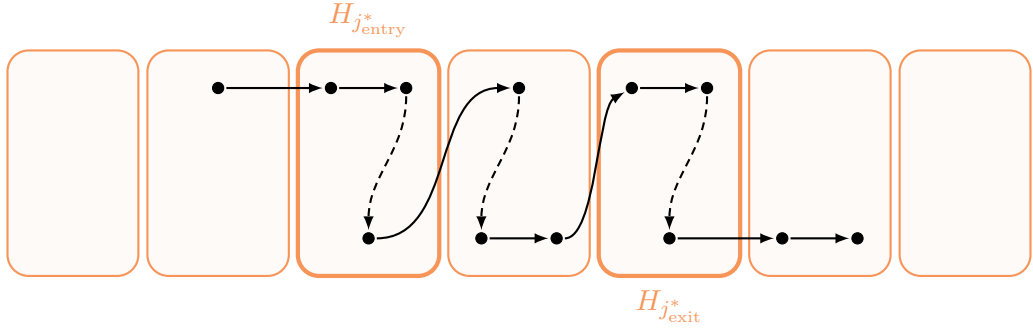

For a path $P$ and $A \subseteq V(P)$, we write $P[A]$ to denote the path obtained by skipping all vertices $v \in V(P) \setminus A$.
\addparskip

To prove \Cref{lem:many_replacements}, we consider a path $P_{\rm main}$ with an $\ell$-splitting-free $\ell$-buffering $\cB$.
We denote by $\cB_{\ell}$ the set of jump segments from $\cB$ that contain a backward edge of level $\ell$.

Our procedure to get rid of all backward edges of level $\ell$ can then simply be described as follows:
In the beginning, we initialize $R^\ast \coloneqq \emptyset$ as the empty set.
For each $P \in \cB_{\ell}$, we replace $P$ with the $\ell$-refinement of $P[V(P) \setminus R]$, where $R$ is the removal set of $P$, and we add the vertices from $R$ to $R^\ast$.
See also \Cref{fig:proof_of_valid_replacement_path_Q} for an example.
We denote by $P_{\rm main}^\prime$ the path that results from applying this procedure to the path $P_{\rm main}$.

We now argue that $P_{\rm main}^\prime$, together with $R^\ast$, indeed satisfies the properties claimed in \Cref{lem:many_replacements}.
We start by proving that $P_{\rm main}^\prime$ is $(\ell + 1)$-buffered.
In this proof we exploit the fact that $j^\ast_{\rm entry}$ was not chosen smaller and $j^\ast_{\rm exit}$ was not chosen larger than how we chose it in \eqref{eq:def_j_ast_entry} and \eqref{eq:def_j_ast_exit}.

\begin{lemma}
$P_{\rm main}^\prime$ is $(\ell + 1)$-buffered.
\end{lemma}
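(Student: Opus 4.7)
The plan is to prove the statement via \Cref{lem:shrink_buffering}: it suffices to show that (i) $P'_{\rm main}$ has no backward edges of level $\ell$, and (ii) $P'_{\rm main}$ is $\ell$-buffered.

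For (i), inside each modified jump segment, the $\ell$-refinement by definition orders all pairs of vertices with $\level(\{x,y\})\leq\ell$ according to $\prec$, ruling out internal backward edges of level $\ell$. The only remaining worry is at the two boundary edges between a modified segment and the rest of $P'_{\rm main}$. I will show that the start vertex $s$ of $P$ (the first vertex of $P_{\rm entry}$) and the end vertex $t$ of $P$ are also the first and last vertex, respectively, of the replacement path. Once this is established, the boundary edges are identical to those in $P_{\rm main}$, where they were forward edges (since every backward edge of $P_{\rm main}$ lies inside some jump segment of $\cB$). To show that $s$ is the $\prec'$-minimum of $V(P)\setminus R$, I first argue $s\notin R_{\rm entry}$: if $R_{\rm entry}=\emptyset$ there is nothing to do; otherwise $P$ is exit-buffered and the non-splitting condition gives $j^*_{\rm entry}\leq j^*_{\rm exit}$. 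Combining \eqref{eq:entry_cutoff_ub} with the entry-buffered lower bound on $|V(P_{\rm entry})|$ forces the first vertex $s\in H_{j_0}$ to satisfy $j_0\leq j^*_{\rm entry}\leq j^*_{\rm exit}$, and in particular $s\notin R_{\rm entry}$. Then for any other $v\in V(P)\setminus R$: if $v\in V(P_{\rm entry})$, monotonicity gives $s\prec v$; if $v\in V(P_{\rm exit})\setminus R_{\rm exit}$, then $v\in H_j$ with $j\geq j^*_{\rm entry}\geq j_0$, so either $j>j_0$ (whence $s\prec v$, so $s\prec' v$) or $j=j_0$, in which case $\{s,v\}\subseteq H_{j_0}\in\cH_{\ell+1}$ so $\level(\{s,v\})\geq\ell+1>\ell$ and the $\prec'$-tiebreak by path order gives $s\prec'v$ since $s\in P_{\rm entry}$ precedes $v\in P_{\rm exit}$. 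A symmetric argument handles $t$.

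For (ii), I construct an $\ell$-buffering $\cB'$ of $P'_{\rm main}$ as follows. The jump segments in $\cB\setminus\cB_\ell$ are subpaths of $P$ disjoint from the modified portions, hence remain valid vertex-disjoint subpaths of $P'_{\rm main}$ with the same entry/exit/safe/buffered properties. For each $P\in\cB_\ell$, the replacement $\ell$-refinement decomposes naturally into ``blocks,'' one for each set $H_j\in\cH_{\ell+1}$ intersecting $V(P)\setminus R$; within each block the vertices appear as the $P_{\rm entry}$-vertices of $H_j$ in $\prec$-order followed by the $P_{\rm exit}$-vertices of $H_j$ in $\prec$-order, so each block contributes at most one backward edge (necessarily of level $\geq\ell+1$). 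I declare each block with a backward edge to be an $\ell'$-jump segment ($\ell'\geq\ell+1$) of $\cB'$, and then verify its entry/exit properties inside $P'_{\rm main}$: neighboring blocks within the same modified segment live in $H$-sets that are $\prec$-ordered, so entry/exit-safety within the modified segment is automatic, and at the two outermost blocks one inherits the entry-safe/buffered or exit-safe/buffered property from the original segment $P$ (again using the vertex-preservation from (i) to relate the outermost positions of the block to the endpoints of $P$).

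The main obstacle will be the verification for the outermost blocks in step (ii): when $P$ is entry-buffered but not entry-safe, the leftmost block need not be entry-safe within $P'_{\rm main}$ and must instead be shown to be entry-buffered, which requires a counting argument based on the defining property of $j^*_{\rm entry}$ together with \eqref{eq:entry_cutoff_ub} to produce enough vertices of $V(P_{\rm entry})\cap H_{j_0}$ in the block's entry part (with a symmetric argument for the rightmost block). Once (i) and (ii) are established, \Cref{lem:shrink_buffering} applied to $P'_{\rm main}$ with the $\ell$-buffering $\cB'$ yields $(\ell+1)$-bufferedness of $P'_{\rm main}$, completing the proof.
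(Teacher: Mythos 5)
Your high-level strategy matches the paper: first establish that $P'_{\rm main}$ has no backward edges of level $\leq\ell$ and that it is $\ell$-buffered, then invoke \Cref{lem:shrink_buffering}. That is exactly the structure of Claims~(iii) and~(iv) in the paper's proof. However, there is a genuine gap in your boundary argument for~(i).

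You claim unconditionally that the first vertex $s$ of $P_{\rm entry}$ remains the first vertex of the replacement $\ell$-refinement. This is false when $P$ is entry-safe but not entry-buffered. In that case $j^*_{\rm entry}$ is undefined (the index $j^*_{\rm entry}$ exists only for entry-buffered segments), so both of your uses of it break down: when $R_{\rm entry}\neq\emptyset$ you cannot write ``$P$ is exit-buffered and the non-splitting condition gives $j^*_{\rm entry}\leq j^*_{\rm exit}$'' since the non-splitting relation is only defined when both indices exist; and when $R_{\rm entry}=\emptyset$ but $P$ is not entry-buffered, $R_{\rm exit}=\emptyset$ as well, so nothing stops a vertex of $V(P_{\rm exit})$ from lying $\prec$-left of $s$ and becoming the new first vertex of the refinement. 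In fact $s$ can even be removed: if $P$ is entry-safe and exit-buffered, $s$ may lie in $H_{j_0}$ with $j_0>j^*_{\rm exit}$ and thus $s\in R_{\rm entry}$. The paper circumvents this by proving the disjunction ``$Q$ is entry-safe within $P'_{\rm main}$ \emph{or} the first vertex of $P$ equals the first vertex of $Q$'': when $P$ is entry-safe one does not need vertex preservation, since entry-safety of $P$ directly yields entry-safety of $Q$ (and hence a forward boundary edge); vertex preservation is proved only in the complementary entry-buffered case, where your argument does work. You need this case split.

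There are two smaller issues worth tightening. First, even when $s$ is preserved, the boundary edge in $P'_{\rm main}$ need not be ``identical'' to the one in $P_{\rm main}$, because the preceding jump segment may also have been modified; one must argue recursively (as the paper does) that its last vertex is preserved or that segment is exit-safe. Second, in~(ii) you assert that entry/exit-safety of the inner blocks ``within the modified segment is automatic,'' but entry-safety is a global property of a subpath in $P'_{\rm main}$, not a local one; the paper justifies entry-safety of each inner block $Q^e$ by invoking the just-proved Claim~(iii) (no backward edges of level~$\leq\ell$ in $P'_{\rm main}$, hence the $\cH_{\ell+1}$-cells are visited as contiguous $\prec$-increasing intervals). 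Your sketch of~(ii) is otherwise on the right track, and you correctly anticipate that the outermost blocks require the counting argument via $j^*_{\rm entry}$, $j^*_{\rm exit}$, and \eqref{eq:entry_cutoff_ub}/\eqref{eq:exit_cutoff_ub}, but the unconditional vertex-preservation claim in~(i) must be repaired first.
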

\begin{proof}
We prove the following four claims in this particular order:
\begin{enumerate}
\item\label{item:claim:non_empty}
Consider an $\ell$-jump segment $P \in \cB_{\ell}$ that is replaced with the $\ell$-refinement $Q$ of $P[V(P) \setminus R]$ in our procedure.
Then we have $V(P) \setminus R \neq \emptyset$, that is, $Q$ contains at least one vertex.

\item\label{item:claim_incoming_and_outgoing_edges}
Let $P$ and $Q$ be as in Claim~\ref{item:claim:non_empty}.
Then the edge $e_1$ of $P_{\rm main}^\prime$ that enters the first vertex of $Q$ and the edge $e_2$ of $P_{\rm main}^\prime$ that leaves the last vertex of $Q$ are forward edges (in case such edges exist).

\item\label{item:claim_no_backward_edge_level_ell}
$P_{\rm main}^\prime$ contains no backward edge of level $\leq \ell$.

\item\label{item:claim_ell_buffered}
$P_{\rm main}^\prime$ is $\ell$-buffered.
\end{enumerate}
Observe that by \Cref{lem:shrink_buffering}, the Claims~\ref{item:claim_no_backward_edge_level_ell} and~\ref{item:claim_ell_buffered} together imply that $P_{\rm main}^\prime$ is $(\ell + 1)$-buffered.
We prove each claim individually:

\begin{enumerate}
\item[\ref{item:claim:non_empty}]
Suppose for the sake of deriving a contradiction that $V(P)\setminus R = \emptyset$.
Then $ R_{\rm entry} = V(P_{\rm entry}) \neq \emptyset$ and  $ R_{\rm exit} = V(P_{\rm exit}) \neq \emptyset$.
In particular, $P$ is entry-buffered and exit-buffered and thus $j^\ast_{\rm entry}$ and $j^\ast_{\rm exit}$ are defined.
By the definition of $j^\ast_{\rm entry}$, there exists a vertex $v\in V(P_{\rm entry})\cap H_{j^\ast_{\rm entry}} \neq \emptyset$.
Because $P \in \cB_{\ell}$ and $\cB$ is $\ell$-splitting-free, we have $j^\ast_{\rm entry} \leq j^\ast_{\rm exit}$, implying $v\notin R_{\rm entry}$.
This contradicts $V(P)\setminus R = \emptyset$.

\item[\ref{item:claim_incoming_and_outgoing_edges}]
To show the second claim, we prove that $Q$ is entry-safe within $P_{\rm main}^\prime$ or the first vertex of $P$ is identical to the first vertex of $Q$.
By a symmetric argument, it then follows that $Q$ is exit-safe within $P_{\rm main}^\prime$ or the last vertex of $P$ is identical to the last vertex of $Q$.
Applying this to all the replacement steps we perform in our procedure proves the second claim because any backward edge of $P_{\rm main}$ must be contained in a jump-segment from $\cB$ and these jump segments are vertex disjoint.

First, we observe that if $P$ is entry-safe within $P_{\rm main}$, this immediately implies that $Q$ is entry-safe within $P_{\rm main}^\prime$.
Now consider the case where $P$ is entry-buffered and we show that in this case the first vertex of the path $Q$ is identical to the first vertex of $P$.
Let $H_1, \dots, H_r$ be the sets as previously defined for $P$.
If $P$ is entry-buffered, the index $j^\ast_{\rm entry}$ is defined.
Let $j\in[r]$ be such that $H_j$ contains the first vertex $v$ of the path $P$.
Because $P_{\rm entry}$ is monotone, we have $j \leq j^\ast_{\rm entry}$ and every vertex $w \in V(P_{\rm entry})$ is contained in a set $H_i$ with $i \geq j$.
We have $j^\ast_{\rm entry} \leq j^\ast_{\rm exit}$ if $j^\ast_{\rm exit}$ is defined because $\cB$ is $\ell$-splitting-free.
This implies that $v \notin R_{\rm entry}$ and therefore $v \in V(Q)$.
Moreover, by the definition of $R_{\rm exit}$, every vertex $w \in V(P_{\rm exit})$ that is contained in a set $H_i$ with $i < j$ is contained in $R$.
We conclude that every vertex $w\in V(P)\setminus R$ is contained in a set $H_i$ with $i \geq j$, which implies (by the definition of the $\ell$-refinement) that indeed $v$ is the first vertex of the path~$Q$.

A symmetric argument shows that $Q$ is exit-safe or the last vertex of $P$ is identical to the last vertex of~$Q$.

\item[\ref{item:claim_no_backward_edge_level_ell}]
This follows immediately from Claims~\ref{item:claim:non_empty} and~\ref{item:claim_incoming_and_outgoing_edges} because the $\ell$-refinement $Q$ of $P[V(P) \setminus R]$ does not contain any backward edge of level~$\ell$ by the definition of the $\ell$-refinement.
Hence, all backward edges of level~$\ell$ are removed by our procedure (and by Claims~\ref{item:claim:non_empty} and~\ref{item:claim_incoming_and_outgoing_edges} no new ones are introduced).

\item[\ref{item:claim_ell_buffered}]
In order to show that $P_{\rm main}^\prime$ is $\ell$-buffered, we extend our procedure described above and explicitly construct an $\ell$-buffering of $P_{\rm main}^\prime$ during the process of obtaining $P_{\rm main}^\prime$ from $P_{\rm main}$.
Initially, we set $\cB^\prime \coloneqq \cB$.
Consider an iteration in which we replace an $\ell$-jump segment $P \in \cB_{\ell}$ with the $\ell$-refinement $Q$ of $P[V(P) \setminus R]$.
As before, let $H_1, \dots, H_r \in \cH_{\ell + 1}$ be the sets defined above for $P$.
Because $P_{\rm entry}$ and $P_{\rm exit}$ are monotone, the $\ell$-refinement $Q$ of $P[V(P)\setminus R]$ contains at most one backward edge within each set $H_i$ (with $i\in [r]$).
See \Cref{fig:proof_of_valid_replacement_path_Q}.
For a backward edge $e$ of $Q$, let $H_i$ be the set containing $e$ and let $Q^e\coloneqq Q[V(Q)\cap H_i]$.
In other words, $Q^e$ is the maximal subpath of $Q$ within the set $H_i$.
Then, we replace $P$ in the current $\cB^\prime$ via
\begin{equation}
\label{eq:replace_buffering}
\cB^\prime \ \coloneqq\ \left(\cB^\prime \setminus \big\{P\big\}\right) \cup \big\{ Q^e : e \text{ is a backward edge of }Q \big\}.
\end{equation}

We now show that the resulting $\cB^\prime$ at the end of our procedure is an $\ell$-buffering of $P_{\rm main}^\prime$.
By construction, $\cB^\prime$ is a collection of vertex disjoint subpaths of $P^\prime_{\rm main}$ and every backward edge of $P_{\rm main}^\prime$ is contained in one of these subpaths (using Claim~\ref{item:claim_incoming_and_outgoing_edges}).
Moreover, every element of $\cB'$ is an $\ell^\prime$-jump segment for some $\ell^\prime \geq \ell$.

We consider a single iteration of our procedure in which we replace $P$ with $Q$ (and modify $\cB^\prime$ as in~\eqref{eq:replace_buffering}).
First, note that for any jump segment $P^\prime \in \cB^\prime \setminus \{P\}$, the properties of being entry-safe, exit-safe, entry-buffered, and exit-buffered are preserved under this replacement step.
Hence, in order to complete the proof of this claim, it suffices to show that each $Q^e$ is entry-buffered or entry-safe within $P_{\rm main}^\prime$, and exit-buffered or exit-safe within $P_{\rm main}^\prime$, where $e$ is a backward edge of $Q$ and $Q^e$ is defined as above.

Let $Q^e$ be such a subpath of $Q$ and we show that $Q^e$ is entry-buffered or entry-safe within $P^\prime_{\rm main}$.
Consider the case that $Q^e$ is not an initial part of $Q$, i.e., the start points of $Q^e$ and $Q$ are not identical.
Since $Q^e$ is the maximal subpath of $Q$ that is contained in $H_j \in \cH_{\ell + 1}$ (for appropriately chosen $j \in [r]$) and $Q$ does not contain any backward edges of level $\ell$, the path $Q$ and thus $P_{\rm main}^\prime$ contains a forward edge of level $\ell$ that enters the first vertex of $Q^e$.
Therefore Claim~\ref{item:claim_no_backward_edge_level_ell} implies that, in this case, $Q^e$ is entry-safe within $P_{\rm main}^\prime$.

Now consider the case that $Q^e$ is an initial part of $Q$.
If $P$ is entry-safe within $P_{\rm main}$, then $Q^e$ is entry-safe within $P_{\rm main}^\prime$.
So we may assume that $P$ is entry-buffered.
Recall from the proof of Claim~\ref{item:claim_incoming_and_outgoing_edges} that in this case the first vertex of $Q$ is identical to the first vertex of $P$, and thus the first vertex of $P_{\rm entry}$.

We have $Q^e = Q[ V(Q)\cap H_j]$ for some $j \in [r]$.
We claim $j = j^\ast_{\rm entry}$.
Because $Q^e$ contains a backward edge, it must contain a vertex from $P_{\rm exit}$ and thus we must have $j \geq j^\ast_{\rm entry}$
(where we used that the vertices in $R_{\rm exit}$ were removed and are not contained in $Q^e$).
If $P$ is exit-buffered, then $ j^\ast_{\rm entry} \leq   j^\ast_{\rm exit}$ because $P \in \cB_{\ell}$ and $\cB$ is $\ell$-splitting-free.
In particular, the vertices from $H_{j^\ast_{\rm entry}}\cap V(P_{\rm entry})$ are not contained in the removal set $R$.
This also applies in case $P$ is not exit-buffered in which case $R_{\rm entry} =\emptyset$.
We conclude that $V(Q) \cap H_{j^\ast_{\rm entry}} \neq \emptyset$ (where we used that $P$ contains at least one vertex from $H_{j^\ast_{\rm entry}}$ by the definition of  $ j^\ast_{\rm entry}$).
Because $Q^e$ contains the first vertex of $Q$, this implies $j \leq j^\ast_{\rm entry}$.
Thus, indeed $j= j^\ast_{\rm entry}$.

Recall that $R_{\rm entry} =\emptyset$ (in case $P$ is not exit-buffered)  or $j= j^\ast_{\rm entry} \leq   j^\ast_{\rm exit}$ because $P$ is non-splitting.
In both cases, $R_{\rm entry} \cap H_j = \emptyset$.
We conclude that the entry part of the jump segment $Q^e$ is $P[V(P_{\rm entry}) \cap H_j]$.
Moreover, the number of vertices in this entry part is 
\begin{equation*}
\big|V(P_{\rm entry})\big| - \sum_{i= j^\ast_{\rm entry}+1}^{r} |V(P_{\rm entry} \cap H_i| ,
\end{equation*}
which by the definition of $j^\ast_{\rm entry}$ is at least
\begin{align*}
\big|V(P_{\rm entry})\big| - \frac{k}{64(L-1)} \ \geq\  \frac{k}{16} - (\ell-1)\cdot \frac{k}{32(L-1)} -  \frac{k}{64(L-1)}\ \geq\ \frac{k}{16} - \ell \cdot \frac{k}{32(L-1)},
\end{align*}
implying that $Q^e$ is entry-buffered (where we used $\level(V(Q^e)) \geq \ell + 1$).
\addparskip

An analogous argument implies that $Q^e$ is exit-buffered or exit-safe within $P^\prime_{\rm main}$.
This completes the proof of Claim~\ref{item:claim_ell_buffered} and thus of our lemma. \qedhere
\end{enumerate}
\end{proof}

Because reinserting removed vertices comes at some cost, we need to bound the number of vertices in the removal set $R$.
In the following lemma we use that $j^\ast_{\rm entry}$ was not chosen larger and $j^\ast_{\rm exit}$ was not chosen smaller than how we chose it in \eqref{eq:def_j_ast_entry} and \eqref{eq:def_j_ast_exit}.

\begin{lemma}\label{lem:size_removal_set}
Let $P$ be an $\ell$-jump segment. 
Then its removal set $R$ satisfies
\begin{equation}
\label{eq:valid_replacement2}
|R|\cdot 2kD_{\ell} \leq \scconstant_2 \cdot c^{(k)} (P),
\end{equation}
where $\scconstant_2 \coloneqq 2^8L $.
\end{lemma}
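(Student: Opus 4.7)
The plan is to show that every vertex in the removal set $R=R_{\mathrm{entry}}\cup R_{\mathrm{exit}}$ is incident to many level-$\ell$ backward edges that all contribute to $c^{(k)}(P)$, and then to sum these contributions. I will bound the two parts of $R$ separately.

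For the $R_{\mathrm{entry}}$ side, fix $v\in R_{\mathrm{entry}}$. By definition $P$ is exit-buffered (so $j^\ast_{\mathrm{exit}}$ is defined) and $v\in H_j\cap V(P_{\mathrm{entry}})$ for some $j>j^\ast_{\mathrm{exit}}$. By the definition of $j^\ast_{\mathrm{exit}}$ in~\eqref{eq:def_j_ast_exit}, the set $U_v\coloneqq V(P_{\mathrm{exit}})\cap \bigcup_{i=1}^{j^\ast_{\mathrm{exit}}}H_i$ has size at least $\tfrac{k}{64(L-1)}$, and every $u\in U_v$ lies in some $H_i$ with $i<j$. Hence $u\prec v$, so $(v,u)$ is a backward edge whose level is at most $\ell$ (as $u,v$ lie in different elements of $\cH_{\ell+1}$); since $V(P)\subseteq H$ for some $H\in\cH_{\ell}$, its level is exactly $\ell$ and $c(v,u)=D_\ell$. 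Furthermore $P$ visits $v$ before $u$ (entry precedes exit), and $|V(P)|\le k+1$, so $(v,u)$ contributes to $c^{(k)}(P)$. Summing over $u\in U_v$ and then over $v\in R_{\mathrm{entry}}$, and noting that the edges obtained have pairwise distinct tails $v$, we get
\[
|R_{\mathrm{entry}}|\cdot \tfrac{k}{64(L-1)}\cdot D_\ell\ \le\ c^{(k)}(P).
\]

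The $R_{\mathrm{exit}}$ side is symmetric: each $v\in R_{\mathrm{exit}}\cap H_j$ (with $j<j^\ast_{\mathrm{entry}}$) is the head of at least $\tfrac{k}{64(L-1)}$ level-$\ell$ backward edges of cost $D_\ell$, coming from the set $V(P_{\mathrm{entry}})\cap \bigcup_{i\ge j^\ast_{\mathrm{entry}}}H_i$, whose size is at least $\tfrac{k}{64(L-1)}$ by the definition of $j^\ast_{\mathrm{entry}}$ in~\eqref{eq:def_j_ast_entry}. These edges again contribute to $c^{(k)}(P)$ since $|V(P)|\le k+1$, and they have pairwise distinct heads $v$, so
\[
|R_{\mathrm{exit}}|\cdot \tfrac{k}{64(L-1)}\cdot D_\ell\ \le\ c^{(k)}(P).
\]

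Adding the two inequalities (which double-counts edges in the worst case, but only by a factor of at most $2$) and rearranging yields $|R|\cdot 2k D_\ell \le 256(L-1)\cdot c^{(k)}(P)\le \scconstant_2\cdot c^{(k)}(P)$ with $\scconstant_2=2^8L$, as claimed. The only subtlety is to verify that the two edge families (from the $R_{\mathrm{entry}}$ and $R_{\mathrm{exit}}$ counts) really can be handled separately; this is immediate because we only need distinct edges within each family, not across them.
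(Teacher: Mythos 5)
Your argument is correct and is essentially the paper's own proof: for each $x\in R_{\mathrm{exit}}$ you lower-bound the number of incoming level-$\ell$ backward $k$-hop edges from $\bigcup_{i\ge j^\ast_{\mathrm{entry}}}H_i\cap V(P_{\mathrm{entry}})$ using \eqref{eq:def_j_ast_entry}, and symmetrically for $R_{\mathrm{entry}}$ using \eqref{eq:def_j_ast_exit}, then add the two bounds. One small remark: your aside about ``double-counting by a factor of $2$'' is a slight misdescription --- you are simply adding two separate upper bounds $|R_{\mathrm{entry}}|\cdot\tfrac{k}{64(L-1)}D_\ell\le c^{(k)}(P)$ and $|R_{\mathrm{exit}}|\cdot\tfrac{k}{64(L-1)}D_\ell\le c^{(k)}(P)$, so disjointness of the two edge families is never needed (as you ultimately say); the factor $2$ comes from adding the right-hand sides, not from any overcount.
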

\begin{proof}
For each vertex $x \in R_{\mathrm{exit}}$ and each vertex $v \in \bigcup_{j= j^{\ast}_{\mathrm{entry}}}^r \big(V(P_{\mathrm{entry}}) \cap H_j \big)$,
the edge $(v,x)$ is a backward edge of level $\ell$ and this edge contributes to $c^{(k)}(P)$, because $|V(P)| \leq k + 1$.
Therefore,
\[
c^{(k)}(P) \ \geq\ \left( |R_{\mathrm{exit}}| \cdot \sum_{j= j^{\ast}_{\mathrm{entry}}}^r |V(P_{\mathrm{entry}}) \cap H_j |  \right) \cdot D_{\ell} 
\ \geq\  \left( |R_{\mathrm{exit}}| \cdot \frac{k}{64(L-1)} \right) \cdot D_{\ell},
\]
where the second inequality follows from the definition of $j^{\ast}_{\mathrm{entry}}$.
Analogously, we also obtain $c^{(k)}(P)  \geq \big( |R_{\mathrm{entry}}| \cdot \frac{k}{64(L-1)} \big) \cdot D_{\ell}$, and thus
\[
2\cdot c^{(k)}(P) \ \geq\ \left( |R_{\mathrm{exit}}| \cdot \tfrac{k}{64(L-1)} + |R_{\mathrm{entry}}| \cdot \tfrac{k}{64(L-1)} \right) \cdot D_{\ell} \\
\ =\  |R| \cdot D_{\ell} \cdot \tfrac{k}{64(L-1)}.
\]
\end{proof}

We are now ready to complete the proof of \Cref{lem:many_replacements} with the following cost bounds:
\begin{lemma}
We have
\begin{equation*}
c^{(k)}_{> \ell} (P^\prime_{\mathrm{main}}) \leq  c^{(k)} (P_{\mathrm{main}}),
\end{equation*}
and
\begin{equation*}
c^{(k)} (P^\prime_{\mathrm{main}}) + |R^\ast| \cdot 2kD_{\ell} \leq \left(\scconstant_2 + 2^{13}\right) \cdot c^{(k)} (P_{\mathrm{main}}).
\end{equation*}
\end{lemma}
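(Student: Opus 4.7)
The plan is to view $P'_{\mathrm{main}}$ as the result of two effects applied to $P_{\mathrm{main}}$: first skip all vertices in $R^* = \bigcup_{P^j \in \cB_\ell} R_j$, yielding an intermediate path $\tilde P$; then replace each (vertex-disjoint) subpath $P^j[V(P^j)\setminus R_j]$ of $\tilde P$ by its $\ell$-refinement $Q_j$. \Cref{lem:i-refinement} applies to each such replacement, since $|V(P^j[V(P^j)\setminus R_j])| \leq |V(P^j)| \leq k+1$ and $\level(V(P^j[V(P^j)\setminus R_j])) \geq \level(V(P^j)) \geq \ell$ (as $P^j$ is an $\ell'$-jump segment with $\ell' \geq \ell$), while \Cref{lem:khop_skipping_vertices} gives $c^{(k)}(\tilde P) \leq c^{(k)}(P_{\mathrm{main}})$ for the skipping step.

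For the first inequality, I would apply the second bound of \Cref{lem:i-refinement} iteratively (the disjointness of the replaced subpaths lets the replacements be processed independently) to obtain
\[
c^{(k)}_{>\ell}(P'_{\mathrm{main}}) \;\leq\; c^{(k)}_{>\ell}(\tilde P) + \sum_{P^j \in \cB_\ell} c^{(k)}_{=\ell}\bigl(P^j[V(P^j)\setminus R_j]\bigr).
\]
Because the subpaths $P^j[V(P^j)\setminus R_j]$ are vertex-disjoint inside $\tilde P$, the level-$=\ell$ contributions come from disjoint edge sets of $\tilde P$, so the sum is bounded by $c^{(k)}_{=\ell}(\tilde P)$. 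Combining with the trivial identity $c^{(k)}_{>\ell}(\tilde P) + c^{(k)}_{=\ell}(\tilde P) \leq c^{(k)}(\tilde P)$ and $c^{(k)}(\tilde P) \leq c^{(k)}(P_{\mathrm{main}})$ yields $c^{(k)}_{>\ell}(P'_{\mathrm{main}}) \leq c^{(k)}(P_{\mathrm{main}})$.

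For the second inequality, iterating \Cref{lem:path_replacement} with the same setup gives $c^{(k)}(P'_{\mathrm{main}}) \leq c^{(k)}(\tilde P) + \sum_j \bigl(c^{(k)}(Q_j) + 2^{12} \cdot c^{(k)}(P^j[V(P^j)\setminus R_j])\bigr)$. The first bound of \Cref{lem:i-refinement} gives $c^{(k)}(Q_j) \leq c^{(k)}(P^j[V(P^j)\setminus R_j])$, and disjointness gives $\sum_j c^{(k)}(P^j[V(P^j)\setminus R_j]) \leq c^{(k)}(\tilde P) \leq c^{(k)}(P_{\mathrm{main}})$. Hence $c^{(k)}(P'_{\mathrm{main}}) \leq (2^{12}+2) \cdot c^{(k)}(P_{\mathrm{main}}) \leq 2^{13} \cdot c^{(k)}(P_{\mathrm{main}})$. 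For the removal-set term, summing \Cref{lem:size_removal_set} over all $P^j \in \cB_\ell$ and using $|R^*| = \sum_j |R_j|$ together with $\sum_j c^{(k)}(P^j) \leq c^{(k)}(P_{\mathrm{main}})$ gives $|R^*| \cdot 2kD_\ell \leq \scconstant_2 \cdot c^{(k)}(P_{\mathrm{main}})$, and adding the two bounds completes the proof.

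The main subtle point is performing the removals \emph{before} the refinements: this way, \Cref{lem:i-refinement}'s level-$>\ell$ increase can be charged to $c^{(k)}_{=\ell}(\tilde P)$ rather than to $c^{(k)}_{=\ell}(P_{\mathrm{main}})$, and the resulting sum telescopes with $c^{(k)}_{>\ell}(\tilde P)$ to give exactly $c^{(k)}(\tilde P)$ without any spurious constant factor. Refining first would instead force one to control how the subsequent skipping step affects $c^{(k)}_{>\ell}$, which is not automatically non-increasing under \Cref{lem:khop_skipping_vertices}.
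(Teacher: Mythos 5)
Your proof is correct and follows essentially the same approach as the paper: both arguments pass through the intermediate path $\tilde P = P_{\mathrm{main}}[V(P_{\mathrm{main}})\setminus R^\ast]$, apply \Cref{lem:i-refinement} and \Cref{lem:khop_skipping_vertices} to bound $c^{(k)}_{>\ell}$, and use \Cref{lem:path_replacement} together with \Cref{lem:size_removal_set} for the second inequality. The only (immaterial) difference is that in the second inequality the paper applies \Cref{lem:path_replacement} directly to $P_{\mathrm{main}}$ with $P_{\mathrm{sub}}=P_j$ rather than to $\tilde P$ with $P_{\mathrm{sub}}=P_j[V(P_j)\setminus R_j]$; both bookkeepings yield the same final bound.
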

\begin{proof}
Suppose our procedure constructing $P^\prime_{\mathrm{main}}$ from $P_{\mathrm{main}}$ as explained above replaces $P_1, \dots, P_m \in \cB_{\ell}$ with $Q_1, \dots, Q_m$, respectively.
We denote by $R_j$ the removal set of each $P_j$, i.e., $R^\ast = \bigcup_{j=1}^m R_j$.

Note that $P_j[V(P_j) \setminus R_j]$ is a subpath of $P_{\rm main}[V(P_{\rm main}) \setminus R^\ast]$, and $P^\prime_{\rm main}$ can be equivalently obtained from $P_{\rm main}[V(P_{\rm main}) \setminus R^\ast]$ by replacing each $P_j[V(P_j) \setminus R_j]$ with $Q_j$.
Thus, by \Cref{lem:i-refinement} and \Cref{lem:khop_skipping_vertices}, we have
\begin{equation*}
\begin{split}
c^{(k)}_{> \ell} (P^\prime_{\rm main}) 
&\leq c^{(k)}_{> \ell} \Big( P_{\rm main}[V(P_{\rm main}) \setminus R^\ast] \Big) + \sum_{j=1}^{m} c^{(k)}_{=\ell}\Big( P_j[V(P_j) \setminus R_j] \Big) \\
&\leq c^{(k)}_{> \ell} \Big( P_{\rm main}[V(P_{\rm main}) \setminus R^\ast] \Big) 
 + c^{(k)}_{= \ell} \Big( P_{\rm main}[V(P_{\rm main}) \setminus R^\ast] \Big) \\
&\leq c^{(k)} \Big( P_{\rm main}[V(P_{\rm main}) \setminus R^\ast] \Big) \\
&\leq c^{(k)} (P_{\rm main}).
\end{split}
\end{equation*}

For $j \in [m]$, \Cref{lem:i-refinement,lem:khop_skipping_vertices} imply that
\begin{equation*}
c^{(k)}(Q_j) \leq c^{(k)}\Big(P_j[V(P_j) \setminus R_j]\Big) \leq c^{(k)}(P_j). 
\end{equation*}
Combining this with \Cref{lem:path_replacement} and \Cref{lem:size_removal_set}, we obtain
\begin{equation*}
\begin{split}
c^{(k)} (P^\prime_{\mathrm{main}}) + |R^\ast| \cdot 2k D_{\ell}
&\leq  c^{(k)} (P_{\mathrm{main}}) + \sum_{j=1}^{m} 2^{12} \cdot c^{(k)}(P_j) + \sum_{j=1}^m c^{(k)}(Q_j) + |R^\ast| \cdot 2k D_{\ell} \\
&\leq c^{(k)} (P_{\mathrm{main}}) + \sum_{j=1}^{m} \Big( \left(2^{12} + 1\right) \cdot c^{(k)}(P_j) + |R_j| \cdot 2k D_{\ell} \Big)  \\
&\leq  c^{(k)} (P_{\mathrm{main}}) + \sum_{j=1}^{m} \left(2^{12} + 1 + \scconstant_2\right) \cdot c^{(k)}(P_j) \\
&\leq \left(\scconstant_2 + 2^{13}\right) \cdot c^{(k)} (P_{\mathrm{main}}).
\end{split}
\end{equation*}
\end{proof}

\subsection{Proof of \Cref{lem:main_recursion_covering_weak}}\label{sec:complte_cover_reduction}

Recall that a path-level pair $(P,j)$ covers an element $(v,\ell)$ if
\begin{itemize}
\item $v\in V(P)$ and $j \leq \ell$, or
\item $(P,j)$ takes responsibility for the set $H_{\ell+1}(v)\in \cH_{\ell +1}$ containing $v$, \\
i.e,  we have $|V(P)\cap H_{\ell+1}(v)| \geq k$ and $ j \leq \ell < L$.
\end{itemize}
Further, recall that a \emph{main-tour cover} of a set $U\subseteq V$ is a monotone path $P_{\rm main}$ with $V(P_{\rm main}) \subseteq U$ together with a set $\cP \subseteq \cS$ of path-level pairs satisfying the following two conditions:
\begin{enumerate}
\item \label{item:property_main_path}
$V(P_{\rm main}) = U$ or $|V(P_{\rm main})| \geq k$.
\item\label{item:cover_property}
The path-level pairs $\cP \cup \big\{ \big(P_{\rm main},\level(U)\big)\big\}$ cover the elements $U\times \big\{\level(U),\dots, L\big\} $ of the Set Cover universe $\cU$.
\end{enumerate}
We remark that the monotone path with vertex set $U$ together with $\cP=\emptyset$ is always a main-tour cover of $U$.

Finally, we prove that main result of this section, \Cref{lem:main_recursion_covering_weak}.
We prove the statement by induction and it will be useful to prove a slightly stronger version:

\begin{lemma}\label{lem:main_recursion_covering_strong}
Let $P_{\mathrm{old}}$ be an $\ell$-buffered path with $\ell \coloneqq \level(V(P_{\mathrm{old}}))$.
Then there is a main-tour cover $(P_{\mathrm{new}}, \cP)$ of $V(P_{\mathrm{old}})$ such that
\begin{equation*}
c^{(k)}(P_{\mathrm{new}}) + w(\cP)\ \leq \ 
\begin{cases}
 \big((L - \ell + 1 ) \scconstant^\ast - 3\scconstant_1\big)\cdot c^{(k)}(P_{\mathrm{old}}) & \text{if $P_{\rm old}$ has an $\ell$-splitting-free $\ell$-buffering}\\
  (L - \ell + 1) \scconstant^\ast \cdot c^{(k)}(P_{\mathrm{old}}) & \text{otherwise}.
\end{cases}
\end{equation*}
where $\scconstant^\ast \coloneqq 3\scconstant_1 + \left(\scconstant_2 + 2^{14}\right)$, $\scconstant_1 \coloneqq2^{12}L^2$, and $\scconstant_2 \coloneqq 2^8L $.
\end{lemma}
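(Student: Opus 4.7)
The plan is to prove this by induction on $L - \ell$. The base case $\ell = L$ is immediate: since $\cH_L$ consists of singletons, $V(P_{\rm old})$ is a singleton and $(P_{\rm old}, \emptyset)$ trivially satisfies both conditions of a main-tour cover with left-hand side zero. For the inductive step I would establish Case A first (using the inductive hypothesis at level $\ell+1$) and then deduce Case B as a corollary of Case A at the same level $\ell$.

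For Case A, I would apply \Cref{lem:many_replacements} to obtain an $(\ell+1)$-buffered path $P'$ on $V(P_{\rm old})\setminus R^*$. Every backward edge of $P'$ has level $> \ell$ and thus lies in a single set of $\cH_{\ell+1}$, so $P'$ visits the sets $H_1 \prec \cdots \prec H_r$ of $\cH_{\ell+1}$ meeting $V(P')$ in monotone order and splits into subpaths $Q_1,\ldots,Q_r$ with $V(Q_j) \subseteq H_j$. Each $Q_j$ is $\ell_j$-buffered with $\ell_j := \level(V(Q_j)) \ge \ell+1$, so the inductive hypothesis yields main-tour covers $(P_j, \cP_j)$ of $V(Q_j)$. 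I would take $P_{\rm new}$ as the monotone concatenation of $P_1,\ldots,P_r$ with the vertices of $R^*$ reinserted in $\prec$-order, and set $\cP := \bigcup_{j\in J}(\cP_j \cup \{(P_j,\ell_j)\})$ with $J := \{j : V(P_j) \subsetneq V(Q_j)\}$. Coverage works because any missing $v \in V(Q_j)\setminus V(P_j)$ forces $|V(P_j)| \ge k$: levels $\ell' \in [\ell_j, L]$ are handled by $\cP_j \cup \{(P_j, \ell_j)\}$, while for levels $\ell' \in [\ell, \ell_j - 1]$ the pair $(P_{\rm new}, \ell)$ takes responsibility for $H_{\ell'+1}(v) \supseteq V(P_j)$.

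For Case B, I would fix any $\ell$-buffering with splitting segments $P^1,\ldots,P^m$, remove the backward edge of each, and observe that the resulting vertex-disjoint sub-paths $P^{(0)},\ldots,P^{(m)}$ all inherit $\ell$-splitting-free $\ell$-bufferings. I would apply Case A to each and combine by naming the main path of the ``largest'' cover as $P_{\rm new}$ and moving the others into $\cP$ as level-$\ell$ pairs. The $m$ extra fixed costs total at most $m \cdot k^2 D_\ell \le \scconstant_1 \sum_{s} c^{(k)}(P^s) \le \scconstant_1 c^{(k)}(P_{\rm old})$ by \Cref{lem:splitting_i_jump} and vertex-disjointness, which together with the summed Case A bounds yields $((L-\ell+1)\scconstant^\ast - 3\scconstant_1 + \scconstant_1)\, c^{(k)}(P_{\rm old}) \le (L-\ell+1)\scconstant^\ast c^{(k)}(P_{\rm old})$; the remaining slack of $2\scconstant_1$ comfortably absorbs the additional bookkeeping needed to stitch the covers together.

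The hard part will be the cost analysis in Case A. Summing the inductive bounds gives $\sum_j (c^{(k)}(P_j) + w(\cP_j)) \le (L-\ell)\scconstant^\ast \sum_j c^{(k)}(Q_j) \le (L-\ell)\scconstant^\ast c^{(k)}(P_{\rm old})$, using the key inequality $c^{(k)}_{>\ell}(P') \le c^{(k)}(P_{\rm old})$ from \Cref{lem:many_replacements}. This leaves only the budget $\scconstant_2 + 2^{14} = \scconstant^\ast - 3\scconstant_1$ to cover (i) the cross-boundary edges of $P_{\rm new}$ (those between different $H_j$), (ii) the reinsertion cost $|R^*| \cdot 2kD_\ell$, and (iii) the fixed costs $\sum_{j\in J} k^2 D_{\ell_j}$. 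A naive application of \Cref{lem:path_replacement} would lose a factor $2^{12}$ and blow the budget; instead I would bound $c^{(k)}(P_{\rm new}) - \sum_j c^{(k)}(P_j)$ by a direct counting argument exploiting that $P_{\rm new}$ is monotone and each cross-boundary edge has level exactly $\ell$ and is chargeable to the level-$\ell$ contributions of $c^{(k)}(P_{\rm old})$. Item (ii) is absorbed by the $\scconstant_2 + 2^{13}$ slack in \Cref{lem:many_replacements}, and for item (iii) I would decide, for each $j \in J$, whether to keep $(P_j, \ell_j)$ (paying $k^2 D_{\ell_j}$) or drop it in favor of reinserting $V(Q_j)\setminus V(P_j)$ into $P_{\rm new}$ (paying at most $|V(Q_j)\setminus V(P_j)|\cdot 2kD_\ell$), taking the cheaper option in each case so that the total stays within the allotted $\scconstant_2 + 2^{14}$.
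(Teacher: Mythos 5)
Your overall scaffolding matches the paper: the same double induction (outer on $L-\ell$, inner on the number of backward edges), the split into an $\ell$-splitting-free branch handled via \Cref{lem:many_replacements} and a branch with splitting segments that you reduce to the splitting-free one, and the use of $c^{(k)}_{>\ell}(P')\le c^{(k)}(P_{\rm old})$ to avoid multiplicative blow-up across levels. However, your Case~A has a genuine gap.

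You add the path-level pairs $(P_j,\ell_j)$ to $\cP$ for every $j\in J$ where $V(P_j)\subsetneq V(Q_j)$. Each such pair carries a fixed cost $k^2 D_{\ell_j}$ that appears nowhere in the induction hypothesis bound $c^{(k)}(P_j)+w(\cP_j)\le (L-\ell_j+1)\scconstant^\ast\, c^{(k)}(Q_j)$, and it is not controlled by $c^{(k)}(Q_j)$ in general: a $Q_j$ of, say, $k+2$ vertices with a single removed vertex already lands in $J$, yet $c^{(k)}(Q_j)$ can be arbitrarily small relative to $k^2 D_{\ell_j}$. Your proposed remedy (choose per $j$ between paying $k^2D_{\ell_j}$ or reinserting $V(Q_j)\setminus V(P_j)$ at $2kD_\ell$ per vertex) does not close this: neither option is bounded by the leftover budget of $\scconstant_2+2^{14}$, and the reinsertion option introduces cost that is not covered by the $|R^*|\cdot 2kD_\ell$ slack, since $V(Q_j)\setminus V(P_j)$ is disjoint from $R^*$. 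The paper avoids all of this with an observation you miss: because $V(\widetilde{Q}_j)\subseteq V(P_{\rm new})$ and $\ell\le \ell_j$, the pair $(P_{\rm new},\ell)$ \emph{subsumes} the covering role of $(\widetilde{Q}_j,\ell_j)$ — it covers $(v,\ell')$ directly when $v\in V(\widetilde{Q}_j)$, and it takes responsibility for any $H_{\ell'+1}(v)$ for which $(\widetilde{Q}_j,\ell_j)$ would, because $|V(P_{\rm new})\cap H_{\ell'+1}(v)|\ge|V(\widetilde{Q}_j)\cap H_{\ell'+1}(v)|$. So one sets $\cP=\bigcup_j\widetilde{\cP}_j$ with no extra pairs and item (iii) never arises.

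Two minor remarks. Your claim that a naive application of \Cref{lem:path_replacement} would ``blow the budget'' is mistaken: the $2^{12}$ factor applies to $\sum_j c^{(k)}(Q'_j)$, which is itself at most $c^{(k)}_{>\ell}(P')\le c^{(k)}(P_{\rm old})$, so the term $2^{12}\cdot c^{(k)}(P_{\rm old})$ fits comfortably alongside $\scconstant_2+2^{13}$ from \Cref{lem:many_replacements} within the slack $\scconstant_2+2^{14}$; the paper uses exactly this. In Case~B, choosing $P_{\rm new}$ as ``the main path of the largest cover'' risks violating property (i) of a main-tour cover when all $Q_j$ have fewer than $k$ vertices; the paper sidesteps this by taking $P_{\rm new}$ to be the monotone path on $k$ arbitrary vertices of $V(P_{\rm old})$ (possible since $|V(P_{\rm old})|\ge k$), and pays its $k^2D_\ell$ fixed cost from the splitting lower bound of \Cref{lem:splitting_i_jump}.
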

\begin{proof}
We prove the statement by induction on $L - \ell$.
First, suppose that $\ell = L$.
Then, $P_{\mathrm{old}}$ consists of a single vertex.
In that case, we can choose $P_{\mathrm{new}} \coloneqq P_{\mathrm{old}} $ and $\cP \coloneqq \emptyset$.
This is a trivial main-tour cover of $V(P_{\mathrm{old}})$ and $c^{(k)}(P_{\mathrm{old}}) = 0$.

For the induction step, suppose that $\ell < L$.
We apply induction on the number of backward edges contained in $P_{\mathrm{old}}$.
We distinguish three cases:

\addparskip

\paragraph*{Case 1:}
Assume that $|V(P_\mathrm{old})| \leq k$.\\
Choose $P_{\mathrm{new}}$ to be the monotone path with vertex set $V(P_{\mathrm{old}})$ and $\cP \coloneqq \emptyset$.
This is a trivial main-tour cover of $V(P_{\mathrm{old}})$.
By \Cref{lem:sorted_path} we have $c^{(k)}(P_{\mathrm{new}}) \leq c^{(k)}(P_{\mathrm{old}})$.

\addparskip

\paragraph{Case 2:} Assume that $P_{\rm old}$ has no $\ell$-splitting-free $\ell$-buffering and $|V(P_\mathrm{old})| \geq k$.\\
Let $\cB$ be an $\ell$-buffering of $P_{\mathrm{old}}$.
Then $\cB$ is not $\ell$-splitting-free.
Let $Q_1, \dots, Q_r$ be the paths that result from $P_{\mathrm{old}}$ by removing all backward edges contained in some splitting $\ell$-jump segment from $\cB$.
We have $\dot \bigcup_{j=1}^r V(Q_j) = V(P_{\mathrm{old}})$ and
\begin{equation}
\label{eq:inductive_reordering:Q_j_partition}
\sum_{j=1}^r c^{(k)}(Q_j) \leq c^{(k)}(P_{\mathrm{old}}).
\end{equation}
Furthermore, $r \geq 2$ and therefore, by \Cref{lem:splitting_i_jump},
\begin{equation}
\label{eq:inductive_reordering:fixed_cost_lb}
3\scconstant_1 \cdot c^{(k)}(P_{\mathrm{old}}) \geq 3(r-1) \cdot k^2D_{\ell} \geq (r+1) \cdot  k^2D_{\ell}.
\end{equation}

Let $j \in [r]$.
There exists an $\ell$-splitting-free $\ell$-buffering of $Q_j$ (given by those elements of $\cB$ from which we did not remove the backward edge and that are subpaths of $Q_j$).
Furthermore, $Q_j$ contains strictly less backward edges than $P_{\mathrm{old}}$.
Thus, we can apply the induction hypothesis and obtain a main-tour cover $(\widetilde{Q}_j, {\cP}_j)$ of $V(Q_j)$.

Choose $P_{\mathrm{new}}$ to be the monotone path on $k$ arbitrarily chosen vertices from $V(P_{\mathrm{old}})$.
This is possible, because $|V(P_{\mathrm{old}})| \geq k$.
Note that $c^{(k)}(P_{\mathrm{new}}) \leq k^2D_{\ell}$.
For $\cP$ we choose
\begin{equation*}
\cP \coloneqq \bigcup_{j=1}^r \Big(\cP_j \cup \big\{(\widetilde{Q}_j, \ell )\big\}\Big).
\end{equation*}
\addparskip

Next, we show that $P_{\mathrm{new}}$ together with $\cP$ is indeed a main-tour cover of $U$.
Property, \ref{item:property_main_path} is satisfied because $|V(P_{\mathrm{new}})|\geq k$.
Because $\bigcup_{j=1}^r V(\widetilde{Q}_j) = V(P_{\rm{old}})$, 
in order to prove \ref{item:cover_property} it suffices to show that for each $j\in [r]$, the elements of $V(Q_j) \times \{\ell,\dots, L\}$ are covered.
By the induction hypothesis, $\cP_j \cup \big\{\big(\widetilde{Q}_j, \level(V(Q_j))\big)\big\}$ covers each element from $V(Q_j) \times \{\level(V(Q_j)),\dots, L\}$.
Now consider $(v, \ell^\prime)$ with $v\in V(Q_j)$ and $\ell \leq \ell^\prime < \level(V(Q_j))$.
If $v \in V(\widetilde{Q}_j)$, then $(v, \ell^\prime)$ is covered by the pair $(\widetilde{Q}_j, \ell )$.
So we may assume that this is not the case, which in particular implies $V(Q_j) \neq V(\widetilde{Q}_j)$.
By \ref{item:property_main_path} of the induction hypothesis, we have $|V(\widetilde{Q}_j)|\geq k$.
The hierarchical structure of our instance implies that we have $H_{\ell^\prime + 1}(v) \supseteq V(Q_j)$ (where we used $v\in V(Q_j)$ and $\ell^\prime +1 \leq \level(V(Q_j))$).
Thus, we have 
\[
|H_{\ell^\prime + 1}(v) \cap V(\widetilde{Q}_j)| = |V(\widetilde{Q}_j)| \geq k,
\]
 which implies that the pair $(\widetilde{Q}_j, \ell )$ takes responsibility for the set $H_{\ell^\prime + 1}(v)$ and thus covers $(v,\ell^\prime)$.
We conclude that $(P_{\mathrm{new}},\cP)$ is a main-tour cover of $V(P_{\mathrm{old}})$.
\addparskip

Now we bound $c^{(k)}(P_{\mathrm{new}}) + w(\cP)$.
Because $P_{\mathrm{new}}$ is a path with $k$ vertices contained in a set of level $\ell$, we have  $c^{(k)}(P_{\mathrm{new}})\leq k^2D_{\ell}$.
For each $j\in[r]$ we have
\[
w(\widetilde{Q}_j,\ell) \ =\ c^{(k)}\big(\widetilde{Q}_j\big) + k^2 \cdot D_{\ell}.
\]
If $\level(\widetilde{Q}_j)) > \ell$, the induction hypothesis implies
\[
c^{(k)}\big(\widetilde{Q}_j\big) + w(\cP_j)\ \leq\ (L - \ell) \scconstant^\ast \cdot c^{(k)}\big(Q_j\big)\ \leq\  \big((L - \ell + 1 ) \scconstant^\ast - 3\scconstant_1\big)\cdot  c^{(k)}\big(Q_j\big),
\]
where we used $\scconstant^\ast \geq 3\scconstant_1$.
If $\level(\widetilde{Q}_j) = \ell$, then the induction hypothesis implies
\[
c^{(k)}\big(\widetilde{Q}_j\big) + w(\cP_j)\ \leq\  \big((L - \ell + 1 ) \scconstant^\ast - 3\scconstant_1\big)\cdot c^{(k)}\big(Q_j\big)
\]
because $Q_j$ has an $\ell$-splitting-free $\ell$-buffering.
We conclude that in both cases
\begin{align*}
w\big(\cP_j \cup \big\{(\widetilde{Q}_j, \ell )\big\}\big)\ =&\ c^{(k)}\big(\widetilde{Q}_j\big) + k^2 \cdot D_{\ell} + w(\cP_j)\\
 \leq&\ k^2 \cdot D_{\ell} +   \big((L - \ell + 1 ) \scconstant^\ast - 3\scconstant_1\big) \cdot c^{(k)}\big(Q_j\big)
\end{align*}
Summing over all $j\in [r]$ and using $c^{(k)}(P_{\mathrm{new}})\leq k^2D_{\ell}$, we conclude
\begin{align*}
c^{(k)}(P_{\mathrm{new}}) + w(\cP)\ \leq&\ (r+1)\cdot k^2D_{\ell} +    \big((L - \ell + 1 ) \scconstant^\ast - 3\scconstant_1\big) \cdot \sum_{j=1}^{r}   c^{(k)}\big(Q_j\big) \\
\overset{\eqref{eq:inductive_reordering:Q_j_partition}}{\leq}&\  (r+1)\cdot k^2D_{\ell} +    \big((L - \ell + 1 ) \scconstant^\ast - 3\scconstant_1\big)  \cdot  c^{(k)}(P_{\mathrm{old}})  \\
\overset{\eqref{eq:inductive_reordering:fixed_cost_lb}}{\leq}&\    3\scconstant_1 \cdot c^{(k)}(P_{\mathrm{old}}) +    \big((L - \ell + 1 ) \scconstant^\ast - 3\scconstant_1\big)  \cdot  c^{(k)}(P_{\mathrm{old}}) \\
=&\ (L - \ell + 1 ) \scconstant^\ast \cdot  c^{(k)}(P_{\mathrm{old}}).
\end{align*}

\addparskip

\paragraph{Case 3:} Assume that $P_{\rm old}$ has an $\ell$-splitting-free $\ell$-buffering $\cB$. \\
Let $P^\prime$ and $R \subseteq V(P_{\mathrm{old}})$ be obtained by applying \Cref{lem:many_replacements} to $P_{\mathrm{old}}$.
The path $P^\prime$ is $(\ell+1)$-buffered, hence it contains no backward edges of level $\ell$.
Therefore, $P^\prime$ enters and leaves each $H \in \cH_{\ell+1}$ at most once.
Let $Q^\prime_1, \dots, Q^\prime_r$ be the inclusion-wise maximal $(\ell+1)$-confined subpaths of $P^\prime$, numbered so that $V(Q^\prime_j)$ is left of $V(Q^\prime_{j+1})$ for all $j \in [r-1]$.
In other words, the paths $Q^\prime_j$ are obtained by removing all (forward) edges of $P^\prime$ of level~$\ell$.
See \Cref{fig:induction_hypothesis} for an illustration.
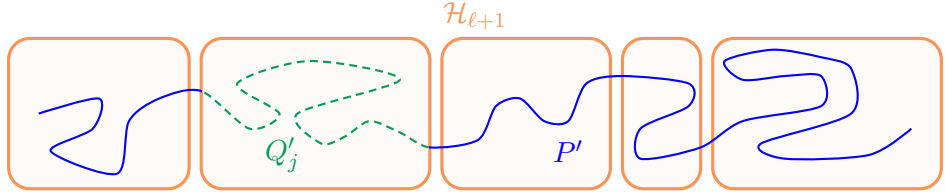
\begin{figure}
\begin{center}
\begin{tikzpicture}[
xscale=0.8,
box/.style={Peach, very thick, fill=Peach, fill opacity=0.05, rounded corners=8pt},
]

\def\slackx{0.2}

\draw[box] (0,0) rectangle (3,2);
\draw[box] ($(3,0) + ( \slackx,0)$) rectangle (7,2);
\draw[box] ($(7,0) + ( \slackx,0)$) rectangle (10,2);
\draw[box] ($(10,0) + ( \slackx,0)$) rectangle (11.5,2);
\draw[box] ($(11.5,0) + ( \slackx,0)$) rectangle (15.5,2);

\begin{scope}
\clip (0,0) rectangle ({3+\slackx},2);

\draw[thick, color=blue] plot[smooth] coordinates {
    (0.5,1) (1.5,1.2) (1.4,0.8) (0.7,0.5) (1.8,0.2) (2,0.9) (2.5,1.2)
    ({3+\slackx},1.3) (4,0.8) (4.5,1) (3.8,1.4) (5,1.7) (6.5,1.45) (4.8,1) (5.2,0.6) (5.6,0.7) (6.0,0.9) (6.5,0.75) 
    (7,0.55)(7.8,0.65)(8.1,1.1)(8.5,1.2)(9.1,0.9)(9.6,1.5)
    (10,1.45)(10.5,1.4)(11.3,1.1)(11.25,0.9)(10.4,0.6)
    (11.5,0.55)(12.2,0.9)(13.5,1.1)(13.5,1.5)(12.9,1.45)(12.2,1.5)(12.25,1.25)(12,1.2)(11.9,1.7)(12.7,1.85)(13.5,1.75)(14.2,1.1)(12.5,0.6)(12.9,0.5)(13.5,0.65)(14,0.4)(15,0.8)
    };
\end{scope}

\begin{scope}
\clip (7,0) rectangle (15,2);
\draw[thick, color=blue] plot[smooth] coordinates {
    (0.5,1) (1.5,1.2) (1.4,0.8) (0.7,0.5) (1.8,0.2) (2,0.9) (2.5,1.2)
    ({3+\slackx},1.3) (4,0.8) (4.5,1) (3.8,1.4) (5,1.7) (6.5,1.45) (4.8,1) (5.2,0.6) (5.6,0.7) (6.0,0.9) (6.5,0.75) 
    (7,0.55)(7.8,0.65)(8.1,1.1)(8.5,1.2)(8.9,0.9)(9.3,0.9)(9.6,1.4)
    (10.2,1.5)(11.3,1.4)(11.3,1.1)(10.5,0.8)(10.5,0.4)
    (11.5,0.55)(12.2,0.9)(13.5,1.1)(13.5,1.5)(12.9,1.5)(12.2,1.45)(11.9,1.7)(12.7,1.85)(13.5,1.75)(14,1.6)(14,1.)(12.5,0.6)(12.9,0.4)(14.3,0.45)(15,0.8)
    };
\end{scope}
    
\begin{scope}
\clip ({3+\slackx},0) rectangle (7,2);
    \draw[thick, color=ForestGreen, densely dashed] plot[smooth] coordinates {
    (0.5,1) (1.5,1.2) (1.4,0.8) (0.7,0.5) (1.8,0.2) (2,0.9) (2.5,1.2)
    ({3+\slackx},1.3) (4,0.8) (4.5,1) (3.8,1.4) (5,1.7) (6.5,1.45) (4.8,1) (5.2,0.6) (5.6,0.7) (6.0,0.9) (6.5,0.75) 
    (7,0.55)(7.8,0.65)(8.1,1.1)(8.5,1.2)(9.1,0.9)(9.6,1.5)
    (10,1.45)(10.5,1.4)(11.3,1.1)(11.25,0.9)(10.4,0.6)
    (11.5,0.55)(12.2,0.9)(13.5,1.1)(13.5,1.5)(12.9,1.45)(12.2,1.5)(12.25,1.25)(12,1.2)(11.9,1.7)(12.7,1.85)(13.5,1.75)(14.2,1.1)(12.5,0.6)(12.9,0.5)(13.5,0.65)(14,0.4)(15,0.8)
    };
\end{scope}

\node at (7.75,2.3) {\textcolor{Peach}{$\cH_{\ell+1}$}};
\node at (9.3,0.5) {\textcolor{blue}{$P^\prime$}};
\node at (4.55,0.45) {\textcolor{ForestGreen}{$Q^\prime_j$}};

\end{tikzpicture}
 \end{center}
\caption{
Illustration of case~3 in the proof of \Cref{lem:main_recursion_covering_strong}.
The path $P^\prime$ visits each $H \in \cH_{\ell+1}$ at most once and has no backward edges of level $\ell$.
Therefore, we can apply the induction hypothesis to each of the $Q^\prime_j$, which are the maximal $(\ell+1)$-confined subpaths of $P^\prime$.
}
\label{fig:induction_hypothesis}
\end{figure}
Because $P^\prime$ visits each $H \in \cH_{\ell+1}$ at most once, we have
\begin{equation}
\label{eq:inductive_reordering:greater_i_costs}
\sum_{j=1}^r c^{(k)} (Q^\prime_j)  \ =\ c^{(k)}_{>\ell} (P^\prime) \ \overset{\eqref{eq:many_replacements1}}{\leq}\ c(P_{\rm old}),
\end{equation}
where we used \eqref{eq:many_replacements1} from \Cref{lem:many_replacements}.
Moreover, $\level(Q^\prime_j) \geq \ell + 1$ and $Q^\prime_j$ is $\level(Q^\prime_j)$-buffered for all $j \in [r]$ (by \Cref{lem:shrink_buffering}).
Thus, for each $j \in [r]$ we can apply the induction hypothesis and obtain a main-tour cover $(\widetilde{Q}_j, \widetilde{\cP}_j)$ of $V(Q^\prime_j)$.
\addparskip

Each path $\widetilde{Q}_j$ is monotone and we have we have $V(\widetilde{Q}_j) \subseteq V(Q^\prime_j)$, where  $V(Q^\prime_j)$ is left of $V(Q^\prime_{j+1})$.
Thus, concatenating the paths $\widetilde{Q}_j$ into a single path in an order of increasing $j$, yields a monotone path, which we denote by $\widetilde{P}$.
In other words, $\widetilde{P}$ results from $P^\prime$ by replacing each $Q^\prime_j$ with $\widetilde{Q}_j$.
Thus, by \Cref{lem:path_replacement}, we obtain
\begin{equation}
\label{eq:inductive_reordering:P_2_ub}
c^{(k)}(\widetilde{P}) \ \leq\  c^{(k)}(P^\prime) +  \sum_{j=1}^r 2^{12} \cdot c^{(k)}(Q^\prime_j) + \sum_{j=1}^r c^{(k)}(\widetilde{Q}_j)
\end{equation}

Let $P_{\mathrm{new}}$ be the path that results from $\widetilde{P}$ by inserting each vertex $v \in R$ such that the resulting path remains monotone.
When inserting a vertex into a path, there are at most $2k$ new edges that contribute to the $k$-hop cost of the path.
Because all vertices of $P_{\mathrm{new}}$ are contained in the set $V(P_{\rm old})$, which has level $\ell$, each such new edge has cost at most $D_{\ell}$.
Thus, we obtain
\begin{align*}
c^{(k)}(P_{\mathrm{new}}) \ &\leq\ c^{(k)}(\widetilde{P}) + |R| \cdot 2kD_{\ell}\\
\overset{\eqref{eq:inductive_reordering:P_2_ub}}&{\leq}\  c^{(k)}(P^\prime)  + |R| \cdot 2kD_{\ell}  +  \sum_{j=1}^r 2^{12} \cdot c^{(k)}(Q^\prime_j) + \sum_{j=1}^r c^{(k)}(\widetilde{Q}_j)\\
\overset{\eqref{eq:many_replacements2}}&{\leq} 
\left(\scconstant_2 + 2^{13} \right) \cdot c^{(k)}(P_{\mathrm{old}})  +  \sum_{j=1}^r 2^{12} \cdot c^{(k)}(Q^\prime_j) +  \sum_{j=1}^r c^{(k)}(\widetilde{Q}_j),
\end{align*}
where we used \eqref{eq:many_replacements2} from \Cref{lem:many_replacements}.
\addparskip

We define $\cP \coloneqq \bigcup_{j=1}^r \widetilde{\cP}_j$.
By the induction hypothesis, we have
\begin{equation}\label{eq:induction_hyp_case_3}
c^{(k)}(\widetilde{Q}_j) + w(\widetilde{\cP}_j) \ \leq\  (L - (\ell+1) + 1) \scconstant^\ast \cdot c^{(k)}(Q^\prime_j),
\end{equation}
implying
\begin{equation*}
\begin{split}
c^{(k)}(P_{\mathrm{new}}) + w(\cP)
    \overset{\eqref{eq:inductive_reordering:P_2_ub}}&{\leq}\  
 \left(\scconstant_2 + 2^{13} \right) \cdot c^{(k)}(P_{\mathrm{old}})  +  \sum_{j=1}^r 2^{12} \cdot c^{(k)}(Q^\prime_j) + \sum_{j=1}^r c^{(k)}(\widetilde{Q}_j)  + w(\cP) \\
&=\   
\left(\scconstant_2 + 2^{13} \right) \cdot c^{(k)}(P_{\mathrm{old}}) + \sum_{j=1}^r 2^{12} \cdot c^{(k)}(Q^\prime_j) + \sum_{j=1}^r \left( c^{(k)}(\widetilde{Q}_j) + w(\widetilde{\cP}_j) \right) \\
\overset{\eqref{eq:induction_hyp_case_3}}&{\leq}\ 
\left(\scconstant_2 + 2^{13} \right) \cdot c^{(k)}(P_{\mathrm{old}}) + \left((L - (\ell+1) + 1) \scconstant^\ast + 2^{12} \right) \cdot \sum_{j=1}^r c^{(k)}(Q^\prime_j)\\
\overset{\eqref{eq:inductive_reordering:greater_i_costs}}&{\leq}\ \left(\scconstant_2 + 2^{13} \right) \cdot c^{(k)}(P_{\mathrm{old}}) + \left((L - \ell ) \scconstant^\ast + 2^{12}\right) \cdot c^{(k)} (P_{\mathrm{old}})  \\
&\leq 
\left((L - \ell ) \scconstant^\ast + \scconstant_2 + 2^{14} \right) \cdot c^{(k)} (P_{\mathrm{old}}) \\
&= 
(L - \ell + 1)\scconstant^\ast \cdot c^{(k)}(P_{\mathrm{old}}) - 3\scconstant_1 \cdot c^{(k)}(P_{\mathrm{old}}).
\end{split}
\end{equation*}
\addparskip

It remains to argue that $(P_{\mathrm{new}}, \cP)$ is a main-tour cover of $V(P_{\mathrm{old}})$.
If $V(Q^\prime_j) = V( \widetilde{Q}_j)$ for all $j \in [r]$, then $V(P_{\mathrm{old}}) = V(P_{\mathrm{new}})$.
Otherwise, we have $|V(P_{\mathrm{new}})| \geq |\bigcup_{j=1}^r V( \widetilde{Q}_j)| \geq k$, since each $(\widetilde{Q}_j, \widetilde{\cP}_j)$ is a main-tour cover.
This shows property~\ref{item:property_main_path} of a main-tour cover.

We now prove that $(P_{\mathrm{new}}, \cP)$ also satisfies~\ref{item:cover_property}.
Let $(v,\ell^\prime) \in V(P_{\rm old}) \times \{\ell,\dots, L\}$.
If $v \in V(P_{\rm new})$, then $(v,\ell^\prime)$ is covered by $(P_{\rm new}, \ell)$.
Hence, we may assume that this is not the case.
Because $R\subseteq V(P_{\rm new})$, this implies $v\in V(P^\prime)$ and thus $v\in V(Q_j^\prime)$ for some $j\in [r]$. 

If $\ell^\prime \geq \level(Q_j^\prime)$, then by the induction hypothesis, either a path-level pair from $\cP_j \subseteq \cP$ covers $(v,\ell^\prime)$, or the path-level pair $\big(\widetilde{Q}_j, \level(V(\widetilde{Q}_j)) \big)$ covers $(v,\ell^\prime)$.
In the latter case,  the path-level pair $(P_{\rm new}, \ell)$ covers  $(v,\ell^\prime)$ because $\ell \leq \ell^\prime$ and  $V(\widetilde{Q}_j) \subseteq V(P_{\rm new})$.

It remains to consider the case $\ell^\prime < \level(Q_j^\prime)$, i.e., $\ell^\prime + 1 \leq \level(Q_j^\prime)$.
In particular $V(Q^\prime_j) \subseteq H_{\ell^\prime + 1}(v)$ (by the hierarchical structure of our instance and because $v\in V(Q^\prime_j) $ and $v \in  H_{\ell^\prime + 1}(v)$).
Because $v\notin V(P_{\rm new}) \supseteq V(\widetilde{Q}_j)$, we have  $V(\widetilde{Q}_j) \neq V(Q^\prime_j) $ and thus by property~\ref{item:property_main_path} of the main-tour cover $(\widetilde{Q}_j,\cP_j)$, we must have $|V(\widetilde{Q}_j)| \geq k$.
Therefore,
\begin{equation*}
|V(P_{\mathrm{new}}) \cap H_{\ell^\prime + 1}(v)| 
\geq |V(\widetilde{Q}_j) \cap H_{\ell^\prime + 1}(v)| 
\geq |V(\widetilde{Q}_j)| 
\geq k,
\end{equation*}
so $(P_{\rm new}, \ell)$ covers $(v,\ell^\prime)$ because $\ell \leq \ell^\prime$.
We conclude that $(P_{\mathrm{new}}, \cP)$ is a main-tour cover of $V(P_{\mathrm{old}})$.
\end{proof}

\section{Solving the Covering Problem via an Algorithm for Monotone Paths}\label{sec:monotone_paths}

In this section we provide a quasi-polynomial-time $O(\log^2 n)$-approximation for our covering problem.
More precisely, we prove the following theorem:
\begin{theorem}\label{thm:path_covering_algorithm}
For every instance $\cI =(V, L, \cH, D, \prec, c, k)$ of Path Covering with low depth, we can compute in time $n^{O(\log n)}$ a solution $\cP$ with $w(\cP) \leq O(\log ^2 n) \cdot \OPT(\cI)$, where $n$ denotes the number of vertices in $\cI$.
\end{theorem}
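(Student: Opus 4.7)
The plan is to solve the covering problem via the classical greedy algorithm for Set Cover, applied to the equivalent Set Cover instance $(\cU, \cS, w)$ from \Cref{eq:equivalence_to_set_cover}. Since $|\cU| = nL = O(n \log n)$, greedy loses an $O(\log n)$ factor relative to the best ratio achievable by an (approximate) oracle for the following subproblem: given a set $S \subseteq \cU$ of already-covered elements, find a path-level pair $(P,\ell)$ approximately minimizing $w(P,\ell)/|\cU(P,\ell)\setminus S|$, where $\cU(P,\ell)$ is the set of elements covered by $(P,\ell)$. Because there are only $L = O(\log n)$ levels and $O(n)$ choices for the containing set $H \in \cH_\ell$, it suffices to design, for each fixed $\ell$ and $H \in \cH_\ell$, a quasi-polynomial-time $O(\log n)$-approximate oracle for minimizing this ratio over monotone paths $P$ with $V(P) \subseteq H$. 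An $O(\log^2 n)$-approximation then follows by the standard greedy analysis with an approximate oracle.

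The central technical step, generalizing the outline in \Cref{sec:overview_dp} to arbitrary $L$, is to replace the expensive $k$-hop cost in the numerator by a family of surrogate upper bounds indexed by an offset $q \in \{0,1,\dots,k\}$. Setting $\Gamma \coloneqq \lceil \log_2(k+1) \rceil = O(\log n)$, and for $i \geq 0$ letting $\height(i)$ be the largest $h \leq \Gamma$ with $2^h \mid i$, I would define, for a monotone path $P$ with vertices $v_0,\dots,v_r$,
\[
w^q(P,\ell) \ \coloneqq\ k^2 D_\ell + \sum_{i=0}^{r}\ \sum_{\Delta=1}^{2^{\height(i+q)}-1} 2^{\height(i+q)} \Bigl( c(v_{i-\Delta},v_i) + c(v_i,v_{i+\Delta}) \Bigr),
\]
where out-of-range terms are set to zero. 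Applying the triangle inequality through the ``peak'' vertex of maximum height inside each $k$-hop window shows $w^q(P,\ell) \geq w(P,\ell)$ for every $q$; averaging over $q$ drawn uniformly from $\{0,\dots,k\}$ shows that some $q$ witnesses $w^q(P,\ell) \leq O(\log k) \cdot w(P,\ell) = O(\log n) \cdot w(P,\ell)$, because each edge of $P$ is charged an expected $O(\Gamma)$ multiplicative factor by the doubling structure. Thus, enumerating over $O(k)$ offsets and taking the best path found at any offset will lose only an $O(\log n)$ factor against the true minimum ratio.

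The main obstacle — and the only non-polynomial ingredient in the entire paper — is minimizing $w^q(P,\ell)/|\cU(P,\ell)\setminus S|$ over monotone paths $P \subseteq H$ in quasi-polynomial time, for each fixed $q$, $\ell$, and $H$. The key structural property enabling a dynamic program is that, although $w^q$ references up to $k$ preceding and succeeding vertices, the doubling hierarchy means that at any point of a partial path only the $\Gamma + 1 = O(\log n)$ most recent ``height-$h$ representatives'' (for $h = 0,1,\dots,\Gamma$) are ever consulted by future terms. Concretely, I plan a DP whose state stores the current endpoint $v_i$, the value $(i+q) \bmod 2^\Gamma$ determining the upcoming height pattern, and for each height $h$ the most recent vertex $u_h$ on the partial path whose position index is at height at least $h$; given this state, the incremental contribution to $w^q$ and to $|\cU(P,\ell)\setminus S|$ caused by appending a new vertex $v_{i+1} \succ v_i$ in $H$ is computable in polynomial time. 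Since the state space has size $n^{O(\log n)}$ and transitions are polynomial, a standard parametric-search or guessed-denominator trick turns the DP into a min-ratio oracle running in time $n^{O(\log n)}$.

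Combining the ingredients: the oracle is invoked a polynomial number of times by greedy, each time over $O(n \log n \cdot k) = n^{O(1)}$ triples $(\ell, H, q)$, with each invocation costing $n^{O(\log n)}$. The approximation ratio multiplies the $O(\log n)$ loss from Set Cover greedy by the $O(\log n)$ loss from the weight-bound slack, yielding the claimed $O(\log^2 n)$-approximation in time $n^{O(\log n)}$. The hardest part of carrying this out rigorously will be verifying the triangle-inequality argument that $w^q$ upper-bounds $w$ through the chosen peak vertices, and giving a clean proof that the proposed DP state suffices to evaluate $w^q$ incrementally; the rest follows standard approximate set-cover machinery.
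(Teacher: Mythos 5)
Your high-level plan---greedy Set Cover with a min-ratio oracle, guessing the level $\ell$, the set $H\in\cH_\ell$, and the denominator value, and using offset-indexed weight bounds $w^q$ that are valid upper bounds on $w$ and tight up to $O(\log k)$ for a random offset---matches the paper's approach (\Cref{lem:weight_bound_ub,lem:weight_bound_lb} and the enumeration of $\ell^\ast,H^\ast,\gamma^\ast$ in \Cref{sec:set_cover}). The gap is the dynamic program, which as sketched cannot evaluate either the numerator or the denominator from the state you propose.

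Your claim that ``only the $\Gamma+1$ most recent height-$h$ representatives are consulted by future terms'' holds only for the \emph{forward} contributions $2^{\height(j+q)}c(v_j,v_{j+\Delta})$ of past vertices $v_j$: for each height $h'$ at most one past $j$ at height $h'$ has the current position inside its forward window. It fails for the \emph{backward} contribution of the next high-height position: a future position $i'$ with $\height(i'+q)=\Gamma$ contributes $2^\Gamma\sum_{\Delta=1}^{2^\Gamma-1}c(v_{i'-\Delta},v_{i'})$, which references all $2^\Gamma-1=\Theta(k)$ immediately preceding vertices, far more than the $O(\log n)$ representatives in your state, and these terms cannot be deferred since $v_{i'}$ is not yet known when its predecessors are placed. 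The paper avoids the linear-scan DP altogether: it first shows that $w^0$ decomposes exactly over blocks of $k'+1$ vertices between consecutive height-$\Gamma$ positions (\Cref{lem:decomposition_in_blocks}), then re-expresses each block's weight as a binary-tree cost in which every vertex pays only for $O(\log n)$ ancestor edges (\Cref{lem:tree_solutions}), so the DP assembles full binary subtrees bottom-up carrying a guessed ancestor path $\pi$ as part of the state (\Cref{lem:dynamic_program_single_block}). Separately, your state carries no information from which the increment to $\numnewcovered(P)$ can be computed: the denominator includes $|H\setminus(V(P)\cup S_\ell)|$ for every $H\in\cH_{\ell+1}$ with $|V(P)\cap H|\geq k$, a retroactive threshold, and updating it requires per-level counts of path vertices and residual uncovered vertices in the hierarchy sets containing both path endpoints (the left endpoint too, so that block solutions can be concatenated). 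The paper packages exactly this data as ``covering profiles'' (\Cref{def:covering_profile}, with \Cref{lem:covering_profile_num_elem_covered,lem:combination_of_covering_profiles}). Repairing both defects essentially forces the paper's block/tree DP with covering profiles; your skeleton is right, but the DP data structure is the missing core of the argument.
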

Combining \Cref{thm:reducing_hop_atsp,thm:covering_reduction,thm:path_covering_algorithm} yields our main result, \Cref{thm:main}.
To prove \Cref{thm:path_covering_algorithm}, we first discuss what kind of oracle we need in order to apply the greedy set cover algorithm (\Cref{sec:set_cover}).
In \Cref{sec:weight_bound,sec:block_decomp,sec:block_dp,sec:remaining_proofs_dp} we then develop such an oracle, using an approach based on dynamic programming.

\subsection{Applying the Greedy Set Cover Algorithm}\label{sec:set_cover}

To prove \Cref{thm:path_covering_algorithm}, we view Path Covering as a Set Cover problem as in \Cref{eq:equivalence_to_set_cover}, where task is to cover all element of the universe $\cU = V\times [L]$ by path-level pairs of minimum total weight.
Starting with $U= \cU$ and $\cP=\emptyset$, the greedy algorithm for Set Cover iteratively chooses a path-level pair $(P, \ell)$ (approximately) minimizing
\begin{equation}\label{eq:set_cover_oracle_objective}
\frac{w(P,\ell)}{|\{ (v,j) \in U : (v,j)\text{ covered by }(P,\ell)\}|},
\end{equation}
adds $(P,\ell)$ to $\cP$, and removes the elements covered by $(P,\ell)$ from $U$.
Once the set $U$ of uncovered elements becomes empty, the algorithm returns the Path Covering solution $\cP$.
The well-known analysis of this algorithm (see e.g.~\cite{williamson2011design}) shows that if we choose in every iteration of the algorithm a path-level pair $(P,\ell)$ that minimizes \eqref{eq:set_cover_oracle_objective} up to a factor $\alpha$, then we obtain an
approximation factor of
\[
\alpha \cdot O(\log |\cU|) = \alpha \cdot O(\log n).
\]
for our Path Covering problem.
Thus, in order to prove \Cref{thm:path_covering_algorithm}, it suffices to provide an $O(\log n)$-approximation algorithm with running time $n^{O(\log n)}$ for the following problem:
Given an instance of Path Covering and a nonempty set $U\subseteq \cU$, compute a path-level pair $(P,\ell)$ minimizing \eqref{eq:set_cover_oracle_objective}.
\addparskip

In the remainder of this section we develop such an algorithm.
Given a Path Covering instance $\cI =(V, L, \cH, D, \prec, c, k)$ and a nonempty set $U\subseteq \cU$ of uncovered elements, we write
\[
S_{\ell}\ \coloneqq\ \Big\{ v\in V : (v,\ell) \notin U \Big\}.
\]
to denote the set of vertices, for which $(v,\ell)$ is already covered.

Let $(P^\ast, \ell^\ast)$ denote an optimum solution to our problem, i.e, a path-level pair that minimizes \eqref{eq:set_cover_oracle_objective}.
We define 
\begin{itemize}
\item
$\gamma^\ast \coloneqq |\{ (v,\ell) \in U : (v,\ell)\text{ covered by }(P^\ast,\ell^\ast)\}| $, and
\item
$H^\ast \in \cH_{\ell^\ast}$ to be the unique set in the partition $\cH_{\ell^*}$ that satisfies $V(P^\ast) \subseteq H^\ast$.
\end{itemize}
Note that such a set $H^\ast$ exists by the definition of a path-level pair.
We also observe there are only polynomially many choices for $\ell^\ast, H^\ast$ and $\gamma^\ast$.
Thus, by enumerating all possibilities, we can assume in the following that we guessed these values correctly.

In the following we abbreviate
\begin{equation*}
\begin{split}
w(P) &\coloneqq w(P, \ell^\ast), \quad\text{and}\\
\numnewcovered(P) &\coloneqq  |\{(v,\ell) \in U :   (v,\ell)\text{ covered by }(P, \ell^\ast)\}|  \\
&= \sum_{\ell = \ell^\ast}^{L}|\parset{v \in V : (v,\ell)\text{ covered by }(P, \ell^\ast)\text{ and } v \notin S_{\ell}}|,
\end{split}
\end{equation*}
and we will restrict ourselves to monotone paths $P$ with $V(P) \subseteq H^\ast$.
Thus, our goal is now to find a quasi-polynomial-time $O(\log n)$-approximation for the problem of finding a monotone path $P$ with $V(P) \subseteq H^\ast$ that minimizes $\frac{w(P)}{\numnewcovered(P)}$.

\subsection{Approximating the Path Weights}\label{sec:weight_bound}

As mentioned in \Cref{sec:overview_dp}, we will approximate the weight $w(P)$ of a path by some weight bound $w^q(P)$.
In this section we define these weight bounds and prove that they indeed provide a good approximation.
We fix an integer $k^\prime \in \{k, \dots, 2k\}$ such that $k^\prime = 2^{\Gamma} - 1$ for some integer $\Gamma \in \bZ$.
For each offset $q\in \{0, \dots, k^\prime\}$ and every monotone path $P$, we will define a weight bound $w^q(P)$.
These weight bounds will have the following key properties:
\begin{enumerate}
\item\label{item:weight_bound_upper_bound} for every monotone path $P$ and every offset $q\in \{0, \dots, k^\prime\}$, we have $w^q(P) \geq w(P)$,
\item\label{item:property_weight_bound_good_for_some_offset}  for every monotone path $P$, there exists an offset $q\in\{0, \dots, k^\prime\}$ such that $w^q(P) = O(\log k) \cdot w(P)$, and
\item\label{item:property_weight_bound_efficient}  we can find a monotone path $P$ with $V(P) \subseteq H^\ast$ and an offset $q\in \{0, \dots, k^\prime\}$ minimizing $\frac{w^q(P)}{\numnewcovered(P)}$ in quasi-polynomial time.
\end{enumerate}

To define our weight approximations $w^{q}(P)$ we assign the vertices of $P$ to different heights, as illustrated in \Cref{fig:heights_and_widths}.
More precisely, we define the two functions $\height: \bZ \to \{0,\dots, \Gamma\}$ and $\width:\{0, \dots, \Gamma\} \to \{0, \dots, k^\prime\}$ by
\begin{equation*}
\height(j) \coloneqq \max\left\{ h \in \{0, \dots, \Gamma\} : j \equiv 0 \pmod{2^h} \right\},
\end{equation*}
and
\begin{equation*}
\width(j) \coloneqq 2^{\height(j)} - 1.
\end{equation*}
\begin{figure}
\begin{center}

\begin{tikzpicture}[yscale=0.75, xscale=0.55]
\begin{scope}
\definecolor{color1}{named}{Peach}  
\definecolor{color2}{named}{ForestGreen}  
\definecolor{color3}{named}{Violet}  
\definecolor{color4}{named}{ProcessBlue} 

\newcommand{\highlight}[4]{
		\def\len{0.1}
        \draw[#4, very thick] (#1, #3) -- (#2, #3);
        \draw[#4, very thick] ($(#1, #3) + (0, \len)$) -- ($(#1, #3) - (0, \len)$);
        \draw[#4, very thick] ($(#2, #3) + (0, \len)$) -- ($(#2, #3) - (0, \len)$); 
}

\def\slack{0.4}

\foreach \h in {0,1,2,3} {
  \pgfmathtruncatemacro{\hc}{\h}
  \ifcase\hc
    \def\hcolor{color1}
    \def\hlabel{Height 0}
  \or
    \def\hcolor{color2}
    \def\hlabel{Height 1}
  \or
    \def\hcolor{color3}
    \def\hlabel{Height 2}
  \or
    \def\hcolor{color4}
    \def\hlabel{Height 3}
  \fi
  \draw[thin, \hcolor] (0-\slack, \h) -- (24+\slack, \h);
  \node[left, text=\hcolor] at (0-\slack, \h) {{\hlabel}};
}

\foreach \i in {0,...,24} {
\pgfmathsetmacro{\height}{
  ifthenelse(mod(\i,8)==0,3,
    ifthenelse(mod(\i,4)==0,2,
      ifthenelse(mod(\i,2)==0,1,0)
    )
  )
}
\pgfmathtruncatemacro{\h}{\height}
  \ifcase\h
    \def\heightcolor{color1}
  \or
    \def\heightcolor{color2}
  \or
    \def\heightcolor{color3}
  \or
    \def\heightcolor{color4}
  \fi
\pgfmathsetmacro{\width}{
  pow(2, \height) - 1
}

\node[circle,fill=\heightcolor,inner sep=2pt] (v\i) at ({(\i)*1},\height) {};
\node[below=3pt, font=\footnotesize] at (v\i) {$v_{\i}$};

\pgfmathsetmacro{\start}{max(0, \i - \width) - \slack}
\pgfmathsetmacro{\end}{min(24, \i + \width) + \slack}

\pgfmathsetmacro{\yinterval}{
ifthenelse(mod(\i,16)==0,4,
\height
)
}

\highlight{\start}{\end}{4+0.3*\yinterval}{\heightcolor};
}
\end{scope}
\end{tikzpicture}
 \end{center}
\caption{
An illustration of the functions $\height$ and $\width$ for $\Gamma = 3$ and $k^\prime =7$.
Each vertex $v_i$ is vertically positioned according to $\height(i)$.
The intervals shown above indicate for each vertex $v_i$ the set $\{v_j \in V(P) : j = i - \width(i), \dots, i + \width(i)\}$.
}
\label{fig:heights_and_widths}
\end{figure}
Our weight approximations are then defined as follows:
\begin{definition}[Weight-bound]
Let $P$ be a path of length $r$ visiting the vertices $v_0, \dots, v_{r}$ in this order. 
For a given $q \in \{0, \dots, k^\prime\}$, we define the \emph{weight-bound} of $P$ as
\begin{equation*}
w^{q}(P) \coloneqq k^2 \cdot D_{\ell^\ast} + \sum_{i=0}^{r} 2^{\height(i+q)} \cdot \sum_{\Delta = 1}^{\width(i + q)} \big( c(v_{i-\Delta} , v_{i}) + c(v_{i} , v_{i+\Delta}) \big),
\end{equation*}
where any term involving a vertex $v_z$ with $z < 0$ or $z > r$ is interpreted as zero.
\end{definition}

Now property~\ref{item:weight_bound_upper_bound} follows directly from the triangle inequality satisfied by the cost function $c$:
\begin{lemma}
\label{lem:weight_bound_ub}
Let $P$ be a path of length $r$ visiting the vertices $v_0, \dots, v_{r}$ in this order.
Then, for all $q \in \{0, \dots, k^\prime\}$, we have
\begin{equation}
\label{eq:weight_bound_ub}
w(P) \leq w^{q}(P).
\end{equation}
\end{lemma}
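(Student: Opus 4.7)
Since both $w(P)$ and $w^q(P)$ contain the same additive constant $k^2 \cdot D_{\ell^\ast}$, the inequality reduces to
\[
c^{(k)}(P) \;\leq\; \sum_{m=0}^{r} 2^{\height(m+q)} \sum_{\Delta=1}^{\width(m+q)} \bigl( c(v_{m-\Delta}, v_m) + c(v_m, v_{m+\Delta}) \bigr).
\]
The plan is to bound every term $c(v_i, v_j)$ contributing to $c^{(k)}(P)$ (so with $0 \le i < j \le r$ and $j - i \le k$) by a single triangle-inequality step through a carefully chosen ``pivot'' $m(i,j) \in \{i, \dots, j\}$, and then count how often each resulting term appears on the right.

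For a pair $(i,j)$ as above, I will take $m = m(i,j)$ to be any maximizer of $\height(m+q)$ over $m \in \{i, \dots, j\}$ (ties broken arbitrarily). The key geometric claim is that $\max(m - i,\, j - m) \leq \width(m+q) = 2^{\height(m+q)} - 1$. Indeed, if $m - i \geq 2^{\height(m+q)}$, then $m' := m - 2^{\height(m+q)}$ lies in $\{i, \dots, j\}$, and writing $m + q = c \cdot 2^{h}$ with $h = \height(m+q)$ and $c$ odd gives $m' + q = (c-1) \cdot 2^{h}$ with $c-1$ even; thus $\height(m' + q) \geq h+1$, contradicting the choice of $m$. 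The inequality $j - m \leq \width(m+q)$ follows by a symmetric argument. (When $\height(m+q) = \Gamma$ the $2$-adic argument is not needed because $\width(m+q) = 2^{\Gamma} - 1 = k' \geq k \geq j - i$ holds automatically.) The triangle inequality then yields $c(v_i, v_j) \leq c(v_i, v_m) + c(v_m, v_j)$, and the claim guarantees that both summands occur in the right-hand side at index $m$, with $\Delta = m - i$ and $\Delta = j - m$ respectively.

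To finish, I count multiplicities. For a fixed $m$ and $\Delta \in \{1, \dots, \width(m+q)\}$, the term $c(v_{m-\Delta}, v_m)$ is charged only by pairs $(i,j)$ with $i = m - \Delta$ and $m(i,j) = m$; for each such $i$ the feasible $j$ lies in $[m,\, m + \width(m+q)]$, so there are at most $\width(m+q) + 1 = 2^{\height(m+q)}$ charges. A symmetric count applies to $c(v_m, v_{m+\Delta})$. Hence the total charge against each term on the right-hand side is at most its coefficient $2^{\height(m+q)}$, which proves the displayed inequality and therefore the lemma. The main technical step is the $2$-adic claim about the maximum of $\height(\cdot + q)$ on an interval; once this is in place, the counting and the handling of boundary cases (when $m \pm \Delta$ falls outside $[0,r]$, where the corresponding terms are declared zero by convention) are routine.
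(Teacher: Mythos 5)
Your proposal is correct and takes essentially the same approach as the paper: choose a pivot $m\in\{i,\dots,j\}$ maximizing $\height(m+q)$, apply the triangle inequality once, verify via the $2$-adic argument that both resulting terms fall inside the $\width(m+q)$-window around $m$, and then count that each term on the right is charged at most $\width(m+q)+1=2^{\height(m+q)}$ times. The only cosmetic difference is that the paper first passes to $c^{(k')}(P)\geq c^{(k)}(P)$ and bounds edges with $j-i\leq k'$, whereas you bound directly the edges with $j-i\leq k$; since $k\leq k'$ this is a strictly smaller set and your handling of the height-$\Gamma$ case still closes, so both versions are valid.
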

\begin{proof}
Fix an arbitrary $q \in \{0, \dots, k^\prime\}$.
First, observe that $k \leq k'$ implies $c^{(k)}(P) \leq c^{(k^{\prime})}(P)$.  
Hence, it suffices to show $w^{q}(P) \geq k^2 \cdot D_{\ell^\ast} + c^{(k')}(P)$.

Let $i < j$ such that the edge $(v_{i}, v_{j})$ contributes to $c^{(k^\prime)}(P)$, that is, $j - i \leq k^\prime$.
We choose an index $i^\ast  \in \{i, \dots, j\}$ maximizing $\height(i^\ast + q)$.
By the triangle inequality, $c(v_{i}, v_{j}) \leq c(v_{i}, v_{i^\ast}) + c(v_{i^\ast}, v_{j}) $.
We will show that summing this upper bound over all edges $(v_{i}, v_{j})$ contributing to $c^{(k^\prime)}(P)$ is at most $w^{q}(P) - k^2 \cdot D_{\ell^\ast}$.
\addparskip

First, we show $ i^\ast  - i \leq \width(i^\ast+ q) $ and $j - i^\ast \leq \width(i^\ast+ q)$, which implies that the costs $c(v_{i}, v_{i^\ast})$ and $c(v_{i^\ast}, v_{j})$ contribute to the weight bound $w^q(P)$.
If $\height(i^\ast + q) = \Gamma$, then we have 
\begin{equation*}
i^\ast - \width(i^\ast+ q) \leq j - k^\prime \leq i.
\end{equation*}
Otherwise, we have $\height(i^\ast + q)  < \Gamma$.
Let $h \coloneqq \height(i^\ast + q)$.
Then by the definition of $\height$, we have $(i^\ast + q)  \equiv 0 \pmod{2^h}$ and we have  $(i^\ast + q)  \not\equiv 0 \pmod{2^{h+1}}$.
This implies
\begin{equation*}
(i^\ast + q) - 2^h \equiv 0 \pmod{2^{h + 1}}.
\end{equation*}
In particular, this implies $\height(i^\ast + q - 2^h) > h = \height(i^\ast + q)$, where we used $h < \Gamma$.
Therefore, by our choice of the index $i^\ast$, we must have $i^\ast - 2^h < i$.
We conclude $ i^\ast - i < 2^h$ and thus $ i^\ast - i \leq 2^h - 1 = \width(i^\ast + q)$.

Hence, in both cases, we have  $ i^\ast  - i \leq \width(i^\ast + q)$.
The proof of $j - i^\ast \leq \width(i^\ast + q)$ is analogous.
This shows that the costs $c(v_{i}, v_{i^\ast})$ and $c(v_{i^\ast}, v_{j})$ contribute to the weight bound $w^q(P)$.
It remains to show that the edges contributing to $w^q(P)$ also contribute with a sufficient multiplicity to account for all edges $(v_i,v_j)$ contributing to $c^{(k'})(P)$.
\addparskip

Finally, for each vertex $v_{i^\ast}$ and each vertex $v_i$ with $i^\ast - i \leq \width(i^\ast + q)$, we have at most $\width(i^\ast+ q) + 1$ indices $j \geq i^\ast$ such that $i^\ast$ satisfies $i^\ast  \in \{i, \dots, j\}$ and maximizes $\height(i^\ast + q)$.
(This is the case because we have shown that any such vertex $j$ satisfies $j - i^\ast \leq \width(i^\ast + q)$.)
Note that the cost $c(v_i, v_{i^\ast})$ indeed contributes with a factor of $2^{\height(i^{\ast}+q)} = \width(i^\ast + q) + 1$.
An analogous argument applies to the cost of the edge $(v_{i^\ast}, v_j)$.
\end{proof}

To prove property~\ref{item:property_weight_bound_good_for_some_offset}  of our weight bound, we use the following observation:

\begin{lemma}\label{lem:cost_k_prime}
For every path $P$ we have $c^{(k')}(P) \leq 4 \cdot c^{(k)}(P)$.
\end{lemma}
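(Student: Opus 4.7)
The plan is to bound the ``extra'' contribution $c^{(k')}(P)-c^{(k)}(P) = \sum_{\Delta=k+1}^{k'}\sum_i c(v_i,v_{i+\Delta})$ by exploiting the triangle inequality, the choice $k \le k' \le 2k$, and a simple charging argument. Since $k' \le 2k$, every edge $(v_i,v_{i+\Delta})$ with $k < \Delta \le k'$ admits a midpoint $m \in \{1,\dots,\Delta-1\}$ such that both $m$ and $\Delta - m$ lie in $\{1,\dots,k\}$; concretely, I will take $m=m_\Delta \coloneqq \lfloor \Delta/2 \rfloor$, so that $m_\Delta \le \lfloor k'/2 \rfloor \le k$ and $\Delta - m_\Delta = \lceil \Delta/2 \rceil \le \lceil k'/2 \rceil \le k$. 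By the triangle inequality,
\[
c(v_i,v_{i+\Delta})\ \le\ c(v_i,v_{i+m_\Delta})\ +\ c(v_{i+m_\Delta},v_{i+\Delta}),
\]
and both summands on the right are edges contributing to $c^{(k)}(P)$.

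Next I would sum the bound above over all pairs $(i,\Delta)$ with $k<\Delta\le k'$ and re-index the resulting double sum by the ``short edges'' $(v_a,v_b)$ with $1\le b-a\le k$. Fix such an edge $(v_a,v_b)$ and set $d\coloneqq b-a$. The plan is to show this edge is used at most a constant number of times across all splittings, so that the total cost on the right can be absorbed into a small multiple of $c^{(k)}(P)$. Indeed, $(v_a,v_b)$ occurs as the \emph{left} half $(v_i,v_{i+m_\Delta})$ of some splitting exactly when $i=a$ and $m_\Delta=\lfloor \Delta/2\rfloor=d$, which forces $\Delta\in\{2d,2d+1\}$; similarly $(v_a,v_b)$ occurs as the \emph{right} half $(v_{i+m_\Delta},v_{i+\Delta})$ exactly when $\Delta-m_\Delta=\lceil\Delta/2\rceil=d$, which forces $\Delta\in\{2d-1,2d\}$ and then pins down $i$. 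Thus each short edge is charged $O(1)$ times in total (at most four times with the bisection rule above).

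Putting these pieces together gives $\sum_{\Delta=k+1}^{k'}\sum_i c(v_i,v_{i+\Delta})\ \le\ C\cdot c^{(k)}(P)$ for a small absolute constant $C$, and hence $c^{(k')}(P)\le (1+C)\cdot c^{(k)}(P)$, which is the claim up to the precise value of the constant. The only real obstacle is nailing down the constant on the nose: the naive bisection above gives multiplicity four per short edge and yields an overall factor of $5$. To squeeze this down to $4$ as stated, I would either (i) charge the ``middle'' half (whose hop distance is $\lceil \Delta/2\rceil$) to only those short edges of hop at least $\lceil (k+1)/2 \rceil$, whose total cost is at most $c^{(k)}(P)-\sum_{j<\lceil(k+1)/2\rceil}a_j$, and use the complementary slack in the other direction, or (ii) replace bisection by a slightly asymmetric rule $m_\Delta=\Delta-k$ (resp.\ $m_\Delta=k$) on alternating values of $\Delta$ so that no single short edge is ever charged more than three times. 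Either refinement is elementary and does not affect the structure of the proof.
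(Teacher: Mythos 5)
Your core idea is exactly the paper's: bound each edge $(v_i,v_{i+\Delta})$ of $c^{(k')}(P)$ by its midpoint split via the triangle inequality, note both halves have hop at most $k$ since $k'\le 2k$, and then count how often each short edge is charged. Your multiplicity bound of $4$ (at most two ways to occur as a left half, two as a right half) is also the same counting the paper uses.

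There is, however, a small but genuine gap. By first writing $c^{(k')}(P)=c^{(k)}(P)+\sum_{\Delta=k+1}^{k'}\sum_i c(v_i,v_{i+\Delta})$ and splitting only the ``extra'' long edges, you land on $c^{(k')}(P)\le 5\,c^{(k)}(P)$, and you acknowledge this. Your two proposed refinements for shaving the $5$ to a $4$ are not carried out, and at least refinement (ii) does not actually work: with $m_\Delta=k$ every edge of hop exactly $k$ would be charged once for each long $\Delta$, so up to $k'-k$ times, which blows up the constant rather than reducing it. Refinement (i) is too vague to evaluate. The paper's fix is simpler than either: do not split off $c^{(k)}(P)$ at all, and instead apply the midpoint split to \emph{every} edge contributing to $c^{(k')}(P)$, short ones included. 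Then each short edge $(v_a,v_b)$ appears at most twice as a left piece (fixing $i=a$, $m=b$ forces $j\in\{2b-a,2b-a+1\}$) and at most twice as a right piece (fixing $m=a$, $j=b$ forces $i\in\{2a-b,2a-b-1\}$), giving $c^{(k')}(P)\le 4\,c^{(k)}(P)$ directly with no leftover identity term. So your structure is right, but the precise constant $4$ comes from splitting all edges, not just the ones beyond hop $k$; that is the one missing observation. (For what it is worth, the weaker constant $5$ would still suffice where this lemma is used in the paper, so the gap is not structurally fatal.)
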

\begin{proof}
Let $v_0,\dots, v_r$ be the vertices visited by $P$ in this order.
We upper bound the length of every edge $(v_i,v_j)$ contributing to $c^{(k')}(P)$ by
$ c(v_i,v_j) \leq c(v_i, v_m) + c(v_m, v_j)$ where $m=\lfloor \frac{i+j}{2} \rfloor$.
Observe that $m-i \leq \lfloor\frac{k'}{2}\rfloor \leq k$  and $j-m \leq\lceil\frac{k'}{2}\rceil \leq k$  because $j-i \leq k'$ and  $k' \leq 2k$.

For all $i,m \in\{0,\dots, r\}$ there are at most two numbers $j\in \{0,1,\dots, r\}$ with $m= \lfloor \frac{i+j}{2} \rfloor$.
Similarly, for all $m,j \in\{0,\dots, r\}$ there are at most two numbers $i\in \{0,1,\dots, r\}$ with $m= \lfloor \frac{i+j}{2} \rfloor$.
\end{proof}

Now property~\ref{item:property_weight_bound_good_for_some_offset} follows from a simple averaging argument:

\begin{lemma}
\label{lem:weight_bound_lb}
Let $P$ be a monotone path of length $r$ visiting the vertices $v_0, \dots, v_{r}$ in this order.
Then, there exists an offset $q \in \{0, \dots, k^\prime\}$ such that
\begin{equation*}
\label{eq:weight_bound_lb}
4(\Gamma + 2) \cdot w(P) \geq w^{q}(P).
\end{equation*}
\end{lemma}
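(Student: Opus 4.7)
The plan is a standard averaging argument: I will bound the average $\frac{1}{k'+1}\sum_{q=0}^{k'} w^q(P)$ from above by $4(\Gamma+2)\cdot w(P)$, which immediately yields the existence of a good offset $q$.

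First I would expand the sum
\[
\sum_{q=0}^{k'} w^q(P) \ =\ (k'+1)\cdot k^2 D_{\ell^\ast} \ +\ \sum_{i=0}^{r}\sum_{\Delta=1}^{k'} \Big(c(v_{i-\Delta},v_i)+c(v_i,v_{i+\Delta})\Big)\cdot \underbrace{\sum_{q=0}^{k'} 2^{\height(i+q)}\cdot \mathbb{1}\big[\Delta\leq \width(i+q)\big]}_{=:\,\sigma(i,\Delta)},
\]
where the indicator captures exactly the range of $\Delta$ appearing in the definition of $w^q$. The main computation is to bound $\sigma(i,\Delta)$ uniformly in $i$. As $q$ ranges over $\{0,1,\dots,k'\} = \{0,1,\dots,2^\Gamma-1\}$, the residue $(i+q) \bmod 2^\Gamma$ takes each value exactly once, so for each $h\in\{0,\dots,\Gamma-1\}$ there are exactly $2^{\Gamma-h-1}$ values of $q$ with $\height(i+q)=h$, and exactly one value with $\height(i+q)=\Gamma$. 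Let $h^\ast:=\lceil \log_2(\Delta+1)\rceil$, which satisfies $h^\ast\leq \Gamma$ because $\Delta\leq k'=2^\Gamma-1$; then the indicator is nonzero iff $\height(i+q)\geq h^\ast$, so
\[
\sigma(i,\Delta)\ \leq\ \sum_{h=h^\ast}^{\Gamma-1} 2^h\cdot 2^{\Gamma-h-1}\ +\ 2^\Gamma\ =\ (\Gamma-h^\ast)\cdot 2^{\Gamma-1} + 2^\Gamma\ \leq\ (\Gamma+2)\cdot 2^{\Gamma-1}\ =\ \tfrac{\Gamma+2}{2}(k'+1).
\]

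Next I would substitute this bound back and observe that each edge $(v_a,v_b)$ of $P$ with $b-a\leq k'$ appears in the double sum exactly twice (once as $c(v_{i-\Delta},v_i)$ with $i=b$ and once as $c(v_i,v_{i+\Delta})$ with $i=a$). Hence
\[
\sum_{q=0}^{k'} w^q(P)\ \leq\ (k'+1)\cdot k^2 D_{\ell^\ast} \ +\ 2\cdot \tfrac{\Gamma+2}{2}(k'+1)\cdot c^{(k')}(P)\ =\ (k'+1)\Big(k^2 D_{\ell^\ast} + (\Gamma+2)\cdot c^{(k')}(P)\Big).
\]
Dividing by $k'+1$ gives an upper bound $k^2 D_{\ell^\ast} + (\Gamma+2)\cdot c^{(k')}(P)$ on the average. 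Applying \Cref{lem:cost_k_prime} to replace $c^{(k')}(P)$ by $4\cdot c^{(k)}(P)$, and using $w(P)=c^{(k)}(P)+k^2 D_{\ell^\ast}$, this is at most $4(\Gamma+2)\cdot w(P)$. Therefore some $q\in\{0,\dots,k'\}$ achieves $w^q(P)\leq 4(\Gamma+2)\cdot w(P)$, as required.

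The only nontrivial step is the counting bound on $\sigma(i,\Delta)$; everything else is book-keeping. I expect no serious obstacle beyond keeping the edge cases at heights $0$ and $\Gamma$ straight.
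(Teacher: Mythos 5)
Your proof is correct and takes essentially the same approach as the paper: you compute the arithmetic mean $\frac{1}{k'+1}\sum_q w^q(P)$ while the paper computes the expectation $\bE[w^q(P)]$ for $q$ uniformly random, but these are the same quantity, and the key counting of the $\height$-distribution over a full period of length $2^\Gamma$ (yielding the factor $(\Gamma+2)/2$ per vertex) matches the paper's computation of $\bE\big[2^{\height(j+q)}\big]\leq \Gamma/2+1$. The only cosmetic difference is that you retain the indicator $\mathbb{1}[\Delta\leq \width(i+q)]$ through the computation via $h^\ast=\lceil\log_2(\Delta+1)\rceil$ before discarding it, whereas the paper bounds $\width(j+q)\leq k'$ immediately; both collapse to the same bound $(\Gamma+2)/2$ per term, and both conclude via the identity $\sum_{i,\Delta}\big(c(v_{i-\Delta},v_i)+c(v_i,v_{i+\Delta})\big)=2c^{(k')}(P)$ followed by \Cref{lem:cost_k_prime}.
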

\begin{proof}
We choose $q \in \{0, \dots, k^\prime\}$ uniformly at random and prove $\bE\big[ w^q(P) \big] \leq 4(\Gamma + 2) \cdot w(P)$.
Fix an arbitrary index $j \in \{0, \dots, r\}$.
Consider the contribution of $v_j$ to $w^{q}(P)$, which is 
\begin{equation}\label{eq:contribution_to_weight_bound}
2^{\height(j + q)}  \sum_{\Delta = 1}^{\width(j + q)} \Big( c(v_{j - \Delta} , v_{j}) + c(v_{j} , v_{j + \Delta}) \Big),
\end{equation}
 where we again follow the convention that edge costs involving undefined vertices (i.e.\ vertices $v_z$ with $z<0$ or $z > r$) are zero.
Recall that $k^\prime = 2^\Gamma - 1$, and therefore the definition of the function $\height$ implies
\begin{equation*}
\bP\big[\height(j+q) = z\big]
= \begin{cases}
\frac{1}{2^{z+1}} \quad&\text{if $z \in \{0, \dots, \Gamma - 1\}$},\\
\frac{1}{2^\Gamma} \quad&\text{if $z = \Gamma$}.
\end{cases}
\end{equation*}
In particular, we obtain
\begin{equation*}
\begin{split}
\bE\Big[ 2^{\height(j + q)} \Big]
\ &\leq\  \frac{2^\Gamma}{2^\Gamma} + \sum_{z=0}^{\Gamma - 1} \frac{2^z}{2^{z+1}} 
\ =\  \frac{\Gamma}{2} + 1.
\end{split}
\end{equation*}
Using $\width(j + q) \leq k'$, we can bound the expected value of the contribution~\eqref{eq:contribution_to_weight_bound} of $v_j$ to $w^{q}(P)$ by
\begin{equation*}
\begin{split}
 \bE\Big[ 2^{\height(j+q)} \cdot \sum_{\Delta = 1}^{k^\prime} c(v_{j - \Delta} , v_{j}) + c(v_{j} , v_{j + \Delta})\Big] 
\ &\leq\ (\tfrac{\Gamma}{2} + 1) \cdot \sum_{\ell = 1}^{k^\prime} c(v_{j - \ell} , v_{j}) + c(v_{j} , v_{j + \ell}).
\end{split}
\end{equation*}
Because $2c^{(k^\prime)}(P) = \sum_{j=0}^{r} \sum_{\Delta= 1}^{k^\prime} \big( c(v_{j-\Delta} , v_{j}) + c(v_{j} , v_{j + \Delta}) \big)$, we conclude that
\begin{align*}
\bE\big[w^{q}(P) \big] \ \leq&\ k^2D_{\ell^*} + (\Gamma + 2) \cdot c^{(k^\prime)}(P) \\
 \leq&\ k^2D_{\ell^*} + 4(\Gamma + 2) \cdot c^{(k)}(P)\\
  \leq&\  4(\Gamma + 2) \cdot w(P),
\end{align*}
where we used \Cref{lem:cost_k_prime} for the second inequality.
\end{proof}

To prove  that we can find a monotone path $P$ with $V(P) \subseteq H^\ast$ and an offset $q\in \{0, \dots, k^\prime\}$ minimizing $\frac{w^q(P)}{\numnewcovered(P)}$ in quasi-polynomial time, we will use a dynamic programming approach and we will no longer assume that the cost function $c$ satisfies the triangle inequality.
This allows us to argue that we may assume that the optimal path $P^\ast$ and an optimal offset $q^\ast$ satisfy $q^\ast = 0$ and 
\begin{equation}
\label{eq:assumption_cardinality}
|V(P^\ast)| \equiv 1 \pmod{k^\prime + 1}.
\end{equation}
These assumptions will be helpful to simplify notation and avoid special cases to consider in our algorithm.

To see that we can indeed assume \eqref{eq:assumption_cardinality} and $q^\ast =0$, observe that in case these assumptions are not satisfied, we can add dummy vertices to our instance as follows.
We add $2 k^\prime$ dummy vertices to $H^\ast$, out of which $k^\prime $ are located left of all other vertices in $H^\ast$ and the other $k^\prime $ vertices are located right of all other vertices of $H^\ast$.
We extend the hierarchical partition $\cH$ to include these dummy vertices.
(This can be done in any arbitrary way as long as the dummy vertices belong to $H^\ast$ and satisfy the above-mentioned left/right-relation to the existing vertices in $H^\ast$.)
For each dummy vertex, the cost of every incident forward edge is zero.
The cost of incident backward edges does not matter, we could set it e.g.\ to $D_{\ell}$ for  backward edges on level $\ell$.
Moreover, for each dummy vertex $v$, all elements $(v,\ell)$ of the set cover universe are already covered, i.e., none of them is contained in $U$.

Then including dummy vertices in a monotone path $P$ (while maintaining monotonicity) does not change the number $\numnewcovered(P)$ of newly covered elements of the set cover universe.
Moreover, including a dummy vertex at the end of the path $P$ does not change the weight bound $w^q(P)$.
If a path $P^\prime$ arises from inserting a dummy vertex at the beginning of a path $P$ (with $V(P)\subseteq H^\ast$), then we have $w^{q}(P^\prime) = w^{q+1}(P)$.
Thus, on the instance with the additional dummy vertices, there exist an optimal path $P^\ast$ and an optimal offset $q^\ast$ satisfying $q^\ast = 0$ and \eqref{eq:assumption_cardinality}, and from an optimal path $P^\ast$ for the instance with dummy vertices, we can obtain an optimal path $P$ and an optimal offset $q$ by omitting all dummy vertices from $P^\ast$ and setting $q$ to the number of dummy vertices contained in $P^\ast$ that are left of the original vertices of $H^\ast$.

Thus, to prove  that we can find a monotone path $P$ with $V(P) \subseteq H^\ast$ and an offset $q\in \{0, \dots, k^\prime\}$ minimizing $\frac{w^q(P)}{\numnewcovered(P)}$ in quasi-polynomial time, it suffices to prove the following lemma:

\begin{lemma}\label{lem:main_lemma_dp}
In time $n^{O(\log n)}$ we can compute a monotone path $P$  that minimizes $w^0(P)$ among all monotone paths satisfying the following constraints:
\begin{enumerate}
\item\label{item:subset_constraint}
$V(P) \subseteq H^\ast$,

\item\label{item:cardinality_constraint}
$|V(P)| \equiv 1 \pmod{k^\prime + 1}$, and

\item\label{item:covering_constraint}
$\numnewcovered(P) = \gamma^\ast$.
\end{enumerate}
This applies even if the cost function $c$ does not satisfy the triangle inequality.

\end{lemma}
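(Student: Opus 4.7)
The plan is a two-level dynamic program exploiting the hierarchical structure of the indices of $P$. Since $|V(P)|\equiv 1\pmod{k^\prime+1}$ and $k^\prime+1=2^\Gamma$, the path length $r=|V(P)|-1$ is a multiple of $k^\prime+1$, and the positions split into top-level positions $m(k^\prime+1)$ (the multiples of $k^\prime+1$, which are exactly those with $\height(\cdot)=\Gamma$) and blocks $B_m=\{m(k^\prime+1)+1,\dots,(m+1)(k^\prime+1)-1\}$ of $k^\prime$ intermediate positions. The first task is to establish a clean block decomposition of $w^0(P)$. Writing an intermediate position $i=m(k^\prime+1)+a$ with $1\le a\le k^\prime$, one checks $\height(i)=\height(a)\le\Gamma-1$, and since $a$ is divisible by $2^{\height(a)}$ with odd quotient at most $2^{\Gamma-\height(a)}-1$, that $a\pm\width(a)\in\{1,\dots,k^\prime\}$; hence the $\width(i)$-window of every non-top-level $i$ stays strictly inside $B_m$. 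Thus the contribution of any non-top-level position to $w^0(P)$ involves only intermediate vertices of its block, while the contribution of a top-level vertex $v_p$ at $p=m(k^\prime+1)$ splits into a forward half $2^\Gamma\sum_{\Delta=1}^{k^\prime}c(v_p,v_{p+\Delta})$ living in $B_m$ and a backward half $2^\Gamma\sum_{\Delta=1}^{k^\prime}c(v_{p-\Delta},v_p)$ living in $B_{m-1}$. Assigning each half to the appropriate block gives $w^0(P)=k^2D_{\ell^\ast}+\sum_m W_m$, where $W_m$ depends only on $v_{m(k^\prime+1)}$, $v_{(m+1)(k^\prime+1)}$, and the intermediate vertices of $B_m$.

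The outer DP then iterates over top-level positions with state $(m,v_{m(k^\prime+1)},\sigma,\gamma)$, where $\sigma$ records $\min(|V(P)\cap H|,k)$ for the $O(L)$ sets $H\in\cH$ with $H\subseteq H^\ast$ that are currently open under the monotone traversal, and $\gamma$ is the partial value of $\Lambda(P)$ contributed so far by sets that have closed. The transition to $(m+1,v_{(m+1)(k^\prime+1)},\sigma',\gamma')$ requires computing $W_m$ plus the coverage updates caused by the $k^\prime$ intermediate vertices, handled by an inner sub-DP on $B_m$. Since the heights inside $B_m$ form a balanced binary tree of depth $\Gamma-1$, the sub-DP is a tree DP that recurses on sub-blocks at decreasing levels $h=\Gamma-1,\dots,0$. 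At each recursion the state stores the sub-block's two boundary vertices, the $O(\log n)$ ancestor vertices whose $\width$ extends into the sub-block (contributing as ``context'' to edges inside), and the current partial $(\sigma,\gamma)$. With $O(\log n)$ vertex slots and $\sigma$ of dimension $O(L)$ over an alphabet of size $k+1$, each DP state admits $n^{O(\log n)}$ configurations, and transitions are polynomial-time, for a total running time of $n^{O(\log n)}$. Monotonicity is enforced by including the last-placed vertex in the state, and the final answer is the minimum DP value at $m=r/(k^\prime+1)$ with $\gamma=\gamma^\ast$ after finalizing all still-open $H$'s; the unknown length $r$ is enumerated over its $O(n)$ possible values.

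The hard part will be the inner sub-DP: proving by induction on $h$ that all weight and coverage contributions from positions inside a level-$h$ sub-block are fully determined by the boundary and $O(\log n)$ ancestor context vertices together with $(\sigma,\gamma)$, and that the context bound of $O(\log n)$ vertices survives every recursive descent. This should follow from the fact that a height-$h$ position has width at most $2^h-1$, so only ancestor roots at heights $h,h+1,\dots,\Gamma-1$ and the two top-level boundaries of $B_m$ can reach into a level-$h$ sub-block, yielding at most one relevant context vertex per height level. Some care is also needed at the extreme positions $v_0$ and $v_r$, where out-of-range edges contribute zero, so that the leftmost and rightmost half-contributions are correctly omitted; crucially, nothing in the construction uses the triangle inequality for $c$.
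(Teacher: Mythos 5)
Your overall strategy---blockwise decomposition of $w^0$ at positions divisible by $2^\Gamma$, an outer layer indexed by the top-level vertices, and an inner height-indexed tree DP per block carrying an $O(\log n)$-vertex ancestor context---is essentially the paper's approach (cf.\ the block decomposition in \cref{lem:decomposition_in_blocks}, the auxiliary graph $\widetilde{G}$, and the per-block tree DP of \cref{lem:dynamic_program_single_block}). Your block-separability calculation (that $a\pm\width(a)\in\{1,\dots,k'\}$ for $1\le a\le k'$) is correct, as is the observation that nothing here uses the triangle inequality.

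There is, however, a genuine gap in your coverage state. Recording only $\sigma_H=\min(|V(P)\cap H|,k)$ for each open $H$ together with a running $\gamma$ is not enough to compute the increment to $\gamma$ when $H$ closes. When the threshold $k$ is reached, the contribution of $H\in\cH_{\ell+1}$ to $\numnewcovered(P)$ is
$|H\setminus(V(P)\cup S_\ell)| = |H\setminus S_\ell| - |V(P)\cap H\setminus S_\ell|$,
and the subtracted term is \emph{not} a function of the capped count: two prefixes with identical $\sigma$ can place different numbers of their $H$-vertices inside $S_\ell$, which changes the final answer. Consequently your claim in the ``hard part'' that ``all weight and coverage contributions \dots are fully determined by \dots $(\sigma,\gamma)$'' fails as stated. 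The fix is to also carry, per open $H$, a discount counter such as $|V(P)\cap H\setminus S_\ell|$---this is precisely what the paper's covering-profile component $g(\ell,m)=\bigl|(H_{\ell+1}(a_m)\cap\left[a_{\min},a_{\max}\right]_\prec)\setminus(V(P)\cup S_\ell)\bigr|$ encodes. Since each such counter is at most $n$ and there are $O(L)=O(\log n)$ open sets, the augmented state space remains $n^{O(\log n)}$, so the running-time bound survives; but without this augmentation the transition to the new $\gamma$ is undefined.
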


From \Cref{lem:main_lemma_dp} we can conclude the main result of this section, \Cref{thm:path_covering_algorithm}.
As argued in \Cref{sec:set_cover}, it suffices to obtain an $O(\log n)$-approximation for the problem of finding a monotone path $P$ with $V(P) \subseteq H^\ast$ that minimizes $\frac{w(P)}{\numnewcovered(P)}$ in time $n^{O(\log n)}$.
Let $P^\ast$ denote an optimal solution for this problem.
By \Cref{lem:weight_bound_lb}, there exists an offset $q^\ast$ such that with $\frac{w^{q^\ast}(P^\ast)}{\numnewcovered(P^\ast)} \leq 4(\Gamma + 2) \cdot \frac{w(P^\ast)}{\numnewcovered(P^\ast)}$.
As discussed above, \Cref{lem:main_lemma_dp} (applied to the instance with dummy vertices) implies that we can find a monotone path $P$ with $V(P)\subseteq H^\ast$  and an offset $q$ minimizing $\frac{w^q(P)}{\numnewcovered(P)}$.
In particular, we have 
\[
\frac{w^q(P)}{\numnewcovered(P)} \ \leq\ \frac{w^{q^\ast}(P^\ast)}{\numnewcovered(P^\ast)} \ \leq\ 4(\Gamma + 2) \cdot \frac{w(P^\ast)}{\numnewcovered(P^\ast)},
\]
which by \Cref{lem:weight_bound_ub} implies $\frac{w(P)}{\numnewcovered(P)} \leq  4(\Gamma + 2) \cdot \frac{w(P^\ast)}{\numnewcovered(P^\ast)} = O(\log n) \cdot \frac{w(P^\ast)}{\numnewcovered(P^\ast)}$, as desired.

We conclude that in order to complete the proof of \Cref{thm:path_covering_algorithm}, it remains to prove \Cref{lem:main_lemma_dp}.

\subsection{Decomposition into Blocks}\label{sec:block_decomp}

To prove \Cref{lem:main_lemma_dp}, we will first describe how we can decompose every feasible path $P$ into several independent \emph{blocks}.
Our algorithm will then precompute possible blocks of an optimal solution, and we will show how to construct an optimal path by combining precomputed blocks in an optimal way.

First, we explain more precisely what we mean by a \emph{block}.
Given a path monotone path $P$ that satisfies constraint~\ref{item:cardinality_constraint}, we can decompose its vertex set $V(P) = \parset{x_1, \dots, x_r} \cup \bigcup_{i=1}^{r-1} A_i$ with
\begin{equation}
\label{eq:path_decomposition}
x_1 \prec A_1 \prec x_2 \prec A_2 \prec \dots \prec x_{r-1} \prec A_{r-1} \prec x_r,
\end{equation}
and $|A_i| = k^\prime = 2^{\Gamma} - 1$ for all $i \in [r-1]$.
We refer to the sets $A_i$ as \emph{blocks}.
Observe that the vertices $x_1, \dots, x_r$ are exactly those vertices assigned to height $\Gamma$ in the definition of the weight-bound $w^0$.

The following lemma shows that if we aim at minimizing $w^0(P)$ and have already fixed the vertices $x_1, \dots, x_r$, the blocks $A_i$ can be chosen independently :
\begin{lemma}\label{lem:decomposition_in_blocks}
Let $P$ be a monotone path visiting the vertices $v_0, \dots, v_{r\cdot 2^{\Gamma}}$ in this order, for some $r \geq 1$.
For each $i = 0, \dots, r-1$, consider the subpath $Q_i$ of $P$ visiting $v_{i \cdot 2^{\Gamma}}, \dots, v_{(i+1)\cdot 2^{\Gamma}}$ in this order.
Then we have
\begin{equation}
\label{eq:decomposition_in_blocks}
w^0(P) - k^2 \cdot D_{\ell^\ast} = \sum_{i=0}^{r-1} \Big(w^0(Q_i) - k^2 \cdot D_{\ell^\ast} \Big).
\end{equation}
\end{lemma}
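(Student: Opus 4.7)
The plan is to verify the identity by expanding both sides using the definition of $w^0$ and matching contributions index by index. Specifically, after subtracting $k^2 \cdot D_{\ell^*}$, the LHS becomes
\[
\sum_{i=0}^{r \cdot 2^\Gamma} 2^{\height(i)} \sum_{\Delta=1}^{\width(i)} \bigl( c(v_{i-\Delta}, v_i) + c(v_i, v_{i+\Delta}) \bigr),
\]
while the RHS is the sum of analogous expressions, one per block $Q_{i_0}$ (re-indexed with $u_j = v_{i_0 \cdot 2^\Gamma + j}$, $j = 0, \dots, 2^\Gamma$). My task is to show that every term on the LHS appears (with the right multiplicity) on the RHS, and vice versa.

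The key structural observation is how $\height$ and $\width$ behave on $\{i_0 \cdot 2^\Gamma, \dots, (i_0+1)\cdot 2^\Gamma\}$. I would first note that for an index $i = i_0 \cdot 2^\Gamma + j$ with $0 < j < 2^\Gamma$, the divisibility of $i$ by any power $2^h \leq 2^\Gamma$ agrees with that of $j$, so $\height(i) = \height(j) \leq \Gamma - 1$ and hence $\width(i) \leq 2^{\Gamma-1} - 1$. Consequently the window $[i - \width(i), i + \width(i)]$ lies strictly inside $(i_0 \cdot 2^\Gamma, (i_0+1) \cdot 2^\Gamma)$, so the contribution of such an interior index $i$ to the LHS equals its contribution to the unique block $Q_{i_0}$ on the RHS (where the re-indexed $\height$ and $\width$ values coincide).

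The remaining case is the boundary indices $i = i_0 \cdot 2^\Gamma$, where $\height(i) = \Gamma$ and $\width(i) = 2^\Gamma - 1$. For $0 < i_0 < r$, the vertex $v_{i_0 \cdot 2^\Gamma}$ appears as $u_{2^\Gamma}$ in $Q_{i_0 - 1}$ (contributing only the ``incoming'' half $\sum_{\Delta=1}^{2^\Gamma-1} c(v_{i_0 \cdot 2^\Gamma - \Delta}, v_{i_0 \cdot 2^\Gamma})$, since $u_{2^\Gamma + \Delta}$ does not exist in that block) and as $u_0$ in $Q_{i_0}$ (contributing only the ``outgoing'' half). Each contribution carries the prefactor $2^\Gamma$, and summing the two halves reproduces exactly the LHS contribution for $i = i_0 \cdot 2^\Gamma$. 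For the two extreme boundary cases $i=0$ and $i = r \cdot 2^\Gamma$, the ``missing half'' on the LHS (costs $c(v_{-\Delta}, v_0)$ or $c(v_{r \cdot 2^\Gamma}, v_{r \cdot 2^\Gamma + \Delta})$) is zero by convention, which matches the fact that these indices lie in exactly one block on the RHS.

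I do not foresee a serious obstacle beyond careful bookkeeping; the only place that requires attention is confirming that interior indices of a block really do not ``see'' vertices outside the block, which reduces to the elementary divisibility computation above. Once that is established, the LHS--RHS equality follows by grouping terms according to the index $i$ at which the vertex sits, and the proof is complete.
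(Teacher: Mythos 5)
Your proposal is correct and takes essentially the same approach as the paper: decompose $w^0$ vertex by vertex, show that for interior indices the height-$\height(i)$ window is contained in a single block (the divisibility observation), and observe that the boundary index $i_0\cdot 2^{\Gamma}$ splits its contribution between $Q_{i_0-1}$ (as the rightmost vertex, only incoming terms) and $Q_{i_0}$ (as the leftmost vertex, only outgoing terms), with the sum recovering the full term in $w^0(P)$. One small caveat: the step from $\width(i)\leq 2^{\Gamma-1}-1$ to ``the window lies strictly inside $(i_0\cdot 2^{\Gamma},(i_0+1)\cdot 2^{\Gamma})$'' is not a direct consequence of the width bound alone — you also need that $i-\width(i)$ and $i+\width(i)$ are sandwiched between consecutive multiples of $2^{\height(i)+1}$ (and hence between the block boundaries, which are multiples of $2^{\Gamma}$), but this is the ``elementary divisibility computation'' you flag, and it goes through.
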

\begin{proof}
Recall that the contribution of $v_j$ to $w^0(P)$ is
\begin{equation*}
2^{\height(j)} \sum_{\Delta = 1}^{\width(j)}\Big(c(v_{j - \Delta}, v_j) + c(v_j, v_{j+\Delta}) \Big).
\end{equation*}
We have $\height(j) = \height(j - i\cdot 2^{\Gamma})$ for any $i \in \parset{0, \dots, r-1}$.
Thus, the contribution of any $v_j$ with $j \in \parset{i \cdot 2^\Gamma : i \in \parset{0, \dots, r}}$ is counted equally on both sides of~\eqref{eq:decomposition_in_blocks}.
Therefore, it suffices to show that
\begin{equation*}
\parset{j-\width(j), \dots, j+\width(j)} \subseteq \parset{i \cdot 2^{\Gamma}, \dots, (i+1)\cdot 2^{\Gamma}}
\end{equation*}
for all $i \in \parset{0, \dots, r-1}$ and $j \in \parset{i\cdot 2^{\Gamma} + 1, \dots, (i+1) \cdot 2^{\Gamma} + 1}.$

Fix such indices $i$ and $j$.
For any $\ell \in \parset{1, \dots, 2^{\height(j)} - 1}$ we have $\height(j + \ell) < \height(j) < \Gamma$ by the definition of $\height$.
This implies that $\parset{j, \dots, j + \width(j)} \subseteq \parset{i \cdot 2^{\Gamma}, \dots, (i+1)\cdot 2^{\Gamma}}$.
Using an analogous argument, one shows that $\parset{j, \dots, j - \width(j)} \subseteq \parset{i \cdot 2^{\Gamma}, \dots, (i+1)\cdot 2^{\Gamma}}$.
This completes the proof.
\end{proof}

If we do not simply aim at minimizing $w^0(P)$, but also want $P$ to satisfy the covering constraint \ref{item:covering_constraint} from \Cref{lem:main_lemma_dp}, knowing the vertices $x_1,\dots, x_r$ is not sufficient to make the choice of the different blocks independent.
We will need additional information, which is captured by the notion of a \emph{covering profile}.

The covering profile realized by a path $P$ should contain enough information to determine $\numnewcovered(P)$. 
Moreover, if we fix the covering profiles realized by two monotone paths $Q_1$ and $Q_2$ and we concatenate them to a new monotone path $P$, then the covering profile realized by $P$ should be completely determined by the covering profiles  for $Q_1$ and $Q_2$.
This will allow us to optimize the paths $Q_i$ for the different blocks (as in \Cref{lem:decomposition_in_blocks}) independently once we fixed the vertices $x_i$ and $x_{i+1}$, as well as the covering profile realized by $Q_i$.

In order to define covering profiles, we introduce the following notation:
\begin{definition}[Open and closed intervals]
For $v\preceq w \in V$ we define by
\begin{equation*}
\openint{v,w} \coloneqq \parset{x \in V : v \prec x \prec w}, \quad\text{and}\quad
\closedint{v,w} \coloneqq \parset{x \in V : v \preceq x \preceq w}
\end{equation*}
the \emph{open} and \emph{closed interval} between $v$ and $w$, respectively.
\end{definition}

Recall that $H_{\ell}(v)$ denotes the unique set $H\in \cH_{\ell}$ containing the vertex $v$.
Now we define covering profiles as follows:

\begin{definition}[Covering profile]\label{def:covering_profile}
A \emph{covering profile} $\Phi$ is a tuple $(a_{\min}, a_{\max}, f,g,\gamma)$, where
\begin{enumerate}
\item
$a_{\min} \preceq a_{\max} \in H^\ast$ are (not necessarily distinct) vertices,

\item
$f: \parset{\ell^{\ast}, \dots, L - 1} \times \parset{\min, \max} \to \parset{0, \dots, n}$ is a function,

\item
$g: \parset{\ell^\ast, \dots, L - 1} \times \parset{\min, \max} \to \parset{0, \dots, n}$ is a function, and

\item
$\gamma \in \parset{0, \dots, nL}$ is an integer.
\end{enumerate}
Given a monotone path $P$ with $V(P) \subseteq H^\ast$, we say that $P$ \emph{realizes} $\Phi$ if the following conditions are satisfied:
\begin{enumerate}
\item
$\parset{a_{\min}, a_{\max}} \subseteq V(P) \subseteq \closedint{a_{\min}, a_{\max}}$, that is, $a_{\min}$ and $a_{\max}$ are the leftmost and rightmost vertices of $P$, respectively,

\item
$f(\ell,m) = |V(P) \cap H_{\ell + 1}(a_m)|$ for all $\ell \in \parset{\ell^\ast, \dots, L - 1}$ and $m \in \parset{\min, \max}$,

\item
$g(\ell,m) = | \left(H_{\ell + 1}(a_m) \cap \closedint{a_{\min}, a_{\max}} \right)\setminus (V(P) \cup S_{\ell}) |$ for all $\ell \in \parset{\ell^\ast, \dots, L - 1}$ and $m \in \parset{\min, \max}$, and

\item
we have
\begin{equation*}
\gamma = \sum_{\ell=\ell^\ast}^L |V(P) \setminus S_{\ell}| + \sum_{\ell=\ell^\ast}^{L - 1} \sum_{\substack{H \in \cH_{\ell + 1}, \\ H \subseteq \openint{a_{\min}, a_{\max}}, \\ |V(P) \cap H| \geq k}} |H \setminus (V(P) \cup S_{\ell})|.
\end{equation*}
\end{enumerate}
We say that $\Phi$ is \emph{realizable} if there exists a monotone path $P$ with $V(P) \subseteq H^\ast$ that realizes $\Phi$.
\end{definition}

Observe that there are only quasi-polynomially many covering profiles.
Moreover, note that for any monotone path $P$ with $V(P) \subseteq H^\ast$ there exists a unique covering profile $\Phi$ that is realized by $P$.
For any vertex $v \in H^\ast$, we write $\singletonprofile(v)$ for the unique covering profile of the singleton-path consisting only of $v$.

We say that a covering profile $\Phi=(a_{\min}, a_{\max}, f,g,\gamma)$ is \emph{left}/\emph{right} of a vertex $v$ if the interval $[a_{\min},a_{\max}]_{\prec}$ is left/right of $v$.
Given another covering profile $\Phi^\prime = (a_{\min}^\prime, a_{\max}^\prime, f^\prime,g^\prime,\gamma^\prime)$, we say that $\Phi$ is \emph{left}/\emph{right} of $\Phi^\prime$ if the interval $[a_{\min},a_{\max}]_{\prec}$ is left/right of the interval $[a_{\min}^\prime,a_{\max}]_{\prec}^\prime$.

\addparskip

The following two lemmas state that the covering profile indeed has the desired properties mentioned above.
First, the covering profile realized by a path $P$ contains enough information to recover the number of newly covered elements~$(v,\ell)$ by~$(P,\ell^\ast)$:

\begin{restatable}{lemma}{CoveringProfileNumElemCovered}\label{lem:covering_profile_num_elem_covered}
Let $\Phi$ be a covering profile.
Then there is a number $\xi(\Phi) \in \parset{0, \dots, nL}$ such that $\xi(\Phi) = \numnewcovered(P)$ for every realization $P$ of $\Phi$.

Moreover, given $\Phi$, we can compute $\xi(\Phi)$ in polynomial time.
\end{restatable}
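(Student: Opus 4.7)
The plan is to derive an explicit closed-form expression for $\numnewcovered(P)$ in terms of the components of $\Phi$ together with fixed instance data, and then take $\xi(\Phi)$ to be the value of this expression. Since the expression depends only on $\Phi$ (not on the particular realization), this simultaneously proves well-definedness, equality with $\numnewcovered(P)$ for every realization, and polynomial-time computability.

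First, I would unfold $\numnewcovered(P)$ using the two ways a pair $(P,\ell^\ast)$ can cover an element $(v,\ell)$, obtaining
\begin{equation*}
\numnewcovered(P) = \sum_{\ell=\ell^\ast}^{L} |V(P) \setminus S_\ell| + \sum_{\ell=\ell^\ast}^{L-1} \sum_{\substack{H \in \cH_{\ell+1} \\ |V(P) \cap H| \geq k}} |H \setminus (V(P) \cup S_\ell)|.
\end{equation*}
The first sum is literally the first term of $\gamma$. For the second sum, I would use that each $\cH_{\ell+1}$ is compatible with $\prec$ and $V(P) \subseteq \closedint{a_{\min},a_{\max}}$: any $H \in \cH_{\ell+1}$ that intersects $V(P)$ is a $\prec$-interval which either lies entirely in $\openint{a_{\min},a_{\max}}$, or equals $H_{\ell+1}(a_{\min})$, or equals $H_{\ell+1}(a_{\max})$. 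The strictly-interior case contributes exactly the second term of $\gamma$, by definition of the covering profile.

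The main obstacle, and the reason $g$ is stated only relative to $\closedint{a_{\min},a_{\max}}$, is handling the two ``boundary'' classes $H_{\ell+1}(a_m)$ for $m \in \{\min,\max\}$: these classes can extend outside the interval, yet taking responsibility covers the entire class. I would resolve this by observing that $V(P) \subseteq \closedint{a_{\min},a_{\max}}$, so
\begin{equation*}
|H_{\ell+1}(a_m) \setminus (V(P) \cup S_\ell)| = g(\ell,m) + \big|H_{\ell+1}(a_m) \setminus (\closedint{a_{\min},a_{\max}} \cup S_\ell)\big|,
\end{equation*}
and the second summand on the right is a quantity that depends only on $a_{\min}$, $a_{\max}$, $\ell$, and the instance, hence is computable from $\Phi$ alone. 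The responsibility criterion $|V(P) \cap H_{\ell+1}(a_m)| \geq k$ translates directly to $f(\ell,m) \geq k$, so including this contribution (with a guard to avoid double counting when $H_{\ell+1}(a_{\min}) = H_{\ell+1}(a_{\max})$, detectable from $a_{\min}$ and $a_{\max}$) completes the formula.

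Combining all three contributions yields
\begin{equation*}
\xi(\Phi) \ \coloneqq\ \gamma + \sum_{\ell=\ell^\ast}^{L-1} \sum_{m \in M(\ell)} [f(\ell,m) \geq k] \cdot \big(g(\ell,m) + e(\ell,m)\big),
\end{equation*}
where $e(\ell,m) \coloneqq |H_{\ell+1}(a_m) \setminus (\closedint{a_{\min},a_{\max}} \cup S_\ell)|$ and $M(\ell) \subseteq \{\min,\max\}$ is chosen to avoid double-counting. Each of the polynomially many terms can be evaluated in polynomial time by a direct scan of the instance data, and each is bounded by $n$, giving $\xi(\Phi) \in \{0,\dots,nL\}$ as required.
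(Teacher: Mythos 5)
Your proposal is correct and takes essentially the same approach as the paper: you unfold $\numnewcovered(P)$, split $H \in \cH_{\ell+1}$ with $|V(P)\cap H| \geq k$ into the interior case (absorbed into $\gamma$) and the boundary cases $H_{\ell+1}(a_{\min}), H_{\ell+1}(a_{\max})$, translate the responsibility criterion to $f(\ell,m)\geq k$, decompose $|H_{\ell+1}(a_m)\setminus(V(P)\cup S_\ell)|$ into $g(\ell,m)$ plus an instance-only correction term, and guard against double counting when $H_{\ell+1}(a_{\min})=H_{\ell+1}(a_{\max})$. The paper makes your set $M(\ell)$ explicit via the index partition $I_{\neq}, I_{=}$, but the argument is the same.
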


Second, from two covering profiles $\Phi$ and $\Psi$ realized by two monotone paths $P_{\Phi}$ and $P_{\Psi}$ with $V(P_{\phi}) $ left of $V(P_{\Psi})$, we can determine the covering profile of the monotone path $P$ obtained by concatenating $P_{\Phi}$ and $P_{\Psi}$:

\begin{restatable}{lemma}{CombinationOfCoveringProfiles}\label{lem:combination_of_covering_profiles}
Let $\Phi$ and $\Psi$ be two covering profiles such that $\Phi$ is left of $\Psi$.
Then there exists a covering profile, denoted by $\Phi \coveringsum \Psi$, such that for every realization $P_{\Phi}$ of $\Phi$ and every realization $P_{\Psi}$ of $\Psi$, the monotone path obtained by concatenating $P_{\Phi}$ and $P_{\Psi}$ realizes $\Phi \coveringsum \Psi$.

Moreover, given $\Phi$ and $\Psi$, we can compute $\Phi \coveringsum \Psi$ in polynomial time.
\end{restatable}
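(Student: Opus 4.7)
My approach is to give explicit closed-form expressions for each component of $\Phi \coveringsum \Psi$ in terms of the components of $\Phi = (a_{\min}, a_{\max}, f, g, \gamma)$, those of $\Psi = (a'_{\min}, a'_{\max}, f', g', \gamma')$, and lookups into the hierarchy $\cH$, the order $\prec$, and the sets $S_\ell$, and then to verify each formula against \Cref{def:covering_profile} using that $V(P_\Phi) \dot\cup V(P_\Psi)$ is the vertex set of the concatenation $P$, with $V(P_\Phi) \prec V(P_\Psi)$. The first two coordinates are forced: $\tilde{a}_{\min} := a_{\min}$ and $\tilde{a}_{\max} := a'_{\max}$.

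To define $\tilde{f}(\ell,m)$, I will exploit that $\cH_{\ell+1}$ is compatible with $\prec$, so $H_{\ell+1}(a_{\min})$ is a $\prec$-interval and thus either coincides with $H_{\ell+1}(a'_{\min})$ or lies strictly left of $V(P_\Psi)$. This yields
\[
 \tilde{f}(\ell,\min) \;=\; f(\ell,\min) + \mathbbm{1}\!\left[H_{\ell+1}(a_{\min}) = H_{\ell+1}(a'_{\min})\right]\cdot f'(\ell,\min),
\]
and symmetrically on the $\max$ side. For $\tilde{g}(\ell,m)$, the same case distinction applies, with one extra subtlety: $g(\ell,\max)$ only accounts for the part of $H_{\ell+1}(a_{\max})$ lying in $[a_{\min},a_{\max}]_\prec$, so if $H_{\ell+1}(a_{\max}) = H_{\ell+1}(a'_{\min})$, one adds $g'(\ell,\min)$ and a correction for the $S_\ell$-content of $H_{\ell+1}(a_{\max}) \cap (a_{\max},a'_{\min})_\prec$; and if $H_{\ell+1}(\tilde{a}_{\max})$ is a proper extension of the boundary set of one profile, one adds the $S_\ell$-corrected portion that lies outside that profile's closed interval, all of which is directly readable from the input.

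The main bookkeeping is $\tilde{\gamma}$. The first term $\sum_\ell |V(P)\setminus S_\ell|$ decomposes additively along $V(P_\Phi)$ and $V(P_\Psi)$, so the first terms of $\gamma$ and $\gamma'$ sum to the desired value. For the second term, every set $H\in\cH_{\ell+1}$ contributing to $\tilde{\gamma}$ falls into exactly one of the following types: (i) $H\subseteq(a_{\min},a_{\max})_\prec$, already counted in $\gamma$ with $V(P)\cap H=V(P_\Phi)\cap H$; (ii) $H\subseteq(a'_{\min},a'_{\max})_\prec$, already counted in $\gamma'$; (iii) $H=H_{\ell+1}(a_{\max})$ strictly inside $(a_{\min},a'_{\max})_\prec$ but $\neq H_{\ell+1}(a'_{\min})$, where $|V(P)\cap H|=f(\ell,\max)$ and the complement size is $g(\ell,\max)$ plus a hierarchy-based correction; (iv) the symmetric case for $H_{\ell+1}(a'_{\min})$; and (v) the bridging case $H_{\ell+1}(a_{\max})=H_{\ell+1}(a'_{\min})$, with $|V(P)\cap H|=f(\ell,\max)+f'(\ell,\min)$ and complement size $g(\ell,\max)+g'(\ell,\min) + |(a_{\max},a'_{\min})_\prec \cap H \setminus S_\ell|$. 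Sets fully contained in the gap $(a_{\max},a'_{\min})_\prec$ are never counted since they contain no vertex of $P$ and hence cannot satisfy $|V(P)\cap H|\ge k\ge 1$. The formula for $\tilde\gamma$ is then $\gamma + \gamma'$ plus the explicit contributions from (iii)--(v).

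\textbf{Main obstacle.} The only genuinely delicate case is the bridging case (v): neither $\Phi$ nor $\Psi$ individually records the full membership pattern of the single hierarchy-set that spans both closed intervals, so I must carefully stitch $g(\ell,\max)$ and $g'(\ell,\min)$ together with the gap contribution, and likewise for $\tilde{f}(\ell,\max)$ and $\tilde{g}$ on the relevant side. Once this case is pinned down, all formulas involve only $O(L)$ hierarchy comparisons per level and polynomially many lookups, giving the polynomial running time.
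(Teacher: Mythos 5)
Your approach is the paper's: construct each component of $\Phi \coveringsum \Psi$ by an explicit formula in terms of $\Phi$, $\Psi$, and the fixed instance data, using that the hierarchy sets are $\prec$-intervals to structure the case distinctions, and noting in particular that the second summand of $\tilde\gamma$ splits into interior sets (your types (i) and (ii), already counted in $\gamma$ resp.\ $\gamma'$), boundary sets straddling one closed interval (your types (iii) and (iv)), and the single bridging set (your type (v)), with gap-interval $S_\ell$-corrections throughout. One small slip in your $\tilde{g}$ paragraph: the relevant condition for $\tilde g(\ell,\min)$ is $H_{\ell+1}(a_{\min}) = H_{\ell+1}(a'_{\min})$ and the formula builds on $g(\ell,\min)$ (not $g(\ell,\max)$); the weaker bridging condition $H_{\ell+1}(a_{\max}) = H_{\ell+1}(a'_{\min})$ and the pair $g(\ell,\max), g'(\ell,\min)$ only enter the $\tilde\gamma$ bookkeeping, not $\tilde{g}$ — but this is a labeling error, not a flaw in the strategy, and a careful write-up along your lines recovers the paper's proof.
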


The proofs of \Cref{lem:covering_profile_num_elem_covered,lem:combination_of_covering_profiles} can be derived in a straightforward way from \Cref{def:covering_profile} and the definition of what it means for a path-level pair $(P,\ell^*)$ to cover an element of $U$.
However, the proofs are quite technical and we thus defer them to \Cref{sec:remaining_proofs_dp}.
\addparskip

Having described the key properties of covering profiles, we can now explain how we use them in our algorithm.
We construct a directed auxiliary graph $\widetilde{G}$ whose vertices are pairs $(v,\Phi)$, where $v \in H^\ast$ is a vertex in our original vertex set, and $\Phi=(a_{\min}, a_{\max}, f,g,\gamma)$ is a covering profile with $a_{\max} = v$.The idea is that our desired path $P$, decomposed into blocks as in \eqref{eq:path_decomposition}, will correspond to a path in the auxiliary graph $\widetilde{G}$ with vertices $(x_i, \Phi_i)$, where $\Phi_i$ is the covering profile realized by the $x_1$-$x_i$ subpath of $P$.
The edge between two vertices $(x_i, \Phi_i)$ and $(x_{i+1},\Phi_{i+1})$ will correspond to the block $A_i$.

We say that $(v,\Phi) \in V(\widetilde{G})$ is a \emph{designated start vertex} if $\Phi = \singletonprofile(v)$.
Moreover, $(v,\Phi) \in V(\widetilde{G})$ is a \emph{designated end vertex} if $\xi(\Phi) = \gamma^\ast$ (see \Cref{lem:covering_profile_num_elem_covered}).

Next, we define the edges of the auxiliary graph $\widetilde{G}$.
Recall that each block $A_i$ contains exactly $k^\prime$ vertices.
We need the following definition:

\begin{definition}[Optimal realization]\label{def:optimal_realization}
Let $x \prec y \in H^\ast$ be two distinct vertices and let $\Phi$ be a covering profile that is right of $x$ and left of $y$.
We say that $\Phi$ is \emph{$(x,y,k')$-realizable} if there exists a realization $Q$ of $\Phi$ with $|V(Q)| = k'$.

An \emph{optimal $(x,y,k')$-realization} is a realization $Q$ of $\Phi$ with $|V(Q)| = k'$ that minimizes 
\begin{equation*}
w^{0}\left( Q^\prime \right) - k^2 \cdot D_{\ell^\ast}
\end{equation*}
among all such realizations, where $Q^\prime$ is the monotone path that visits $x$, then traverses $Q$, and finally visits~$y$.
\end{definition}

To construct $E(\widetilde{G})$, we need to be able to compute optimal realizations in quasi-polynomial time.

\begin{lemma}\label{lem:dynamic_program_single_block}
Let $x \prec y \in H^\ast$ be two distinct vertices.
Then there is an algorithm that computes, for every $(x,y,k^\prime)$-realizable covering profile $\Phi$, an optimal $(x,y,k^\prime)$-realization in time $n^{O(\log n)}$.
\end{lemma}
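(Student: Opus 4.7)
The plan is a hierarchical dynamic program that exploits the binary-tree structure of $\height(\cdot)$. First, I would decompose the objective: letting $v_0 = x$, $v_{k'+1} = y$, and using that $\height(0) = \height(2^\Gamma) = \Gamma$ while $\width(i) < i$ and $i + \width(i) < 2^\Gamma$ for all $i \in \{1, \dots, k'\}$, one checks that
\[
w^0(Q') - k^2 D_{\ell^\ast}\ =\ 2^\Gamma \sum_{v \in V(Q)}\big(c(x,v) + c(v,y)\big)\ +\ W(Q),
\]
where the ``interior weight'' $W(Q) := \sum_{i=1}^{k'} 2^{\height(i)} \sum_{\Delta=1}^{\width(i)}(c(v_{i-\Delta}, v_i) + c(v_i, v_{i+\Delta}))$ only involves interior vertices of $Q$. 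Splitting $Q$ at the unique interior position $m$ of height $\Gamma-1$ (at index $2^{\Gamma-1}$) gives $W(Q) = W(Q_L) + W(Q_R) + T_m$ with $T_m := 2^{\Gamma-1}\big(\sum_{v \in V(Q_L)} c(v,m) + \sum_{v \in V(Q_R)} c(m,v)\big)$, and each half is itself a sub-block of halved size, so iterating yields a binary tree of depth $\Gamma = O(\log n)$.

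The main obstacle is that $T_m$, and the boundary contributions from $x,y$, depend on the specific vertices in $V(Q_L), V(Q_R)$ rather than only on their covering profiles. My plan is to enrich the DP state with two lists $L, R$ of ``outer contributors'': each pair $(z, w) \in L$ with $z \preceq a$ encodes the obligation to pay $w \cdot c(z,v)$ for every interior vertex $v$ of the current sub-block, and analogously $(z,w) \in R$ with $z \succeq b$ encodes paying $w \cdot c(v,z)$. Formally I would define
\[
F(h, a, b, L, R, \Phi)\ :=\ \min_Q \Big[\,W_h(Q) + \sum_{(z,w)\in L} w\!\!\sum_{v \in V(Q)}\! c(z,v) + \sum_{(z,w)\in R} w\!\!\sum_{v\in V(Q)}\! c(v,z)\,\Big],
\]
minimized over realizations $Q$ of $\Phi$ with $|V(Q)| = 2^h - 1$ inside $(a,b)_{\prec}$, where $W_h$ is the interior weight of a sub-block at level $h$. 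The quantity wanted by the lemma is then $k^2 D_{\ell^\ast} + F(\Gamma, x, y, \{(x, 2^\Gamma)\}, \{(y, 2^\Gamma)\}, \Phi)$, since the initial $L, R$ exactly encode the $T_x$ and $T_y$ contributions.

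The recursion splits at a middle vertex $m \in (a,b)_{\prec}$, decomposes $\Phi = \Phi_L \coveringsum \singletonprofile(m) \coveringsum \Phi_R$ using \Cref{lem:combination_of_covering_profiles}, and absorbs $T_m$ by appending the new pair $(m, 2^{h-1})$ to $R$ on the left recursive call and to $L$ on the right recursive call:
\[
F(h, a, b, L, R, \Phi)\ =\ \min_{m,\Phi_L,\Phi_R}\!\Big[F\big(h{-}1, a, m, L, R{\cup}\{(m, 2^{h-1})\}, \Phi_L\big) + F\big(h{-}1, m, b, L{\cup}\{(m, 2^{h-1})\}, R, \Phi_R\big) + \Delta_m\Big],
\]
where $\Delta_m := \sum_{(z,w)\in L} w\,c(z,m) + \sum_{(z,w)\in R} w\,c(m,z)$ captures what $m$ itself now owes to the outer contributors. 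The base case $h=1$ is a singleton $Q = \{v\}$, giving $F(1, a, b, L, R, \singletonprofile(v)) = \sum_{(z,w)\in L} w\,c(z,v) + \sum_{(z,w)\in R} w\,c(v,z)$. Verifying this is a straightforward induction on $h$, checking that at each split every outer contribution is either forwarded to the correct sub-problem (for vertices in $V(Q_L) \cup V(Q_R)$) or paid at the current step (for $m$), and that the new $T_m$ obligation is exactly what the two enlarged lists impose.

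The running time hinges on the key observation that at recursion depth $h$, each of $L, R$ has accumulated at most one element per higher split and therefore contains at most $\Gamma - h + 1 = O(\log n)$ elements, each a vertex from $H^\ast$ with a power-of-two weight determined by its level. Hence there are only $n^{O(\log n)}$ reachable $(L, R)$ pairs; combined with $O(n^2)$ choices of $(a,b)$, $O(\log n)$ values of $h$, and the polynomially many covering profiles (by \Cref{def:covering_profile}), the state space has size $n^{O(\log n)}$. Each state's recursion tries $O(n)$ middle vertices and polynomially many profile decompositions, each checkable in polynomial time via \Cref{lem:combination_of_covering_profiles}, so the total running time is $n^{O(\log n)}$. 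Standard traceback recovers an optimal $(x, y, k')$-realization for every realizable $\Phi$ within the same bound.
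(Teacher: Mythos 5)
Your proof takes essentially the same route as the paper: both approaches organize an $(x,y,k')$-realization as a full binary tree of height $\Gamma$, exploit that the objective decomposes into an ``interior'' part recursing over the two halves and a contribution from the current split vertex $m$, and enrich the DP state with the covering profile and a record of the $O(\log n)$ ancestors so that the cost of a subtree is determined locally. The only real difference is bookkeeping: the paper stores a function $\pi: \{h+1,\dots,\Gamma\} \to (x,y)_\prec$ giving the ancestor vertices (recovering the direction and weight of each ancestor edge from the tree structure, so the paper writes a $\min$ in its $\cost$ formula), while you carry two explicit lists $L,R$ of (vertex, power-of-two weight) pairs split by side. Your encoding is a bit more explicit about directions and avoids the $\min$, which is a mild stylistic gain, but the information content and the state-space size are the same. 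Your cost recursion (absorbing $T_m$ via the extra list entry and charging $\Delta_m$ at the split) unfolds correctly and matches \Cref{lem:decomposition_in_blocks} applied recursively, and your base case is right.

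One small inaccuracy to fix: you write that there are ``polynomially many covering profiles (by \Cref{def:covering_profile}),'' but a covering profile includes two functions $f,g$ on a domain of size $2(L-\ell^\ast) = O(\log n)$ with $n+1$ possible values each, so there are $n^{O(\log n)}$ covering profiles, not polynomially many. This does not affect your final running time bound since the state space and transition cost remain $n^{O(\log n)}$, but the intermediate claim as stated is wrong. (The paper likewise counts $n^{O(\log n)}$ profiles and uses $L = O(\log n)$ from the low-depth assumption at exactly this point.)
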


We will prove \Cref{lem:dynamic_program_single_block} in \Cref{sec:block_dp} via a dynamic program.
We fix an optimal $(x,y,k^\prime)$-realization of $\Phi$ for all $x,y\in H^\ast$ with $x \prec y$ and every $(x,y,k^\prime)$-realizable covering profile $\Phi$.
Because there are only $n^{O(\log n)}$ many covering profiles, \Cref{lem:dynamic_program_single_block} implies that we can do this in time $n^{O(\log n)}$.

Now, to construct the edge set of $\widetilde{G}$, we do the following for every pair of vertices $\left(x, \Phi_x \right), \left(y, \Phi_y \right) \in V(\widetilde{G})$ with $x \prec y$:
For each $(x,y,k^\prime)$-realizable covering profile $\widehat{\Phi}$ that is right of $x$ and left of $y$, we add an edge $e$ from $\left(x, \Phi_x \right)$ to $\left(y, \Phi_y \right)$ to $\widetilde{G}$ if
\begin{equation*}
\Phi_y = \left( \Phi_x \coveringsum \widehat{\Phi} \right) \coveringsum \singletonprofile (y),
\end{equation*}
using the notation of \Cref{lem:combination_of_covering_profiles}.
(Recall that $ \singletonprofile (y)$ is the covering profile realized by the singleton-path with vertex $y$.)
Additionally, we write $P(e)$ to denote the fixed optimal $(x,y,k^\prime)$-realization of the covering profile $\widehat{\Phi}$.
Then we define the weight of the edge $e$ of $\widetilde{G}$ via
\begin{equation*}
w(e) \coloneqq w^{0}(P^\prime) - k^2 \cdot D_{\ell^\ast},
\end{equation*}
where $P^\prime$ is the monotone path that visits $x$, then traverses $P(e)$, and finally visits $y$.
\addparskip

Every path in $\widetilde{G}$ naturally corresponds to a path in our original vertex set:

\begin{definition}[Corresponding path]
Let $\widetilde{Q}$ be a path in $\widetilde{G}$ that visits the vertices $(x_1,\Phi_1), \dots, (x_r, \Phi_r) \in V(\widetilde{G})$ in this order.
We denote its edges by $e_i \coloneqq \left((x_i,\Phi_i),(x_{i+1},\Phi_{i+1})\right)$ for $i \in [r-1]$.

Then we define \emph{corresponding path} of $\widetilde{Q}$ as the monotone path that starts in $x_1$ and, for each $i=1,\dots, r-1$, traverses $P(e_i)$ and then visits $x_{i+1}$.
\end{definition}

We will prove that we can find a monotone path $P$ as desired in  \Cref{lem:main_lemma_dp} by finding shortest-paths in the auxiliary graph $\widetilde{G}$ with edge weights $w$.
To this end, we observe the following:

\begin{lemma}\label{lem:shortest_path_reduction}
Let $\widetilde{Q}$ be path in $\widetilde{G}$ that starts in a designated start vertex and ends in a vertex $(y,\Phi) \in V(\widetilde{G})$, and let $P$ be its corresponding path.
Then, $P$ realizes $\Phi$ and we have $w(\widetilde{Q}) + k^2 \cdot D_{\ell^{\ast}}= w^0(P)$. 

Moreover, for any monotone path $P^\prime$ satisfying conditions \ref{item:subset_constraint} to \ref{item:covering_constraint} of \Cref{lem:main_lemma_dp},
there exists a path $Q^\prime$ in $\widetilde{G}$ starting in a designated start vertex and ending in a designated end vertex such that $w(Q^\prime) + k^2 \cdot D_{\ell^{\ast}}\leq w^0(P^\prime)$.
\end{lemma}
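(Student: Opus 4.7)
The plan is to prove both assertions by exploiting the block decomposition from \Cref{lem:decomposition_in_blocks} together with the combination rule from \Cref{lem:combination_of_covering_profiles} and the edge-weight definition in $\widetilde{G}$.

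For the first assertion, I would proceed by induction on the number of edges of $\widetilde{Q}$. In the base case, $\widetilde Q$ consists of a single designated start vertex $(x_1,\singletonprofile(x_1))$, so the corresponding path is the singleton $x_1$, which by definition realizes $\singletonprofile(x_1)$; both $w(\widetilde{Q})$ and $w^0(\text{singleton})-k^2 D_{\ell^\ast}$ equal zero, so the weight equality holds. For the inductive step, suppose $\widetilde{Q}$ ends in a new edge $e_r=\bigl((x_r,\Phi_r),(x_{r+1},\Phi_{r+1})\bigr)$. By construction of $E(\widetilde{G})$ there is a covering profile $\widehat{\Phi}$ with $\Phi_{r+1}=(\Phi_r\coveringsum\widehat{\Phi})\coveringsum\singletonprofile(x_{r+1})$, realized by $P(e_r)$. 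Applying \Cref{lem:combination_of_covering_profiles} twice to the concatenation of the previous corresponding path, $P(e_r)$, and the singleton $x_{r+1}$ shows the new corresponding path $P$ realizes $\Phi_{r+1}$. For the weight, I would invoke \Cref{lem:decomposition_in_blocks}: splitting $P$ at $x_r$ writes $w^0(P)-k^2 D_{\ell^\ast}$ as the sum of $w^0(P_{\mathrm{old}})-k^2 D_{\ell^\ast}$ and $w^0(x_r,P(e_r),x_{r+1})-k^2 D_{\ell^\ast}$, where the first summand equals $w(\widetilde{Q}_{\mathrm{old}})$ by induction and the second equals $w(e_r)$ by definition.

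For the second assertion, given a monotone path $P^\prime$ satisfying \ref{item:subset_constraint}–\ref{item:covering_constraint} of \Cref{lem:main_lemma_dp}, I would use condition \ref{item:cardinality_constraint} to decompose $V(P^\prime)$ as in \eqref{eq:path_decomposition} into vertices $x_1\prec\dots\prec x_r$ separated by blocks $A_i$ of exactly $k^\prime$ vertices. Let $P^\prime_i$ denote the monotone prefix of $P^\prime$ ending at $x_i$, $\Phi_i$ its covering profile, and $\widehat{\Phi}_i$ the covering profile of the subpath of $P^\prime$ with vertex set $A_i$. Since $A_i$ lies strictly between $x_i$ and $x_{i+1}$ and has $k^\prime$ vertices, $\widehat{\Phi}_i$ is $(x_i,x_{i+1},k^\prime)$-realizable, and \Cref{lem:combination_of_covering_profiles} gives $\Phi_{i+1}=(\Phi_i\coveringsum\widehat{\Phi}_i)\coveringsum\singletonprofile(x_{i+1})$. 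This is exactly the condition for the edge $e_i$ from $(x_i,\Phi_i)$ to $(x_{i+1},\Phi_{i+1})$ to appear in $\widetilde{G}$, so the vertices $(x_i,\Phi_i)$ form a path $Q^\prime$ in $\widetilde{G}$. The starting vertex is designated because the singleton prefix realizes $\singletonprofile(x_1)$, and the terminal vertex is designated by \Cref{lem:covering_profile_num_elem_covered} together with \ref{item:covering_constraint}: $\xi(\Phi_r)=\numnewcovered(P^\prime)=\gamma^\ast$. For the weight bound, each $w(e_i)$ equals the minimum of $w^0-k^2 D_{\ell^\ast}$ over all $(x_i,x_{i+1},k^\prime)$-realizations of $\widehat{\Phi}_i$, and the subpath of $P^\prime$ induced by $\{x_i\}\cup A_i\cup\{x_{i+1}\}$ is itself such a realization. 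Summing over $i$ and invoking \Cref{lem:decomposition_in_blocks} yields $w(Q^\prime)\le w^0(P^\prime)-k^2 D_{\ell^\ast}$.

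The argument is essentially bookkeeping on top of \Cref{lem:decomposition_in_blocks,lem:combination_of_covering_profiles,lem:covering_profile_num_elem_covered}, so the main obstacle is not conceptual but notational: one must carefully check that in each inductive step the ``new block'' $(x_r,P(e_r),x_{r+1})$ is positioned exactly as required by \Cref{lem:decomposition_in_blocks} (in particular that its endpoints sit at indices divisible by $2^\Gamma$ in the concatenated path), and that the $(x_i,x_{i+1},k^\prime)$-realizability in the second part indeed holds because $A_i$ lies strictly between $x_i$ and $x_{i+1}$ and has exactly $k^\prime$ vertices. Once these alignment conditions are in place, the displayed weight equality and inequality follow immediately from the additive block decomposition.
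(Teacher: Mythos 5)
Your proposal is correct and takes essentially the same approach as the paper: induction on the edges of $\widetilde{Q}$ via \Cref{lem:combination_of_covering_profiles} and \Cref{lem:decomposition_in_blocks} for the first assertion, and the block decomposition~\eqref{eq:path_decomposition} together with the optimality of the precomputed block realizations for the second. The alignment caveat you raise (that the $x_i$ sit at indices divisible by $2^{\Gamma}$ in the corresponding path) holds by construction, since each $P(e_i)$ contributes exactly $k' = 2^{\Gamma}-1$ vertices.
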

\begin{proof}
It follows immediately from \Cref{lem:combination_of_covering_profiles} and an induction on the number of edges of $\widetilde{Q}$ that $P$ realizes $\Phi$.
Moreover, \Cref{lem:decomposition_in_blocks} implies that $w(\widetilde{Q}) + k^2 \cdot D_{\ell^{\ast}}= w^0(P)$. 

Let $P^\prime$ be a monotone path that satisfies conditions \ref{item:subset_constraint} to \ref{item:covering_constraint} of \Cref{lem:main_lemma_dp}.
In particular, we have $|V(P^\prime)| = 1 + r \cdot 2^{\Gamma}$ for some $r \geq 0$.
Hence, we can decompose its vertex set into vertices $x_1, \dots, x_r$ and blocks $A_1, \dots, A_{r-1}$, as explained in~\eqref{eq:decomposition_in_blocks}.
A straightforward induction on $r$ using \Cref{lem:decomposition_in_blocks,lem:combination_of_covering_profiles} shows the existence of a path $Q^\prime$ in $\widetilde{G}$ such that $w(Q^\prime) + k^2 \cdot D_{\ell^\ast} \leq w^0(P^\prime)$.
\end{proof}

\Cref{lem:shortest_path_reduction} immediately implies \Cref{lem:main_lemma_dp}:

\begin{proof}[Proof of \Cref{lem:main_lemma_dp}]
We start by constructing the graph $\widetilde{G}$.
There are at most $n^{O(\log n)}$ covering profiles, because $L = O(\log n)$ as our instance $\cI$ has low depth.
Hence, by \Cref{lem:dynamic_program_single_block}, we can construct $\widetilde{G}$ in time $n^{O(\log n)}$.

For every pair of designated start and end vertices $(\Phi_x,x), (\Phi_y,y) \in V(\widetilde{G})$, we compute a shortest path in $\widetilde{G}$ from $(\Phi_x,x)$ to $(\Phi_y,y)$ (if there exists one).
In particular, $\xi(\Phi_y) = \gamma^\ast$.
We denote by $\widetilde{Q}$ such a path in $\widetilde{G}$ that minimizes $w(\widetilde{Q})$ among all shortest paths we computed.
Let $P$ be its corresponding path.
By our construction we have $V(P) \subseteq H^\ast$ and $|V(P)| \equiv 1 \pmod{k^\prime + 1}$.
Moreover, \Cref{lem:covering_profile_num_elem_covered,lem:shortest_path_reduction} yield that $\numnewcovered(P) = \xi(\Phi_y) = \gamma^\ast$.

Finally, \Cref{lem:shortest_path_reduction} implies that
\begin{equation*}
w^0(P) = w(\widetilde{Q}) + k^2 \cdot D_{\ell^\ast} \leq w^0(P^\prime)
\end{equation*}
for any monotone path $P^\prime$ satisfying conditions \ref{item:subset_constraint} to \ref{item:covering_constraint} of \Cref{lem:main_lemma_dp}.
\end{proof}

\subsection{Solving a Single Block}\label{sec:block_dp}

We now show that given two distinct vertices $x, y \in H^\ast$ with $x \prec y$ and an $(x,y,k^\prime)$-realizable covering profile $\Phi$, we can compute an optimal $(x,y,k^\prime)$-realization of $\Phi$ in quasi-polynomial time.
That is, we prove \Cref{lem:dynamic_program_single_block}.

\addparskip

To this end, we fix the vertices $x \prec y \in H^\ast$ for the remainder of this section.
We briefly explain the idea of our dynamic programming approach.
Since any $(x,y,k^\prime)$-realization is a monotone path on $k^\prime = 2^{\Gamma} - 1$ vertices, we can view such a realization as a full binary tree $\cB$ of height $\Gamma$, as illustrated in \Cref{fig:arborescence_dp}.
\begin{figure}
\begin{center}

\begin{tikzpicture}[scale=0.75]
\begin{scope}
\definecolor{color1}{named}{Peach}  
\definecolor{color2}{named}{ForestGreen}  
\definecolor{color3}{named}{Violet}  
\definecolor{color4}{named}{ProcessBlue}
\definecolor{color5}{named}{purple}  

\def\slack{1.5}

\foreach \h in {0,1,2,3,4} {
  \pgfmathtruncatemacro{\hc}{\h}
  \ifcase\hc
    \def\hcolor{color1}
  \or
    \def\hcolor{color2}
  \or
    \def\hcolor{color3}
  \or
    \def\hcolor{color4}
   \or
    \def\hcolor{color5}
  \fi
  \draw[thin, \hcolor] (1-\slack, \h) -- (15+\slack, \h);
}

\foreach \i in {1,...,15} {
\pgfmathsetmacro{\height}{
  ifthenelse(mod(\i,8)==0,3,
    ifthenelse(mod(\i,4)==0,2,
      ifthenelse(mod(\i,2)==0,1,0)
    )
  )
}
\pgfmathtruncatemacro{\h}{\height}
  \ifcase\h
    \def\heightcolor{color1}
  \or
    \def\heightcolor{color2}
  \or
    \def\heightcolor{color3}
  \or
    \def\heightcolor{color4}
  \fi
\pgfmathsetmacro{\width}{
  pow(2, \height) - 1
}

\node[circle,fill=\heightcolor,inner sep=2.5pt] (v\i) at ({(\i)*1},\height) {};
\node[below=3pt, font=\footnotesize] at (v\i) {$v_{\i}$};

}

\node[circle,fill=color5,inner sep=2.5pt] (x) at (0,4) {};
\node[below=3pt, font=\footnotesize] at (x) {$x$};
\node[circle,fill=color5,inner sep=2.5pt] (y) at (16,4) {};
\node[below=3pt, font=\footnotesize] at (y) {$y$};

\begin{scope}[very thick]
\draw (v8) to (v4);
\draw (v8) to (v12);
\draw (v4) to (v2);
\draw (v4) to (v6);
\draw (v12) to (v10);
\draw (v12) to (v14);
\draw (v2) to (v1);
\draw (v2) to (v3);
\draw (v6) to (v5);
\draw (v6) to (v7);
\draw (v10) to (v9);
\draw (v10) to (v11);
\draw (v14) to (v13);
\draw (v14) to (v15);
\end{scope}

\end{scope}
\end{tikzpicture} \end{center}
\caption{\label{fig:arborescence_dp}
Illustration of the full binary tree $\cB$ corresponding to an optimal $(x,y,k^\prime)$-realization (for $k^\prime = 15$ and $\Gamma=4$ ) that visits the vertices $v_1,\dots, v_{15}$ in this order.
The color and vertical position of the vertices indicated in the picture corresponds to the height assigned to the vertices in the definition of the weight bound function $w^0(Q^\prime)$, where $Q^\prime$ is the monotone path visiting $x, v_1, \dots, v_{15}, y$ in this order (see \Cref{def:optimal_realization}).
}
\end{figure}
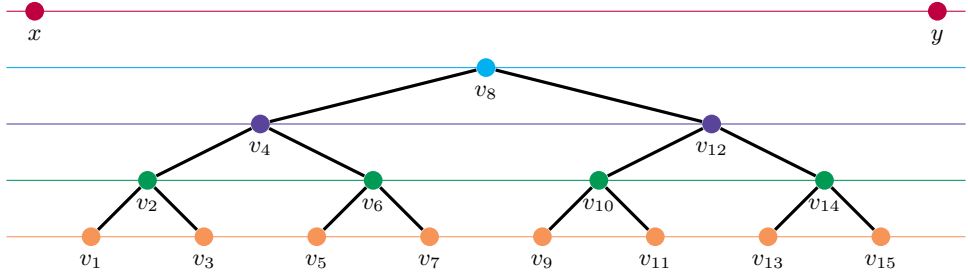

Let $h(v)$ denote the height of $v$ in the binary tree $\cB$ (where leaves have height $1$ and the root has height $\Gamma$), and let $\pi_v(h+1), \dots, \pi_v(\Gamma)$ denote the vertices on the path from $v$ to the root of $\cB$.
The cost function we minimize (see \Cref{def:optimal_realization}) can then be rewritten as
\begin{equation}
\label{eq:block_cost}
2^{\Gamma} \cdot \sum_{v \in V(\cB)}\Big(c(x,v) + c(v,y)\Big) + \sum_{v \in V(\cB)} \sum_{\ell = h(v) + 1}^{\Gamma} 2^{\ell - 1} \cdot \min\Big( c(v,\pi_v(\ell)), c(\pi_v(\ell), v) \Big),
\end{equation}
as we will show in \Cref{lem:tree_solutions}.

The summands involving vertices in the subtree $\cB_w$ rooted at a vertex $w$ are
\[
2^{\Gamma} \cdot \sum_{v \in V(\cB_w)}\Big(c(x,v) + c(v,y)\Big) + \sum_{v \in V(\cB_w)} \sum_{\ell = h(v) + 1}^{\Gamma} 2^{\ell - 1} \cdot \min\Big( c(v,\pi_v(\ell)), c(\pi_v(\ell), v) \Big),
\]
and we will refer to this as the cost of the subtree rooted at $w$.
We highlight that the cost of any subtree depends on the path between$w$ and the root of the entire tree, but not on any other vertices outside of the subtree.
Our dynamic program computes a cheapest subtree of height $h$ for each 
\begin{itemize}
\item
height $h \in [\Gamma]$,

\item
path $\pi(h+1), \dots, \pi(\Gamma)$ from the root of the subtree to the root of the entire tree,

\item
covering profile $\Phi$,
\end{itemize}
such that the monotone path on the vertices of this subtree realizes $\Phi$.
We then construct the entire tree bottom-up by merging such subtree solutions of height less than $\Gamma$ that agree on both their remaining paths to the global root and their covering profiles (see \Cref{lem:combination_of_covering_profiles}).
\addparskip

In the following we make these ideas more precise.

\begin{definition}[Tree solutions]
Let $h \in [\Gamma]$.
A \emph{tree solution of height $h$} is a pair $(\cB,\pi)$, where
\begin{enumerate}
\item
$\cB$ is a full binary tree with $V(\cB) \subseteq \openint{x,y}$ and $|V(\cB)| = 2^h - 1$, such that for every non-leaf $r \in V(\cB)$, we can label the two full binary subtrees rooted at the children of $r$ as $\cB_1$ and $\cB_2$ so that $V(\cB_1)$ is left of $r$ and $V(\cB_2)$ is right of $r$, and

\item
$\pi: \parset{h + 1, \dots, \Gamma} \to \openint{x,y}$ is a function such that $\pi(\ell)$ is left of $V(\cB)$ or right of $V(\cB)$ for every $\ell \in \parset{h + 1, \dots, \Gamma}$.
\end{enumerate}
Given a covering profile $\Phi$, we say that a tree solution $(\cB,\pi)$ \emph{realizes} $\Phi$ if the unique monotone path $P$ whose vertex set is $V(\cB)$ realizes $\Phi$.
Conversely, given a tree solution $(\cB,\pi)$, we define its covering profile to be the covering profile of the unique monotone path $P$ whose vertex set is $V(\cB)$.
\end{definition}

Next, we formalize how we can merge subtree solutions.
See \Cref{fig:tree_solutions.tex} for an illustration.
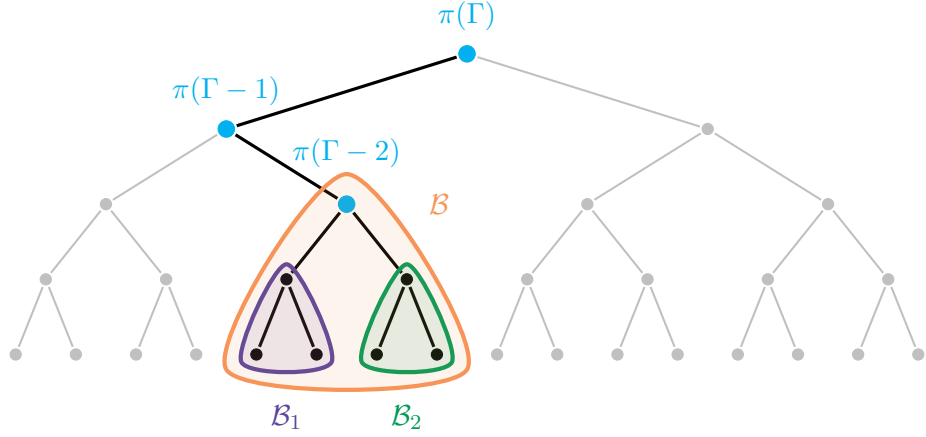
\begin{figure}
\begin{center}
\begin{tikzpicture}[
vertex/.style={circle,draw=black, fill,inner sep=1.5pt, outer sep=1pt},
root/.style={circle,draw=ProcessBlue, fill=ProcessBlue,inner sep=2.2pt, outer sep=1pt,},
edge/.style={thick},
highlight/.style={ultra thick, smooth cycle, fill, fill opacity=0.1},
xscale=0.4
]

\definecolor{color1}{named}{Peach}  
\definecolor{color2}{named}{ForestGreen}  
\definecolor{color3}{named}{Violet}  
\definecolor{color4}{named}{ProcessBlue} 

\def\opac{0.25}

\begin{scope}
\foreach \i in {1,...,31} {
\pgfmathsetmacro{\height}{
ifthenelse(mod(\i,16)==0,4,
ifthenelse(mod(\i,8)==0,3,
ifthenelse(mod(\i,4)==0,2,
ifthenelse(mod(\i,2)==0,1,0
))))
}

\ifnum\i=12
\node[root] (v\i) at ({(\i)*1},\height) {};
\else
\ifnum\i=8
\node[root] (v\i) at ({(\i)*1},\height) {};
\else
\ifnum\i=16
\node[root] (v\i) at ({(\i)*1},\height) {};
\else
\ifnum\i<9
\node[vertex, opacity=\opac] (v\i) at ({(\i)*1},\height) {};
\else
\ifnum\i>15
\node[vertex, opacity=\opac] (v\i) at ({(\i)*1},\height) {};
\else
\node[vertex] (v\i) at ({(\i)*1},\height) {};
\fi\fi\fi\fi\fi
}
\end{scope}

\node[color=ProcessBlue, right=10pt, above=10pt] at (v12) {$\pi(\Gamma - 2)$};
\node[color=ProcessBlue, above=5pt] at (v8) {$\pi(\Gamma - 1)$};
\node[color=ProcessBlue, above=5pt] at (v16) {$\pi(\Gamma)$};

\foreach \i in {1,...,31} {
\pgfmathsetmacro{\childdist}{
ifthenelse(mod(\i,16)==0,8,
ifthenelse(mod(\i,8)==0,4,
ifthenelse(mod(\i,4)==0,2,
ifthenelse(mod(\i,2)==0,1,0
))))
}

\ifnum\childdist>0
\pgfmathtruncatemacro{\childa}{\i + \childdist}
\pgfmathtruncatemacro{\childb}{\i - \childdist}

\ifnum\childa<8
\draw[edge, opacity=\opac] (v\i) -- (v\childa);
\else
\ifnum\childa>16
\draw[edge, opacity=\opac] (v\i) -- (v\childa);
\else
\draw[edge, very thick] (v\i) -- (v\childa);
\fi\fi

\ifnum\childb<8
\draw[edge, opacity=\opac] (v\i) -- (v\childb);
\else
\ifnum\childb>16
\draw[edge, opacity=\opac] (v\i) -- (v\childb);
\else
\draw[edge, very thick] (v\i) -- (v\childb);
\fi\fi

\fi
}

\coordinate (v9slack) at ($(v9) - (0.5,0.1)$);
\coordinate (v10slack) at ($(v10) + (0, 0.2)$);
\coordinate (v11slack) at ($(v11) + (0.5, -0.1)$);
\draw[highlight, color=color3] plot coordinates {(v9slack) (v10slack) (v11slack)};
\node[color=color3, below=4em] at (v10) {$\cB_1$};

\coordinate (v13slack) at ($(v13) - (0.5,0.1)$);
\coordinate (v14slack) at ($(v14) + (0, 0.2)$);
\coordinate (v15slack) at ($(v15) + (0.5, -0.1)$);
\draw[highlight, color=color2] plot coordinates {(v13slack) (v14slack) (v15slack)};
\node[color=color2, below=4em] at (v14) {$\cB_2$};

\coordinate (v9slack2) at ($(v9) - (1,0.2)$);
\coordinate (v15slack2) at ($(v15) + (1, -0.2)$);
\coordinate (v12slack) at ($(v12) + (0, 0.4)$);
\draw[highlight, color=color1] plot coordinates {(v9slack2) (v12slack) (v15slack2)};
\node[color=color1, right=2.5em] at (v12) {$\cB$};

\end{tikzpicture}
 \end{center}
\caption{
Illustration of how two compatible tree solutions $(\cB_1,\pi)$ and $(\cB_2, \pi)$ of height~$h < \Gamma $ are combined.
All vertices are drawn from left to right according to the total order~$\prec$.
The path $\pi(h+1), \dots, \pi(\Gamma)$ from the root of the current subtree to the global root $\pi(\Gamma)$ contains the vertices at which it will later be merged with another subtree.
}
\label{fig:tree_solutions.tex}
\end{figure}

\begin{definition}[Compatible tree solutions and costs]
Given two tree solutions $(\cB_1, \pi_1)$ and $(\cB_2, \pi_2)$ of height~$h \in [\Gamma - 1]$, we say that they are \emph{compatible} if $\pi_1 = \pi_2$ and $V(\cB_1)$ is left of $\pi(h+1)$ and $V(\cB_2)$ is right of $\pi(h+1)$, where  $\pi \coloneqq \pi_1 = \pi_2$.
We denote by $(\cB_1, \pi_1) \coveringsum (\cB_2, \pi_2)$
the tree solution $(\cB,\pi\mid_{\parset{h+2, \dots, \Gamma}})$ of height $h + 1$,
where $\cB$ is the full binary tree with root $\pi(h)$ and subtrees $\cB_1$ and $\cB_2$.
Note that any tree solution $(\cB, \pi)$ of height $h \in \parset{2, \dots, \Gamma}$ can be uniquely decomposed via
\begin{equation*}
(\cB, \pi) = (\cB_1, \pi_1) \coveringsum (\cB_2, \pi_2)
\end{equation*}
into two compatible tree solutions $(\cB_1, \pi_1)$ and $(\cB_2, \pi_2)$ of height $h - 1$.
Therefore, for $h > 1$, we can recursively define
\begin{equation*}
\begin{split}
\cost(\cB, \pi) 
\coloneqq &\cost(\cB_1, \pi_1) + \cost(\cB_2, \pi_2) + 2^{\Gamma} \cdot \Big(c(x,r) + c(r, y)\Big) \\
&+ \sum_{\ell = h+1}^{\Gamma} 2^{\ell - 1} \cdot \min\Big(c(r, \pi(\ell)), c(\pi(\ell), r)\Big)
\end{split}
\end{equation*}
where $r$ is the root of $\cB$.
For $h = 1$, we replace the term $\cost(\cB_1, \pi_1) + \cost(\cB_2, \pi_2)$ by zero.
\end{definition}

The following lemma formally states that $(x,y,k^\prime)$-realizations of covering profiles can indeed be regarded as tree solutions of height $\Gamma$.
Moreover, in this case, our cost definition coincides with our objective function in \Cref{def:optimal_realization}.

\begin{restatable}{lemma}{TreeSolutions}\label{lem:tree_solutions}
Let $\Phi$ be an $(x,y,k^\prime)$-realizable covering profile.
Then for every $(x,y,k^\prime)$-realization $P$ of $\Phi$, there exists a tree solution $(\cB,\pi)$ of height $\Gamma$ with $V(\cB) = V(P)$ and $\cost(\cB,\pi) = w^0(P^\prime) - k^2 \cdot D_{\ell^\ast}$, where $P^\prime$ denotes the monotone path that starts in $x$, then traverses $P$, and ends in $y$.
\end{restatable}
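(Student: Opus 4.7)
The strategy is to construct the tree solution $(\cB,\pi)$ from the given realization $P$ explicitly, and then verify the cost identity by unrolling the recursive definition and matching it term-by-term against $w^0(P')-k^2 D_{\ell^\ast}$.

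For the construction, I will number the vertices of $P$ as $v_1,\dots,v_{k'}$ in the order of traversal, and build $\cB$ recursively: place $v_{2^{\Gamma-1}}$ as the root, and take the left and right subtrees to be the analogously constructed trees on $\{v_1,\dots,v_{2^{\Gamma-1}-1}\}$ and $\{v_{2^{\Gamma-1}+1},\dots,v_{k'}\}$. A straightforward induction on $\Gamma$ will show that this places each $v_i$ at tree-height $\height(i)+1$. Since $P$ is monotone with vertex set contained in $\openint{x,y}$, we have $v_i \prec v_j$ iff $i<j$, so at every internal vertex $r$ the two subtrees will sit respectively to the left and to the right of $r$ in the order $\prec$. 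Combined with $|V(\cB)|=2^\Gamma-1=k'$, this verifies that $(\cB,\emptyset)$ is a valid tree solution of height $\Gamma$.

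For the cost identity, I will induct on the tree-height $h$ of the subtree in question, using the recursion for $\cost$. The contribution $2^{\Gamma}(c(x,r)+c(r,y))$ added at each recursion step accumulates to $2^{\Gamma}\sum_{v\in V(\cB)}(c(x,v)+c(v,y))$, which matches exactly the contributions of $v_0=x$ and $v_{k'+1}=y$ to $w^0(P')-k^2 D_{\ell^\ast}$, since both endpoints have $\height$-label $\Gamma$ and their width $2^\Gamma-1$ covers every other vertex of $P'$ with all involved edges going in the forward direction of the path.

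The main combinatorial claim, and the principal obstacle, is that the ancestor-descendant terms in $\cost(\cB,\pi)$ reproduce the remaining terms of $w^0(P')-k^2 D_{\ell^\ast}$ with matching coefficients. Concretely, I will prove that for indices $1\leq a<b\leq k'$ the vertices $v_a,v_b$ lie in an ancestor-descendant relation in $\cB$ if and only if $b-a\leq 2^{\max(\height(a),\height(b))}-1$, and in that case the tree-height of the ancestor equals $\max(\height(a),\height(b))+1$. The key observation driving this induction is that within any index range $\{1,\dots,2^\Gamma-1\}$ the unique index with $\height$ equal to $\Gamma-1$ is the middle index $2^{\Gamma-1}$, which becomes the root and hence an ancestor of every other vertex in its subtree. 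Since the coefficient of the forward edge $c(v_a,v_b)$ in $w^0(P')$ is $2^{\height(a)}\mathds{1}[b-a\leq 2^{\height(a)}-1]+2^{\height(b)}\mathds{1}[b-a\leq 2^{\height(b)}-1]$, exactly one indicator will be active for each ancestor-descendant pair (the one corresponding to the ancestor), giving coefficient $2^{\max(\height(a),\height(b))}=2^{\ell-1}$ and matching the tree-cost contribution. The monotonicity of $P$ then identifies the correct orientation, so the $\min$-expression in the $\cost$ recursion evaluates to exactly the forward-edge cost appearing in $w^0$, completing the argument.
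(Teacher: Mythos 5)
Your proposal is correct and takes a genuinely different route from the paper's proof. The paper introduces auxiliary ``internal'' and ``external'' cost functions $\intcost$ and $\extcost$ and establishes the identity via two nested inductions: it first shows $\cost(\cB,\pi) = \intcost(\cB,\pi) + \extcost(\cB,\pi)$ by induction on the subtree height, and then proves $w^0(P') - k^2 D_{\ell^\ast} = \intcost(\cB,\pi) + 2^\Gamma\sum_{a\in V(\cB)}\bigl(c(x,a)+c(a,y)\bigr)$ by induction on $\Gamma$, relying on the block-decomposition identity of \Cref{lem:decomposition_in_blocks}. You bypass the intermediate $\intcost/\extcost$ split entirely by explicitly characterizing the ancestor--descendant pairs of $\cB$ in terms of $\height$ --- namely that $v_a,v_b$ (with $a<b$) stand in ancestor--descendant relation iff $b-a \leq 2^{\max(\height(a),\height(b))}-1$, with the ancestor the vertex of larger $\height$ and thus tree-height $\max(\height(a),\height(b))+1$ --- and then match the coefficient of each edge $c(v_a,v_b)$ in $w^0(P')$ directly against the corresponding tree-cost coefficient $2^{\ell-1}$. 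This is a clean, single-pass accounting argument and arguably more illuminating about \emph{why} the weight bound has its particular form; the paper's two-layer induction has the advantage of reusing \Cref{lem:decomposition_in_blocks} rather than re-deriving the combinatorics of $\height$. One small imprecision: you attribute the fact that the $\min$ in $\cost$ evaluates to the forward-edge cost to ``the monotonicity of $P$,'' but monotonicity only identifies which orientation is forward; that the forward cost is the smaller of the two follows from the hierarchically-ordered-instance property $c(\text{forward}) \leq D_{\level} = c(\text{backward})$, which also holds after the dummy-vertex modification. This is a wording issue, not a gap, and the rest of the sketch holds.
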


The proof of \Cref{lem:tree_solutions} follows in a straightforward way from the definitions, but it is rather technical.
Thus, we defer it to \Cref{sec:remaining_proofs_dp}  and first show how we use \Cref{lem:tree_solutions} to prove \Cref{lem:dynamic_program_single_block}:

\begin{proof}[Proof of \Cref{lem:dynamic_program_single_block}]
We give a dynamic program.
Our table entries are indexed by triples $(h,\Phi,\pi)$ where 
\begin{itemize}
\item
$h \in [\Gamma]$ is an integer,

\item
$\Phi = \left(a_{\min}, a_{\max}, f, g, \gamma \right)$ is a covering profile with $a_{\min}, a_{\max} \in \openint{x,y}$, and

\item
$\pi: \parset{h+1, \dots, \Gamma} \to \openint{x,y}$ is a function. 
\end{itemize}
For each such triple, we compute a tree solution $(\cB,\pi)$ of height $h$ that realizes $\Phi$ (if one exists), minimizing $\cost(\cB,\pi)$ among all such tree solutions.
\Cref{lem:tree_solutions} implies that this suffices to find an optimal $(x,y,k^\prime)$-realization of every $(x,y,k^\prime)$-realizable covering profile.

Tree solutions of height $1$ are singletons.
Thus, finding such an optimal tree solution for all labels $(h,\Phi,\pi)$ with $h=1$ can be done by enumeration.

\addparskip

Suppose that $h \geq 2$ and assume that for every table index $(h-1, \Psi, \pi^\prime)$ we have already computed a tree solution $(\cB,\pi)$ of height $h-1$ that realizes $\Phi$ and has minimum cost (if such a tree solution exists).
Let $(h,\Phi,\pi)$ be an arbitrary but fixed table index and we show how to fill the corresponding entry.

We enumerate all vertices $r \in \openint{x,y}$, and all covering profiles $\Psi_1$ and $\Psi_2$ that are right of $x$ and left of $y$, such that $\Psi_1$ is left of $r$ and $\Psi_2$ is right of $r$, and
\begin{equation*}
\Phi = \left(\Psi_1 \coveringsum \singletonprofile(r) \right) \coveringsum  \Psi_2.
\end{equation*}
Given such $r$, $\Psi_1$, and $\Psi_2$, we define
\begin{equation*}
\pi^\prime(\ell) \coloneqq
\begin{cases}
\pi(\ell) \quad&\text{if $\ell \in \parset{h+1, \dots, \Gamma}$}, \\
r \quad&\text{if $\ell = h$}.
\end{cases}
\end{equation*}
Suppose that there exist tree solutions $(\cB_1, \pi^\prime)$ and $(\cB_2, \pi^\prime)$ of height $h - 1$ realizing $\Psi_1$ and $\Psi_2$, respectively, and we consider such tree solutions of minimum cost.
Note that $(\cB_1, \pi^\prime)$ and $(\cB_2, \pi^\prime)$ are compatible, and, by \Cref{lem:combination_of_covering_profiles}, $(\cB, \pi^\prime) \coloneqq (\cB_1, \pi^\prime) \coveringsum (\cB_2, \pi^\prime)$ is a tree solution of height $h$ that realizes $\Phi$.

After enumerating all $r$, $\Psi_1$, and $\Psi_2$, we choose such a tree solution $(\cB, \pi) \coloneqq (\cB_1, \pi^\prime) \coveringsum (\cB_2, \pi^\prime)$ of minimum cost for the table entry indexed by $(h,\Phi,\pi)$, if we have constructed at least one such tree solution $(\cB, \pi^\prime)$.

\addparskip

We argue that this procedure is correct:
Suppose that there exists a tree solution $(\cB^\ast, \pi)$ of height $h$ that realizes $\Phi$.
Then we can decompose $(\cB^\ast, \pi) = (\cB^\ast_1, \pi^\ast_1) \coveringsum (\cB^\ast_2, \pi^\ast_2)$.
Note that $\pi^\ast_1(h) = \pi^\ast_2(h) = r^\ast$, where $r^\ast$ denotes the root of $\cB^\ast$.
Let $\Psi_1^\ast$ and $\Psi_2^\ast$ be the covering profiles of $(\cB^\ast_1, \pi^\ast_1)$ and $(\cB^\ast_2, \pi^\ast_2)$, respectively.
By \Cref{lem:combination_of_covering_profiles} we have $\Phi = \left(\Psi_1^\ast \coveringsum \singletonprofile(r^\ast)\right) \coveringsum \Psi_2^\ast$.
Hence, $r^\ast$, $\Psi_1^\ast$, and $\Psi_2^\ast$ occurred in our enumeration, and we correctly find a realization.
Moreover, by the recursive definition of $\cost(\cB,\pi)$, the tree solution we find has minimum cost. 
\addparskip

To bound the running time, note that there are at most $n^{O(\log n)}$ covering profiles, since our fixed instance $\cI$ of Path Covering has low depth and therefore $L = O(\log n)$.
Since $\Gamma = O(\log k^\prime) = O(\log n)$, our table contains at most 
\begin{equation*}
\Gamma \cdot n^{O(\log n)} \cdot n^{\Gamma} \leq n^{O(\log n)} 
\end{equation*}
entries, and filling each entry takes additional time of at most $n^{O(\log n)}$ by the procedure we described above.
\end{proof}

\subsection{Remaining Proofs}\label{sec:remaining_proofs_dp}
It remains to prove \Cref{lem:covering_profile_num_elem_covered,lem:combination_of_covering_profiles,lem:tree_solutions}, which we restate here for convenience:

\CoveringProfileNumElemCovered*
\begin{proof}
Without loss of generality, assume that $\Phi = (a_{\min}, a_{\max}, f, g, \gamma)$ is realizable and fix an arbitrary realization $P$ of $\Phi$.
By the definition of what it means for the path-level pair $(P, \ell^\ast)$ to cover an element of $U$, we have
\begin{equation*}
\numnewcovered(P) 
= \sum_{\ell=\ell^\ast}^L |V(P) \setminus S_\ell| + \sum_{\ell = \ell^\ast}^{L-1} \sum_{\substack{H \in \cH_{\ell + 1}, \\ |V(P) \cap H| \geq k}} |H \setminus (V(P) \cup S_{\ell})|.
\end{equation*}
Let $\ell \in \parset{\ell^\ast,\dots, L - 1}$.
Then, because $V(P) \subseteq \closedint{a_{\min}, a_{\max}}$, for every $H \in \cH_{\ell+1}$ with $|V(P) \cap H| \geq k$ exactly one of the following two is true:
\begin{itemize}
\item
$H = H_{\ell + 1}(a_{\min})$ or $H = H_{\ell + 1}(a_{\max})$.

\item
$H \subseteq \openint{a_{\min}, a_{\max}}$.
\end{itemize}

Thus, for 
\begin{equation*}
\begin{split}
I_{\neq} &\coloneqq \parset{\ell \in \parset{\ell^\ast, \dots, L - 1} : H_{\ell + 1}(a_{\min}) \neq H_{\ell + 1}(a_{\max})}, \text{ and } \\
I_{=} &\coloneqq \parset{\ell \in \parset{\ell^\ast, \dots, L - 1} : H_{\ell + 1}(a_{\min}) = H_{\ell + 1}(a_{\max})},
\end{split}
\end{equation*}
we have
\begin{equation*}
\begin{split}
\sum_{i = \ell^\ast}^{L - 1} \sum_{\substack{H \in \cH_{\ell + 1}, \\ |V(P) \cap H| \geq k}} |H \setminus (V(P) \cup S_{\ell})| 
&= \sum_{\ell = \ell^\ast}^{L - 1} \sum_{\substack{H \in \cH_{\ell + 1}, \\ H \subseteq \openint{a_{\min}, a_{\max}}, \\ |V(P) \cap H| \geq k}} |H \setminus (V(P) \cup S_{\ell})| \\
&\quad + \sum_{\ell \in I_{\neq}} \sum_{\substack{ m \in \parset{\min, \max}, \\ |V(P) \cap H_{\ell + 1}(a_m)| \geq k }} |H_{\ell + 1}(a_m) \setminus (V(P) \cup S_{\ell})| \\
&\quad + \sum_{\substack{ \ell \in I_=, \\ |V(P) \cap H_{\ell + 1}(a_{\min})| \geq k}} |H_{\ell + 1}(a_{\min}) \setminus (V(P) \cup S_{\ell})|.
\end{split}
\end{equation*}
Note that we have
\begin{equation*}
H_{\ell + 1}(a_{m}) \setminus V(P) = \Big(\left( H_{\ell + 1} (a_{m}) \cap \closedint{a_{\min}, a_{\max}} \right) \setminus V(P) \Big) \cupp \Big( H_{\ell + 1} (a_{m}) \setminus \closedint{a_{\min}, a_{\max}} \Big)
\end{equation*}
for every $m \in \parset{\min, \max}$ and $\ell \in \parset{\ell^\ast, \dots, L - 1}$.
Hence, combining this with 
\begin{equation*}
\begin{split}
f(\ell,m) &= |V(P) \cap H_{\ell + 1}(a_m)|, \\
g(\ell,m) &= | \left(H_{\ell + 1}(a_m) \cap \closedint{a_{\min}, a_{\max}} \right)\setminus (V(P) \cup S_{\ell}) | , \quad \text{and} \\
\gamma &= \sum_{\ell=\ell^\ast}^L |V(P) \setminus S_{\ell}| + \sum_{\ell=\ell^\ast}^{L - 1} \sum_{\substack{H \in \cH_{\ell + 1}, \\ H \subseteq \openint{a_{\min}, a_{\max}}, \\ |V(P) \cap H| \geq k}} |H \setminus (V(P) \cup S_{\ell})|,
\end{split}
\end{equation*}
we obtain that
\begin{equation*}
\begin{split}
\numnewcovered(P) 
&= \sum_{\ell=\ell^\ast}^L |V(P) \setminus S_\ell| + \sum_{\ell = \ell^\ast}^{L-1} \sum_{\substack{H \in \cH_{\ell + 1}, \\ |V(P) \cap H| \geq k}} |H \setminus (V(P) \cup S_{\ell})| \\
&= \gamma + \sum_{\ell \in I_{\neq}} \sum_{\substack{ m \in \parset{\min, \max}, \\ f(\ell,m) \geq k }} \Big(g(\ell,m) + \big|H_{\ell + 1}(a_m) \setminus(\closedint{a_{\min}, a_{\max}} \cup S_{\ell})\big| \Big) \\
&\quad + \sum_{\substack{\ell \in I_=, \\ f(\ell, \min) \geq k}} \Big( g(\ell, \min) + \big|H_{\ell + 1}(a_{\min}) \setminus (\closedint{a_{\min}, a_{\max}} \cup S_{\ell})\big| \Big).
\end{split}
\end{equation*}
Observe that the right-hand side depends only on $\Phi$ and not on $P$.
Thus, choosing $\xi(\Phi)$ as this value completes the proof.
\end{proof}

\Cref{lem:combination_of_covering_profiles} follows from similar calculations using the definition of covering profiles:

\CombinationOfCoveringProfiles*
\begin{proof}
Without loss of generality, assume that both covering profiles $\Phi = \left(a_{\min}^{\Phi}, a_{\max}^{\Phi}, f^{\Phi}, g^{\Phi}, \gamma^{\Phi} \right)$ and $\Psi = \left(a_{\min}^{\Psi}, a_{\max}^{\Psi}, f^{\Psi}, g^{\Psi}, \gamma^{\Psi} \right)$ are realizable and fix two arbitrary realizations $P_\Phi$ and $P_\Psi$, respectively.
We denote by $P$ the concatenation of both paths.
In particular, $V(P) = V(P_\Phi) \cupp V(P_\Psi)$.

We will construct each component of
\begin{equation*}
\Phi \coveringsum \Psi \coloneqq \left(a_{\min}^{\coveringsum}, a_{\max}^{\coveringsum}, f^{\coveringsum},g^{\coveringsum},\gamma^{\coveringsum}\right)
\end{equation*}
individually.

First, we set $a_{\min}^{\coveringsum} \coloneqq a_{\min}^{\Phi}$ and $a_{\max}^{\coveringsum} \coloneqq a_{\max}^{\Psi}$.
We have
\begin{equation*}
\parset{a_{\min}^{\coveringsum}, a_{\max}^{\coveringsum}} \subseteq V(P_\Phi) \cup V(P_\Psi) \subseteq \closedint{a_{\min}^{\coveringsum}, a_{\max}^{\coveringsum}},
\end{equation*}
since $a_{\max}^{\Phi} \prec a_{\min}^{\Psi}$ and $a_{\min}^{\Phi} \in V(P_{\Phi}) \subseteq \closedint{a_{\min}^{\Phi}, a_{\max}^{\Phi}}$ and $a_{\max}^{\Psi} \in V(P_\Psi) \subseteq \closedint{a_{\min}^{\Psi}, a_{\max}^{\Psi}}$.

\addparskip

Next, let $\ell \in \parset{\ell^\ast, \dots, L - 1}$.
We have either $H_{\ell + 1}\left( a_{\min}^{\Phi} \right) =  H_{\ell + 1}\left( a_{\min}^{\Psi} \right)$, or $H_{\ell + 1}\left( a_{\min}^{\Phi} \right) \prec H_{\ell + 1}\left( a_{\min}^{\Psi} \right)$.
In the former case, since $V(P_\Phi)$ and $V(P_\Psi)$ are disjoint, we have
\begin{equation*}
\begin{split}
\left| H_{\ell + 1}\left(a_{\min}^{\coveringsum} \right) \cap (V(P_{\Phi}) \cup V(P_\Psi)) \right|
&= \left| H_{\ell + 1}\left(a_{\min}^{\Phi} \right) \cap V(P_\Phi) \right| + \left| H_{\ell + 1}\left(a_{\min}^{\Psi} \right) \cap V(P_{\Psi}) \right| \\
&= f^{\Phi}(\ell,\min) + f^{\Psi}(\ell,\min).
\end{split}
\end{equation*}
In the latter case, we have
\begin{equation*}
\begin{split}
\left| \cH_{\ell + 1}\left(a_{\min}^{\coveringsum} \right) \cap (V(P_\Phi) \cup V(P_\Psi)) \right|
&= \left| H_{\ell + 1}\left(a_{\min}^{\Phi} \right) \cap V(P_{\Phi}) \right| + \left| H_{\ell + 1}\left(a_{\min}^{\Phi} \right) \cap V(P_{\Psi}) \right| \\
&= f^{\Phi}(\ell,\min).
\end{split}
\end{equation*}
Hence, we can choose
\begin{equation*}
f^{\coveringsum}(\ell,\min) \coloneqq 
\begin{cases} 
f^{\Phi}(\ell,\min) + f^{\Psi}(\ell,\min) \quad&\text{if }H_{\ell + 1}\left( a_{\min}^{\Phi} \right) =  H_{\ell + 1}\left( a_{\min}^{\Psi} \right),\\
f^{\Phi}(\ell,\min) \quad&\text{if }H_{\ell + 1}\left( a_{\min}^{\Phi} \right) \prec  H_{\ell + 1}\left( a_{\min}^{\Psi} \right).
\end{cases}
\end{equation*}
Observe that the right-hand side depends only on $\Phi$ and $\Psi$, and not on $P_{\Phi}$ or $P_{\Psi}$.
We can define $f^{\coveringsum}(\ell,\max)$ analogously.

\addparskip

We proceed by defining $g^{\coveringsum}(\ell, \min)$ for $\ell \in \parset{\ell^\ast, \dots, L-1}$.
We follow the same case distinction as before.
Again, suppose that $H_{\ell + 1}\left( a_{\min}^{\Phi} \right) =  H_{\ell + 1}\left( a_{\min}^{\Psi} \right)$.
Then,
\begin{equation*}
\begin{split}
&\Big|\Big( H_{\ell + 1}\left( a_{\min}^{\coveringsum} \right) \cap \closedint{a_{\min}^{\coveringsum}, a_{\max}^{\coveringsum}} \Big) \setminus \Big(V(P_{\Phi}) \cup V(P_{\Psi}) \cup S_{\ell} \Big)\Big| \\
&= \Big|\Big( H_{\ell + 1}\left( a_{\min}^{\Phi} \right) \cap \closedint{a_{\min}^{\Phi}, a_{\max}^{\Phi}} \Big) \setminus \Big(V(P_{\Phi}) \cup S_{\ell} \Big)\Big| \\
&\quad + \Big|\Big( H_{\ell + 1}\left( a_{\min}^{\Phi} \right) \cap \openint{a_{\max}^{\Phi}, a_{\min}^{\Psi}} \Big) \setminus S_{\ell} \Big| \\
&\quad +  \Big|\Big( H_{\ell + 1}\left( a_{\min}^{\Psi} \right) \cap \closedint{a_{\min}^{\Psi}, a_{\max}^{\Psi}} \Big) \setminus \Big(V(P_{\Psi}) \cup S_{\ell} \Big)\Big| \\
&= g^{\Phi}(\ell, \min) + g^{\Psi}(\ell, \min) +  \Big|\Big( H_{\ell + 1}\left( a_{\min}^{\Phi} \right) \cap \openint{a_{\max}^{\Phi}, a_{\min}^{\Psi}} \Big) \setminus S_{\ell} \Big|.
\end{split}
\end{equation*}
The right-hand side depends only $\Phi$ and $\Psi$, and not on $P_{\Phi}$ or $P_{\Psi}$.
Thus, in this case, we can define $g^{\coveringsum}(\ell, \min)$ to be this value.
If we have $H_{\ell + 1}\left( a_{\min}^{\Phi} \right) \prec H_{\ell + 1}\left( a_{\min}^{\Psi} \right)$, then
\begin{equation*}
H_{\ell + 1}\left( a_{\min}^{\Phi} \right) \cap \closedint{a_{\min}^{\Psi}, a_{\max}^{\Psi}} = \emptyset,
\end{equation*}
and this case can be treated similarly.
Moreover, we can define $g^{\coveringsum}(\ell, \max)$ analogously.

\addparskip

Finally, we show how to define $\gamma^{\coveringsum}$.
To this end, consider the index sets
\begin{equation*}
\begin{split}
I_{\neq} &\coloneqq \parset{\ell \in \parset{\ell^\ast, \dots, L - 1} : H_{\ell + 1}\left( a^{\Phi}_{\max} \right) \neq H_{\ell + 1}\left( a^{\Psi}_{\min} \right)}, \quad\text{and} \\
I_{=} &\coloneqq \parset{\ell \in \parset{\ell^\ast, \dots, L - 1} : H_{\ell + 1}\left( a^{\Phi}_{\max} \right) = H_{\ell + 1}\left( a^{\Psi}_{\min} \right)}.
\end{split}
\end{equation*}
Additionally, we define
\begin{equation*}
\begin{split}
\widetilde{I}_{\neq}^{\Phi} &\coloneqq \parset{\ell \in I_{\neq} : H_{\ell + 1}\left(a^{\Phi}_{\max}\right) \subseteq \openint{a_{\min}^{\coveringsum}, a_{\max}^{\coveringsum}}, \left|V(P_\Phi) \cap H_{\ell + 1}\left(a^{\Phi}_{\max}\right)\right| \geq k}, \\
\widetilde{I}_{\neq}^{\Psi} &\coloneqq \parset{\ell \in I_{\neq} : H_{\ell + 1}\left(a^{\Psi}_{\min}\right) \subseteq \openint{a_{\min}^{\coveringsum}, a_{\max}^{\coveringsum}}, \left|V(P_\Psi) \cap H_{\ell + 1}\left(a^{\Psi}_{\min}\right)\right| \geq k}, \quad\text{ and } \\
\widetilde{I}_{=} &\coloneqq \parset{\ell \in I_{=} : H_{\ell + 1}\left(a^{\Phi}_{\max}\right) \subseteq \openint{a_{\min}^{\coveringsum}, a_{\max}^{\coveringsum}}, \left|(V(P_\Phi) \cup V(P_{\Psi})) \cap H_{\ell + 1}\left(a^{\Phi}_{\max}\right)\right| \geq k}.
\end{split}
\end{equation*}
Note that all of the index sets above can be constructed from $\Phi$ and $\Psi$ alone, since, for example,
\begin{equation*}
\left|(V(P_\Phi) \cup V(P_{\Psi})) \cap H_{\ell + 1}\left(a^{\Phi}_{\max}\right)\right|
= f^\Phi(\ell, \max) + f^\Psi(\ell, \min) 
\end{equation*}
for all $\ell \in I_{=}$.

For any $\ell \in \parset{\ell^\ast, \dots, L - 1}$ and $H \in \cH_{\ell + 1}$ with $H \subseteq \openint{a_{\min}^{\coveringsum}, a_{\max}^{\coveringsum}}$ and $\left|(V(P_\Phi) \cup V(P_{\Psi})) \cap H \right| \geq k$ exactly one of the following two is true:
\begin{itemize}
\item
$H \subseteq \openint{a_{\min}^{\Phi}, a_{\max}^{\Phi}}$ or $H \subseteq \openint{a_{\min}^{\Psi}, a_{\max}^{\Psi}}$.

\item
$H = H_{\ell + 1}\left( a^{\Phi}_{\max} \right)$ or $H = H_{\ell + 1}\left( a^{\Psi}_{\min} \right)$.
\end{itemize}
This implies
\begin{equation*}
\begin{split}
&\sum_{\ell=\ell^\ast}^{L - 1} \sum_{\substack{H \in \cH_{\ell + 1}, \\ H \subseteq \openint{a_{\min}^{\coveringsum}, a_{\max}^{\coveringsum}}, \\ |V(P) \cap H| \geq k}} |H \setminus (V(P) \cup S_{\ell})| \\
&= \sum_{\ell=\ell^\ast}^{L - 1} \sum_{\substack{H \in \cH_{\ell + 1}, \\ H \subseteq \openint{a_{\min}^{\Phi}, a_{\max}^{\Phi}}, \\ |V(P_\Phi) \cap H| \geq k}} |H \setminus (V(P_\Phi) \cup S_{\ell})|
+ \sum_{\ell=\ell^\ast}^{L - 1} \sum_{\substack{H \in \cH_{\ell + 1}, \\ H \subseteq \openint{a_{\min}^{\Psi}, a_{\max}^{\Psi}}, \\ |V(P_\Psi) \cap H| \geq k}} |H \setminus (V(P_\Psi) \cup S_{\ell})| \\
&\quad + \sum_{\ell \in \widetilde{I}_{\neq}^{\Phi}} \left| H_{\ell + 1}\left(a_{\max}^\Phi \right) \setminus (V(P_{\Phi}) \cup S_{\ell}) \right|
+ \sum_{\ell \in \widetilde{I}_{\neq}^{\Psi}} \left| H_{\ell + 1}\left(a_{\min}^\Psi \right) \setminus (V(P_{\Psi}) \cup S_{\ell}) \right| \\
&\quad + \sum_{\ell \in \widetilde{I}_{=}} \left| H_{\ell + 1}\left(a_{\max}^\Phi \right) \setminus (V(P_{\Phi}) \cup V(P_{\Psi}) \cup S_{\ell}) \right|.
\end{split}
\end{equation*}
Note that, for all $\ell \in \parset{\ell^\ast, \dots, L}$, we have
\begin{equation*}
\begin{split}
\left| H_{\ell + 1}\left(a_{\max}^\Phi \right) \setminus (V(P_{\Phi}) \cup S_{\ell}) \right|
&= \left| \left( H_{\ell + 1}\left(a_{\max}^\Phi \right) \cap \closedint{a_{\min}^\Phi, a_{\max}^\Phi} \right) \setminus (V(P_{\Phi}) \cup S_{\ell}) \right| \\
&\quad + \left| H_{\ell + 1}\left(a_{\max}^\Phi \right) \setminus \left( \closedint{a_{\min}^\Phi, a_{\max}^\Phi} \cup S_{\ell} \right) \right| \\
&= g^\Phi(\ell, \max) + \left| H_{\ell + 1}\left(a_{\max}^\Phi \right) \setminus \left( \closedint{a_{\min}^\Phi, a_{\max}^\Phi} \cup S_{\ell} \right) \right|,
\end{split}
\end{equation*}
and similarly,
\begin{equation*}
\left| H_{\ell + 1}\left(a_{\min}^\Psi \right) \setminus (V(P_{\Psi}) \cup S_{\ell}) \right|
= g^\Psi(\ell, \min) + \left| H_{\ell + 1}\left(a_{\min}^\Psi \right) \setminus \left( \closedint{a_{\min}^\Psi, a_{\max}^\Psi} \cup S_{\ell} \right) \right|.
\end{equation*}
Moreover, for any $\ell \in I_{=}$, we obtain
\begin{equation*}
\begin{split}
\left| H_{\ell + 1}\left(a_{\max}^\Phi \right) \setminus (V(P_{\Phi}) \cup V(P_{\Psi}) \cup S_{\ell}) \right|
&=  \left| \left( H_{\ell + 1}\left(a_{\max}^\Phi \right) \cap \closedint{a_{\min}^\Phi, a_{\max}^\Phi} \right) \setminus (V(P_{\Phi}) \cup S_{\ell}) \right| \\
&\quad + \left| \left( H_{\ell + 1}\left(a_{\min}^\Psi \right) \cap \closedint{a_{\min}^\Psi, a_{\max}^\Psi} \right) \setminus (V(P_{\Psi}) \cup S_{\ell}) \right|  \\
&\quad + \left| H_{\ell + 1}\left(a_{\max}^\Phi \right) \setminus \left( \closedint{a_{\min}^\Phi, a_{\max}^\Phi} \cup \closedint{a_{\min}^\Psi, a_{\max}^\Psi} \cup S_{\ell} \right) \right|  \\
&= g^\Phi(\ell,\max) + g^\Psi(\ell, \min) \\
&\quad + \left| H_{\ell + 1}\left(a_{\max}^\Phi \right) \setminus \left( \closedint{a_{\min}^\Phi, a_{\max}^\Phi} \cup \closedint{a_{\min}^\Psi, a_{\max}^\Psi} \cup S_{\ell} \right) \right|.
\end{split}
\end{equation*}

Finally, note that
\begin{equation*}
\begin{split}
\gamma^\Phi + \gamma^\Psi 
&= \sum_{\ell = \ell^\ast}^{L} |(V(P_\Phi) \cup V(P_\Psi)) \setminus S_{\ell}| + \sum_{\ell=\ell^\ast}^{L - 1} \sum_{\substack{H \in \cH_{\ell + 1}, \\ H \subseteq \openint{a_{\min}^{\Phi}, a_{\max}^{\Phi}}, \\ |V(P_\Phi) \cap H| \geq k}} |H \setminus (V(P_\Phi) \cup S_{\ell})| \\
&\quad + \sum_{\ell=\ell^\ast}^{L - 1} \sum_{\substack{H \in \cH_{\ell + 1}, \\ H \subseteq \openint{a_{\min}^{\Psi}, a_{\max}^{\Psi}}, \\ |V(P_\Psi) \cap H| \geq k}} |H \setminus (V(P_\Psi) \cup S_{\ell})|
\end{split}
\end{equation*}
together with the observations above, implies that
\begin{equation*}
\begin{split}
&\sum_{\ell = \ell^\ast}^{L} |V(P) \setminus S_{\ell}| + \sum_{\ell=\ell^\ast}^{L - 1} \sum_{\substack{H \in \cH_{\ell + 1}, \\ H \subseteq \openint{a_{\min}^{\coveringsum}, a_{\max}^{\coveringsum}}, \\ |V(P) \cap H| \geq k}} |H \setminus (V(P) \cup S_{\ell})| \\
&= \gamma^\Phi + \gamma^\Psi \\
&\quad + \sum_{\ell \in \widetilde{I}_{\neq}^{\Phi}} \Big( g^\Phi(\ell, \max) + \left| H_{\ell + 1}\left(a_{\max}^\Phi \right) \setminus \left( \closedint{a_{\min}^\Phi, a_{\max}^\Phi} \cup S_{\ell} \right) \right| \Big) \\
&\quad + \sum_{\ell \in \widetilde{I}_{\neq}^{\Psi}} \Big( g^\Psi(\ell, \min) + \left| H_{\ell + 1}\left(a_{\min}^\Psi \right) \setminus \left( \closedint{a_{\min}^\Psi, a_{\max}^\Psi} \cup S_{\ell} \right) \right| \Big) \\
&\quad + \sum_{\ell \in \widetilde{I}_{=}} \Big( g^\Phi(\ell,\max) + g^\Psi(\ell, \min) \Big) \\
&\quad + \sum_{\ell \in \widetilde{I}_{=}} \left| H_{\ell + 1}\left(a_{\max}^\Phi \right) \setminus \left( \closedint{a_{\min}^\Phi, a_{\max}^\Phi} \cup \closedint{a_{\min}^\Psi, a_{\max}^\Psi} \cup S_{\ell} \right) \right|.
\end{split}
\end{equation*}
In particular, the right-hand side depends only on $\Phi$ and $\Psi$.
Therefore, choosing $\gamma^{\coveringsum}$ as this value completes the proof.
\end{proof}

Finally, we prove \Cref{lem:tree_solutions}:

\TreeSolutions*
\begin{proof}
Any $(x,y,k^\prime)$-realization $P$ of $\Phi$ satisfies $|V(P)| = k^\prime = 2^{\Gamma} - 1$.
Hence, it is straightforward to construct the unique full binary tree $\cB$ with $V(\cB) = V(P)$ such that $(\cB, \pi)$ is a tree solution of height $\Gamma$, where $\pi$ is the empty function.

\addparskip

It remains to prove that $\cost(\cB, \pi) = w^{0}(P^\prime) - k^2 \cdot D_{\ell^\ast}$, where $P^\prime$ results from $P$ as described above.
Given a tree solution $(\cB,\pi)$ of height $h > 1$, we recursively define its \emph{internal} cost as
\begin{equation*}
\intcost(\cB,\pi) \coloneqq \intcost(\cB_1, \pi_1) + \intcost(\cB_2, \pi_2) + 2^{h-1} \cdot \left(\sum_{a \in V(\cB_1)} c(a,r) + \sum_{a \in V(\cB_2)} c(r,a) \right),
\end{equation*}
where $r$ denotes the root of $\cB$ and $(\cB, \pi) = (\cB_1, \pi_1) \coveringsum (\cB_2, \pi_2)$.
If $\cB$ has height $1$, we define $\intcost(\cB,\pi) \coloneqq 0$.

Furthermore, we define its \emph{external} cost as
\begin{equation*}
\begin{split}
\extcost(\cB, \pi) 
&\coloneqq 2^{\Gamma} \cdot \sum_{a \in V(\cB)} \Big(c(x,a) + c(a,y) \Big) \\
&\quad + \sum_{a \in V(\cB)} \left(\sum_{\substack{\ell \in \parset{h+1, \dots, \Gamma}, \\ \pi(\ell) \prec a}} 2^{\ell - 1} \cdot c(\pi(\ell), a)
+ \sum_{\substack{\ell \in \parset{h+1, \dots, \Gamma}, \\ \pi(\ell) \succ a}} 2^{\ell - 1} \cdot c(a, \pi(\ell)) \right).
\end{split}
\end{equation*}
As an intermediate result, we prove that for any tree solution $(\cB,\pi)$ we have 
\begin{equation*}
\cost(\cB,\pi) = \intcost(\cB, \pi) + \extcost(\cB,\pi).
\end{equation*}
To this end, we proceed via induction on the height $h$ of $(\cB,\pi)$.
Any tree solution of height $h = 1$ consists of a single vertex.
Hence, in this case, the claim follows immediately from the definitions of $\intcost(\cB,\pi)$ and $\extcost(\cB,\pi)$.

Suppose that $h > 1$, and let $(\cB_1,\pi_1)$ and $(\cB_2,\pi_2)$ be the tree solutions of height $h - 1$ with $(\cB,\pi) = (\cB_1,\pi_1) \coveringsum (\cB_2,\pi_2)$.
Thus, $V(\cB_1) \prec \parset{r} \prec V(\cB_2)$ and $\pi_1(h) = \pi_2(h) = r$, where $r$ denotes the root of $\cB$.
By the induction hypothesis, we then have
\begin{equation*}
\begin{split}
\cost(\cB,\pi)
&=\cost(\cB_1, \pi_1) + \cost(\cB_2, \pi_2) + 2^{\Gamma} \cdot \big(c(x,r) + c(r, y)\big) \\
&\quad + \sum_{\substack{\ell \in \parset{h+1, \dots, \Gamma}, \\ \pi(\ell) \prec r}} 2^{\ell - 1} \cdot c(\pi(\ell), r)
+ \sum_{\substack{\ell \in \parset{h+1, \dots, \Gamma}, \\ \pi(\ell) \succ r}} 2^{\ell - 1} \cdot c(r, \pi(\ell)) \\
&= \intcost(\cB_1, \pi_1) + \extcost(\cB_1, \pi_1) + \intcost(\cB_2, \pi_2) + \extcost(\cB_2, \pi_2) \\
&\quad + 2^{\Gamma} \cdot \big(c(x,r) + c(r, y)\big) \\
&\quad + \sum_{\substack{\ell \in \parset{h+1, \dots, \Gamma}, \\ \pi(\ell) \prec r}} 2^{\ell - 1} \cdot c(\pi(\ell), r)
+ \sum_{\substack{\ell \in \parset{h+1, \dots, \Gamma}, \\ \pi(\ell) \succ r}} 2^{\ell - 1} \cdot c(r, \pi(\ell)) \\
&= \intcost(\cB_1, \pi_1) +  \intcost(\cB_2, \pi_2) + 2^{\Gamma} \cdot \sum_{a \in V(\cB)} \big(c(x,a) + c(a,y) \big) \\
&\quad + \sum_{a \in V(\cB_1)} \left(\sum_{\substack{\ell \in \parset{h, \dots, \Gamma}, \\ \pi_1(\ell) \prec a}} 2^{\ell - 1} \cdot c(\pi_1(\ell), a)
+ \sum_{\substack{\ell \in \parset{h, \dots, \Gamma}, \\ \pi_1(\ell) \succ a}} 2^{\ell - 1} \cdot c(a, \pi_1(\ell)) \right) \\
&\quad + \sum_{a \in V(\cB_2)} \left(\sum_{\substack{\ell \in \parset{h, \dots, \Gamma}, \\ \pi_2(\ell) \prec a}} 2^{\ell - 1} \cdot c(\pi_2(\ell), a)
+ \sum_{\substack{\ell \in \parset{h, \dots, \Gamma}, \\ \pi_2(\ell) \succ a}} 2^{\ell - 1} \cdot c(a, \pi_2(\ell)) \right) \\
&\quad + \sum_{\substack{\ell \in \parset{h + 1, \dots, \Gamma}, \\ \pi(\ell) \prec r}} 2^{\ell - 1} \cdot c(\pi(\ell), r)
+ \sum_{\substack{\ell \in \parset{h + 1, \dots, \Gamma}, \\ \pi(\ell) \succ r}} 2^{\ell - 1} \cdot c(r, \pi(\ell)) \\
&= \intcost(\cB_1, \pi_1) +  \intcost(\cB_2, \pi_2) + 2^{\Gamma} \cdot \sum_{a \in V(\cB)} \big(c(x,a) + c(a,y) \big) \\
&\quad + 2^{h-1} \cdot \left(\sum_{a \in V(\cB_1)} c(a,r) + \sum_{a \in V(\cB_2)} c(r,a) \right) \\
&\quad + \sum_{a \in V(\cB)} \left(\sum_{\substack{\ell \in \parset{h+1, \dots, \Gamma}, \\ \pi(\ell) \prec a}} 2^{\ell - 1} \cdot c(\pi(\ell), a)
+ \sum_{\substack{\ell \in \parset{h+1, \dots, \Gamma}, \\ \pi(\ell) \succ a}} 2^{\ell - 1} \cdot c(a, \pi(\ell)) \right) \\
&= \intcost(\cB, \pi) + \extcost(\cB, \pi).
\end{split}
\end{equation*}
This proves our intermediate result.

\addparskip

Now, let $(\cB,\pi)$ be a tree solution of height $\Gamma$.
We denote by $P$ the monotone path on $V(\cB)$ and by $P^\prime$ the monotone path that additionally visits $x$ and $y$.
Note that in order to complete the proof of our lemma, it suffices to prove
\begin{equation}
\label{eq:tree_solution_cost_claim2}
\begin{split}
w^{0} (P^\prime) - k^2 \cdot D_{\ell^\ast}
&= \intcost(\cB, \pi) + 2^{\Gamma} \cdot \sum_{a \in V(\cB)} \Big(c(x,a) + c(a,y) \Big),
\end{split}
\end{equation}
as this implies
\begin{equation*}
\begin{split}
w^{0} (P^\prime) - k^2 \cdot D_{\ell^\ast}
&= \intcost(\cB, \pi) + 2^{\Gamma} \cdot \sum_{a \in V(\cB)} \Big(c(x,a) + c(a,y) \Big) \\
&= \intcost(\cB, \pi) + \extcost(\cB, \pi) \\
&= \cost(\cB, \pi).
\end{split}
\end{equation*}

We prove~\eqref{eq:tree_solution_cost_claim2} by induction on $\Gamma$.
In the following, we write $\height_{\Gamma}$, $\cost_{\Gamma}$, and $w^{0}_{\Gamma}$ to make their dependence on $\Gamma$ explicit.

The case $\Gamma = 1$ is trivial, since the internal cost are zero in this case.
Suppose that $\Gamma > 1$.
Again, let $\cB_1$ and $\cB_2$ be the full binary subtrees of height $\Gamma - 1$ rooted at the children of the root $r$ of $\cB$ with
$V(\cB_1) \prec \parset{r} \prec V(\cB_2)$.
We denote by $P_1$ and $P_2$ the monotone paths whose vertex sets are $V(\cB_1)$ and $V(\cB_2)$, respectively.
We extend these paths to monotone paths $P_1^\prime$ and $P_2^\prime$ by additionally visiting $x$ and $y$.

We denote by $P^\prime_{[x,r]}$ the $x$-$r$ subpath of $P^\prime$, and by $P_{[r,y]}^\prime$ the $r$-$y$ subpath of $P^\prime$.
Then, \Cref{lem:decomposition_in_blocks} implies that
\begin{equation*}
\begin{split}
w^{0}_{\Gamma - 1} (P^\prime) - k^2 \cdot D_{\ell^\ast}
&= w^{0}_{\Gamma - 1} (P^\prime_{[x,r]})
+ w^{0}_{\Gamma - 1} (P^\prime_{[r,y]}) - 2k^2 \cdot D_{\ell^\ast}\\
&= w^{0}_{\Gamma - 1} (P^\prime_{1})
+ w^{0}_{\Gamma - 1} (P^\prime_{2}) - 2k^2 \cdot D_{\ell^\ast} \\
&\quad + 2^{\Gamma - 1}\cdot \left( \sum_{a \in V(\cB_1)} c(a,r) + \sum_{a \in V(\cB_2)} c(r,a) \right) \\
&\quad - 2^{\Gamma - 1} \cdot \left(  \sum_{a \in V(\cB_1)} c(a,y) + \sum_{a \in V(\cB_2)} c(x,a) \right).
\end{split}
\end{equation*}
For any $j \in \parset{1, \dots, 2^{\Gamma} - 1}$, we have $\height_{\Gamma - 1}(j) = \height_{\Gamma}(j)$, and for $j^\prime \in \parset{0, 2^{\Gamma}}$, we have $\height_{(\Gamma - 1)}(j^\prime) + 1 = \height_{\Gamma}(j^\prime)$.

Together with the induction hypothesis, this implies
\begin{equation*}
\begin{split}
&w^{0}_{\Gamma} (P^\prime) - k^2 \cdot D_{\ell^\ast} \\
&= w^{0}_{\Gamma - 1} (P^\prime) - k^2 \cdot D_{\ell^\ast} +  2^{\Gamma}\cdot \Big( c(x,r) + c(r,y) \Big) \\
&\quad + 2^{\Gamma - 1} \cdot \left(\sum_{a \in V(\cB_1)} c(x,a) +  \sum_{a \in V(\cB_2) } c(a,y)\right)
+ 2^{\Gamma} \cdot \left(\sum_{a \in V(\cB_1) } c(a,y) +  \sum_{a \in V(\cB_2) } c(x,a)\right) \\
&= w^{0}_{\Gamma - 1} (P^\prime_{1})
+ w^{0}_{\Gamma - 1} (P^\prime_{2}) - 2k^2 \cdot D_{\ell^\ast} + 2^{\Gamma}\cdot \Big( c(x,r) + c(r,y) \Big) \\
&\quad + 2^{\Gamma - 1} \cdot \left(\sum_{a \in V(\cB_1)} c(x,a) +  \sum_{a \in V(\cB_2) } c(a,y)\right)
+ 2^{\Gamma - 1} \cdot \left(\sum_{a \in V(\cB_1) } c(a,y) +  \sum_{a \in V(\cB_2) } c(x,a)\right) \\
&\quad + 2^{\Gamma - 1}\cdot \left( \sum_{a \in V(\cB_1)} c(a,r) + \sum_{a \in V(\cB_2)} c(r,a) \right) \\
\overset{\eqref{eq:tree_solution_cost_claim2}}&{=} \intcost(\cB_1, \pi_1) + \intcost(\cB_2, \pi_2) +  2^{\Gamma}\cdot \Big( c(x,r) + c(r,y) \Big) \\
& \quad + 2^{\Gamma} \cdot \left(\sum_{a \in V(\cB_1) } \Big(c(x,a) + c(a,y)\Big) +  \sum_{a \in V(\cB_2) } \Big( c(x,a) + c(a,y) \Big) \right) \\
&\quad + 2^{\Gamma - 1}\cdot \left( \sum_{a \in V(\cB_1)} c(a,r) + \sum_{a \in V(\cB_2)} c(r,a) \right) \\
&= \intcost(\cB, \pi) + 2^{\Gamma} \cdot \sum_{a \in V(\cB)} \Big(c(x,a) + c(a,y)\Big),
\end{split}
\end{equation*}
which completes the proof.
\end{proof}

\section{Conclusion}\label{sec:conclusion}

Combining \Cref{thm:reducing_hop_atsp,thm:covering_reduction,thm:path_covering_algorithm} directly implies our main result, \Cref{thm:main}.
We have thus given a randomized algorithm for Asymmetric A Priori TSP with poly-logarithmic approximation factor and quasi-polynomial running time.
Interestingly, by \Cref{thm:lower_bound}, the approximation factor is below the adaptivity gap.

A natural open question is whether one can improve the running time of our algorithm and achieve a poly-logarithmic approximation factor in polynomial time.
For this, the polynomial-time reductions we provided in this paper could be an important ingredient.
We highlight that our proofs not only imply reductions from Asymmetric A Priori TSP to Hop-ATSP, hierarchically-ordered instances, and finally to the Path Covering problem, but they actually show that these problems are equivalent up to poly-logarithmic factors in the approximation guarantee (and polynomial factors in the running time).

All our reductions rely on showing that a solution for one problem can be turned into a solution of the other problem and vice versa, without changing the cost by more than a poly-logarithmic factor (and possible scaling of the objective).
When reducing to a covering problem in \Cref{sec:reducing_to_covering}, we showed how tours (for non-degenerate instances) can be transformed into covering solutions of not much larger cost. 
We did not specify running times for this transformation because we only applied this statement to an optimal tour in order to show that the value of an optimal Path Covering solution cannot be much larger than the $k$-hop cost of an optimal tour.
However, we remark that our proof is constructive and gives rise to a polynomial-time algorithm.

Another interesting direction for future research would be to see if approximation factors beyond the adaptivity gap can also be achieved for other related problems such as the Symmetric A Priori TSP. 
\appendix
\crefalias{section}{appsec}

\section{Tightness of  \Cref{lem:adaptivity_lb_lb}}
\label{appendix:grid}

\begin{lemma}
Let $k \in \bZ_{\geq 2}$ such that $\sqrt{k} \in \bZ$.
Then, there exists a tour $T$ on $V_k$ that satisfies
\begin{equation*}
\expectationcustom{A \sim p_k}{c_k (T[A]) } \leq O({k^{3/2}}).
\end{equation*}
\end{lemma}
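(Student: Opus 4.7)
The plan is to construct the tour indicated by Figure~\ref{fig:asymp_opt_apriori_for_grid}: partition the rows of $V_k$ into $\sqrt{k}$ blocks $B_1, \dots, B_{\sqrt k}$ of $\sqrt k$ consecutive rows each, and within each block traverse the vertices in column-major order (first all $\sqrt k$ rows of column $1$, then of column $2$, and so on up to column $k$), before moving on to the next block and eventually closing the cycle. Because the edges of $G_k$ connect consecutive columns cyclically, all ``column-to-column'' hops (both within a block and between consecutive blocks) are realized by a single forward edge in $G_k$ and thus have metric cost $1$, whereas the ``row-to-row within a column'' hops cost $k$.

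To analyze the expected cost, I would start from the identity
\[
\expectationcustom{A \sim p_k}{c_k(T[A])} \ =\ \sum_{u \in V_k} p_k(u) \cdot \expectationcustom{A \sim p_k}{c_k(u, N(u, A))},
\]
where $N(u, A)$ denotes the vertex immediately following $u$ in the shortcut tour $T[A]$. Since $\sum_{u \in V_k} p_k(u) = k$, it would suffice to prove the pointwise bound
\[
\expectationcustom{A \sim p_k}{c_k(u, N(u, A))} \ =\ O(\sqrt k) \quad \text{for every } u \in V_k,
\]
which then yields $\expectationcustom{A \sim p_k}{c_k(T[A])} \leq k \cdot O(\sqrt k) = O(k^{3/2})$ as desired.

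To establish the pointwise bound for a fixed $u = v_{i_0, j_0}$, I would expand
\[
\expectationcustom{A \sim p_k}{c_k(u, N(u, A))} \ =\ \sum_{d \geq 1} \frac{1}{k}\Big(1 - \tfrac{1}{k}\Big)^{d-1} c_k(u, v_d(u)),
\]
where $v_d(u)$ is the vertex at tour distance $d$ after $u$, and split the sum according to the column of $v_d(u)$. The vertices in column $j_0$ (the same column as $u$) contribute cost $k$ for at most $\sqrt k - 1$ small values of $d$ per block, while for $j \neq j_0$ the $\sqrt k$ consecutive tour positions inside column $j$ of a block each contribute cost $(j - j_0) \bmod k \leq k$. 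The key step is the resulting sum over columns $j = 2, \dots, k$ within the first block containing vertices after $u$: the costs grow linearly in $j$, but the probability factor decays by an additional factor $(1-1/k)^{\sqrt k}$ from one column to the next. I would control it by comparing with the geometric-type series $\sum_{j \geq 1} j\, x^j = x/(1-x)^2$ for $x \coloneqq (1-1/k)^{\sqrt k}$; since $1 - x = \Theta(1/\sqrt k)$, one obtains a bound of $O(k)$ on the sum, which combined with the prefactor $\tfrac{\sqrt k}{k}$ yields the desired $O(\sqrt k)$. Contributions from blocks further away are bounded crudely using $c_k \leq k$ together with $(1-1/k)^{d-1} \leq e^{-\sqrt k}$ for $d \geq k^{3/2}$, and total $o(1)$. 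This last geometric-series estimate, in which the block size $\sqrt k$ is tuned precisely so that column-wise cost growth exactly balances row-wise probability decay, is the main calculational obstacle of the proof.
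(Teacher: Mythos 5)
Your proposal is correct and matches the paper's approach: the same tour (blocks of $\sqrt{k}$ rows traversed column-wise), the same expansion of the expected cost as a sum over the tour-distance to the next active vertex, and the same arithmetico-geometric series estimate $\sum_{j} j\,x^j = x/(1-x)^2$ with $x = (1-1/k)^{\sqrt{k}}$ and $1-x = \Theta(1/\sqrt{k})$. The only organizational difference is that the paper partitions all successors into consecutive tour-blocks $Z_j$ of $\sqrt{k}$ vertices and bounds the edge cost into $Z_j$ crudely by $j$ for \emph{every} $j$, thereby absorbing near and far blocks alike into a single geometric series, whereas you analyze the first block carefully and dispose of the remaining blocks by a separate exponentially-small tail bound.
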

\begin{proof}
The tour $T$ we consider is shown in \Cref{fig:asymp_opt_apriori_for_grid_Z}.
It can be formally described by the tour $x_1, \dots , x_{k^2}$ , where
\begin{equation*}
x_i \coloneqq v_{j \cdot \sqrt{k} +l, t} \quad \text{with } \begin{cases} j&=\floor*{\frac{i-1}{k\sqrt{k}}}, \\[1em]
l&= \left(i-1 \mod \sqrt{k}\right) + 1, \\[1em]
t&= \left(\floor*{\frac{i-1}{\sqrt{k}}} \mod k\right) + 1.
\end{cases}
\end{equation*}
Let $i \in \parset{1, \dots, \sqrt{k}}$ be fixed.
We analyze the contribution of $\delta^+(x_i) \coloneqq \parset{x_i} \times V_k$ to $\expectationcustom{A \sim p_k}{c_k(T[A])}$, that is,
\begin{equation*}
\sum_{e \in \delta^+(x_i)} \probabilitycustom{A \sim p_k}{e \in E(T[A])} \cdot c_k(e).
\end{equation*}
By symmetry of the instance, this is sufficient to obtain a bound on all edges $e \in V_k \times V_k$, because the vertices of $V_k$ can be renumbered accordingly.

We define 
\begin{equation*}
Z_j \coloneqq \parset{x_l : l = j\cdot\sqrt{k} + 1, \dots, (j+1)\cdot\sqrt{k}},
\end{equation*}
for $j=0, \dots,k\cdot \sqrt{k} - 1$.
Then, we can simply bound the contribution of $\delta^+(x_i) \cap \delta^-(Z_0)$ by
\begin{equation*}
\begin{split}
\sum_{e \in \delta^+(x_i) \cap \delta^-(Z_0)} \probabilitycustom{A \sim p_k}{e \in E(T[A])} \cdot c_k(e)
&\leq
\sum_{e \in \delta^+(x_i) \cap \delta^-(Z_0)} (1/k)^2 \cdot k \\
&\leq
\sqrt{k} \cdot (1/k) \\
&= \frac{1}{\sqrt{k}}.
\end{split}
\end{equation*}
For $j \in \parset{1,\dots,k\cdot \sqrt{k} - 1}$, the cost of an edge in $\delta^+(x_i) \cap \delta^-(Z_j)$ can be upper bounded by $j$.
Furthermore, between $x_i$ and any vertex $v$ of $Z_j$ lie at least $(j-1)\sqrt{k}$ many other vertices on $T$.
Therefore, the activation probability of such an edge $(x_i,v)$ can be upper bounded by $(1/k)^2\left(1-1/k\right)^{(j-1)\sqrt{k}}$.
Hence,
\begin{equation*}
\begin{split}
\sum_{e \in \delta^+(x_i) \cap \delta^-(Z_j)} \probabilitycustom{A \sim p_k}{e \in E(T[A])} \cdot c_k(e)
&\leq
\sum_{e \in \delta^+(x_i) \cap \delta^-(Z_j)} \frac{1}{k^2} \cdot \left(1 - \frac{1}{k} \right)^{(j-1)\cdot \sqrt{k}} \cdot j \\
&\leq 
\frac{\sqrt{k}}{k^2} \cdot \left(1 - \frac{1}{k} \right)^{(j-1)\cdot \sqrt{k}} \cdot j \\
&\leq
\frac{\sqrt{k}}{k^2} \cdot e^{\frac{-(j-1)}{\sqrt{k}}} \cdot j \\
&\leq
\frac{e}{k^{3/2}} \cdot \left( e^{-1/\sqrt{k}} \right)^j \cdot j.
\end{split}
\end{equation*}
\begin{figure}
\begin{center}
\begin{tikzpicture}[
vertex/.style={circle,draw=black, fill, inner sep=1pt},
edge/.style={-latex},
box/.style={Peach, very thick, fill=Violet, fill opacity=0.05, rounded corners=5.5pt},
scale=0.5
]
\def\k{16}
\def\sqrtk{4}
\pgfmathsetmacro{\kkm}{16*16 - 1}

\begin{scope}[every node/.style={vertex}]

\foreach \i in {1,...,\k} {
\foreach \j in {1,...,\k}{
\node (x\i y\j) at (\j,16 - \i + 1) {};
}
}

\end{scope}

\newcommand{\getindex}[1]{
\pgfmathsetmacro{\num}{#1}

\pgfmathsetmacro{\t}{mod(floor((\num - 1) / \sqrtk), \k) + 1}
\pgfmathsetmacro{\l}{mod(\num - 1, \sqrtk) + 1}
\pgfmathsetmacro{\j}{floor((\num - 1) / (\k * \sqrtk))}

\pgfmathsetmacro{\first}{\j * \sqrtk + \l}
\pgfmathsetmacro{\second}{\t}

\pgfmathtruncatemacro{\firstindex}{\first}
\pgfmathtruncatemacro{\secondindex}{\second}
}

\foreach \i in {1,...,\kkm} {
\ifnum\i=64\else        
\ifnum\i=128\else        
\ifnum\i=192\else        

\getindex{\i}
\pgfmathsetmacro{\prevx}{\firstindex}
\pgfmathsetmacro{\prevy}{\secondindex}

\getindex{\i +1}
\pgfmathsetmacro{\nextx}{\firstindex}
\pgfmathsetmacro{\nexty}{\secondindex}

\ifnum\numexpr\i-4*(\i/4)\relax=0
\draw[edge, out=40, in=-130, looseness=0.4] (x\prevx y\prevy) to (x\nextx y\nexty);
\else
\draw[edge] (x\prevx y\prevy) to (x\nextx y\nexty);
\fi\fi\fi\fi
}

\draw[edge] (x4y16) .. controls (15,12.2) and (2,12.8) .. (x5y1);  
\draw[edge] (x8y16) .. controls (15,8.2) and (2,8.8) .. (x9y1);  
\draw[edge] (x12y16) .. controls (15,4.2) and (2,4.8) .. (x13y1);  

\draw[] (x16y16) .. controls (16.9,2) .. (16.7,14);
\draw[] (16.7,14) .. controls (16.7,16.7) .. (14,16.7);
\draw[edge] (14,16.7) .. controls (2,16.7) .. (x1y1);

\draw[decorate, decoration={brace}, thick] (0.5,9) -- (0.5,12) node[midway,left] {$\sqrt{k}$ \ };

\def\xslack{0.4}	
\def\yslack{0.5}	

\draw[box] ($(x1y1) + (\xslack, \yslack)$) rectangle ($(x4y1) - (\xslack, \yslack)$);
\node at (x1y1) [above=14pt] {\textcolor{Peach}{\textbf{$Z_{0}$}}};
\end{tikzpicture}
 \end{center}
\caption{
An example for $k=16$.
An asymptotically optimal a priori tour $T$ for $(V_k,c_k,p_k)$ is obtained by grouping the rows into blocks of size $\sqrt{k}$ and traversing each block in a column-wise manner.
}
\label{fig:asymp_opt_apriori_for_grid_Z}
\end{figure}
This implies an upper bound on the total contribution of $\delta^+(x_i) \cap \delta^-\left( \bigcup_{j=1}^{k^{3/2}-1} Z_j \right)$, since
\begin{equation*}
\begin{split}
\sum_{j = 1}^{k^{3/2}-1}
\sum_{e \in \delta^+(x_i) \cap \delta^-(Z_j)} \probabilitycustom{A \sim p_k}{e \in E(T[A])} \cdot c_k(e) &\leq \sum_{j = 1}^{k^{3/2}-1} \frac{e}{k^{3/2}} \cdot \left( e^{\frac{-1}{\sqrt{k}} } \right)^j \cdot j \\
&\leq \sum_{i = 0}^{\infty} \frac{e}{k^{3/2}} \cdot \left( e^{\frac{-1}{\sqrt{k}} } \right)^j \cdot j \\
&=
 \frac{e}{k^{3/2}} \cdot \frac{e^{\frac{-1}{\sqrt{k}} }}{\left( 1-e^{\frac{-1}{\sqrt{k}} }\right)^2} \\
&\leq
\frac{e}{k^{3/2}} \cdot \frac{1}{ \left( \frac{1}{2\sqrt{k}} \right)^2 } \\
&= \frac{4e}{\sqrt{k}}.
\end{split}
\end{equation*}

We conclude that
\begin{equation*}
\begin{split}
\sum_{e \in \delta^+(x_i)} \probabilitycustom{A \sim p_k}{e \in E(T[A])} \cdot c_k(e)
&= \sum_{j = 0}^{k^{3/2}-1}
\sum_{e \in \delta^+(x_i) \cap \delta^-(Z_j)} \probabilitycustom{A \sim p_k}{e \in E(T[A])} \cdot c_k(e) \\
&\leq \frac{1}{\sqrt{k}} + \frac{4e}{\sqrt{k}}.
\end{split}
\end{equation*}
As argued above, by symmetry, this bound holds for all $i = 1, \dots, k^2$.
Summing up this bound over all $i = 1, \dots, k^2$ yields a total upper bound of $k^2\cdot \frac{1+4e}{\sqrt{k}} = (1+4e)k^{3/2}$ for $\expectationcustom{A \sim p_k}{c_k(T[A])}$.
\end{proof}

\printbibliography

\end{document}